\definecolor{red}{rgb}{1,0,0}
\definecolor{green}{rgb}{0,1,0}
\definecolor{SeaGreen}{RGB}{46,139,87}
\definecolor{Maroon}{RGB}{128,0,0}
\newcommand{\Supp}{\textrm{Supp~}}
\newcommand{\N}{\mathbb{N}}
\newcommand{\C}{{\mathbb{C}}}
\newcommand{\R}{{\mathbb{R}}}
\newcommand{\A}{\mathcal A}
\newcommand{\B}{\mathcal B}
\newcommand{\CC}{\mathcal C}
\def\Dg {{\mathcal D}}
\newcommand{\FF}{\EuScript F}
\def\Eg {{\mathcal E}}
\def\Mg {{\mathcal M}}
\def\Gg {{\mathcal G}}
\def\Jg {{\mathcal J}}
\def\Kg {{\mathcal K}}
\newcommand{\LL}{\mathcal L}
\newcommand{\OO}{\mathcal O}
\def\PP{\mathcal P}
\newcommand{\QQ}{\mathcal Q}
\newcommand{\RR}{\mathcal R}
\def\Rg {{\mathcal R}}
\def\Sg {{\mathcal S}}
\def\Se {\EuScript{S}}
\def\Ug {{\mathcal U}}
\def\Vg {{\mathcal V}}
\def\Tg {{\mathcal T}}
\def\Ai{\text{\rm Ai\,}}
\def\det{\text{\rm det\,}}
\def\sign{\text{\rm sign\,}}
\newcommand{\supp}{\rm supp\,}
\renewcommand {\Re}{{\rm Re\,}}
\renewcommand{\Im}{{\rm Im\,}}
\newcommand {\pa}{\partial}
\newcommand {\ar}{\to}
\def \jg{\mathfrak j}
\def\0{\mathbf  0}
\newcommand{\eq}{\begin{equation}}
\newcommand{\eeq}{\end{equation}}
\def\XXint#1#2#3{{\setbox0=\hbox{$#1{#2#3}{\int}$ }
\vcenter{\hbox{$#2#3$ }}\kern-.6\wd0}}
\newcommand{\Union}{\mathop{\bigcup}\limits}
\newcommand \ego {\mathfrak e}
\numberwithin{equation}{section}
\theoremstyle{plain}
\newtheorem{theorem}{Theorem}[section]
\newtheorem{lemma}[theorem]{Lemma}
\newtheorem{proposition}[theorem]{Proposition}
\newtheorem{remark}[theorem]{Remark}
\newtheorem{corollary}[theorem]{Corollary}
\title{The spectrum of a Schr\"odinger operator in a wire-like domain  with a purely imaginary
  degenerate potential in the semiclassical limit} 
  \author{ Y. Almog, Department of
  Mathematics, \\ Louisiana State University,\\ 
    Baton Rouge, LA 70803, USA,\\~\\
  and \\~\\
\noindent   B. Helffer, Laboratoire de Math\'ematiques Jean Leray, \\CNRS and Universit\'e de Nantes, \\
  2 rue de la Houssini\`ere, 44322 Nantes Cedex France\\
   and Laboratoire de Math\'ematiques d'Orsay, Universit\'e Paris-Sud.}
\date{}
\date{}
\begin{document}
\bibliographystyle{siam}

\maketitle

\begin{abstract}
  Consider a two-dimensional domain shaped like a wire, not
  necessarily of uniform cross section. Let $V$ denote an electric
  potential driven by a voltage drop between the conducting surfaces
  of the wire. We consider the operator $\A_h=-h^2\Delta+iV$ in the
  semi-classical limit $h\to0$. We obtain both the asymptotic behaviour
  of the left margin of the spectrum, as well as resolvent estimates
  on the left side of this margin. We extend here previous results
  obtained for potentials for which the set where the current (or $\nabla
  V$) is normal to the boundary is discrete, in contrast with the
  present case where $V$ is constant along the conducting surfaces.
\end{abstract}
\section{Introduction}
\label{sec:1}

\subsection{Main assumptions}

We consider the operator
\begin{subequations}
  \label{eq:1}
  \begin{equation}\label{eq:1a}
\A_h = -h^2\Delta + i\, V \,,
\end{equation}
defined on
\begin{equation}
  D(\A_h)=\{ \,u\in H^2(\Omega,\C) \,| \, u|_{\partial\Omega_D}=0 \,;\,
  \partial u/\partial\nu|_{\partial\Omega_N}=0 \,\}\,.
\end{equation}
\end{subequations}
In the above, $\Omega\subset\R^2$ denotes a bounded, simply connected domain which has the same
characteristics as in \cite{alhe14,aletal17}. In particular its
boundary $\pa \Omega$ contains two disjoint open subsets $\partial\Omega_D\,$ and $\pa \Omega_N$ such that
$$
\overline{\partial\Omega_D} \cup \overline{\partial\Omega_N}=\pa \Omega\,,
$$
where $\partial \Omega_D$ is a union of two disjoint smooth interfaces on which
we prescribe a Dirichlet boundary condition, and $\partial\Omega_N$ is a union
of two disjoint smooth interfaces on which we prescribe a Neumann
boundary condition. Hence $\overline{\partial\Omega_D} \cap \overline{\partial\Omega_N}$
consists of four points which will be called corners.  The analysis
can be extended to domains, where $\partial\Omega_D$ (and $\partial\Omega_N$) consists of
a greater number of disjoint components. In the interest of simplicity
we shall confine ourselves to the simplest possible case.

 In the context of
superconductivity we may say that $\partial\Omega_D\,$ and $\partial\Omega_N\,$, are
respectively adjacent either to a normal metal or to an insulator. We
denote each connected component of $\partial\Omega_\#$ ($\#\in\{D,N\}$) by a
superscript $i\in\{1,2\}$, i.e.,
\begin{displaymath}
  \partial\Omega_\#=\partial\Omega_\#^1\cup\partial\Omega_\#^2\,, \quad \#\in\{D,N\}\,,\; i\in\{1,2\}\,.
\end{displaymath}
We say that $\pa \Omega$ is of class $C^{n,+}$ for some $n\in \mathbb N$,
if there exists $\check \beta >0$ such that $\pa \Omega$ is of class
$C^{n,\check \beta}$. As in \cite{al08,alhe14,aletal17} we make the
following assumptions on $\partial\Omega$
\begin{equation}
\label{AssR1}
(R1)\,\left\{
\begin{array}{l}
(a) \;  \overline{\pa \Omega_\#} \mbox{ is of class } C^{n,+} 
\mbox{ for } \#\in\{D,N\} \,;\\
(b) \mbox{  near each corner, } \overline{ \pa
\Omega_D }\mbox{  and  } \overline{ \pa \Omega_N}  \mbox{ meet with  an angle of
} \frac \pi 2\,. 
\end{array}\right.
\end{equation}
We define $n$ for each result separately (but always have $n\geq2$).  We
occasionally use the notation $(R1(n))$ to  specify $n$ in the assumption.

 Near the corners, we  assume in addition that there exists a smooth
tranformation, mapping the vicinity of the corner onto a vicinity of
rectangular corner. More precisely
\begin{equation}\label{AssR2}
(R2) \,\left\{
\begin{array}{l}
\mbox{ For each corner } \mathfrak c \,,
 \mbox{   there
exist } R>0 \mbox{ and  an invertible holomorphic function }\\
 \Phi \mbox{ in } B(\mathfrak c , R)\cap \Omega, \mbox{ which
is in addition in } C^{n,+}(\bar \Omega\cap B(\mathfrak c, R)),  \\ \mbox{ such that } \Phi (\mathfrak c) =0\,, 
\, \Phi (B(\mathfrak c, R)\cap \Omega) \subset Q:= \mathbb R_+\times \mathbb R_+\,, \\
\mbox{ and } \Phi (\pa \Omega \cap B(\mathfrak c, R)) \subset  (\mathbb R_+\times \{0\}) \cup ( \{0\}\times \mathbb R_+) \cup \{0\} \,.
\end{array} \right.
\end{equation}
Again we use $(R2(n))$ to specify $n$ in the assumption.

We consider potentials $V\in H^2(\Omega)$ satisfying 
\begin{equation}
  \label{eq:2}
  \begin{cases}
    \Delta V = 0 & \text{in } \Omega\,, \\
    V=C_i &  \text{on } \partial\Omega_{D} ^i\mbox{ for }  i=1,2\,, \\
    \frac{\partial V}{\partial\nu} =0 & \text{ on } \partial\Omega_N \,,
  \end{cases}
\end{equation}
describing a potential drop along a wire.\\
Assumptions $(R1(n))$ and $(R2(n))$ imply that $V\in C^{n,+}(\bar \Omega)$.
Away from the corners, we may rely on Schauder estimates to establish
the desired regularity. In the neighborhood of a corner, we may use
the conformal map given by Assumption $(R2(n))$ to obtain a
problem for $V$ in a right-angled sector. Then we can use a
  reflection argument to establish the announced regularity of $V$
  (cf. \cite{al08,alhe14} for instance). 
 
We assume further, as in \cite{AGH}, that $V$ satisfies
\begin{equation}\label{ass1}
|\nabla V(x)| \neq 0\,,\, \forall x \in \overline{\Omega}\,.
\end{equation}
This implies  that
$$
C_1 \neq C_2\,.
$$
We can indeed follow one component of $\partial\Omega_N$ between two
  corners and observes that the tangential derivative of $V$ never
  vanishes (cf. \cite{al08}).

  The mathematical analysis of Equation \eqref{eq:2} has a very long
  record in the literature. We refer to \cite{PS}, where explicitly
  known solutions, for many simple domains including the square, are
  listed.
\begin{figure}
  \begin{center}
\setlength{\unitlength}{0.0005in}
\begingroup\makeatletter\ifx\SetFigFont\undefined%
\gdef\SetFigFont#1#2#3#4#5{%
  \reset@font\fontsize{#1}{#2pt}%
  \fontfamily{#3}\fontseries{#4}\fontshape{#5}%
  \selectfont}%
\fi\endgroup%
{\renewcommand{\dashlinestretch}{30}
\begin{picture}(2649,7740)(0,-10)
\put(1140,1290){\makebox(0,0)[lb]{\smash{{\SetFigFont{12}{14.4}{\rmdefault}{\mddefault}{\updefault}$J_{in}$}}}}
\path(915,6690)(915,7365)
\blacken\path(945.000,7245.000)(915.000,7365.000)(885.000,7245.000)(945.000,7245.000)
\path(1590,6690)(1590,7365)
\blacken\path(1620.000,7245.000)(1590.000,7365.000)(1560.000,7245.000)(1620.000,7245.000)
\path(390,7365)(1890,7365)
\path(915,465)(1065,165)(2415,615)(2265,915)
\path(1215,240)(990,765)
\blacken\path(1064.845,666.520)(990.000,765.000)(1009.696,642.885)(1064.845,666.520)
\path(1440,315)(1215,915)
\blacken\path(1285.225,813.174)(1215.000,915.000)(1229.045,792.107)(1285.225,813.174)
\drawline(1515,915)(1515,915)
\path(1740,390)(1515,915)
\blacken\path(1589.845,816.520)(1515.000,915.000)(1534.696,792.885)(1589.845,816.520)
\path(1965,465)(1815,915)
\blacken\path(1881.408,810.645)(1815.000,915.000)(1824.487,791.671)(1881.408,810.645)
\path(1290,6690)(1290,7365)
\blacken\path(1320.000,7245.000)(1290.000,7365.000)(1260.000,7245.000)(1320.000,7245.000)
\path(2265,915)(2264,916)(2263,918)
	(2260,922)(2255,929)(2249,937)
	(2241,948)(2232,962)(2222,977)
	(2210,994)(2198,1013)(2186,1034)
	(2173,1056)(2159,1080)(2146,1105)
	(2132,1133)(2118,1164)(2103,1197)
	(2087,1234)(2072,1274)(2056,1318)
	(2040,1365)(2026,1409)(2014,1450)
	(2003,1488)(1994,1519)(1986,1545)
	(1981,1564)(1976,1578)(1973,1586)
	(1970,1591)(1968,1592)(1967,1591)
	(1966,1590)(1965,1589)(1965,1590)
	(1963,1594)(1962,1602)(1960,1616)
	(1957,1638)(1953,1667)(1949,1707)
	(1944,1757)(1938,1817)(1932,1887)
	(1926,1965)(1922,2020)(1919,2075)
	(1916,2128)(1913,2179)(1911,2225)
	(1909,2266)(1908,2303)(1906,2333)
	(1905,2359)(1905,2379)(1904,2395)
	(1904,2407)(1904,2414)(1904,2419)
	(1904,2421)(1905,2422)(1906,2421)
	(1906,2423)(1907,2427)(1907,2434)
	(1908,2445)(1908,2461)(1908,2482)
	(1908,2511)(1908,2547)(1908,2592)
	(1908,2647)(1907,2711)(1907,2786)
	(1906,2872)(1905,2968)(1904,3075)
	(1903,3191)(1902,3315)(1901,3398)
	(1901,3482)(1900,3567)(1900,3651)
	(1899,3734)(1899,3816)(1898,3896)
	(1898,3974)(1897,4049)(1897,4121)
	(1897,4190)(1896,4255)(1896,4318)
	(1896,4378)(1895,4434)(1895,4488)
	(1895,4539)(1895,4587)(1895,4633)
	(1895,4676)(1894,4717)(1894,4757)
	(1894,4795)(1894,4832)(1894,4868)
	(1894,4903)(1894,4937)(1894,4971)
	(1894,5005)(1894,5039)(1894,5074)
	(1894,5110)(1894,5146)(1894,5183)
	(1894,5222)(1894,5263)(1893,5305)
	(1893,5350)(1893,5397)(1893,5446)
	(1893,5498)(1893,5552)(1893,5609)
	(1893,5669)(1892,5732)(1892,5797)
	(1892,5865)(1892,5936)(1892,6009)
	(1891,6083)(1891,6159)(1891,6236)
	(1891,6314)(1890,6391)(1890,6466)
	(1890,6540)(1890,6600)(1890,6658)
	(1890,6713)(1889,6766)(1889,6816)
	(1889,6863)(1889,6907)(1889,6949)
	(1889,6988)(1889,7025)(1889,7059)
	(1889,7091)(1889,7121)(1889,7148)
	(1889,7174)(1888,7197)(1888,7219)
	(1888,7239)(1888,7257)(1888,7274)
	(1888,7289)(1888,7303)(1888,7315)
	(1889,7327)(1889,7337)(1889,7346)
	(1889,7354)(1889,7361)(1889,7367)
	(1889,7372)(1889,7377)(1889,7381)
	(1889,7384)(1889,7386)(1889,7388)
	(1889,7389)(1889,7390)(1889,7391)
	(1889,7390)(1889,7389)(1889,7388)
	(1889,7386)(1890,7385)(1890,7383)
	(1890,7382)(1890,7380)(1890,7378)
	(1890,7377)(1890,7375)(1890,7374)
	(1890,7372)(1890,7371)(1890,7370)
	(1890,7369)(1890,7368)(1890,7367)
	(1890,7366)(1890,7365)
\path(915,465)(914,466)(913,470)
	(911,475)(907,484)(902,496)
	(896,512)(889,530)(880,552)
	(871,575)(860,601)(849,629)
	(838,658)(826,688)(814,720)
	(802,752)(789,785)(776,820)
	(763,857)(749,895)(735,935)
	(720,978)(705,1021)(690,1065)
	(673,1116)(658,1162)(646,1200)
	(636,1229)(628,1250)(622,1264)
	(619,1271)(616,1274)(615,1273)
	(613,1271)(613,1269)(611,1269)
	(610,1273)(607,1282)(603,1298)
	(598,1322)(591,1356)(583,1400)
	(574,1454)(564,1515)(557,1561)
	(551,1606)(546,1648)(541,1685)
	(537,1716)(534,1741)(531,1761)
	(529,1774)(528,1783)(527,1787)
	(526,1787)(525,1785)(525,1782)
	(525,1777)(525,1774)(525,1773)
	(524,1774)(524,1780)(523,1792)
	(522,1811)(521,1838)(519,1875)
	(517,1923)(514,1982)(511,2054)
	(508,2139)(505,2234)(501,2340)
	(499,2408)(497,2477)(495,2547)
	(493,2616)(491,2683)(490,2748)
	(488,2811)(487,2870)(486,2927)
	(485,2979)(484,3029)(483,3075)
	(483,3118)(482,3157)(482,3194)
	(482,3228)(481,3260)(481,3290)
	(481,3318)(481,3345)(481,3371)
	(481,3396)(481,3421)(481,3446)
	(481,3472)(481,3498)(481,3525)
	(481,3554)(481,3585)(481,3618)
	(480,3654)(480,3692)(480,3734)
	(479,3779)(478,3828)(478,3881)
	(477,3937)(476,3998)(475,4062)
	(474,4130)(473,4202)(471,4276)
	(470,4353)(468,4432)(467,4511)
	(465,4590)(463,4688)(461,4782)
	(459,4869)(457,4950)(455,5022)
	(454,5087)(453,5143)(451,5192)
	(450,5233)(449,5268)(449,5296)
	(448,5319)(447,5337)(447,5352)
	(446,5363)(446,5371)(446,5378)
	(445,5384)(445,5389)(445,5395)
	(445,5402)(444,5410)(444,5421)
	(443,5435)(442,5452)(442,5474)
	(441,5500)(440,5530)(438,5566)
	(437,5608)(436,5654)(434,5705)
	(432,5760)(430,5819)(428,5879)
	(426,5940)(423,6016)(421,6087)
	(418,6151)(416,6207)(414,6255)
	(413,6295)(411,6328)(410,6354)
	(409,6374)(408,6388)(407,6399)
	(406,6406)(405,6411)(405,6415)
	(404,6419)(404,6423)(403,6429)
	(402,6437)(401,6448)(400,6464)
	(399,6484)(398,6509)(397,6540)
	(396,6576)(394,6618)(393,6664)
	(391,6714)(390,6765)(389,6823)
	(388,6878)(387,6929)(386,6976)
	(386,7019)(385,7057)(385,7093)
	(385,7126)(386,7156)(386,7185)
	(386,7211)(387,7236)(387,7259)
	(388,7280)(388,7299)(388,7316)
	(389,7330)(389,7342)(390,7351)
	(390,7357)(390,7362)(390,7364)(390,7365)
\put(840,7515){\makebox(0,0)[lb]{\smash{{\SetFigFont{12}{14.4}{\rmdefault}{\mddefault}{\updefault}$\partial\Omega_D^2$}}}}
\put(2055,4440){\makebox(0,0)[lb]{\smash{{\SetFigFont{12}{14.4}{\rmdefault}{\mddefault}{\updefault}$\partial\Omega_N^2$}}}}
\put(-135,4590){\makebox(0,0)[lb]{\smash{{\SetFigFont{12}{14.4}{\rmdefault}{\mddefault}{\updefault}$\partial\Omega_N^1$}}}}
\put(1065,6340){\makebox(0,0)[lb]{\smash{{\SetFigFont{12}{14.4}{\rmdefault}{\mddefault}{\updefault}$J_{out}$}}}}
\put(1815,15){\makebox(0,0)[lb]{\smash{{\SetFigFont{12}{14.4}{\rmdefault}{\mddefault}{\updefault}$\partial\Omega_D^1$}}}}
\path(540,6690)(540,7365)
\blacken\path(570.000,7245.000)(540.000,7365.000)(510.000,7245.000)(570.000,7245.000)
\end{picture}
}
  \end{center}
\caption{A typical wire-like domain. The arrows denote the
  direction of the potential gradient (or the current flow: $J_{in}$  for the inlet, and
  $J_{out}$ for the outlet).}
  \label{fig:1}
\end{figure}
Figure 1 presents a typical sample with properties (R1) and (R2),
where the current flows into the sample from one connected component
of $\partial\Omega_D$, and exits from another part, disconnected from the first
one. Most wires would fall into the above class of domains.\\
Note that, $V$ being constant on each connected component of $\partial
\Omega_D$, we have
$$
|\nabla V|= |\partial V/\partial\nu|\,\mbox{ on } \pa \Omega_D\,.
$$

We distinguish  in the sequel  between two types of potentials satisfying
\eqref{eq:2}.
\begin{description}
\item[V1] Potentials for which all points where
  $\inf_{x\in\overline{\partial\Omega_D}}|\partial V/\partial\nu|$ is attained, lie  in
   $\partial\Omega_D$.
 \item[V2] Potentials for which all points where $\inf_{x\in
     \overline{\partial\Omega_D}}|\partial V/\partial\nu|$ is attained are corners.
\end{description}
In appendix \ref{app:example-potent} we present examples corresponding
to both cases. While other cases could be treated by the same
techniques, we limit ourselves to these two cases in the interest of
simplicity.

The spectral analysis of a Schr\"odinger operator with a purely
imaginary potential has several applications in mathematical physics,
among them are the Orr-Sommerfeld equations in fluid dynamics
\cite{sh03}, the Ginzburg-Landau equation in the presence of electric
current (when magnetic field effects are neglected)
\cite{al08,ivko84}, the null controllability of Kolmogorov type
equations \cite{BHHR}, and the diffusion nuclear magnetic resonance
\cite{Stoller91,deSwiet94,Grebenkov07}. In the present contribution we
focus on the Ginzburg-Landau model, in the absence of magnetic field,
and choose an electric potential satisfying \eqref{eq:2}. Such a
potential was considered in \cite{al08} where the asymptotics of a
lower bound of $\inf\Re\sigma(\A_h)$ have been obtained as $h\ar 0$.
Assuming a smooth domain, a similar result has been established in
\cite{Hen}, using a more constructive technique, which is employed in
the present contribution as well, providing resolvent estimates in
addition to the above lower bound.

In \cite{AGH}, improving previous results from \cite{Ahen}, we
obtained in collaboration with D.~Grebenkov, the asymptotic behaviour
of an upper bound for $\inf\Re\sigma(-h^2\Delta+iV)$ on smooth bounded domains
in $\R^d$. To characterize the potentials addressed in \cite{AGH} we
first define (for $d=2$, which is the case considered in this work)
\begin{displaymath}
  \partial\Omega_\perp= \{x\in\partial\Omega\,| \, {\det}( \nabla V(x), \vec{\nu} (x)) =0 \}\,, 
\end{displaymath}
where $\vec{\nu}(x) $ denotes the outward normal at $x$. Then it is required in
\cite{AGH} that
\begin{displaymath}
 \inf_{x\in\partial\Omega_\perp}\big|\text{\rm det } D^2V_\partial(x)\big|> 0\,,
\end{displaymath}
where $V_\partial$ denotes the restriction of $V$ to $\partial\Omega$, and $D^2V_\partial$
denotes its Hessian matrix.  Note that $\partial\Omega_\perp$ must be a discrete
set in that case. Clearly, such potentials do not belong to the class
of potentials considered in this contribution, as is evident from
\eqref{eq:2}. It will become clear in Sections \ref{sec:2} and
\ref{sec:quazimode-V2} that the techniques employed in \cite{AGH} are not
applicable for potentials satisfying \eqref{eq:2}. The reason is that
the ensuing approximate operators near the boundaries are not
separable. Thus, while electric potentials satisfying \eqref{eq:2}
appear very naturally in applications, their spectral analysis poses a
significant challenge beyond the potentials addressed in \cite{AGH}.

\subsection{Main results}
We seek an approximation for $\inf \Re\sigma(\A_h)$ in the limit $h\to0\,$.
 Let 
\begin{equation}
\label{eq:3}
  J_m= \min_{x\in\partial\Omega_D}|\nabla V(x)| \,. 
\end{equation}
Denote by $\Se$ the set 
\begin{equation}
\label{eq:4}
  \Se=\{x\in\partial\Omega_D\,: \,|\nabla V(x)| = J_m \}\,.
\end{equation}
\subsubsection{ Type V1 potentials}
In this case, any $x\in\Se$ is a minimum point of $|\partial V/\partial \nu|$ on
$\partial\Omega_D$. Thus,
\begin{equation}
  \label{eq:5}
 \partial_\|\partial_\nu V(x) =0\,,\quad \forall x\in\Se\,,
\end{equation}
where $\partial_\|$ represents the derivative with respect to the arclength
along the boundary in the positive trigonometric direction. We next
introduce
  \begin{equation}
\label{eq:6}
  \alpha (x) = \partial^2_\| \partial_\nu V(x)\,, \quad \forall x\in\Se \,.
\end{equation}
Let 
\begin{equation}\label{eq:4a}
  \alpha_m = \min_{x\in\Se}|\alpha(x)| \,.
\end{equation}
We then define a new set
\begin{equation}
\label{eq:7}
\Se^m= \{x\in\Se \,| \,|\alpha(x)|=\alpha_m\,\}\,,
\end{equation}
We assume in the following that
  \begin{equation}
\label{eq:8}
 \alpha_m>0\,.
 \end{equation}
 Consequently any $x\in\Se$ is a non-degenerate minimum point of
 $|\partial V/\partial\nu|$\,. Furthermore we may conclude from \eqref{eq:8} that
 $\Se$ is discrete.

Our main result in this case is the following theorem.
\begin{theorem}
  \label{thm:interior}
  Let $\A_h$ be given by \eqref{eq:1}, in which $\partial\Omega$ satisfies
  (\ref{AssR1}) and (\ref{AssR2}) for $n=4$, and let $V,$ the solution
  of \eqref{eq:2}, be of type V1 and satisfy \eqref{ass1}. Suppose
  further that \eqref{eq:8} is satisfied. Then
\begin{equation}
  \label{eq:9}
 \lim_{h\to0}\frac{1}{h^{2/3}}\inf \bigl\{ \Re\, \sigma(\A_h) \bigr\} =   J_m^{2/3}\frac{|\nu_1|}{2}\,,  
\end{equation}
where $\nu_1<0$ is the rightmost zero of Airy's function.
\end{theorem}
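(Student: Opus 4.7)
The limiting value $J_m^{2/3}|\nu_1|/2$ is exactly the real part of the ground eigenvalue $e^{i\pi/3}J_m^{2/3}|\nu_1|$ of the one-dimensional Airy-type operator $-\partial_t^2 + iJ_m t$ on $(0,\infty)$ with Dirichlet condition at $t=0$. The plan is therefore to show that in the semiclassical limit $\mathcal{A}_h$ reduces, near the minimizing set $\Se^m$, to this model after the natural rescaling $t = (h^2/J_m)^{1/3}\tilde t$, and to produce both an upper bound by quasimode construction and a matching lower bound by localization plus resolvent estimates.

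For the upper bound I would fix $x_0\in\Se^m$ and introduce tangential/normal coordinates $(s,t)$ near $x_0$ with $t=\dist(\,\cdot\,,\partial\Omega_D)$. Since $V$ is constant on $\partial\Omega_D$ all pure $s$-derivatives of $V$ vanish at $t=0$, and \eqref{eq:5} kills the $st$-coefficient; combined with \eqref{eq:6}--\eqref{eq:8} one obtains
\begin{equation*}
V(s,t) = V(x_0) + \bigl(J_m + \tfrac{1}{2}\alpha_m s^2\bigr)\,t + O(t^2 + s^3 t)
\end{equation*}
(up to an irrelevant sign of $t$). I would then take $u_h(s,t)=\phi_h(t)\,\chi_h(s)$, with $\phi_h$ the rescaled Dirichlet ground state of $-h^2\partial_t^2 + iJ_m t$ on $(0,\infty)$ (eigenvalue $iV(x_0) + h^{2/3}J_m^{2/3}e^{i\pi/3}|\nu_1|$) and $\chi_h(s)=h^{-\gamma/2}\chi(s/h^\gamma)$ an $L^2$-normalized cut-off at scale $h^\gamma$ with $0<\gamma<2/3$. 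The $-h^2\partial_s^2\chi_h$ contribution has $L^2$-norm $O(h^{2-2\gamma})$, the $s^2 t$ potential error is $O(h^{2\gamma+2/3})$, and the remaining terms are smaller still; all are $o(h^{2/3})$, producing the upper half of \eqref{eq:9}.

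The matching lower bound would come from an IMS partition of unity splitting $\overline\Omega$ into (i) $h^\gamma$-neighborhoods of each point of $\Se^m$, (ii) a fixed neighborhood of $\overline{\partial\Omega_D}\setminus\Se^m$, and (iii) a fixed neighborhood of $\overline{\partial\Omega_N}$ together with the interior bulk. In (iii), harmonicity makes $|\nabla V|^2$ subharmonic, forcing its minimum onto $\partial\Omega$, and the V1 hypothesis then gives $|\nabla V|\ge J_m+\delta$; a Neumann reflection across $\partial\Omega_N$ (allowed because $\partial_\nu V=0$ there) combined with the Agmon-type resolvent estimates of \cite{al08,Hen} yields a threshold strictly above $h^{2/3}J_m^{2/3}|\nu_1|/2$. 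Region (ii) is handled analogously since V1 plus compactness gives $|\partial_\nu V|\ge J_m+\delta'$. On (i) one compares with the one-dimensional Airy operator, using $J(s)\ge J_m + \tfrac{\alpha_m}{2}s^2$ and the monotonicity of $j\mapsto j^{2/3}|\nu_1|/2$; the IMS localization errors are $O(h^{2(1-\gamma)})=o(h^{2/3})$ by the choice $\gamma<2/3$.

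The main obstacle is exactly the non-separability emphasized by the authors just before the theorem: the leading effective operator on region (i) is
\begin{equation*}
-h^2(\partial_s^2 + \partial_t^2) + i\bigl(V(x_0) + (J_m + \tfrac{\alpha_m}{2}s^2)\,t\bigr),
\end{equation*}
in which $s$ enters the coefficient of $t$, precluding the product diagonalization used in \cite{AGH}. The hard step is therefore to show, at the level of resolvent estimates (and not merely quadratic-form estimates, which are useless here because $\mathcal A_h$ is far from normal), that the $s^2 t$ coupling perturbs the Airy real-part threshold only by $o(h^{2/3})$ uniformly in the $h^\gamma$-window. I expect this to require a weighted-energy or Carleman-type estimate in the spirit of \cite{al08,Hen}, absorbing the off-diagonal $s^2 t$ term and the Laplace--Beltrami cross terms generated by the curvature of $\partial\Omega_D$ into the principal Airy contribution without degrading the threshold.
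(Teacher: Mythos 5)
Your outline identifies the correct limiting constant and the correct difficulty (non-separability of the effective model near $\Se$), but it places that difficulty on the wrong side of the proof and leaves the decisive step of the upper bound untreated. For a non-normal operator, a quasimode $(u_h,\lambda_h)$ with $\|(\A_h-\lambda_h)u_h\|=o(h^{2/3})\|u_h\|$ does \emph{not} force a spectral point near $\lambda_h$: it only shows that $\|(\A_h-\lambda_h)^{-1}\|$ is large \emph{if} $\lambda_h\in\rho(\A_h)$, and the pseudospectrum of $\A_h$ is enormous, so this gives nothing. The paper flags this explicitly at the start of Section~\ref{sec:2}, and then devotes Sections~\ref{sec:v1-potentials:-1d}--\ref{sec:upper} — the majority of the work — to establishing a quantitative resolvent bound on a shrinking circle $\partial B(\hat\Lambda^1(h),h^{k_1+\hat q_1})$ around the quasi-eigenvalue and running a Cauchy-integral argument to show $(\A_h-\lambda)^{-1}$ cannot be holomorphic inside; it is this, not the residual estimate, that produces the upper half of \eqref{eq:9}. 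Because the Cauchy argument needs the product of quasimode residual and resolvent bound to vanish, the paper also builds a two-term quasimode with accuracy $O(h^{5/3})$ (Proposition~\ref{prop3.2}); your one-term product $\phi_h(t)\chi_h(s)$ reaches at best $O(h^{4/3})$ at the optimal $\gamma=1/3$, which, against a resolvent bound of size $h^{-4/3-\hat q_1}$ on a circle of comparable radius, leaves no margin.

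Conversely, the $s^2\tau$ coupling you identify as the obstruction in the ``$h^\gamma$-window'' is actually harmless in the lower-bound half. In Section~\ref{s2} the lower bound comes from patching local resolvents, and near $\partial\Omega_D$ the local model is the purely affine operator $-h^2\Delta_{s,\rho}+i\big(V(b_k)\pm\jg_k\rho\big)$ with $\jg_k\geq J_m$; the quadratic correction $\tfrac12\alpha s^2\rho$ is absorbed into the $O(h^{2\varrho})$ sup-norm error on each $h^\varrho$-ball, which vanishes against the $O(h^{-2/3})$ model resolvent, with no need to diagonalize the coupled operator. The non-separability becomes genuinely hard only in the upper-bound resolvent estimate, where one must control $(\B_\ego-\lambda)^{-1}$ on a circle of radius $r(\ego)\ego$ around the \emph{second-order} eigenvalue $\Lambda^1_\gamma(\ego)$, and the separable-projection trick of \cite{AGH} is unavailable. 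There the paper proceeds not by a weighted-energy or Carleman argument but by a small/large $|\sigma|$ decomposition, a dilated one-dimensional analysis with the $\tau$-frozen projection $\tilde\Pi_1$, and a chain of purely resolvent-level estimates (Propositions~\ref{lem:large-sigma}--\ref{prop:simplified-V1}). A smaller point on your region split: to obtain $|\partial_\nu V|\geq J_m+\delta'$ you must excise a neighborhood of all of $\Se$, not of $\Se^m$, since $|\partial_\nu V|=J_m$ on every point of $\Se$.
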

\begin{remark}
  As will become evident in the sequel, some of the conditions set
  above are unnecessary in order to obtain the lower bound on the left
  hand side of \eqref{eq:9}.
\end{remark}
\subsubsection{Type V2 potentials}
In this case we similarly define
\begin{equation}\label{defhatalpha}
  \hat \alpha(x) = \partial_\| \partial_\nu V(x) \,,
\end{equation}
where 
\begin{equation}\label{eq:4av2}
 \hat \alpha_m = \min_{x\in\Se}|\hat \alpha(x)| \,.
\end{equation}
We then define in $\Se$ a new subset
\begin{equation}
\label{eq:8v2}
\hat \Se^m= \{x\in\Se \,| \,|\hat \alpha(x)|=\hat \alpha_m\,\}\,.
\end{equation}
\begin{theorem}
  \label{thm:nonsmooth}
  Let $\A_h$ be given by \eqref{eq:1}, in which $\partial\Omega$ satisfies
  \eqref{AssR1}  and \eqref{AssR2} for $n=4$\,,
   let $V$, the solution of \eqref{eq:2}, be of type V2 and  satisfy
  \eqref{ass1}.  Suppose further that $\hat{\alpha}_m>0\,$. 
 Then
\begin{equation}
  \label{eq:9v2}
 \lim_{h\to0}\frac{1}{h^{2/3}}\inf \bigl\{ \Re\, \sigma(\A_h) \bigr\} =   J_m^{2/3}\frac{|\nu_1|}{2}\,.
\end{equation}
\end{theorem}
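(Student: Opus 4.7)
The plan is to establish matching upper and lower bounds for $h^{-2/3}\inf\Re\sigma(\A_h)$, following the structure of the proof of Theorem \ref{thm:interior} but replacing the interior-point analysis near $\Se^m$ by a corner analysis near the points of $\hat\Se^m$. For the lower bound I would apply an IMS-type partition of unity on $\bar\Omega$: away from $\hat\Se^m$ one inherits strictly better bounds than required (in the interior the quadratic form $\Re\langle\A_h u,u\rangle=h^2\|\nabla u\|^2$ is of order $h^2\gg h^{2/3}$, while at boundary points of $\partial\Omega_D$ where $|\partial_\nu V|>J_m$ the 1D complex-Airy lower bound along the inward normal, as in \cite{al08,Hen}, exceeds $J_m^{2/3}|\nu_1|h^{2/3}/2$ by a positive margin). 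The new piece is the corner lower bound in a small ball around each $\mathfrak c\in\hat\Se^m$, which I would obtain by comparison with the model quarter-plane operator $\Lg$ built below, together with a perturbation argument absorbing the $\hat\alpha_m$-linear tangential drift of $|\partial_\nu V|$ along $\partial\Omega_D$.

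For the upper bound I would construct an approximate eigenvalue at a corner $\mathfrak c\in\hat\Se^m$. Applying the holomorphic straightening $\Phi$ from \eqref{AssR2}, a neighbourhood of $\mathfrak c$ in $\Omega$ is mapped to a neighbourhood of $0$ in the quarter plane $Q=\R_+\times\R_+$, sending $\partial\Omega_D$ to $\{x_1=0\}$ and $\partial\Omega_N$ to $\{x_2=0\}$; the operator $\A_h$ becomes $-h^2\mu_0(x)\Delta+i\tilde V$, with $\mu_0=|\Phi'\circ\Phi^{-1}|^2$ positive and smooth, and with $\tilde V=V\circ\Phi^{-1}$ harmonic in $Q$, equal to $C_1$ on $\{x_1=0\}$, and satisfying $\partial_{x_2}\tilde V=0$ on $\{x_2=0\}$. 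These boundary conditions together with harmonicity force the Taylor expansion
\[
\tilde V - C_1 \,=\, a\,x_1 + d\,(x_1^3 - 3\,x_1 x_2^2) + O(|x|^4)
\]
at the origin, with $\mu_0(0)^{1/2}|a|=J_m$; the tangential derivative of $\mu_0^{1/2}$ along $\{x_1=0\}$ at $0$ together with $|a|$ then recovers $\hat\alpha_m$. Keeping only leading terms yields the model operator $\Lg = -\mu_0(0)\Delta + i\,a\,x_1$ on $Q$, Dirichlet at $x_1=0$ and Neumann at $x_2=0$. Reflecting across the Neumann edge extends $\Lg$ to $\R_+\times\R$ with Dirichlet at $x_1=0$, which separates; the normal factor is the complex Airy operator on the half-line with leftmost eigenvalue $\mu_0(0)^{1/3}|a|^{2/3}|\nu_1|e^{\pm i\pi/3}=J_m^{2/3}|\nu_1|e^{\pm i\pi/3}$, and the tangential factor contributes $[0,\infty)$, so the leftmost real part of $\Lg$ is precisely $J_m^{2/3}|\nu_1|/2$. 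The quasi-mode is then $u_h(x)=\chi(x)\,\varphi\bigl(J_m^{1/3}h^{-2/3}x_1\bigr)\,\psi_h(x_2)$, where $\varphi$ is the rescaled half-line Airy ground mode, $\chi$ a smooth cutoff near $\mathfrak c$, and $\psi_h$ a slow even envelope in $x_2$ with $\psi_h'(0)=0$, spreading on a tangential scale strictly between $h^{2/3}$ and $1$.

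The main technical obstacle is the verification that $\|(\A_h-\lambda_h)u_h\|=o(h^{2/3})\|u_h\|$ with $\Re\lambda_h = J_m^{2/3}|\nu_1|h^{2/3}/2+o(h^{2/3})$: one must absorb simultaneously the variation of $\mu_0$ away from $\mu_0(0)$, the cubic correction $d(x_1^3-3x_1x_2^2)$ in $\tilde V$, the linear tangential drift of $|\partial_\nu V|$ along $\partial\Omega_D$ (encoded by $\hat\alpha_m$), the errors from the cutoff $\chi$, and the commutator of $\psi_h$ with the Laplacian. In contrast to the V1 case, the tangential direction here carries no confining potential, so $\psi_h$ is not a genuine ground state but only a slow envelope; its spreading scale must be tuned carefully against $\hat\alpha_m$ and the geometric remainder so that every one of these error terms becomes $o(h^{2/3})\|u_h\|$. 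Once this is established, the approximate-eigenvalue argument for operators with compact resolvent employed in the proof of Theorem \ref{thm:interior} yields $\inf\Re\sigma(\A_h)\le \Re\lambda_h + \|(\A_h-\lambda_h)u_h\|/\|u_h\|$, and together with the lower bound outlined above this proves \eqref{eq:9v2}.
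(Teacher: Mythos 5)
The fatal gap is in the very last step. For a non-selfadjoint operator, having a quasi-mode with small residual does \emph{not} imply the existence of a nearby eigenvalue, and the inequality $\inf\Re\sigma(\A_h)\le \Re\lambda_h + \|(\A_h-\lambda_h)u_h\|/\|u_h\|$ is simply false. The complex Airy operator $-\partial^2_x+ix$ on $L^2(\R)$ has compact resolvent and \emph{empty} spectrum, yet admits excellent quasi-modes for any $\lambda$; this is precisely the pathology that makes the upper bound the hard half of the theorem. The paper spells this out at the start of Section \ref{sec:2}: ``Had $\A_h$ been self-adjoint, we could have used from here the spectral theorem \ldots This is, of course, not possible in the non-selfadjoint case.'' To convert the quasi-mode into an actual eigenvalue the paper instead establishes a uniform resolvent bound on a small circle $\partial B(\hat\Lambda^2(h),\hat r_2(h) h^{k_2})$ surrounding the candidate value (Sections \ref{sec:simplified-V2} and \ref{sec:upper}) and then runs a contour-integral/holomorphy argument. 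Your plan has no analogue of this; without it the quasi-mode buys you nothing.

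You also misstate the tangential structure of the V2 model. You claim ``the tangential direction here carries no confining potential, so $\psi_h$ is not a genuine ground state but only a slow envelope.'' This is incorrect. The paper works in the conjugate-harmonic coordinates $(s,\rho)$ of \eqref{eq:24} (not the straightening $\Phi$), in which $V=V(x_0)+c\,\rho$ exactly and $\Delta=\tilde g_c\,\Delta_{s,\rho}$ with $\tilde g_c = 1+\tilde\alpha_{x_0}s+\tilde\beta_{x_0}\rho+O(s^2+\rho^2)$. The condition $\hat\alpha_m>0$ is exactly $\tilde\alpha_{x_0}>0$, and after the blow-up \eqref{eq:49-1} the coefficient $\tilde\alpha_{x_0}s$ produces, at next order, the Neumann realization of a complex Airy operator $-\partial^2_\sigma+\theta_0\sigma$ with $\arg\theta_0=\pi/6$; the envelope $w_0^+$ is its genuine ground state (see \eqref{eq:53}--\eqref{deflambda1}). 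This confining tangential structure is not optional: it is what underlies the localization of the resolvent in $\sigma$ (Propositions \ref{lem:large-sigma-2} and \ref{prop:simplified-V2}), and without it the required circle of bounded resolvent norm cannot be produced. Finally, a small sign slip in your lower bound: on an interior piece the quadratic form contributes $h^2\|\nabla u\|^2$, which is of order $h^2\ll h^{2/3}$, not $\gg h^{2/3}$; the sharp lower bound cannot be extracted from the numerical range (which extends down to $O(h^2)$) and requires the approximate-resolvent argument of Section \ref{s2}, including the new corner model estimate \eqref{eq:12}.
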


The rest of the contribution is arranged as follows. In the next
section we obtain the leading order asymptotic behaviour of a lower
bound of $\inf \Re\sigma(\A_h)$ in the limit $h\to0$. In Sections
\ref{sec:2} and \ref{sec:quazimode-V2} we obtain a quasimode for
$\A_h$ for potentials of type $V1$ and $V2$ respectively. In
Section~\ref{sec:v1-potentials:-1d} we obtain some auxiliary resolvent
estimates in one dimension, that are employed in Section
\ref{sec:simplified}. In Section \ref{sec:simplified} we obtain
resolvent estimates for the approximate operator appearing in
Section \ref{sec:2} for type $V1$ potentials. A similar task is
carried in Section \ref{sec:simplified-V2} for type $V2$ potentials.
In the last section we complete the proof of Theorems
\ref{thm:interior} and \ref{thm:nonsmooth}. Finally, in the appendix
we bring examples of potentials of both types.

\section{Lower bound}
\label{s2}
\subsection{Main statement}

We now state and prove
\begin{proposition}
  \label{prop:lower}
 Let $\Omega$ satisfy \eqref{AssR1}  and \eqref{AssR2} with $n=3$ and $V$ satisfy
  \eqref{ass1}. Then, we have
  \begin{equation}
    \label{eq:10}
\liminf_{h\to0}\frac{1}{h^{2/3}}\inf \bigl\{ \Re\, \sigma(\A_h) \bigr\} \geq  J_m^{2/3}\frac{|\nu_1|}{2}\,.
  \end{equation}
\end{proposition}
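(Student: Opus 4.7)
The plan is a semiclassical partition-of-unity argument based on comparison with model operators, run by contradiction. Since $\Omega$ is bounded and $V$ is regular, $\A_h$ has compact resolvent and discrete spectrum. Suppose for contradiction that $\Re\lambda_h\le h^{2/3}(J_m^{2/3}|\nu_1|/2-\delta)$ for some $\delta>0$ and some sequence $h\to0$, with $u_h$ a normalised eigenfunction; the identity $\Re\langle\A_h u_h,u_h\rangle=h^{2}\|\nabla u_h\|^{2}$ yields $\|\nabla u_h\|=O(h^{-2/3})$. Note that this identity alone is not enough: the real part of the numerical range does not see the imaginary potential, so the lower bound must be extracted from the geometry of $V$ near $\pa\Omega_D$.

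Fix $\eta\in(1/3,2/3)$ and introduce a partition of unity $\{\chi_{j,h}^{2}\}$ on $\overline\Omega$ at scale $\rho_h=h^{\eta}$, with each patch classified as interior, meeting only $\pa\Omega_D$, meeting only $\pa\Omega_N$, or containing a corner (using (R2) to straighten corners conformally). Setting $v_{j,h}=\chi_{j,h}u_h$, the commutator identity $(\A_h-\lambda_h)v_{j,h}=-h^{2}(\Delta\chi_{j,h})u_h-2h^{2}\nabla\chi_{j,h}\cdot\nabla u_h$, together with an IMS-type summation using $\|\nabla u_h\|=O(h^{-2/3})$, gives $\sum_j\|(\A_h-\lambda_h)v_{j,h}\|^{2}=o(h^{4/3})$. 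Replacing $V$ by its affine approximation $\tilde V_j(x)=V(x_j)+\nabla V(x_j)\cdot(x-x_j)$ at a reference point $x_j$ of each patch introduces an extra error of order $h^{4\eta}$ per patch, so
\[\sum_j\|(\mathcal L_j-\lambda_h)v_{j,h}\|^{2}=o(h^{4/3}),\]
where $\mathcal L_j$ is the resulting model operator on an exact model domain (half-plane with Dirichlet or Neumann, quarter-plane with mixed conditions, or the whole plane).

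For each model type one establishes a uniform resolvent bound $\|(\mathcal L_j-\lambda_h)^{-1}\|=O(\delta^{-1}h^{-2/3})$ on the half-plane $\{\Re\lambda\le h^{2/3}(J_m^{2/3}|\nu_1|/2-\delta)\}$. In the Dirichlet case, separation of variables reduces $\mathcal L_j$ to the 1D complex Airy operator $-\partial_s^{2}+iJs$ on $\R_+$ with Dirichlet at $0$, whose spectrum is $\{e^{i\pi/3}J^{2/3}|\nu_k|\}_k$; since $J=|\nabla V(x_j)|\ge J_m$, this gives $\inf\Re\sigma(\mathcal L_j)\ge h^{2/3}J_m^{2/3}|\nu_1|/2$. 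In the Neumann case, the condition $\pa V/\pa\nu=0$ on $\pa\Omega_N$ forces $\tilde V_j$ to depend only on the tangential variable, so an even reflection turns the model into a whole-plane Herbst-type operator with empty spectrum. The interior and corner cases are analogous. Combining this resolvent bound with the previous display yields $\sum_j\|v_{j,h}\|^{2}\le Ch^{-4/3}o(h^{4/3})=o(1)$, contradicting $\sum_j\|v_{j,h}\|^{2}=\|u_h\|^{2}=1$.

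The main obstacle is the third step: establishing uniform polynomial-in-$h^{-1}$ resolvent bounds on the non-self-adjoint model operators in the relevant left half-plane. The Dirichlet case is classical and explicit via the complex Airy computation. The Neumann and interior cases are more delicate because the underlying Herbst operators have empty spectrum but non-trivial pseudospectra, so one needs to control the growth of the resolvent both with respect to $\Im\lambda$ and across the family of tangential frequencies arising from the reflection argument. The corner case requires a separate analysis of a mixed Dirichlet/Neumann quarter-plane problem, and is ultimately where the $(R2)$ assumption is essential. These auxiliary resolvent estimates are precisely the kind of input developed in the later sections of the paper for the sharper spectral analysis; the rougher bounds needed here can be obtained by direct rescaling and Fourier considerations.
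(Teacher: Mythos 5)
Your overall strategy is the natural dual of the paper's: the paper builds a global approximate resolvent $\mathcal R(h,\lambda)=\sum_j\chi_{j,h}(\A_{j,h}-\lambda)^{-1}\chi_{j,h}+\sum_k\eta_{k,h}R_{k,h}\eta_{k,h}$ and shows $(\A_h-\lambda)\mathcal R=I+\Eg$ with $\|\Eg\|=O(h^{\min(\varrho,2/3-\varrho)})=o(1)$, whereas you localize a putative eigenfunction and invert the model operators patch by patch. Both arguments rest on exactly the same four model resolvent lemmas (whole plane, Dirichlet half-plane, Neumann half-plane, and the Dirichlet--Neumann quarter plane, the last being the genuinely new ingredient here, proved in the paper via the tensorized semigroup bound). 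Your classification of the models and the thresholds ($J_m^{2/3}|\nu_1|/2$ from the Dirichlet Airy operator, uniform boundedness in the Neumann and interior cases) is correct.

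There is, however, a concrete gap in your error budget. You account only for the commutator error and for the Taylor error of $V$, but a patch abutting a curved piece of $\partial\Omega$ (or a corner) is not ``an exact model domain'': passing to the curvilinear coordinates \eqref{eq:33} or to the conformal coordinates \eqref{eq:25} replaces $h^2\Delta$ by $h^2\tilde g\,\Delta_{(s,\rho)}$ (plus first-order terms) with $\tilde g=1+O(h^{\eta})$ on the patch. Since $h^2\Delta u_h=(iV-\lambda_h)u_h$ is only $O(1)$ in $L^2$ -- not $O(h^{2/3})$ -- the resulting error is $O(h^{\eta}\|v_{j,h}\|)$ per patch, contributing $O(h^{2\eta})$ to $\sum_j\|(\mathcal L_j-\lambda_h)v_{j,h}\|^2$; this is never $o(h^{4/3})$ on the admissible range $\eta\in(1/3,2/3)$, so your contradiction does not close. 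The parametrix formulation avoids this because there the metric error appears \emph{composed with the model resolvent}, as in \eqref{eq:28}, and the elliptic estimates $\|\Delta_{(s,\rho)}(\tilde\A_{k,h}-\lambda)^{-1}\|\le Ch^{-2}$, $\|\nabla_{(s,\rho)}(\tilde\A_{k,h}-\lambda)^{-1}\|\le Ch^{-4/3}$ turn it into an $O(h^{\varrho})$ \emph{operator-norm} error -- and $o(1)$ in operator norm is all that is needed there, while your argument demands $o(h^{2/3})$ relative errors because each application of the model resolvent costs $h^{-2/3}$. To rescue your route you must bound $(\mathcal L_j-\lambda_h)^{-1}h^2(\tilde g-1)\Delta_{(s,\rho)}$ in operator norm (e.g.\ by passing to the adjoint models, which satisfy the same elliptic estimates), rather than putting the metric term into the source. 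A secondary point: the paper's construction also yields the uniform bound $\|(\A_h-\lambda)^{-1}\|\le C_\epsilon h^{-2/3}$ on the whole half-plane \eqref{condlambda}, which is reused in Section~\ref{sec:upper}; the contradiction argument gives only the spectral inclusion.
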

The proof differs from the proof of the lower bound in \cite{AGH} only
by the need to estimate the resolvent in the vicinity of the
Dirichlet-Neumann corners. We thus begin by recalling various
  lemmas from \cite {AGH}, and then continue by treating the corner
  case.  

Note that \eqref{eq:10} has already been proved in
\cite{al08}. Nevertheless, the proof brought in this section, is more
constructive and provides resolvent estimates for $\A_h$ in addition
for the lower bound on the spectrum. 
\subsection{Preliminary  lemmas}\label{ss2.2}
The following lemmas all involve an affine approximation of $V$.
\subsubsection{Complex Airy operator in $\mathbb R^2$}
  \begin{lemma} 
\label{lem:preliminary-lemmas-entire}
    Let 
\begin{displaymath}
  \mathcal A_0 = -\Delta+i  x_1\,,
\end{displaymath}
be defined on 
\begin{displaymath}
  D ( \mathcal A_0) = \{ u\in  H^2(\mathbb R^2)\,|\, x_1u \in L^2(\mathbb R^2)\}\,.
\end{displaymath}
Then, for any $\omega >0\,$,  there exists $C_\omega$ such that
\begin{displaymath}  
\sup_{\Re z\leq\omega}\|(\mathcal A_0-z)^{-1}\|\leq C_\omega\,.
\end{displaymath}
\end{lemma}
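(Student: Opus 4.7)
The plan is to exploit the tensor-product structure of $\mathcal A_0$ in the $(x_1,x_2)$ variables and reduce the problem to the one-dimensional complex Airy operator
\[
H_1 \;:=\; -\partial_{x_1}^2 + i x_1 \quad\text{on}\quad L^2(\R).
\]
A partial Fourier transform in $x_2$ turns $-\partial_{x_2}^2$ into multiplication by $\xi^2$ while leaving the action in $x_1$ untouched, so $\mathcal A_0 - z$ is unitarily equivalent to the direct integral $\int^{\oplus}_{\R}\!\bigl(H_1 + \xi^2 - z\bigr)\,d\xi$. Consequently, once each fibre is shown to be invertible,
\[
\|(\mathcal A_0 - z)^{-1}\|
\;=\;
\sup_{\xi \in \R}\,\bigl\|(H_1 - (z - \xi^2))^{-1}\bigr\|_{L^2(\R)}.
\]
Since $\xi^2\ge 0$, for every $\lambda$ in this supremum we have $\Re\lambda = \Re z - \xi^2 \le \omega$, so the lemma reduces to showing that
\[
g(\omega) \;:=\; \sup_{\Re\lambda \le \omega}\,\|(H_1 - \lambda)^{-1}\|_{L^2(\R)}
\]
is finite for every $\omega > 0$.

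To handle this one-dimensional question I would first observe that the unitary translation $x_1 \mapsto x_1 + \Im\lambda$ conjugates $H_1 - \lambda$ into $H_1 - \Re\lambda$, so the resolvent norm depends only on $\mu := \Re\lambda$. Set $\tilde g(\mu) := \|(H_1 - \mu)^{-1}\|$; it suffices to prove that $\tilde g$ is bounded on $(-\infty,\omega]$. For $\mu \le 0$, the numerical-range computation
\[
\Re\bigl\langle (H_1 - \mu) u , u \bigr\rangle
\;=\;
\|\partial_{x_1} u\|^2 - \mu\,\|u\|^2 \;\ge\; -\mu\,\|u\|^2
\]
combined with Cauchy--Schwarz yields $\tilde g(\mu) \le 1/|\mu|$, so in particular $\tilde g$ is bounded on $(-\infty,-1]$ and decays to $0$ at $-\infty$. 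For $\mu\in[-1,\omega]$ I would invoke the classical fact that $\sigma(H_1) = \emptyset$, which can be read off from the explicit Green's function of $H_1$ in terms of Airy functions (cf.\ the references cited in \cite{AGH}); hence $\lambda\mapsto(H_1-\lambda)^{-1}$ is an entire operator-valued function and, in particular, its norm is bounded on the compact interval $[-1,\omega]$. Combining the two regimes gives $g(\omega)<\infty$, and then $C_\omega := g(\omega)$ is the constant claimed by the lemma.

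The main obstacle is not analytic but rather one of bookkeeping: one must justify the direct-integral representation carefully on the precise domain $D(\mathcal A_0)$ specified in the statement (checking in particular that $x_1 u \in L^2$ is preserved fibrewise), and one must quote or re-derive that $\sigma(H_1) = \emptyset$. Both tasks are classical; once in place, no further computation is required and the remainder of the argument proceeds as sketched.
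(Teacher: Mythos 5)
Your proof is correct. The paper does not actually supply an argument for this lemma; it is stated as a fact recalled from \cite{AGH}, and the mechanism used in that reference (and visibly used in this paper for the closely analogous corner operator $\A_c$) is the semigroup route: factor the semigroup as a tensor product, invoke the exponential (in fact super-exponential, $\|e^{-tH_1}\|=e^{-t^3/12}$) decay of the one-dimensional complex Airy semigroup, and integrate $\int_0^\infty e^{t\Re z}\|e^{-tH_1}\|\,dt$.

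You instead take the partial Fourier transform in $x_2$, reduce to the one-dimensional fibre operator $H_1+\xi^2-z$, exploit translation invariance in $\Im\lambda$ so that only $\Re\lambda$ matters, and then split into two regimes: a numerical-range bound $\tilde g(\mu)\le 1/|\mu|$ for $\mu\le 0$, and compactness plus $\sigma(H_1)=\emptyset$ for $\mu\in[-1,\omega]$. This is a genuinely different and equally legitimate route. The semigroup method yields the constant $C_\omega$ in one line once the decay rate is known, and avoids having to discuss $\sigma(H_1)=\emptyset$ or the fibrewise domain structure. Your method is more transparent about where boundedness actually comes from (the trivial far-field regime plus a compact interval) and trades the semigroup-decay input for the classical input $\sigma(H_1)=\emptyset$. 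Two small points you should make explicit if you write this up: (i) the numerical-range estimate for $\mu<0$ gives injectivity and a closed-range bound, but surjectivity of $H_1-\mu$ requires either maximal accretivity of $H_1$ (with the same estimate for $H_1^*$) or the already-invoked fact $\sigma(H_1)=\emptyset$; (ii) in the direct-integral step, one should note that the fibre domains $D(H_1+\xi^2)=D(H_1)$ are $\xi$-independent and that the map $\xi\mapsto\|(H_1+\xi^2-z)^{-1}\|$ is continuous and tends to $0$ as $|\xi|\to\infty$, so the essential supremum is attained and finite. You flag both as bookkeeping, which is fair, but they are the only places where the argument could silently go wrong.
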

\begin{remark}
\label{rem:entire}
By dilation, we obtain the same result for $ -\Delta+i \,\jg\, x_1$ for
any $\jg \in \mathbb R\setminus \{0\}$. Hence, we can obtain a uniform bound,  with respect to
$\jg$,  of $\sup_{\Re z\leq\omega}\|(-\Delta+i \,\jg\,x_1-z)^{-1}\|$ on any compact interval excluding $0$.
\end{remark}
\subsubsection{Complex Airy operator in $\mathbb R^2_+$}
The next lemma considers the Neumann problem in $\R^2_+=\R\times\R_+$ which arises while localizing $\A_h$ near
$\partial\Omega_N$. It follows immediately from \cite[Proposition 4.9]{AGH}.
  \begin{lemma}
\label{lem:preliminary-lemmas-half-Neumann}
    Let 
\begin{displaymath}
  \mathcal A_N= -\Delta+ix_1\,,
\end{displaymath}
be defined on 
\begin{displaymath}
  D ( \mathcal A_N) = \{ u\in  H^2(\mathbb R^2_+)\,|\, x_1u \in
  L^2(\mathbb R^2_+) \,,\;\partial u/\partial x_2|_{\partial\R^2_+}=0\}\,.
\end{displaymath}
Then, for any $\omega >0$, there exists $C_\omega$ such that
\begin{displaymath}  
\sup_{\Re z\leq\omega}\|(\mathcal A_N-z)^{-1}\|\leq C_\omega.
\end{displaymath}
\end{lemma}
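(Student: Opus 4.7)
The natural approach is to reduce the half-plane Neumann problem to the full-plane problem handled in Lemma~\ref{lem:preliminary-lemmas-entire} by even reflection across $\{x_2=0\}$. The crucial point is that the zeroth-order coefficient $ix_1$ depends only on $x_1$, so the operator $-\Delta+ix_1$ is invariant under the reflection $x_2\mapsto -x_2$, and even symmetry in $x_2$ is precisely the symmetry that encodes the Neumann condition.

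Given $f\in L^2(\mathbb{R}^2_+)$ and $z$ with $\Re z\leq\omega$, I would set $\tilde f(x_1,x_2)=f(x_1,|x_2|)\in L^2(\mathbb{R}^2)$, which satisfies $\|\tilde f\|_{L^2(\mathbb{R}^2)}=\sqrt{2}\,\|f\|_{L^2(\mathbb{R}^2_+)}$. Lemma~\ref{lem:preliminary-lemmas-entire} then furnishes a unique $\tilde u\in D(\mathcal{A}_0)$ with $(\mathcal{A}_0-z)\tilde u=\tilde f$ and $\|\tilde u\|_{L^2(\mathbb{R}^2)}\leq C_\omega\|\tilde f\|_{L^2(\mathbb{R}^2)}$. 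Because $\tilde f$ is even in $x_2$ and $\mathcal{A}_0-z$ commutes with the reflection, the function $(x_1,x_2)\mapsto\tilde u(x_1,-x_2)$ also solves the same equation; by the injectivity of $\mathcal{A}_0-z$ (a consequence of Lemma~\ref{lem:preliminary-lemmas-entire}), $\tilde u$ must itself be even in $x_2$. Consequently $\partial_{x_2}\tilde u$ is odd, and since $\tilde u\in H^2(\mathbb{R}^2)$, the trace $\partial_{x_2}\tilde u|_{x_2=0}$ vanishes in $H^{1/2}(\mathbb{R})$.

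Setting $u:=\tilde u|_{\mathbb{R}^2_+}$, one immediately obtains $u\in H^2(\mathbb{R}^2_+)$, $x_1u\in L^2(\mathbb{R}^2_+)$, $\partial u/\partial x_2|_{x_2=0}=0$, and $(\mathcal{A}_N-z)u=f$, together with
\[
\|u\|_{L^2(\mathbb{R}^2_+)}=\tfrac{1}{\sqrt 2}\|\tilde u\|_{L^2(\mathbb{R}^2)}\leq \tfrac{C_\omega}{\sqrt 2}\|\tilde f\|_{L^2(\mathbb{R}^2)}=C_\omega\|f\|_{L^2(\mathbb{R}^2_+)}\,,
\]
so the resolvent bound for $\mathcal{A}_N$ is inherited from the one for $\mathcal{A}_0$ with the \emph{same} constant $C_\omega$.

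The only subtle step is the parity argument: one must know that the unique $\tilde u$ produced by Lemma~\ref{lem:preliminary-lemmas-entire} is automatically even in $x_2$, so that its restriction satisfies the Neumann condition in the trace sense. This is where injectivity of $\mathcal{A}_0-z$, rather than merely the norm bound, is essential. Once this is in place the argument is a direct transfer and no separate analysis of the $z$-dependence is required; in particular the bound is uniform on the whole half-plane $\{\Re z\leq\omega\}$, including any part intersecting $[0,\infty)$, which is consistent with the fact that the complex Airy operator has empty spectrum on $\mathbb{R}$.
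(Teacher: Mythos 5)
Your proof is correct in spirit and takes a genuinely different route from the paper. The paper does not give a proof at all; it simply asserts that the lemma ``follows immediately from \cite[Proposition 4.9]{AGH}.'' You instead give a self-contained argument by even reflection across $\{x_2=0\}$, reducing the Neumann half-plane estimate to the whole-plane estimate of Lemma \ref{lem:preliminary-lemmas-entire}. This is a clean and natural reduction---indeed the paper itself uses the very same device elsewhere, in the proof of Lemma \ref{lemma4.12-corner}, where an even extension in the direction transverse to the Neumann side reduces the quarter-plane operator to a half-plane one. Your argument buys simplicity and makes explicit the role of the symmetry ($ix_1$ does not depend on $x_2$), while the AGH route presumably relies on the direct resolvent machinery developed there.

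There is one small gap you should close. You construct, for each $f$, a function $u\in D(\mathcal{A}_N)$ with $(\mathcal{A}_N-z)u=f$ and $\|u\|\le C_\omega\|f\|$, i.e.\ you establish surjectivity together with a bound. But the stated conclusion $\|(\mathcal{A}_N-z)^{-1}\|\le C_\omega$ presupposes that $\mathcal{A}_N-z$ is also \emph{injective}, so that $(\mathcal{A}_N-z)^{-1}$ is a well-defined bounded operator. You invoke injectivity of $\mathcal{A}_0-z$ (to force $\tilde u$ to be even) but never address injectivity of $\mathcal{A}_N-z$. The remedy is the same reflection in reverse: if $u\in D(\mathcal{A}_N)$ and $(\mathcal{A}_N-z)u=0$, then the even extension $\tilde u(x_1,x_2)=u(x_1,|x_2|)$ lies in $H^2(\mathbb{R}^2)$ (this is exactly where the Neumann trace condition is used), hence $\tilde u\in D(\mathcal{A}_0)$ with $(\mathcal{A}_0-z)\tilde u=0$, and so $\tilde u=0$ by Lemma \ref{lem:preliminary-lemmas-entire}. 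With this one sentence added, the proof is complete.
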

\begin{remark}
\label{rem:Neumann}
By the same argument of Remark \ref{rem:entire} the resolvent of the
Neumann realization of $ -\Delta+i \,\jg\,x_1$ is uniformly bounded with
respect to $\jg$ on any compact interval excluding $0$.
\end{remark}

We also restate another conclusion of \cite[Proposition 4.9]{AGH} and
\cite[Proposition 4.5]{AGH}, which is related to the localization of
$\A_h$ near $\partial\Omega_D$.
  \begin{lemma}
\label{lem:preliminary-lemmas-half-Dirichlet}
    Let, for $\jg \neq 0$,  
\begin{displaymath}
  \mathcal A_D= -\Delta+i \,\jg \, x_2\,,
\end{displaymath}
be defined on 
\begin{displaymath}
  D ( \mathcal A_D) = \{ u\in  H^2(\mathbb R^2_+)\,|\, x_2u \in
  L^2(\mathbb R^2_+) \,, \;u|_{\partial\R^2_+}=0\}\,.
\end{displaymath}
Then,  there exists $C>0$ such that, for all $0<\epsilon\leq 1$, 
\begin{displaymath}  
\sup_{\Re z\leq |\jg|^\frac 23 |\nu_1|/2-\epsilon}\|(\mathcal
A_D-z)^{-1}\|+\|\nabla(\mathcal A_D-z)^{-1}\|+\|\Delta(\mathcal A_D-z)^{-1}\|
\leq \frac{C}{\epsilon}\,.
\end{displaymath}
Moreover, $C$ may be chosen  independently of $\jg$ if we confine $\jg$ to a
closed bounded interval excluding $0$.
\end{lemma}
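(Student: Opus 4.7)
The plan is to reduce the two-dimensional problem to a one-dimensional one by a partial Fourier transform in the tangential variable $x_1$, and then to invoke the one-dimensional resolvent estimate for the half-line Dirichlet complex Airy operator (Proposition 4.5 of \cite{AGH}), exactly in the spirit of the proof of Lemma~\ref{lem:preliminary-lemmas-half-Neumann}.

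Applying $\mathcal{F}_{x_1\to\xi}$ conjugates $\mathcal{A}_D$ to the direct integral
\[
\int_{\mathbb{R}}^{\oplus}\bigl(L_\jg+\xi^2\bigr)\,d\xi\,,
\]
where $L_\jg=-\partial_y^2+i\jg y$ is realized in $L^2(\mathbb{R}_+)$ with Dirichlet condition at $y=0$. By Plancherel,
\[
\bigl\|(\mathcal{A}_D-z)^{-1}\bigr\|\;=\;\sup_{\xi\in\mathbb{R}}\bigl\|\bigl(L_\jg-(z-\xi^2)\bigr)^{-1}\bigr\|\,.
\]
The spectrum of $L_\jg$ being contained in $\{\Re w\geq|\jg|^{2/3}|\nu_1|/2\}$, the assumption $\Re z\leq|\jg|^{2/3}|\nu_1|/2-\epsilon$ implies $\Re(z-\xi^2)\leq|\jg|^{2/3}|\nu_1|/2-\epsilon$ for every $\xi\in\mathbb{R}$, so the one-dimensional estimate from \cite[Prop.~4.5]{AGH} furnishes a bound $C/\epsilon$ uniformly in $\xi$, and hence the claimed $L^2$-bound on $(\mathcal{A}_D-z)^{-1}$.

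I would then deduce the gradient and Laplacian estimates by elementary energy identities. Setting $u=(\mathcal{A}_D-z)^{-1}f$ and using $\Re\langle i\jg x_2u,u\rangle=0$ together with the Dirichlet condition, one obtains
\[
\|\nabla u\|^2\;=\;\Re\langle f,u\rangle+\Re z\,\|u\|^2\,,
\]
which readily yields $\|\nabla u\|\leq C/\epsilon\,\|f\|$ as long as $\Re z$ remains bounded above. For the Laplacian, I expand
\[
\|\mathcal{A}_Du\|^2\;=\;\|\Delta u\|^2+\jg^2\|x_2u\|^2+2\,\Re\langle -\Delta u,\,i\jg x_2u\rangle,
\]
and integrate by parts once, using the Dirichlet condition and $\nabla x_2=(0,1)$, to rewrite the cross term as $\jg\,\Im\int\partial_{x_2}u\,\bar u\,dx$, which is dominated by $|\jg|\,\|\nabla u\|\,\|u\|$. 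Combined with $\|\mathcal{A}_Du\|\leq\|f\|+|z|\|u\|$, this produces the desired bound on $\|\Delta u\|$ (and as a byproduct on $|\jg|\,\|x_2u\|$).

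Uniformity of $C$ when $\jg$ ranges over a compact interval avoiding $0$ is obtained through the dilation $x\mapsto|\jg|^{-1/3}x$, which replaces $\mathcal{A}_D$ by $|\jg|^{2/3}\bigl(-\Delta+i\,\mathrm{sign}(\jg)\,y_2\bigr)$ and thereby reduces the question to the two fixed operators with $\jg=\pm 1$. The real obstacle is concealed in the reference to \cite[Prop.~4.5]{AGH}: the half-line Dirichlet complex Airy operator is highly non-self-adjoint and its resolvent can in principle blow up along horizontal strips in the complex plane (pseudospectral phenomenon), so the substantive analytic content is precisely the uniform-in-$\Im z$ estimate on the half-plane $\{\Re z\leq|\jg|^{2/3}|\nu_1|/2-\epsilon\}$. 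Once this one-dimensional input is granted, the present lemma reduces to a routine separation of variables combined with energy estimates.
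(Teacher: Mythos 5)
Your reduction via the partial Fourier transform in $x_1$ and the basic resolvent bound are correct (the paper itself just cites \cite{AGH}, so a from-scratch argument is reasonable), and the gradient estimate via the real part of the energy identity is also fine. However, the Laplacian estimate has a genuine gap.

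The issue is the step ``combined with $\|\mathcal{A}_D u\| \leq \|f\| + |z|\|u\|$, this produces the desired bound on $\|\Delta u\|$.'' The supremum in the lemma runs over $\{\Re z\leq |\jg|^{2/3}|\nu_1|/2-\epsilon\}$ with $\Im z$ unconstrained, so $|z|$ is unbounded. The basic resolvent estimate gives $\|u\|\leq C\epsilon^{-1}\|f\|$, but this does \emph{not} control $|z|\|u\|$, and your identity
\begin{displaymath}
\|\Delta u\|^2 + \jg^2\|x_2 u\|^2 \;=\; \|\mathcal{A}_D u\|^2 - 2\jg\,\Im\!\int \partial_{x_2}u\,\bar u
\end{displaymath}
therefore leaves $\|\Delta u\|$ uncontrolled as $|\Im z|\to\infty$. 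The missing ingredient is a decay estimate of the form $|\Im z|\,\|(\mathcal{A}_D - z)^{-1}\|\leq C$. The imaginary part of the energy identity, $\jg\|x_2^{1/2}u\|_2^2 - \Im z\,\|u\|_2^2 = \Im\langle f,u\rangle$, does give $|\Im z|\,\|u\|\leq\|f\|$ when $\jg$ and $\Im z$ have \emph{opposite} signs (both left-hand terms then have the same sign); but when $\jg\,\Im z>0$ the identity merely bounds $\|x_2^{1/2}u\|$ in terms of $|\Im z|\,\|u\|$ and provides no control in the direction you need. In the Fourier picture, the difficulty is the same: with $w=z-\xi^2$ one must bound $|w|\,\|(L_\jg-w)^{-1}\|$ uniformly in $(\xi,\Im z)$, and the basic 1D estimate $\|(L_\jg-w)^{-1}\|\leq C/(\epsilon+\xi^2)$ fails to do this when $|\Im z|\gg\xi^2$. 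Thus the Laplacian bound requires a further non-trivial resolvent-decay (or $\|L_\jg(L_\jg-w)^{-1}\|$) estimate for the half-line Airy operator, which you would have to either extract from \cite{AGH} alongside \cite[Prop.~4.5]{AGH} or prove separately; your energy-identity argument alone does not close it.
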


We continue  with the following estimate (cf. \cite[Lemma 4.12]{AGH})
which will become useful in Section~\ref{sec:upper}.
\begin{lemma} 
\label{lemma4.12}
With the notation of  Lemma
\ref{lem:preliminary-lemmas-half-Dirichlet}, for any compact interval
$I= [\mu_1,\mu_2] $, there exists a positive $C(I)$ such that, 
for any $z=\mu+i\nu$ with  $|\nu|>\mu_2+4\,$ and $\mu\in I$,
\begin{equation}
  \label{eq:30aa}
\|(\A^D -z )^{-1}\| \leq  C(I) \,.
\end{equation}
\end{lemma}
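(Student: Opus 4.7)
The plan is to distinguish two cases according to the sign of $\jg\nu$. When $\jg$ and $\nu$ have opposite signs, a direct numerical-range argument suffices: testing $(\A^D-z)u$ against $u$ in $L^2(\R^2_+)$ yields
$$
\Im\langle(\A^D-z)u,u\rangle=\jg\int_{\R^2_+}x_2|u|^2\,dx-\nu\|u\|^2\,.
$$
Since $x_2\geq 0$, the two terms on the right have the same sign when $\jg\nu<0$, so $|\Im\langle(\A^D-z)u,u\rangle|\geq|\nu|\|u\|^2\geq(\mu_2+4)\|u\|^2$, and Cauchy--Schwarz gives $\|(\A^D-z)u\|\geq(\mu_2+4)\|u\|$, hence $\|(\A^D-z)^{-1}\|\leq(\mu_2+4)^{-1}$.

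The opposite case $\jg\nu>0$ is more delicate, because $z$ then lies in the closed convex hull of the numerical range of $\A^D$ and the elementary argument is ineffective. For this case I would reduce to a one-dimensional problem via the partial Fourier transform $\FF_1$ in the translation-invariant variable $x_1$, giving
$$
\FF_1\A^D\FF_1^{-1}=\int^{\oplus}_{\R}\bigl(\xi_1^2+\LL_\jg\bigr)\,d\xi_1\,,\qquad \LL_\jg:=-\frac{d^2}{dx_2^2}+i\jg x_2\ \text{ on }L^2(\R_+)\text{ with Dirichlet at }0\,.
$$
Hence $\|(\A^D-z)^{-1}\|=\sup_{\xi_1\in\R}\|(\LL_\jg-(z-\xi_1^2))^{-1}\|$, and the problem is to bound the one-dimensional resolvent uniformly for $\tilde z=z-\xi_1^2$ running along the horizontal half-line $\{\Re\tilde z\leq\mu_2,\ \Im\tilde z=\nu\}$.

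For the 1D estimate I would use the complex dilation $y=e^{-i\pi/6}\jg^{1/3}x_2$, which identifies $\LL_\jg$ with $e^{i\pi/3}\jg^{2/3}L_{\rm Dir}$, where $L_{\rm Dir}=-d^2/dy^2+y$ is the self-adjoint Dirichlet realization on $L^2(\R_+)$ with purely discrete spectrum $\{|\nu_k|\}_{k\geq 1}$. This yields
$$
\bigl\|(\LL_\jg-\tilde z)^{-1}\bigr\|=\jg^{-2/3}\bigl\|(L_{\rm Dir}-w)^{-1}\bigr\|=\frac{\jg^{-2/3}}{\dist(w,\{|\nu_k|\})}\,,\qquad w=e^{-i\pi/3}\jg^{-2/3}\tilde z\,,
$$
and a direct computation gives $\Im w=(\nu-(\mu-\xi_1^2)\sqrt 3)/(2\jg^{2/3})$. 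Under the hypotheses $\mu\in I$ and $|\nu|>\mu_2+4$, the term $\nu-(\mu-\xi_1^2)\sqrt 3=\nu-\mu\sqrt 3+\xi_1^2\sqrt 3$ keeps a definite sign uniformly in $\xi_1\in\R$, so $|\Im w|\geq c(I)>0$; therefore $\dist(w,\{|\nu_k|\})\geq|\Im w|\geq c(I)$, producing the required bound. Combining with the first case and taking the maximum of the two constants gives the advertised $C(I)$. The main obstacle is the rigorous justification of the complex dilation on $L^2(\R_+)$; this can be handled either by writing the Green's function of $\LL_\jg-\tilde z$ explicitly in terms of Airy functions, whose denominator $\Ai(-\tilde z e^{-i\pi/3}\jg^{-2/3})$ is then estimated from below via its oscillatory-sector asymptotics, or by a Combes--Thomas type contour-deformation argument on the quadratic-form domain.
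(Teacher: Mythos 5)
The paper does not prove this lemma but cites it from~\cite[Lemma~4.12]{AGH}, so there is no in-text argument to compare against; I assess your proposal on its own. Your first case ($\jg\nu<0$), via the numerical range together with the matching estimate for the adjoint, is fine.

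Your second case ($\jg\nu>0$) has a genuine gap. The ``complex dilation'' $y=e^{-i\pi/6}\jg^{1/3}x_2$ is \emph{not} a unitary map on $L^2(\R_+)$---the scaling factor is genuinely complex---so it is at best a formal similarity on a dense set of analytic vectors, and it does \emph{not} preserve resolvent norms. The displayed identity $\|(\LL_\jg-\tilde z)^{-1}\|=\jg^{-2/3}/\dist(w,\{|\nu_k|\})$ is therefore false; in general one only has the lower bound $\|(\LL_\jg-\tilde z)^{-1}\|\geq 1/\dist(\tilde z,\sigma(\LL_\jg))$, which runs the wrong way for your purpose. In fact the Dirichlet complex Airy operator is a paradigmatic strongly non-normal operator, whose pseudospectrum extends far to the right of the spectral ray $\{e^{i\pi/3}\jg^{2/3}|\nu_k|\}$: for $\Re\tilde z$ past the threshold $|\jg|^{2/3}|\nu_1|/2$ and $\Im\tilde z$ moderate, the resolvent norm grows like $\exp(c(\Re\tilde z)^{3/2})$ while the distance to the spectrum stays bounded below, so no estimate of the form $C/\dist(\tilde z,\sigma)$ can hold uniformly. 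You acknowledge this obstacle at the end, but the proposed repairs are not worked out and would have to confront the non-normality directly: bounding the denominator $\Ai(\cdot)$ from below is not enough, since the numerator factors of the Airy Green's function can also grow, and passing from the kernel to the $L^2$ operator norm requires a Schur-type bound on the full kernel. (A secondary issue: the assertion that $\nu-(\mu-\xi_1^2)\sqrt 3$ ``keeps a definite sign uniformly in $\xi_1\in\R$'' does not follow from $\mu\leq\mu_2$, $|\nu|>\mu_2+4$: one only gets $\nu-\mu\sqrt 3\geq\mu_2+4-\sqrt 3\mu_2$, which is negative once $\mu_2>2(\sqrt 3+1)$, and the sign then flips as $\xi_1^2$ grows.) A correct route in this regime would for instance combine the semigroup bound behind Lemma~\ref{lem:preliminary-lemmas-half-Dirichlet} with the real translation $x_2\mapsto x_2+\nu/\jg$ that pushes the Dirichlet boundary to $-\nu/\jg$ and a comparison with the whole-line Airy operator (uniformly bounded resolvent on $\{\Re z\leq\mu_2\}$), or a genuine Schur-test estimate on the Airy Green's function.
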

\subsubsection{Complex Airy operator $\mathbb R_+\times \mathbb R_+$} 
We now present a new result which is useful while using blow-up
analysis to obtain the contribution of the corners to the resolvent of
$\A_h$:
  \begin{lemma}
    Let $\A_c$ denote the operator
    \begin{subequations}
\label{eq:11}
          \begin{equation}
      \A_c= -\Delta +i\, \jg \, x_1
    \end{equation}
   defined on
\begin{equation}
  D(\A_c)= \{ u\in H^2(Q) \, | \, u_{\partial Q_\|}=0 \,;\,
  \partial_\nu u_{\partial Q_\perp}=0\, ;\,x_1\, u\in L^2(Q)\}\,,
\end{equation}
    \end{subequations}
where $\jg \neq 0$ and 
\begin{displaymath}
  Q=\R_+\times\R_+ \quad ; \quad \partial Q_\perp=\R_+\times\partial\R_+ \quad ; \quad
  \partial Q_\|=\partial\R_+\times\R_+ \,.
\end{displaymath}
Then,  there exists $C>0$ such that, for any $\epsilon>0$, 
\begin{equation}
\label{eq:12} 
  \sup_{\Re\lambda \leq \, |\jg|^{\frac 23}\, |\nu_1|/2-\epsilon}\| (\A_c-\lambda)^{-1}\|\leq \frac C \epsilon \,.
\end{equation}
Moreover, $C$ may be chosen  independently of $\jg$ if we confine its value to a
closed bounded interval excluding $0$.
  \end{lemma}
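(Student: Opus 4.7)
The plan is to reduce the quarter--plane mixed problem to the half--plane Dirichlet problem already treated in Lemma \ref{lem:preliminary-lemmas-half-Dirichlet}, via an even--reflection argument in the variable $x_2$. The crucial structural observation is that the coefficient $\jg x_1$ of the potential does not depend on $x_2$, so the operator is invariant under $x_2\mapsto -x_2$, while the Neumann condition on $\partial Q_\perp=\mathbb R_+\times\{0\}$ is precisely the trace condition preserved by even reflection across $x_2=0$.

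Concretely, I would first verify that if $u\in D(\A_c)$, then the even extension $\widetilde u(x_1,x_2):=u(x_1,|x_2|)$ lies in $H^2(\mathbb R_+\times\mathbb R)$ (the Neumann compatibility $\partial_{x_2}u|_{x_2=0}=0$ eliminates any jump in the first derivative across $x_2=0$), still vanishes on $\{x_1=0\}$, and still satisfies $x_1\widetilde u\in L^2$. After the coordinate swap $(x_1,x_2)\mapsto(x_2,x_1)$, the operator $-\Delta+i\jg x_1$ with Dirichlet trace on $\{x_1=0\}$ becomes exactly the operator $\mathcal A_D$ of Lemma \ref{lem:preliminary-lemmas-half-Dirichlet}. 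Given $f\in L^2(Q)$ and $\lambda$ with $\Re\lambda\le |\jg|^{2/3}|\nu_1|/2-\epsilon$, I would extend $f$ evenly to $\widetilde f\in L^2(\mathbb R_+\times\mathbb R)$, use Lemma \ref{lem:preliminary-lemmas-half-Dirichlet} to solve $(\mathcal A_D-\lambda)\widetilde u=\widetilde f$, and then restrict back to $Q$. By uniqueness of the resolvent together with the invariance of the operator and the evenness of $\widetilde f$, the solution $\widetilde u$ is itself even in $x_2$, so $u:=\widetilde u|_Q$ automatically meets the Neumann condition on $\partial Q_\perp$. The identities $\|\widetilde f\|_{L^2}^2=2\|f\|_{L^2}^2$ and $\|\widetilde u\|_{L^2}^2=2\|u\|_{L^2}^2$ then transfer the bound $C/\epsilon$ of Lemma \ref{lem:preliminary-lemmas-half-Dirichlet} verbatim to \eqref{eq:12}, and the uniformity of $C$ in $\jg$ over a closed interval excluding $0$ is inherited directly from the corresponding statement of Lemma \ref{lem:preliminary-lemmas-half-Dirichlet} (see Remark \ref{rem:Neumann}).

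I do not expect a serious obstacle here. The only delicate point is the verification that even reflection preserves membership in the domain of the Dirichlet operator, which is the usual routine check once the Neumann compatibility is in hand. The underlying reason this corner problem is so soft—in sharp contrast with the main difficulty of the paper—is precisely that the potential $\jg x_1$ depends on only one of the two variables, so that the Neumann side can be reflected away without interacting with the imaginary part. By contrast, in Sections \ref{sec:2} and \ref{sec:quazimode-V2}, the approximate corner operator produced by the actual blow--up of $\A_h$ near a Dirichlet--Neumann corner will couple both variables nontrivially, and the present lemma will only be applied after an additional reduction.
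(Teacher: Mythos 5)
Your argument is correct, but it is not the route the paper takes for this lemma. The paper exploits the tensor-product structure: it writes $\A_c=\LL_+-\partial^2_{x_2}$ with $\LL_+$ the Dirichlet realization of $-d^2/dx_1^2+i\jg x_1$ on $\R_+$ and $-\partial^2_{x_2}$ the Neumann realization on $\R_+$; since the two factors commute, $e^{-t\A_c}=e^{t\partial^2_{x_2}}\otimes e^{-t\LL_+}$, whence $\|e^{-t\A_c}\|\leq\|e^{-t\LL_+}\|\leq C e^{-t|\jg|^{2/3}|\nu_1|/2}$ by a one-dimensional semigroup bound from \cite{AGH}, and \eqref{eq:12} follows by integrating the semigroup, giving $\|(\A_c-\lambda)^{-1}\|\leq C/(|\jg|^{2/3}|\nu_1|/2-\Re\lambda)$. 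Your even-reflection reduction to Lemma \ref{lem:preliminary-lemmas-half-Dirichlet} is a legitimate alternative — the potential $i\jg x_1$ is independent of $x_2$, the Neumann trace condition on $\partial Q_\perp$ is exactly what makes the even extension $H^2$ across $\{x_2=0\}$ (including at the corner), and parity plus injectivity of the half-plane resolvent forces the solution to be even, so the restriction lands back in $D(\A_c)$ — and it is very much in the spirit of the paper, which uses precisely this reflection device to prove the companion estimate of Lemma \ref{lemma4.12-corner} from Lemma \ref{lemma4.12}. What each approach buys: yours is shorter and recycles an already-stated two-dimensional resolvent bound (and simultaneously yields injectivity and surjectivity of $\A_c-\lambda$, so the separate maximal-accretivity discussion becomes unnecessary); the paper's proof is more self-contained at the 2D level, makes the separable structure explicit (the same mechanism by which Lemma \ref{lem:preliminary-lemmas-half-Dirichlet} is itself established in \cite{AGH}), and records along the way the $H^2$ elliptic estimate near the corner that is reused later. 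The only point you should spell out, beyond the routine reflection check you already flag, is that Lemma \ref{lem:preliminary-lemmas-half-Dirichlet} is stated for $0<\epsilon\leq1$ while \eqref{eq:12} is claimed for all $\epsilon>0$; the case of large $\epsilon$ (i.e.\ $\Re\lambda$ very negative) is disposed of by the trivial accretivity bound $\|(\A_c-\lambda)^{-1}\|\leq1/|\Re\lambda|$, so this is cosmetic.
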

  \begin{proof}
    It can be easily verified that $\A_c:D(\A_c)\to L^2(Q)$ is
    surjective, injective, and maximally accretive. This can be done
    either by the separation of variable technique as  in \cite{AGH} or by
    using generalized Lax-Milgram lemma from \cite{AH}. Note that by the
    arguments presented in \cite[Proposition A.3]{alhe14} 
    there exists $C>0$ such that for every $u\in D(\A_c)$ and $0<r_1<r_2\,$,
    \begin{displaymath}
      \|u\|_{H^2(D_{r_1})} \leq C\,  \left(\|\Delta u\|_{L^2(D_{r_2})}+ \|u\|_{L^2(D_{r_2})} \right)\,,
    \end{displaymath}
where $D_r=B(0,r)\cap\QQ$. 
Hence, the presence of a corner does not
pose a significant obstacle on the way to obtain global regularity
estimates.\\

To prove \eqref{eq:12} we write
\begin{displaymath}
  \A_c = \LL_+ -\partial^2_{x_2} \,,
\end{displaymath}
where $\LL_+$ is the Dirichlet realization in $\R_+$ of 
\begin{displaymath} 
  \LL_+=-\frac{d^2}{dx_1^2}+ i\,\jg\, x_1 \,,
\end{displaymath}
and $-\partial^2_{x_2} $ denotes by abuse of notation the Neumann realization of $-\frac{d^2}{dx_2^2}$ in $\R_+$.\\
Since $\LL_+$ and $-\partial^2_{x_2}$ commute, we have
\begin{displaymath}
  e^{-t\A_c}= e^{t\partial^2_{x_2} }\otimes e^{-t\LL_+} \,,
\end{displaymath}
and hence
\begin{equation}
\label{eq:13}
  \|e^{-t\A_c}\|\leq  \|e^{-t\LL_+}\|\,.
\end{equation}
From \cite[Proposition 2.4]{AGH} we learn that
\begin{displaymath}
  \|e^{-t\LL_+}\| \leq C\,e^{-t \, |\jg|^\frac 23\, |\nu_1|/2}\,,
\end{displaymath}
and hence by \eqref{eq:13} we have 
\begin{displaymath}
    \|e^{-t\A_c}\|\leq  C\, e^{-t  \, |\jg|^\frac 23\,  |\nu_1|/2}\,.
\end{displaymath}
Hence, whenever $ \Re\lambda<|\jg|^{\frac 23} \, |\nu_1|/2$ we have 
\begin{displaymath}
  \| (\A_c-\lambda)^{-1}\| \leq \int_0^\infty\|e^{-t(\A_c-\Re\lambda)}\|\,dt \leq
  \frac{C}{|\jg|^\frac 23\, |\nu_1|/2-\Re\lambda} \,.
\end{displaymath}
  \end{proof}
Finally,  we shall need, in the last section,  the following lemma,
which is analogous to Lemma \ref{lemma4.12},
\begin{lemma} 
\label{lemma4.12-corner}
For any compact interval $I= [\mu_1,\mu_2] $, there exists a positive $C(I)$ such that,
for any $z=\mu+i\nu$ with  $|\nu|>\mu_2+4\,$ and $\mu\in I$,
\begin{equation}
\label{eq:14}
\|(\A_c-z )^{-1}\| \leq  C(I) \,.
\end{equation}
\end{lemma}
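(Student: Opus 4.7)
The plan is to reduce (\ref{eq:14}) to Lemma~\ref{lemma4.12} by using the tensor-product decomposition already exploited in the proof of (\ref{eq:12}) together with a partial Fourier (cosine) transform in $x_2$. Writing as there $\A_c = \LL_+ - \partial_{x_2}^2$ with $-\partial_{x_2}^2$ the Neumann realization on $L^2(\R_+)$, the spectral resolution of $-\partial_{x_2}^2$ is given by the cosine transform $\F_c$, a unitary involution on $L^2(\R_+)$ that intertwines $-\partial_{x_2}^2$ with multiplication by $\xi^2$. Since $\LL_+$ acts only in $x_1$, applying $\F_c$ fibrewise in $x_2$ yields the Plancherel identity
\begin{displaymath}
\|(\A_c - z)^{-1}\|_{L^2(Q)} \;=\; \sup_{\xi\geq 0}\,\|(\LL_+ - (z-\xi^2))^{-1}\|_{L^2(\R_+)}.
\end{displaymath}

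Thus the corner bound reduces to a uniform one-dimensional estimate for $\LL_+$ at spectral parameters $w = z-\xi^2$, whose real part lies in $(-\infty,\mu_2]$ and whose imaginary part equals $\nu$. I would extract this 1D bound directly from Lemma~\ref{lemma4.12}. Indeed, $\A_D = -\partial_{x_1}^2 + \LL_+$ on $\R\times\R_+$ is diagonalized by the ordinary Fourier transform in $x_1$, so that
\begin{displaymath}
\|(\A_D - z)^{-1}\|_{L^2(\R\times\R_+)} \;=\; \sup_{\eta\in\R}\,\|(\LL_+ - (z-\eta^2))^{-1}\|_{L^2(\R_+)}.
\end{displaymath}
Consequently, the bound (\ref{eq:30aa}) forces $\|(\LL_+ - w)^{-1}\|_{L^2(\R_+)} \leq C(I)$ for every $w$ of the form $\mu-\eta^2+i\nu$ with $\mu\in I$, $\eta\in\R$, $|\nu|>\mu_2+4$, i.e.\ for every $w$ with $\Re w \leq \mu_2$ and $|\Im w|>\mu_2+4$. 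Applying this to each $w = z-\xi^2$ appearing in the first display and then taking the supremum in $\xi\geq 0$ gives (\ref{eq:14}) with the same constant $C(I)$.

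The main technical point is checking that the partial cosine transform is indeed a unitary map on $L^2(Q)$ that sends $D(\A_c)$ onto the fibered space on which $\LL_+ + \xi^2$ acts — that is, that the Neumann boundary condition on $\partial Q_\perp = \R_+\times\{0\}$ and the $x_2$-regularity encoded in $D(\A_c)$ are correctly captured by the spectral decomposition of $-\partial_{x_2}^2$ in the presence of the corner at the origin. This is routine from the domain description and the generalized Lax--Milgram argument already used to define $\A_c$; beyond it, the proof contains no Airy-function analysis that is not already subsumed in Lemma~\ref{lemma4.12}.
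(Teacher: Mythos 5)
Your proof is correct and is essentially the paper's argument read on the Fourier side. The paper performs a one-line even extension in the Neumann variable $x_2$: reflecting through $x_2\to-x_2$ removes the Neumann condition and exhibits $\A_c$ as the restriction of $\A_D$ (after swapping the roles of $x_1$ and $x_2$) to the invariant subspace of even functions, so \eqref{eq:14} follows directly from Lemma~\ref{lemma4.12}. Your partial cosine transform is exactly the spectral resolution of $-\partial_{x_2}^2$ on that even subspace, and your Fourier-transform-in-$x_1$ reading of $\A_D$ supplies the same fibrewise bound on $\LL_+$; both proofs rest only on Lemma~\ref{lemma4.12} together with the observation that the Neumann condition is an even-symmetry constraint, so they should be regarded as the same proof. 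The reflection formulation is slightly slicker because it sidesteps the domain-matching under the fibered decomposition, which you correctly flag as the only non-trivial checkpoint in your version.
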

\begin{proof}
  To obtain a resolvent estimate, we use an even extension in $s$, i.e.,
we define the operator $\A_{c}^e$ which is associated with
the same differential operator as $\A_{c}$ but whose domain is
defined by
\begin{displaymath}
  \Dg(\A_{c}^e) = \{ u\in H^2(\R^2_+) \, | \,  u_{\partial\R^2_+}=0 \,;\,
\Rg u =u  \,;\, x_1 u\in L^2(\R^2_+)  \}\,,
\end{displaymath}
where $\Rg$ denotes the reflection $x_2\to-x_2$. The lemma then follows
immediately from Lemma~\ref{lemma4.12}.
\end{proof}

 \subsection{Proof of Proposition \ref{prop:lower}}
 The proof is similar to the derivation of the lower bound in
 \cite[Section 6]{AGH} or \cite[Section 4]{Hen}. We thus, recall the
 main steps only, focusing primarly on the resolvent estimates near
 the corners (that are absent from \cite{AGH,Hen}).
    
\subsubsection{Partition of unity}\label{sss2.3.1}
For some $1/3<\varrho<2/3$, $h_0>0$, and for every $h\in(0,h_0]$, we choose two
sets of indices $\Jg_{i}(h)\,$, $ \Jg_{\partial}(h) \,$, and a set of
points 
\begin{subequations}   
\label{eq:15}
\begin{equation}
\big\{a_j(h)\in \Omega : j\in \mathcal J_i(h)\big\}\cup\big\{b_k(h)\in\pa {\Omega} : k\in \mathcal J_\partial (h)\big\}\,,
\end{equation}
such that $B(a_j(h),h^\varrho)\subset\,{\Omega} $\,,
\begin{equation}
    \overline{\Omega} \subset\bigcup_{j\in \Jg_{i}(h)}B(a_j(h),h^{\varrho})~\cup\bigcup_{k\in \Jg_{\partial }(h)}B(b_k(h),h^{\varrho})\,,
\end{equation}
\end{subequations}
and such that the closed balls $\bar B(a_j(h),h^{\varrho}/2)\,$, $\bar
B(b_k(h),h^{\varrho}/2)$ are all disjoint. \\
We further split $\Jg_\partial (h)$  into three disjoint subsets
\begin{equation}\label{decJdelta}
  \Jg_\partial (h) = \Jg_\pa^D\cup\Jg_\pa^N\cup\Jg_\pa^c \,,
\end{equation}
such that $b_k(h)\in\partial\Omega_D$ whenever $k\in \Jg_\pa^D$, $b_k(h) \in\partial\Omega_N$
whenever $k\in\Jg_\pa^N$, and for every $k\in\Jg_{\pa}^c$  $b_{k}$ denotes
a corner. \\
We note that $\overline{\partial\Omega_D}\cap\overline{\partial\Omega_N}$ is a
finite set consisting of the four corners, and that by the above
construction 
\begin{equation}
\label{eq:16}
  \Union_{k\in \Jg_\partial^c}\{b_k\} = \overline{\partial\Omega_D}\cap\overline{\partial\Omega_N}  \,. 
\end{equation}
We now construct in $\mathbb R^2$ two families of $C^\infty$ functions 
\begin{subequations}\label{eq:14a}
\begin{equation}
 (\chi_{j,h})_{j\in \mathcal J_i(h)} \mbox{ and } (\zeta_{k,h})_{k\in \mathcal J_\partial (h)}\,,
\end{equation}
such that, for every $x\in\overline{\Omega}\,$,
\begin{equation}
\sum_{j\in \mathcal J_i(h)}\chi_{j,h}(x)^2+\sum_{k\in \mathcal J_\partial (h)}\zeta_{k,h}(x)^2=1\,,
\end{equation}
and such that $\supp \chi_{j,h}\subset B(a_j(h),h^{\varrho})$ for
$j\in \mathcal J_i(h)$, $\supp\zeta_{k,h}\subset
B(b_k(h),h^{\varrho})$ for $k\in \Jg_\partial (h)\,$, and
$\chi_{j,h}\equiv 1$ (respectively $\zeta_{k,h}\equiv1$) on $\bar
B(a_j(h),h^{\varrho}/2)$ (respectively $\bar
B(b_k(h),h^{\varrho}/2)$)\,.\\

In addition, we also assume that, for all
$\alpha\in\mathbb{N}^2\,$,  there exist positive
$h_0$ and $C_\alpha$, such that, $\forall h \in (0,h_0]$, $ \forall
x\in \overline{\Omega}$,
\begin{equation}
\sum_j |\pa^\alpha\chi_{j,h}(x)|^2\leq C_\alpha \, h^{- 2 |\alpha|{\varrho}} ~~~\textrm{ and }~~~
\sum_k |\pa^\alpha\zeta_{k,h}(x) |^2 \leq C_\alpha  \, h^{-2 |\alpha|{\varrho}}\,.
\end{equation}
\end{subequations}
To satisfy the Neumann boundary condition on $\partial\Omega_N$, and for later
  reference, we introduce an additional condition
  \begin{equation}
\label{eq:224}
 \frac{\pa \check \xi_{h}}{\pa \nu} \big |_{\partial\Omega}=0 \,.
  \end{equation}
We further set $\eta_{k,h} = 1_{\Omega}\, \zeta_{k,h}\,. $

\subsubsection{Definition of the approximate resolvent}
Following \cite{AGH,Hen} we construct, for any $\epsilon >0$,  an approximate resolvent, which
should be close in operator norm to $(\A_h-\lambda)^{-1}$ as $h\to0$ for
\begin{equation} \label{condlambda}
\Re \lambda \leq   \left(  |J_m|^{2/3}\frac{|\nu_1|}{2} -\epsilon  \right) h^\frac
23\,. 
\end{equation} 

The construction is based on localized resolvents
defined on the disks $B(a_j(h), h^\varrho)$ or $B(b_k(h), h^\varrho)$.\\ 
For $j\in\Jg_i$ we set
\begin{equation}  
\label{eq:17}
\begin{cases}
  \A_{j,h} = -h^2\Delta+i\, \bigl( V(a_j(h))+\nabla V(a_j(h))\cdot(x-a_j(h))\bigr)\,,\\
\Dg(\A_{j,h}) = H^2(\mathbb{R}^2)\cap L^2(\mathbb{R}^2 ; |\nabla V(a_j(h)) \cdot x|^2dx)\,.
\end{cases}
\end{equation}
By Remark \ref{rem:entire} (see also \cite[Lemma 2.1]{Hen}) \\ 
\begin{equation}    
\label{eq:18}
\sup_{\Re \lambda \leq \, \omega \, h^{2/3}}\|(\A_{j,h}-\lambda)^{-1}\|\leq \frac{C_\omega}{h^{2/3}}\,.
\end{equation}
Define in a vicinity of $b_k$ a curvilinear coordinate system $(s,\rho)$
such that $\rho=d(x,\partial\Omega)$ and $s(x)$ denotes the  signed arclength along $\partial\Omega$
connecting $b_k$ and the projection of $x$ on $\partial\Omega$. The boundary
transformation is denoted by $\mathcal{F}_{b_k}$ and
its associated operator by  $T_{\mathcal{F}_{b_k}}$.\\
For $k \in\Jg_\partial^N$ we have $b_k\in\partial\Omega_N$. Hence, we may use the
approximate operator
\begin{equation}  
\label{eq:19}
\begin{cases}
 \tilde\A_{k,h} :=  -h^2\Delta_{s,\rho}+i\big(V(b_k)\pm
  \jg_ks\big) \\
\Dg(\tilde \A_{k,h}) = \{ u\in H^2(\R^2_+) \, | \, \partial_\nu u_{\partial\R^2_+}=0
\,;\; s\, u\in L^2(\R^2_+)  \} \,,
\end{cases}
\end{equation}
where
 \begin{subequations}
\begin{equation}\label{defjk}
\jg_k =|\nabla V (b_k)|= |\pa_\nu V(b_k)|
\end{equation}
and the $\pm$ sign is determined by the condition
\begin{equation} 
\pm \pa_\nu V(b_k) <0\,.
\end{equation}
\end{subequations}
Since $\nabla V$ is parallel to the boundary, it follows from Remark~\ref{rem:Neumann}
that 
\begin{equation}    
\label{eq:20}
\sup_{\Re \lambda \leq \, \omega \, h^{2/3}}\|(\tilde{\A}_{k,h}-\lambda )^{-1}\|\leq
\frac{C_\omega}{h^{2/3}}\,. 
\end{equation}
For  $k\in \Jg_{\pa }^D$,  we use the
approximate operator
\begin{equation}  
\label{eq:183a}
\begin{cases}
 \tilde\A_{k,h} = -h^2\Delta_{s,\rho}+i\big(V(b_k)\pm
  \jg_k\rho\big) \\
\Dg(\tilde \A_{k h}) = \{ u\in H^2(\R^2_+)\cap H^1_0(\R^2_+) \, | \,\rho u\in L^2(\R^2_+)  \} \,.
\end{cases}
\end{equation}

By Lemma \ref{lem:preliminary-lemmas-half-Dirichlet} we have, for any
$\epsilon >0$, the existence of $C_\epsilon$ and $h_\epsilon >0$ such that   
\begin{equation}    
\label{eq:21}
\sup_{\Re \lambda \leq (\jg_k^\frac 23\, |\nu_1|/2-\epsilon)\, \, h^{2/3}}\|(\A_{k,h}-\lambda )^{-1}\|\leq
\frac{C_\epsilon}{h^{2/3}}\,,\,\forall h\in (0,h_\epsilon]\,.
\end{equation}
Hence, by \eqref{eq:20} and \eqref{eq:18} all localized resolvents
satisfy \eqref{eq:21}.

As in \cite{AGH} we construct the following  approximate resolvent 
\begin{equation}
\label{eq:22}
 \mathcal{R}(h,\lambda) = \sum_{j\in \mathcal \Jg_i(h)}\chi_{j,h}(\A_{j,h}-\lambda)^{-1}\chi_{j,h} 
+  \sum_{k\in \mathcal \Jg_\partial (h)}\eta_{k,h}  R_{k,h}(\lambda) \eta_{k,h}\,,
\end{equation}
where 
\begin{equation} 
\label{eq:23}
R_{k,h}(\lambda) =
T_{\mathcal{F}_{b_k}}^{-1}(\tilde\A_{k,h}-\lambda)^{-1}T_{\mathcal{F}_{b_k}}\,.
\end{equation}
Hence it remains to define $\mathcal{F}_{b_k}$ and $\tilde\A_{k,h}$
when $b_k$ is a corner and to establish the corresponding localized
resolvent estimates. This is the object of the next few paragraphs.  
 
\subsubsection{Conjugate harmonic maps}

Let $U$ denote the conjugate harmonic map of $V$ in $\Omega$, which exists
under the assumption that $\Omega$ is simply connected. Using the
Cauchy-Riemann equation, we immediately get that $U\in H^2(\Omega)$ and  has the
same regularity of $V$. As a matter of fact, $U$ is a solution of
$$
\Delta  U= 0\,,\, U= D_\ell   \text{ on } \partial\Omega_N^\ell \mbox{ for }\ell=1,2 \mbox{ and }  \pa_\nu U =0 \mbox{ on } \pa \Omega_D\,,
$$
where $D_1$ and $D_2$ are constants.\\
Due to Assumption \eqref{AssR2}, $U$ and $V$ are also in $C^{n,+}$ at
the corners.
\subsubsection{Curvilinear coordinates in the neighborhood of a corner}\label{sss2.3.4}
Let $\mathfrak c$ be a corner  (after a translation we can
set  $\mathfrak c =(0,0)$) and since $\nabla V (\mathfrak c)\neq0$ and $|\nabla U|=|\nabla V|$
in $\overline{\Omega}$, we define a new local coordinate system $(s,\rho)$ by
\begin{subequations}
\label{eq:24}
  \begin{equation}
  (s,\rho)= \mathcal G_{\mathfrak c}(x)=\Big(\pm\frac{U-D_{\ell(\mathfrak
      c)}}{|\nabla V(\mathfrak c)|} ,\pm\frac{V-C_{i(\mathfrak c)}}{|\nabla
    V(\mathfrak c)|}\Big) \,, 
\end{equation}
where $D_{\ell(\mathfrak c)} = U(\mathfrak c) $, $C_{i(\mathfrak c)}=V(\mathfrak c)$, and the signs
are chosen so that both $s$ and $\rho$ are positive.  The Jacobian $g_\mathfrak c $ of $\mathcal G_{\mathfrak c}(x)$ 
equals $1$ at $\mathfrak c$ and $\mathcal G_{\mathfrak c}$ admits locally an inverse
$\FF_{\mathfrak c}$ of class $C^{n,+}$. Hence we may write
\begin{equation}\label{chc}
  x =\FF_{\mathfrak c}(s,\rho)\,.
\end{equation}
\end{subequations}
Note that
$(s,\rho)$ is an orthogonal coordinate system, and  it can be easily verified that
\begin{subequations}
  \label{eq:25}
  \begin{equation}
\Delta = g_\mathfrak c   \, \Delta_{s,\rho} \,,
\end{equation}
where 
\begin{equation}
  g_\mathfrak c (x) = \frac{|\nabla V (x)|^2}{|\nabla V(\mathfrak c)|^2} = \tilde g_{\mathfrak c} (s,\rho)\,.
\end{equation}
\end{subequations}

A simple computation yields 
\begin{subequations}
\label{eq:26}
  \begin{equation}
  \frac{\partial \tilde g_{\mathfrak c}}{\partial s}=\pm\frac{1}{g_{\mathfrak c}|\nabla V(\mathfrak c)|}\Big[-\frac{\partial g_{\mathfrak c}}{\partial
    x_1}\frac{\partial V}{\partial x_2}+\frac{\partial g_{\mathfrak c}}{\partial x_2}\frac{\partial V}{\partial x_1}\Big] \,, 
\end{equation}
and
\begin{equation}
  \frac{\partial \tilde  g_{\mathfrak c}}{\partial\rho}=\pm\frac{1}{g_{\mathfrak c}|\nabla V(\mathfrak c)|}\Big[-\frac{\partial
    g_{\mathfrak c}}{\partial x_1}\frac{\partial V}{\partial x_1}-\frac{\partial g_{\mathfrak c}}{\partial x_2}\frac{\partial V}{\partial
    x_2}\Big] \,. 
\end{equation}
\end{subequations}
For $n\geq 3$, we may write
 \begin{equation}
  \label{eq:27}
\tilde g_{\mathfrak c} (s,\rho)=1+\tilde \alpha_{\mathfrak c} s+ \tilde \beta _{\mathfrak c} \rho+\OO(s^2+\rho^2) \,.
\end{equation}
One can obtain $\tilde \alpha_{\mathfrak c}$ and $\tilde \beta_{\mathfrak c}$ by setting $(s,\rho)=(0,0)$ in
\eqref{eq:26}. 
\subsubsection{Estimates of  the localized resolvent near the corners}
\label{sec:conclusion}
Let  
\begin{displaymath}
  T_{\mathcal{F}}^x : L^2(\Omega\cap B(x,\delta))  \longrightarrow L^2(\mathcal{U})
\quad \text{s.t.}\quad      T_{\mathcal{F}}^x (u) = u\circ\mathcal{F}_x\,.
\end{displaymath}
Let $b_k$ denote a corner point and $\tilde{\eta}_{k,h}
=T_{\mathcal{F}_{b_k}}(\eta_{k,h})$.   We also introduce
  \begin{equation}\label{defwidehatA}
  \widehat {\A}_{k,h} =T_{\mathcal{F}_{b_k}} \A_h T_{\mathcal{F}_{b_k}}^{-1}\,.
\end{equation}
Let further, with the notation of
\eqref{eq:11},
\begin{equation}\label{deftildeAkcorner}
  \begin{cases}
{\tilde\A_{k,h}} = -h^2\Delta_{s,\rho}+i{\big(V(b_k)\pm \jg_k \rho\big)}
\,,\\
\Dg(\tilde \A_{k,h}) = \{ u\in H^2(Q) \, | \,u|_{\partial Q_\|}=0
\, ; \,\partial u/\partial\nu|_{\partial Q_\perp}=0 \,;\,
\rho\, u\in L^2 (Q)\} \,.
\end{cases}
\end{equation}
Let $\varepsilon>0$ and $\lambda\in\C$ satisfy $\Re\lambda\leq(|\nu_1\|\jg_k|^\frac 23 /2-\varepsilon)
h^{2/3}$. By \eqref{eq:25} and \eqref{eq:27} we have
\begin{multline}
\label{eq:28} 
\|\tilde\eta_{k,h}(\widehat \A_{k,h}-\tilde\A_{k,h})(\tilde\A_{k,h}-\lambda)^{-1}\|
\leq h^2\|\tilde\eta_{k,h}{(\hat g_{b_k}-1)}\Delta_{(s,\rho)}(\tilde\A_{k,h}-\lambda)^{-1}\|  \\
\leq  C\, h^{2+\varrho}\, \|\Delta_{(s,\rho)}(\tilde\A_{k,h}-\lambda)^{-1}\| \,.
\end{multline}
By \eqref{eq:12} there exists $C_\varepsilon>0$ such that
\begin{equation}
\label{eq:29}
  \|(\tilde\A_{k,h}-\lambda)^{-1}\|\leq \frac{C_\varepsilon}{h^{2/3}} \,.
\end{equation}
Furthermore, an integration by parts readily yields 
\begin{displaymath} 
  \|\nabla_{(s,\rho)}(\tilde\A_{k,h}-\lambda)^{-1}\|^2\leq\frac{1}{h^2}
  \|(\tilde\A_{k,h}-\lambda)^{-1}\| + \frac{\Re\lambda}{h^2}\|(\tilde\A_{k,h}-\lambda)^{-1}\|^2\,,
\end{displaymath}
from which, with the aid of \eqref{eq:29}, it  follows that
\begin{equation}
\label{eq:30}
   \|\nabla_{(s,\rho)}(\tilde\A_{k,h}-\lambda)^{-1}\| \leq \frac{C_\varepsilon}{h^{4/3}}\,.
\end{equation}
Finally, as in \cite[Eq. (4.26)]{Hen} we obtain 
\begin{displaymath}
  \|\Delta_{(s,\rho)}(\tilde\A_{k,h}-\lambda)^{-1}\|^2 \leq \frac{C}{h^2}\,
  \|\nabla_{(s,\rho)}(\tilde\A_{k,h}-\lambda)^{-1}\|
  \|(\tilde\A_{k,h}-\lambda)^{-1}\|  \,,
\end{displaymath}
which, with the aid of \eqref{eq:30} and \eqref{eq:29}, yields
\begin{displaymath}
  \|\Delta_{(s,\rho)}(\tilde\A_{k,h}-\lambda)^{-1}\| \leq \frac{C}{h^2}\,.
\end{displaymath}
Substituting the above together with \eqref{eq:29} and \eqref{eq:30}
into \eqref{eq:28} yields
\begin{equation}
\label{eq:31}
  \|\tilde\eta_{k,h}(\widehat \A_{k,h}-\tilde\A_{k,h})(\tilde\A_{k,h}-\lambda)^{-1}\| \leq C\, h^\varrho\,.
\end{equation}
We also need the estimate
\begin{displaymath}
  \|1_{\Omega}  [\A_h,\eta_{k,h}]\,
  R_{k,h}(\lambda)\|\leq C(h^{2-\varrho}\|\nabla_{(s,\rho)}(\tilde\A_{k,h}-\lambda)^{-1}\|+ h^{2-2\varrho}\|(\tilde\A_{k,h}-\lambda)^{-1}\| ) \,,
\end{displaymath}
which follows from (\ref{eq:14a}c) and the fact that
\begin{equation}\label{eq:formcomm}
[\A_h,\eta_{k,h}]= - h^2 (\Delta \eta_{k,h}) - 2 h^2 \nabla \eta_{k,h} \cdot \nabla\,. 
\end{equation}
By \eqref{eq:29} and \eqref{eq:30} we then have
\begin{displaymath}
  \|1_{\Omega}  [\A_h,\eta_{k,h}]\,  R_{k,h}(\lambda)\|\leq Ch^{2/3-\varrho}\,.
\end{displaymath}
Combining the above with \eqref{eq:31} yields 
\begin{equation}
  \label{eq:243}
 \|\tilde\eta_{k,h}(\widehat
 \A_{k,h}-\tilde\A_{k,h})(\tilde\A_{k,h}-\lambda)^{-1}+1_{\Omega}
 [\A_h,\eta_{k,h}]\,  R_{k,h}(\lambda)\|\leq C\, (h^\varrho+h^{2/3-\varrho})\,. 
\end{equation}

\subsubsection{Global error estimate}\label{sss2.3.6}
We may now continue as in \cite[Section 6]{AGH}. We recall that
\begin{equation}   
\label{eq:32} 
 (\A_h -\lambda)\, \mathcal{R}(h,\lambda) = I  +  \Eg(h,\lambda)\,,
\end{equation}
where 
\begin{equation}
\label{eq:202}
 \Eg(h,\lambda)= \sum_{j\in \mathcal J_i(h)}\B_j (h,\lambda)\, \chi_{j,h}  
+  \sum_{k\in \mathcal J_\partial (h)}\B_k(h,\lambda) \, \eta_{k,h} \,.
\end{equation}
In the above, for  $ j\in \mathcal J_i(h) $\,,
\begin{subequations}  
\label{eq:203}
\begin{equation}
  \mathcal B_j:=\B_j(h,\lambda) =\chi_{j,h}(\A_h-\A_{j,h})(\A_{j,h}-\lambda)^{-1}\widehat \chi_{j,h} \\
  + [\A_h,\chi_{j,h}](\A_{j,h}-\lambda)^{-1}\widehat \chi_{j,h} \,, 
\end{equation}
and, for $k \in \mathcal J_\partial(h)$,  
\begin{equation}
      \mathcal B_k:= \B_k(h,\lambda) =  \eta_{k,h} T_{\mathcal{F}_{b_k}}^{-1}
 (\widehat {\A}_{k,h}-\tilde{\A}_{k,h}) (\tilde{\A}_{k,h}
  -\lambda)^{-1} T_{\mathcal{F}_{b_k}} \widehat \eta_{k,h} \\ +
  1_\Omega  [\A_h,\eta_{k,h}]\, R_{k,h}\,\widehat \eta_{k,h}\,,  
\end{equation}
\end{subequations}
where $\widehat \chi_{j,h}$ and $\widehat \eta_{k,h}$ are such that
\begin{itemize}
\item
$\Supp \widehat \chi_{j,h}\subset B(a_j(h),2 h^{\varrho})$ for $j\in \mathcal J_i(h)\,$,
\item
$\Supp \widehat \eta_{k,h}\subset B(b_k (h),2h^{\varrho})$ for $k\in \Jg_\partial\,$, 
\item  
$\widehat \chi_{j,h} \chi_{j,h}  =\chi_{j,h}$ and $\widehat \eta_{k,h}
  \eta_{k,h}=\eta_{k,h}\,$,
\end{itemize}
and
\begin{displaymath}
  \widehat {\A}_{k,h}=T_{\mathcal{F}_{b_k}} \A_h T_{\mathcal{F}_{b_k}}^{-1}\,.
\end{displaymath}

By \eqref{eq:18}, \eqref{eq:20},  \eqref{eq:21}, and \eqref{eq:29}, it
follows, as in \cite{AGH},  that $\mathcal R(h,\lambda)$ is well defined, for $\lambda$
satisfying \eqref{condlambda}. Furthermore, we have
\begin{equation}\label{estR}
\| \mathcal R(h,\lambda)\| \leq C_\epsilon h^{-\frac 23}\,,\, \forall h\in (0, h_\epsilon]\,.
\end{equation}

We now estimate the remainder $ \Eg(h,\lambda)$.  It has been established in
\cite[Section 6]{AGH} that there exists $h_0$ and $C >0$ such that,
for $h\in (0,h_0]$, $j \in \Jg_i(h)$, $k\in \Jg_\pa^D(h)\cup \Jg_\pa^N(h)$ and
$\lambda$ satisfying \eqref{condlambda},
 \begin{equation}
 \| \B_j(h,\lambda)\| + \| \B_k(h,\lambda)\|  \leq C\,  h^{\min{ (\varrho, \frac 23 -\varrho  ) }}\,.
 \end{equation}
By \eqref{eq:31} we have also, for $ k\in \Jg_\partial^c $,
\begin{displaymath}
  \| \B_k(h,\lambda)\,\| \leq  C\,  h^{\min{ (\varrho, \frac 23 -\varrho  ) }} \,.
\end{displaymath}
We now observe, using the finite covering property of the partition,
(\ref{eq:14a}b) and \eqref{eq:203}, that  
\begin{equation}
  \label{eq:205}
  \begin{array}{ll}
\|\Eg(h,\lambda)f\|_2^2 & \leq C_0 \, \left(  \sum_{j\in \mathcal J_i(h)}  \|\mathcal B_j(h,\lambda)\|^2 \| \chi_{j,h} f\|^2_2\,
   + \sum_{k \in \mathcal J_\partial(h)}  \|\mathcal B_k(h,\lambda)\|^2 \| \eta_{k,h}  f\|^2_2\right)\\
   & \leq C h^{2{\min{ (\varrho, \frac 23 -\varrho  ) }}} \,\|f\|^2\,.
   \end{array}
\end{equation}
Consequently, 
\begin{equation}\label{minilemme}
 \sup_{\{\Re \lambda \leq (|J_m|^\frac 23 |\nu_1|/2-\epsilon)\, \, h ^{2/3}\} }
 \|\Eg(h,\lambda)\|\xrightarrow[h\to0]{}0\,, 
\end{equation}
and hence, for sufficiently small $h$, $I+\Eg(h,\lambda)$ is invertible. 
With the aid of \eqref{estR}, we then obtain that for each $\lambda$
  satisfying \eqref{condlambda} we must have $\lambda \in \rho(\A_h)$,  
and by \eqref{eq:32} and \eqref{estR}
\begin{displaymath}
   \|(\A_h -\lambda)^{-1}\|\leq  \|\mathcal{R}(h,\lambda)\|\, \|(I  +
   \Eg(h,\lambda))^{-1}\|\leq C_\epsilon h^{-\frac 23}\,. 
\end{displaymath}
We may now conclude that
for each $\epsilon>0$, there exists $h_0(\epsilon)$ such that whenever
$0<h\leq h_0(\epsilon)$ and  
\begin{displaymath}
  \{\lambda\in\C\,|\, \Re \lambda \leq (|J_m|^\frac 23 |\nu_1|/2-\epsilon)\, \, h ^{2/3}\}\subset \rho(\A_h) \,.
\end{displaymath}
Proposition \ref{prop:lower} is proved.

\section{Quasimode construction - Type V1}
\label{sec:2}
In this section we construct a three terms expansion of a
quasimode. Had $\A_h$ been self-adjoint, we could have used from here
the spectral theorem to obtain the existence of an eigenvalue.
Alternatively, we can use in the self-adjoint case the Min-max Theorem
to obtain an upper bound for the left margin of the spectrum.  This is,
of course, not possible in the non selfadjoint case which is
considered in this work.

\subsection{Local coordinates and approximate operator}
\label{sec:local-coordinates}
Let $x_0\in\Se^m$. Recall that for type V1 potentials, $x_0$ lies in the
interior of $\partial\Omega_D$.  In the curvilinear
coordinate system $(s,\rho)$ centered at  $x_0$ we have
\begin{equation}
\label{eq:33}
  \Delta = \Big(\frac{1}{g} \frac{\partial}{\partial s}\Big)^2 + \frac{1}{g}
  \frac{\partial}{\partial \rho}\Big(g\frac{\partial}{\partial \rho}\Big) =\frac{1}{\tilde g^2} \frac{\pa^2}{\pa s^2} + \frac{\pa^2}{\pa \rho^2}  
  - \frac{ \rho \kappa'(s) }{\tilde g^3} \frac{\pa}{\pa s} - \frac{\kappa(s)}{\tilde g} \frac{\pa}{\pa \rho}  \,, 
\end{equation}
where
\begin{equation}\label{gkappa}
 g(x):=\widetilde g (s,\rho)= 1- \rho \, \kappa(s)\,,
\end{equation}
and $\kappa(s)$ is the curvature at $s$ on $\partial\Omega\,$.\\
Note for later reference that (\ref{gkappa}) implies
\begin{equation}\label{estg}
|\widetilde g (s,\rho)- 1| \leq  C \,\rho \,  \mbox{ for } (s,\rho) \in (-s_0,s_0) \times [0,\rho_0)\,.
\end{equation}
We next expand $V$ in the curvilinear coordinates $(s,\rho)$,
\begin{equation}\label{deftildeV}
V(x)-V(x_0)=  \widetilde V (s,\rho)- V(x_0) = c\, \rho +\frac{1}{2}\hat{\beta}\rho^2+\frac{1}{2}\alpha s^2\rho \,
  + \delta \widetilde V (s,\rho) \,,
\end{equation}
where
\begin{equation}\label{defalphac}
  c = \widetilde V_\rho(0) \quad , \quad \alpha= \widetilde V_{ss\rho }(0)  \quad , \quad
  \hat{\beta} = \widetilde V_{\rho\rho}(0)  \,,
\end{equation}
and
\begin{equation}\label{esttildeV}
  |\delta \widetilde V (s,\rho)|\leq C\, (|s|\rho^2 + |\rho|^3 )\,,\, \mbox{ for } (s,\rho) \in (-s_0,s_0) \times [0,\rho_0)\,.
\end{equation}
\begin{remark}
\label{rem:sign-c}
Using the notation of the introduction, we observe that $$|c| = |\nabla V
(x_0)| = J_m$$ and note that the sign of $c$ is determined by the
values of $C_1$ and $C_2$ in \eqref{eq:2}. Thus, for $x_0\in \Omega^1_D$
and $C_2 >C_1$ or $x_0 \in \Omega^2_D$ and $C_2 < C_1$, we have $c>0$
whereas for $x_0\in \Omega^2_D$ and $C_2 >C_1$ or $x_0 \in \Omega^1_D$ and $C_2
< C_1$, we have $c<0\,$.
\end{remark}
Note that we may deduce from \eqref{ass1},  \eqref{eq:7} and \eqref{eq:8} that:
\begin{equation}
\alpha c  >0\,.
\end{equation}
In the rest of this section we assume $c >0$, without any loss of
generality, since otherwise we consider $\bar{\A}_h$ instead of
$\A_h$ and use the relation  
$$
\sigma(\bar{\A}_h)=\overline{\sigma(\A_h)}\,.
$$
{\bf Blowup}\\
 Applying the transformation
\begin{equation}
\label{eq:34}
  \tau = \Big(\frac{J_m}{h^2}\Big)^{1/3}\rho \,,\, \sigma=
  \Big(\frac{\alpha^3_m}{8J_mh^4}\Big)^{1/12}s \,,
\end{equation}
to \eqref{eq:33}  with 
$$ 
u(x) = \widetilde u (s,\rho)=\check u(\sigma, \tau) \,,
$$
yields the identity
\begin{equation}
\label{eq:36} 
 h^2\Delta u  =  (hJ_m)^{2/3}\big(\check u_{\tau\tau}+ \ego(h) \, \check
 u_{\sigma\sigma}-\, \ego (h) \check \kappa (\sigma)   [2 J_m/ \alpha_m]^{1/2} \, \check u_\tau+ \delta \,   u \big) \,,
\end{equation}
where
\begin{equation}
\label{eq:35}
   \ego (h)=  \,\frac{\alpha_m^{1/2}h^{2/3}}{2^\frac 12 J_m^{5/6}}\,.
\end{equation}
Here $\check  \kappa(\sigma) =\kappa(s(\sigma))$  and $\delta$ is the operator $ u \mapsto \delta
u$ given by
\begin{multline}\label{eq:21a}
  \delta  u = \ego (h) \Big(\frac{1}{\tilde g^2} - 1\Big) \, \check u_{\sigma\sigma} +
 \ego(h) ^{5/2} \frac{2J_m}{\alpha_m} \frac{ \tau\kappa^\prime(s(\sigma))
 }{\tilde g^3}\,\check u_\sigma\\ - \ego(h) \check \kappa (\sigma)
   [2 J_m/\alpha_m]^{1/2} \Big(\frac{1}{\tilde g} 
  -1\Big)\, \check u_\tau\,.
\end{multline}
It can be easily verified, using \eqref{estg}, that, there exists $C$,  $h_0 >0$ and $\rho_0 >0$ such that, for $h \in (0,h_0]$ and   $u$ s.t.  $\supp \widetilde u \subset  (-s_0,s_0)
\times [0,\rho_0)$, 
\begin{equation}  
  \|\delta  u\|_2 \leq C \ego(h)^2 \|\check u\|_{B^3(\mathbb R^2_+)} \,,
\end{equation}
where for $\ell \in \mathbb N$, 
$$B^\ell(\mathbb R^2_+) =\{ \check u\in L^2(\mathbb R^2_+)\,,\, \sigma^p\tau^q \pa_\sigma^m\pa_\tau^n \check u \in L^2 \,,\, \forall\, p,q,m,n \geq 0 \mbox{ s.t. } p+q+m+n \leq \ell\}\,.$$

Converting (\ref{deftildeV}) to the coordinates $(\sigma,\tau)$ via \eqref{eq:34}
yields
\begin{displaymath}
   \check V(\sigma,\tau) - V(x_0) = (hJ_m)^{2/3}\Big(\tau+ \ego(h) \Big[\sigma^2\tau+
    \frac{\hat{\beta}}{2^{1/2}[\alpha_mJ_m]^{1/2}}\tau^2\Big]+\delta \check {V}\Big)\,.
\end{displaymath}
Using \eqref{esttildeV}, we may conclude that there exists $C$, $h_0
>0$ and $\rho_0 >0$ such that, for $h \in (0,h_0]$ and $u$ s.t.  $\supp
\widetilde u \subset (-s_0,s_0) \times [0,\rho_0)$,
\begin{equation}\label{errest}
\| \delta {V} u \| \leq C\, \ego(h) ^\frac 32\,\|\check u\|_{B^3(\mathbb R^2_+)} \,.
\end{equation}

We thus obtain the approximate problem  (for $c>0$)
\begin{equation}
\label{eq:37}
  \begin{cases}
    - \check u_{\tau\tau} + i\tau \check  u + \ego(- \check u_{\sigma\sigma}+i\sigma^2\tau  \check
    u+i\beta\tau^2  \check u + 2 \omega\check u_\tau)
    +  \OO(\ego^{3/2}) = \lambda  \check u  & \text{in } \R^2_+ \\
\check u(0,\sigma)=0  \mbox{ for }  \sigma\in\R &  \,,
  \end{cases}
\end{equation}
where the $\OO(\ego^\frac 32)= \OO (h)$ term is bounded by the right
hand side of \eqref{errest}, for $\supp \widetilde u \subset (-s_0,s_0) \times
[0,\rho_0)$, and
\begin{equation}\label{defomega}
  \omega= \kappa (0) \Big[\frac{J_m}{2 \alpha_m}\Big]^{1/2}\, \quad ; \quad \beta= \frac{\hat{\beta}}{[2\alpha_mJ_m]^{1/2}}\,.
\end{equation}
We recall that for $c<0$ we obtain \eqref{eq:37} once again by taking the
complex conjugate of the approximate equation, together with the
change of parameters $(\beta,\omega)\to(-\beta,-\omega)$. 
\begin{remark}\label{remext}
  Although not needed in this section for the formal construction of
  the quasimode, it will become necessary in Section \ref{sec:upper},
  to define the curvilinear coordinates $(s,\rho)$ and their
  corresponding blowup \eqref{eq:34} centered at a point $y$ in $\Se$ (instead previously at the point $x_0$).
  All the quantities appearing above $\kappa, c,\alpha, \hat \beta, \beta, \ego$ are
  then computed at the point $y$ chosen as the origin (for clarity
  we denote them in Section~\ref{sec:upper} by $\kappa (y), c(y), \alpha (y), \dots, \ego(h,y)$).
\end{remark}
\subsection{The formal construction}
We  look, in the $(\sigma,\tau)$ variables, for an approximate
  spectral pair in the form (modulo a multiplication by a cut-off
  function) 
\begin{displaymath}
  u = u_0+\ego u_1  \,, \,
  \lambda=\lambda_0+\ego\lambda_1  \,,
\end{displaymath}
with $u_0$, $u_1$ in $\mathcal S (\overline{\R^2_+})$.\\

{\bf The leading order balance} reads
\begin{equation}\label{eq:24c}
 (\LL^+-\lambda_0) u_0= 0\,,
 \end{equation}
 where
 \begin{subequations}
  \label{eq:38}
  \begin{equation}
\LL^+ = -\frac{\partial^2}{\partial\tau^2} + i\tau \,,
\end{equation}
As an unbounded operator on $L^2(\R_+)$ its domain is
\begin{equation} 
  D(\LL^+) =\{ u\in H^2(\R_+)\cap H^1_0(\R_+) \, | \, \tau u\in L^2(\R_+)\}\,,
\end{equation}
\end{subequations}
We use the same notation for its natural extension to
$L^2(\R_+^2)$ by a tensor product.\\ 
For $u_0$ in the form
\begin{equation}
u_0(\sigma,\tau) = v(\tau) \, w_0(\sigma)\,,
\end{equation}
\eqref{eq:24c} leads to
\begin{equation}
(\LL^+-\lambda_0) \, v =  0
\end{equation}
in $L^2(\R_+)$  and hence 
\begin{displaymath}
  v(\tau)= v_1(\tau), \quad \lambda_0 = - e^{-i2\pi/3}\nu_1 \,.
\end{displaymath}
Here $\nu_k$ denotes, for $k\geq 1$  the $k$th zero of Airy's function, and
\begin{subequations}
\label{eq:39}
  \begin{equation}
  v_k (\tau)= C_k\, A_i(e^{i\pi/6}\tau+\nu_k)\,,
\end{equation} 
where
\begin{equation}
  C_k = \Big[\int_0^\infty |A_i(\tau+\nu_k)|^2\,d\tau\Big]^{-1/2}\,,
\end{equation}
\end{subequations}
which  follows from the normalization
\begin{displaymath}
  \langle \bar{v}_k , v_k \rangle=1 \,. 
\end{displaymath}
We thus conclude that $u_0$ must have the form
\begin{equation}
u_0 (\sigma,\tau) = v_1(\tau) \, w_0 (\sigma)\,,
\end{equation}
where $w_0\in \mathcal S(\R)$. It follows that $u_0 \in \mathcal
S(\overline{\R_+^2})$ and satisfies the Dirichlet boundary condition
at $\tau=0$. We will determine $w_0$ from the next order balance.

{\bf The next order} balance assumes the form
\begin{equation}
  \label{eq:40}
  (\LL^+-\lambda_0) u_1  =- \Big(-\frac{\partial^2}{\partial\sigma^2} +2\omega\frac{\partial}{\partial\tau}+ i(\sigma^2\tau +\beta\tau^2)-\lambda_1\Big)u_0 \quad;\quad u_1(0,\sigma)=0 \,.
\end{equation}

Taking the inner product of \eqref{eq:40} with $\bar{v}_0$ in
$L^2(\R_+,\C)$ we obtain that the pair $(\lambda_1,w_0)$ should satisfy
\begin{displaymath}
(\PP -\lambda_1)w_0=0\,,
\end{displaymath}
where $\PP$ is defined on
\begin{displaymath}
  D(\PP)=\{ u\in H^2(\R)\, | \, \sigma^2 u\in L^2(\R)\}
\end{displaymath}
by
 \begin{equation}\label{eq:27a}
 \PP:=  -\frac{\partial^2}{\partial\sigma^2} + e^{i\pi/6}\tau_m\sigma^2+\beta\tau_{m,2}  \,,
 \end{equation}
 with
\begin{equation}
\label{eq:41} 
  \tau_m = e^{i\pi/3}\langle\bar{v}_1,\tau v_1\rangle \,, \quad  \tau_{m,2} = i\langle\bar{v}_1,\tau^2v_1\rangle\,.
\end{equation}
Note that by Cauchy's Theorem and deformation of contour, we obtain that  $\tau_m$ and $\tau_{m,2}$ are  real and satisfy
\begin{equation}\label{deftaum}
  \tau_m=\int_{\R_+} \tau\Ai^2 (\tau+\nu_1)\,d\tau>0 \mbox{ and }  \tau_{m,2}=\int_{\R_+} \tau^2 \Ai^2 (\tau+\nu_1)\,d\tau>0  \,. 
\end{equation}
We now choose  $\lambda_1$ as the eigenvalue with smallest real part  of
$\PP$ which is a complex harmonic operator  
 and take $w_0$ as the corresponding eigenfunction
\begin{equation}
\label{eq:42}  
  w_0 (\sigma) = C_0\exp\Big\{-\Big[\frac{\tau_m}{2}\Big]^{1/2} e^{i\frac{\pi}{12}}  
  \sigma^2\Big\} \quad ,  \quad \lambda_1 = \sqrt{2\tau_m}e^{i\tau_m\frac{\pi}{12}}
  +\beta\tau_{m,2} \,,
\end{equation}
where $C_0$ is  chosen so that 
$$
 \int_\R  w_0(\sigma)^2\,d\sigma=1\,.
 $$ 
With this choice of $\lambda_1$, the function $u_1 \in  \mathcal
S(\overline{\R_+^2})$  must satisfy 
\begin{equation}
\label{eq:43} 
  (\LL^+-\lambda_0) u_1  = -i[\sigma^2(\tau-e^{-i\pi/3}\tau_m)+\beta(\tau^2-i\tau_{m,2})+2 i\omega\partial_\tau]u_0 \quad;\quad u_1(0,\sigma)=0 \,.
\end{equation}
Let $\Pi_k$ denote the spectral projection of $L^2(\R_+,\C)$
on ${\rm
  span}\,v_k$, defined by:
\begin{equation}
  \label{eq:44}
\Pi_k u = \langle u, \bar{v}_k\rangle_\tau \, v_k\,,
\end{equation}
where $\langle\cdot,\cdot\rangle_\tau$ denotes the inner product in $L^2(\R_+,\C)$ with
respect to the $\tau$ variable.  We use the same notation (instead of
${\it Id}\, \widehat \otimes\,  \Pi_k $) for its natural extension to \break
$ L^2(\mathbb R)  \widehat \otimes L^2(\mathbb R_+) = L^2(\R_+^2)$. \\
Consequently
we may write
\begin{displaymath}
  u_1(\sigma,\tau)= w_1(\sigma)v_1(\tau) + \hat u_1(\sigma,\tau)  \,,
\end{displaymath}
where $\hat u_1 \in(I-\Pi_1)L^2({\R_+^2})$ and $w_1\in \mathcal S(\R)$ is
left arbitrary (and should be obtained from higher order balances).
We set $w_1=0$ in the sequel, as a two-term expansion satisfies our
needs in the next sections. \\
With Fredholm alternative in mind, we look for $u_1(\sigma,\tau)$ in the
form
$$
u_1(\sigma,\tau)= - i \sigma ^2 w_0 (\sigma) u_{11} (\tau)  + w_0 (\sigma) \big(\beta u_{12} (\tau)+\omega u_{13}(\tau)\big) \,,
$$
where $u_{11}(\tau)$ is the unique solution in $\Im (I-\Pi_1)$ of
$$
(\LL^+ -\lambda_0) u_{11}(\tau)= (\tau-e^{-i\pi/3}\tau_m) v_1(\tau)\,,\, u_{11}(0)=0\,,
$$
$u_{12}(\tau)$ is the unique solution in $\Im (I-\Pi_1)$ of
$$
(\LL^+ -\lambda_0) u_{12}(\tau)=  (\tau^2-i\tau_{m,2}) v_1(\tau)\,,\, u_{12}(0)=0\,.
$$
and $u_{13}(\tau)$ is the unique solution in $\Im (I-\Pi_1)$ of
\begin{displaymath}
  (\LL^+ -\lambda_0) u_{13}(\tau)=  v_1^\prime(\tau)\,,\, u_{13}(0)=0\,.
\end{displaymath}
The above equations are uniquely solvable, since their right hand
sides are all orthogonal to $\bar v_1$, (and hence both lie in ${\rm
  Im}(I-\Pi_1)$), and since $ (\LL^+ -\lambda_0)_{/\Im (I-\Pi_1)}$ is
invertible.  It is not difficult to show that $u_{1,j}$ belongs to
$\mathcal S(\overline{\R_+})$ for $j=1,2,3\,$.
\\
We have thus determined $\lambda_1$ and $u_1\in \mathcal
S(\overline{\R^2_+})$, providing sufficient accuracy for the
derivation of the upper bound in the last section.
\begin{remark}
  The above expansion is similar to the one given in \cite{GH}. Following
  the same steps detailed there, one can formally construct an
  approximate solution, of arbitrary algebraic accuracy (i.e. of
  $\OO(\ego^p)$ for any $p>0$).
\end{remark}

\subsection{Quasimode  and remainder}
We can now set the approximate eigenpair $(U^1,\Lambda^1)$ to be given by
  \begin{equation}\label{eq:45}
 \Lambda^1(\ego)=\lambda_0+\ego\lambda_1 \quad ; \quad \check U^1(\sigma,\tau) =\eta_\ego(\sigma,\tau) \left(\check u_0(\sigma,\tau)+\ego \,\check u_1(\sigma,\tau)\right)  \,,
\end{equation}
where the accent $\check{\cdot}$  is used to  denote
functions of $(\sigma,\tau)$, and $\eta_\ego\in C^\infty(\R^2,[0,1])$  is a cut-off function
supported in a neighborhood of $x_0$
\begin{displaymath}
  \eta_\ego (\sigma,\tau)  =
  \begin{cases}
    1 & \mbox{for } |\sigma^2+ \tau^2|\leq\ego^{-1/4} \\
    0 & \mbox{for }  |\sigma^2 + \tau^2|>2\ego^{-1/4} \,.
  \end{cases}
\end{displaymath}
For latter reference we define
\begin{equation}
  \label{eq:46}
\Lambda^1_\gamma(\ego)  =\lambda_0+ \gamma\,\ego\, \lambda_1\,,
\end{equation}
for some $0\leq\gamma\leq1$. \\
We finally state
\begin{proposition}
\label{prop3.2}
  Let $x_0\in \Sg^m$ and  $(\check U^1,\Lambda^1)$ be given by
  \eqref{eq:45}.   Let for $c(x_0)>0$
  $$ 
  U_h^1(x)= \widetilde U_h^1(s,\rho)=\check U^1\Big(  \Big[\frac{\alpha^3_m}{8J_mh^4} \Big]^{1/12}s
  ,\Big[\frac{J_m}{h^2}\Big]^{1/3}\rho\Big)\,.
  $$ 
and 
  \begin{equation}\label{eqdefhatLambda}  
  \hat{\Lambda}^1(h,x_0)=i V(x_0)+
(J_m h)^\frac 23  \Lambda^1(\ego(h))\,,\, \ego(h) = \frac{\alpha_m^{1/2}}{ 2^{1/2} [J_m]^{5/6}} h^{2/3}\,.
\end{equation}
For $c(x_0)<0$ set $U_h^1= \overline{\widetilde U_h^1}$ and 
\begin{displaymath}
  \hat{\Lambda}^1(h,x_0)= i V(x_0)+
(J_m h)^\frac 23  \overline{\Lambda^1(\ego(h))}\,.
\end{displaymath}
  Then, there exist $C>0$ and $h_0 >0$ such that, for all $h\in (0,h_0)$,  
  \begin{equation}
    \label{eq:47}
\|(\A_h - \hat{\Lambda}^1(h,x_0))U^1_h\|_2\leq C \, h^{5/3}  \| U^1_h\|_2\,.
  \end{equation}
\end{proposition}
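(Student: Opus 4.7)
The plan is to evaluate $(\A_h-\hat\Lambda^1(h,x_0))U_h^1$ in the rescaled coordinates $(\sigma,\tau)$ introduced in \eqref{eq:34} and show that, by the very choice of $(u_0,\lambda_0)$ and $(u_1,\lambda_1)$ made in the formal construction, all contributions of order $\ego^0$ and $\ego^1$ cancel. First, assume without loss of generality that $c(x_0)>0$ (the case $c(x_0)<0$ is reduced to this one by complex conjugation). Using the expansions \eqref{eq:36}, \eqref{deftildeV}, and the definition \eqref{eqdefhatLambda} of $\hat\Lambda^1$, one may write in the rescaled variables, for $u=u_0+\ego u_1$,
\begin{equation*}
 (\A_h-\hat\Lambda^1)(\eta_\ego u)\;=\;(J_m h)^{2/3}\,\eta_\ego\bigl(\Qg_\ego u-\Lambda^1(\ego) u\bigr) \;+\;(J_m h)^{2/3}\bigl[\Qg_\ego,\eta_\ego\bigr] u,
\end{equation*}
where $\Qg_\ego=\LL^+ + \ego\bigl(-\partial_\sigma^2 + i\sigma^2\tau + i\beta\tau^2 + 2\omega\partial_\tau\bigr) + \delta + i\,\delta\check V/(J_m h)^{2/3}$ collects the terms appearing in \eqref{eq:36}--\eqref{eq:21a} and the potential expansion.

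Second, using \eqref{eq:24c} and \eqref{eq:40} together with the definitions of $\lambda_0$, $\lambda_1$, $u_0$, $u_1$, the identity
\begin{equation*}
 \bigl(\LL^+ - \lambda_0\bigr)u_0 + \ego\Bigl[\bigl(\LL^+-\lambda_0\bigr)u_1 + \bigl(-\partial_\sigma^2+2\omega\partial_\tau + i\sigma^2\tau + i\beta\tau^2 - \lambda_1\bigr)u_0\Bigr]=0
\end{equation*}
holds identically, so $\Qg_\ego u - \Lambda^1(\ego) u$ reduces to an explicitly computable remainder. This remainder consists of (i) the purely algebraic $\ego^2$ terms such as $\ego^2\lambda_1 u_1$ and $\ego^2(-\partial_\sigma^2+\ldots-\lambda_1)u_1$, (ii) the $O(\ego^2)$ contribution of $\delta$ from \eqref{eq:21a} bounded via \eqref{estg}, and (iii) the $O(\ego^{3/2})$ contribution of $\delta\check V/(J_m h)^{2/3}$ coming from \eqref{esttildeV}. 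Since $u_0,u_1\in\mathcal{S}(\overline{\R_+^2})$, each of these remainder terms is uniformly bounded in $L^2(\R_+^2)$ by a multiple of $\ego^{3/2}\|u\|_{B^3(\R_+^2)}$, the slowest decaying piece being the one coming from \eqref{errest}.

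Third, I need to control the commutator $[\Qg_\ego,\eta_\ego] u$. The cut-off $\eta_\ego$ is localized at $\sigma^2+\tau^2\sim \ego^{-1/4}$, while $u_0$ carries the factor $v_1(\tau)w_0(\sigma)$ with the Airy decay $\exp(-c\tau^{3/2})$ and the Gaussian factor $\exp(-\Re[e^{i\pi/12}]\sqrt{\tau_m/2}\,\sigma^2)$, and $u_1$ inherits the same decay through $u_{11},u_{12},u_{13}$ and $w_0$. Hence the contribution of the commutator, and likewise the error caused by replacing $u$ by $\eta_\ego u$ inside $\Qg_\ego-\Lambda^1$, is superpolynomially small in $\ego$ and is absorbed by the main $O(\ego^{3/2})$ error. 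This is the point where the Schwartz decay of the quasimode plays its role; I expect this to be the most delicate book-keeping step but conceptually routine.

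Finally, converting back to the physical variable $x$ via \eqref{eq:34} contributes a Jacobian factor $(h^2/J_m)^{1/3}(8 J_m h^4/\alpha_m^3)^{1/12}\sim h$ in the squared $L^2$-norms, so that both $\|U_h^1\|_2$ and $\|(\A_h-\hat\Lambda^1)U_h^1\|_2$ are scaled by the same $h^{1/2}$ factor. Combining the estimate $\|(\Qg_\ego-\Lambda^1(\ego))(u_0+\ego u_1)\|_{L^2(\R_+^2)}\leq C\,\ego^{3/2}$ with the prefactor $(J_m h)^{2/3}$ and recalling $\ego=O(h^{2/3})$ yields an absolute error of order $(J_m h)^{2/3}\cdot h \cdot h^{1/2}=O(h^{13/6})$, which is precisely $C h^{5/3}\|U_h^1\|_2$. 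For $c(x_0)<0$ the argument is identical after applying complex conjugation and the substitution $(\beta,\omega)\mapsto(-\beta,-\omega)$ as noted after \eqref{defomega}.
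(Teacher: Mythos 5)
Your proof is correct and follows the same approach the paper sketches in one sentence: it relies on the asymptotic expansion of Section 3, the finite support of $\widetilde U_h^1$, and the rapid decay of $u_0, u_1$ to control the cutoff $\eta_\ego$. Your accounting of the orders ($\ego^0$ and $\ego^1$ cancellations by construction, $O(\ego^{3/2})$ dominant error from $\delta\check V$, Jacobian factor $\sim h^{1/2}$ on both sides) matches what the paper leaves implicit, and the arithmetic $h^{2/3}\cdot h\cdot h^{1/2}=h^{13/6}=h^{5/3}\cdot h^{1/2}$ correctly recovers the stated bound.
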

The proof follows from the preceding asymptotic expansion, the fact
that $\supp \widetilde U^1_h$ belongs to $(-s_0,s_0) \times [0,\rho_0)$, and
the exponential decay of $u_0$ and $u_1$ in $\R^2_+$ which implies 
that $ \| (1-\eta_\ego) u_j\|_{B^3(\R^2_+)}=\OO (\ego^{+\infty})=\OO (h^{+\infty})\,$.

\section{Quasimode construction - Type V2}
\label{sec:quazimode-V2}
In this section we present a similar construction to the previous
section for type V2 potentials. \\
Let $x_0\in\hat{\Se}^m$ which for type V2 potentials is a corner point.
We use the curvilinear system of coordinates $(s,\rho)$ given by
\eqref{eq:24}.  The corner is set to be the origin, and $(s,\rho)$ varies
in a neighborhood of $(0,0)$ in $Q= [0,+\infty)\times [0,+\infty)$.  We then use
the diffeomorphism $\mathcal{F}_{x_0}(h)$ given by (\ref{eq:24}b). The
potential $V$ is given in the vicinity of $x_0$ by
\begin{equation}
\label{eq:48}
  V(x)= \widetilde V (s,\rho) = V(x_0) + c\,\rho \,,
\end{equation}
where $c=\pm |\nabla V(x_0)|$. \\

As in the previous section we assume, without
any loss of the generality of the proof, that $c>0$, otherwise we move to consider, as before,
$\bar{\A}_h$ instead of $\A_h$.\\ 
The Laplacian operator is given according to \eqref{eq:25} by
\begin{displaymath}
  \Delta=\tilde g_c \, (\partial^2_\rho+\partial^2_s)\,,
\end{displaymath}
where $$g_c(x)=|\nabla V|^2/c^2= \widetilde g_c(s,\rho)\,.
$$
By the smoothness of $V$, $\widetilde g$ admits, in the
vicinity of $(0,0)$ the expansion
\begin{equation}\label{eq:36a}
  \widetilde g_c (s,\rho) = 1+ \tilde \alpha_{x_0} s + \tilde \beta_{x_0} \rho+\OO(s^2+\rho^2)\,,
\end{equation}
where $\tilde \alpha_{x_0}>0$ since $x_0$ is a minimum of $|\partial V/\partial\nu|$ on
$\overline{\partial\Omega_D}$. 
 Note that by \eqref{eq:26}, at every corner
  we have 
  \begin{equation}
\label{linkalphahatalpha}
  \tilde \alpha_{x_0}=2\hat{\alpha}(x_0)/c>0\,,
  \end{equation}
  where $\hat{\alpha}$ is given
  by \eqref{defhatalpha}. Since $|c|=J_m$ on $\Se$ it follows 
  \begin{equation}\label{hatalpha2}
\tilde \alpha:=  \tilde \alpha_{x_0}=2\hat{\alpha}_m/J_m\,,\, \mbox{ for } x_0\in \hat \Sg^m\,.
  \end{equation}
We now apply the transformation
\begin{equation}\label{eq:49-1}
  \tau = \Big[\frac{J_m}{h^2}\Big]^{1/3}\rho \quad ; \quad \sigma= \Big[\frac{2^3 \hat \alpha_m^3 }{J_mh^4}\Big]^{1/9}s\,,
\end{equation}
to obtain from \eqref{eq:25} that
\begin{displaymath}
  h^{4/3}J_m^{-2/3}\Delta = (1+\varepsilon\sigma+\OO(\varepsilon^\frac 32))\Big(\frac{\partial^2}{\partial\tau^2}+ \varepsilon\frac{\partial^2}{\partial\sigma^2}\Big)\,.
\end{displaymath}
In the above, 
\begin{equation}
\label{eq:49}
  \varepsilon(h) =\Big[2^6 \hat \alpha_m^6J_m^{-8}\Big]^{1/9}\, h^{\frac 49}\,.
\end{equation}
Consequently, we may write 
\begin{equation}
  \label{eq:50} 
(hJ_m)^{-2/3}\A_h u =
-(1+\varepsilon\sigma) \frac{\partial^2\check u}{\partial\tau^2}+i \,\sign c\,\tau \check u -\varepsilon\frac{\partial^2\check u}{\partial\sigma^2}+\delta  u \,,
\end{equation}
where, for $u$ such that $\supp \widetilde u \subset (-s_0,s_0) \times [0,\rho_0)$,
\begin{displaymath}
  \|\delta u\|_2\leq C\,\varepsilon^\frac 32 \, \|\check u\|_{B^4(Q)} \,.
\end{displaymath}
We now continue as in the previous section.  The eigenvalue problem
can be formulated, for $c>0$,  as finding an approximate pair $(\check u,\lambda)$ such that
\begin{equation}
\label{eq:51}
  \begin{cases}
 -(1+\varepsilon\sigma)\check u_{\tau\tau} + i\tau \check u - \varepsilon \check u_{\sigma\sigma}
    + \OO(\varepsilon^\frac 32 ) = \lambda \check u  & \text{in } Q\,, \\
\check u (\sigma,0)=0 &\mbox{ for }  \sigma\in\R_+\,, \\
\frac{\partial \check u}{\partial\sigma}(0,\tau)=0 &\mbox{ for }  \tau\in\R_+ \,.
  \end{cases}
\end{equation}
Note that for $c<0$ we obtain \eqref{eq:51} once again by taking the
complex conjugate of the approximate problem.  

Omitting the accent
$\check{\cdot}$\,, we first assume
\begin{displaymath}
  u = u_0+\varepsilon \,u_1 \quad ; \quad
  \lambda=\lambda_0+\varepsilon\, \check \lambda_1 \,,
\end{displaymath}
with $u_0$ and $u_1$ in $\mathcal S(\overline{Q})$.\\
 The leading order balance  is precisely \eqref{eq:45}, and hence, as before, 
\begin{equation}\label{deflambda0}
  u_0 = v_1(\tau)\, w_0^+(\sigma) \quad ; \quad \lambda_0 = - e^{-i2\pi/3}\nu_1 \,,
\end{equation}
with $w_0$ arbitrary in $\mathcal S (\overline \R_+)$, as long as it
satisfies, the Neumann condition at $\sigma=0$.\\

{\bf The next order balance} assumes the form
\begin{displaymath}
  (\LL^+-\lambda_0) u_1  = \Big(\frac{\partial^2}{\partial\sigma^2} + \sigma\frac{\partial^2}{\partial\tau^2} + \check \lambda_1\Big)u_0 \quad;\quad u_1(0,\sigma)=0 \,,
\end{displaymath}
where $\LL^+$ is defined by \eqref{eq:38}.\\
 As
\begin{displaymath}
  \frac{\partial^2u_0}{\partial\tau^2}= (i\tau-\lambda_0)u_0 \,, 
\end{displaymath}
we obtain that 
\begin{equation}
\label{eq:52}
  (\LL^+-\lambda_0) u_1  =- \Big(\PP_+(\tau)  - \check \lambda_1\Big)u_0 \quad;\quad u_1(0,\sigma)=0 \,,
\end{equation}
where
  \begin{equation}\label{eq:53}
 \PP_+(\tau)  = -\frac{\partial^2}{\partial\sigma^2} - (i\sigma\tau-\lambda_0\sigma)\,.
\end{equation}
Taking the inner product of \eqref{eq:40} with $\bar{v}_1$ in
$L^2(\R_+,\C)$ we obtain that
\begin{displaymath}
  -\frac{\partial^2w_0^+}{\partial\sigma^2} +(\theta_0\sigma- \check \lambda_1)w_0^+=0\,, \quad (w_0^+)^\prime(0)=0\,,
\end{displaymath}
where 
\begin{equation}
  \label{eq:54}
\theta_0=\lambda_0-e^{i\pi/6}\tau_m\,,
\end{equation}
in which $\tau_m$ is given by
\eqref{eq:41}. As
\begin{displaymath}
  \theta_0 =
  -\int_{\R_+}(i\tau-\lambda_0)v_1^2(\tau)\,d\tau =
  \int_{\R_+}(v_1^\prime(\tau))^2\,d\tau=e^{i\pi/3}\int_{\R_+}(\Ai^\prime(e^{i\pi/6}\tau+\nu_1^\prime))^2\,d\tau\,,
\end{displaymath}
it easily follows that  $\arg \theta_0=\pi/6\,$. 

As a Neumann realization of a complex Airy operator on  $\R_+$, the
spectrum of the operator   $ -\partial^2/\partial\sigma^2+\theta_0\sigma$ is discrete and
$\lambda_1$ can explicitly be found  as function of the zeros of the
derivative of the Airy function. Thus,
\begin{equation}\label{deflambda1}
  w_0^+ (\sigma) = C_0 A_i (\theta_0^{1/3}\sigma+\nu_1^\prime) \quad ; \quad \check \lambda_1 =-
  \theta_0^{2/3}\nu_1^\prime \,,
\end{equation}
where $C_0$ is chosen so that 
$$
\int_{\R_+} w_0^+ (\sigma)^2 \, d\sigma =1\,,
$$ 
and $\nu'_1$ is the first zero of $A_i'\,$. 

The problem for $u_1$ then assumes the form
\begin{displaymath}
  (\LL^+-\lambda_0) u_1  = -i\sigma(\tau-e^{-i\pi/3}\tau_m)u_0 \quad;\quad u_1(\sigma,0)=0 \,.
\end{displaymath}
As in the previous section it follows that there exists a unique
solution to the above problem in $(I-\Pi_1)L^2(\R_+,\C)$, which in
addition is in $\mathcal S (\overline{Q})$ and satisfies the
Dirichlet-Neumann condition. 
We can now set
\begin{equation}\label{eq:55}
   \Lambda^2(\varepsilon) =\lambda_0+\varepsilon \check \lambda_1 \quad ; \quad \check U^2=(u_0+\varepsilon u_1)\, \eta_\varepsilon \,,
\end{equation}
where
\begin{displaymath} 
  \eta_\varepsilon(\sigma,\tau)  =
  \begin{cases}
    1 &\mbox{ for }  |\sigma^2 + \tau^2|\leq\varepsilon^{-1/4}\,, \\
    0 & \mbox{ for } |\sigma^2+ \tau^2|>2\varepsilon^{-1/4} \,,
  \end{cases}
\end{displaymath}
to obtain the approximate eigenpair $(\check U^2,\Lambda^2)$. In a similar
manner to the previous
section we define, for later reference
\begin{equation}
\label{eq:56}
\Lambda^2_\gamma(\varepsilon) =\lambda_0+ \gamma\,\varepsilon\, \check \lambda_1\,
\end{equation}
with $\lambda_0$ and $\check \lambda_1$ defined in \eqref{deflambda0} and \eqref{deflambda1}.\\

The following proposition is an immediate  consequence of the foregoing
discussion.
\begin{proposition} 
\label{prop4.1} Let $x_0\in \hat{\Sg}^m$.
  Let $ (\check U^2,\Lambda^2)$ be given by \eqref{eq:55}. Let for $c(x_0)>0$
  $$
  U_h^2(x)= \widetilde U_h^2(s,\rho)=\check U^2\Big(\Big[\frac{8\hat{\alpha}^3_m}{J_mh^4}\Big]^{1/9}s
  ,\Big[\frac{J_m}{h^2}\Big]^{1/3}\rho\Big)\,,
$$
and 
  \begin{equation} 
\label{eq:44a} 
  \hat \Lambda^2 (h,x_0)  = i V(x_0) + (J_mh)^\frac 23 \Lambda^2(\varepsilon(h)) \,, \, \varepsilon(h)  =  \Big[2^6 \hat \alpha_m^6J_m^{-8}\Big]^{1/9}\, h^\frac 49\,.
  \end{equation}  
For $c(x_0)<0$, set $U_h^2= \overline{\widetilde U_h^2}$ and 
  \begin{displaymath}
  \hat \Lambda^2 (h,x_0)  = i V(x_0) + (J_mh)^\frac 23 \overline{\Lambda^2(\varepsilon(h))} \,.
\end{displaymath}
Then, we have
  \begin{equation}
  \label{eq:57} 
\|(\A_h - \hat \Lambda^2 (h,x_0) ) U_h^2\|_2\leq C\, h^{4/3}\, \| U_h^2 \|_2  \,.
  \end{equation}
 \end{proposition}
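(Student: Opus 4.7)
The plan is to express $\A_h-\hat\Lambda^2(h,x_0)$ in the rescaled coordinates $(\sigma,\tau)$ of \eqref{eq:49-1}, substitute $\check U^2=(u_0+\varepsilon u_1)\eta_\varepsilon$, and show that the rescaled residual has $L^2$ norm bounded by $C\varepsilon^{3/2}\|\check U^2\|_2$; since $\varepsilon^{3/2}\sim h^{2/3}$ and the prefactor produced by unscaling is $(hJ_m)^{2/3}$, this yields the announced $h^{4/3}$ bound. First I would note that the support of $\widetilde U^2_h$ in $(s,\rho)$ corresponds via \eqref{eq:49-1} to $\{\sigma^2+\tau^2\le 2\varepsilon^{-1/4}\}$, which shrinks to the corner $x_0$ as $h\to 0$ and therefore lies inside the local coordinate patch in which \eqref{eq:50} holds, so that all computations may be carried out on $Q$.

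By \eqref{eq:50} and \eqref{eq:44a}, proving \eqref{eq:57} reduces to controlling
\begin{equation*}
\bigl[-(1+\varepsilon\sigma)\partial_\tau^2+i\tau-\varepsilon\partial_\sigma^2-\Lambda^2(\varepsilon)\bigr]\check U^2 + \delta U^2_h
\end{equation*}
in $L^2(Q)$. For the main term I would expand in powers of $\varepsilon$. The order-$\varepsilon^0$ contribution vanishes since $\LL^+u_0=\lambda_0 u_0$, see \eqref{deflambda0}. Using the identity $\partial_\tau^2 u_0=(i\tau-\lambda_0)u_0$, the order-$\varepsilon^1$ contribution reduces to $(\LL^+-\lambda_0)u_1+(\PP_+(\tau)-\check\lambda_1)u_0$, which vanishes by \eqref{eq:52} and the choice \eqref{deflambda1} of $\check\lambda_1$. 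What remains is an order-$\varepsilon^2$ residual involving $\sigma\partial_\tau^2 u_1$, $\partial_\sigma^2 u_1$, and $\check\lambda_1 u_1$, whose $L^2$ norm is uniformly bounded because $u_1\in\mathcal S(\overline Q)$; since $\varepsilon^2\ll\varepsilon^{3/2}$ this is absorbed into the target bound.

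The cutoff error is handled using the super-exponential decay of $v_1(\tau)$ and of $w_0^+(\sigma)=C_0\,A_i(\theta_0^{1/3}\sigma+\nu_1')$, both being shifted Airy profiles. On the annulus $\{\varepsilon^{-1/4}\le\sigma^2+\tau^2\le 2\varepsilon^{-1/4}\}$ where $\nabla\eta_\varepsilon\neq 0$, the quasimode is $O(\varepsilon^{+\infty})$, so the commutator $[-\partial_\tau^2-\varepsilon(\sigma\partial_\tau^2+\partial_\sigma^2),\eta_\varepsilon]$ applied to $u_0+\varepsilon u_1$ is negligible. The same decay yields $\|\check U^2\|_{B^4(Q)}\leq C\|\check U^2\|_2$ with $C$ independent of $h$, so that the metric perturbation obeys $\|\delta U^2_h\|_2\le C\varepsilon^{3/2}\|\check U^2\|_2$. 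Since the Jacobian from \eqref{eq:49-1} cancels in the ratio $\|(\A_h-\hat\Lambda^2)U^2_h\|_2/\|U^2_h\|_2$, combining the three contributions (order-$\varepsilon^2$ approximation residual, cutoff commutator, metric perturbation $\delta$) and restoring the prefactor $(hJ_m)^{2/3}$ gives the required $O(h^{4/3})$ bound. The case $c(x_0)<0$ follows by complex conjugation, exactly as indicated in the construction.

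The step I expect to require the most care is the verification that the metric remainder in \eqref{eq:50}, which originates from the $O(s^2+\rho^2)$ tail of the expansion \eqref{eq:36a} and from the lower-order drifts induced by the curvilinear change of variables, really produces an $L^2$ error of size $\varepsilon^{3/2}\|\check U^2\|_2$ with constants uniform in $h$. This reduces to a quantitative Schwartz-type control of weighted derivatives of $u_0$ and $u_1$ on the scaled support $\{\sigma^2+\tau^2\le 2\varepsilon^{-1/4}\}$, so the interaction between the cutoff scale and the Airy decay rates must be tracked carefully; once this uniform control is secured, the remaining pieces are direct readouts of the construction carried out above.
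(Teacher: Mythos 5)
Your proposal is correct and follows essentially the same route as the paper, which treats the proposition as an immediate consequence of the two-term expansion: the order-$\varepsilon^0$ and order-$\varepsilon^1$ balances vanish by construction, the dominant residual is the $\OO(\varepsilon^{3/2})$ metric/Taylor remainder in \eqref{eq:50}--\eqref{eq:51} controlled via the $B^4(Q)$ norm, the cutoff errors are $\OO(\varepsilon^{+\infty})$ by the Schwartz decay of $u_0$ and $u_1$, and the prefactor $(hJ_m)^{2/3}$ together with $\varepsilon^{3/2}\sim h^{2/3}$ gives $h^{4/3}$. Your accounting of which term is binding (the metric remainder, not the order-$\varepsilon^2$ expansion terms) matches the paper's.
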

 Remark \ref{rem:sign-c} still holds for $x_0\in \hat{\Sg}^m$.

\section{V1 potentials: 1D operators}
\label{sec:v1-potentials:-1d}
\subsection{Motivation}

To prove the existence of an eigenvalue of $\A_h$ in the vicinity of
the approximated value $\hat \Lambda^1(h)$, one needs an estimate of
$\|(\A_h-\lambda)^{-1}\|$ for $\lambda$ in an annulus whose interior circle
encloses $\hat \Lambda^1 (h)$.  The relevant eigenmode is expected to decay
exponentially fast away from a point $x_0$ on $\Se^m$. We thus replace
the type V1 potential by the leading orders in its Taylor expansion
around $x_0$ as in \eqref{deftildeV} and renormalize the operator by
considering an approximation of $\check \A_h = (J_m h)^{-\frac 23} (\A_h-i
V(x_0))$. The spectral parameter $\lambda$ is thus replaced by $\check \lambda =
(J_mh)^{-\frac 23} (\lambda -iV(x_0))$ 
and the parameter $\ego\sim h^\frac 23$
given by \eqref{eq:35} is introduced.  In the next two sections we
estimate the resolvent of the ensuing approximate operator, after the
dilation \eqref{eq:34} centered at $x_0$ is applied, and the blowup
coordinates $(\sigma,\tau)$ are being introduced. The estimation of the error
generated through the use of an approximate potential, instead of $V$,
is left to the last section.

A necessary first step towards the above mentioned resolvent estimate
is to consider a one-dimensional simplification of it. We recall from
\eqref{eq:37}, where we state the eigenvalue problem for the
approximate operator after dilation, that the approximate potential
includes the term $\ego\beta\tau^2$. Compared to the leading order term,
$\tau$, this term is much smaller for all $\tau\ll\ego^{-1}$.  However, any
attempt to drop this term completely from the expansion and to account
for the error afterwards would fail, as by \eqref{eq:42} it has an
$\OO(\ego)$ effect on $\Lambda^1(\ego)$.  Since we seek an estimate of
$\|(\check \A_h-\check \lambda )^{-1}\|$ on a circle centered at $
\Lambda_1(\ego)$ of radius much smaller than $\ego$, a complete
neglect of this term seems impossible. However, since we consider
$\tau\in\R_+$, it makes sense to avoid problems resulting from the fact
that for $\beta<0$, $\tau+\ego\beta\tau^2$ changes sign for sufficiently large
$|\tau|$. We thus multiply $\ego\beta\tau^2$ by an appropriate cutoff
function, so that the error generated by it need not be accounted for
in the last section, considering the fact that the resolvent is
multiplied there by a cutoff function as in Section \ref{s2}.

\subsection{Realization on the entire real line }

Let, for $\ego >0$, $\LL_2(\ego)$ be given by 
\begin{equation}
\label{eq:58}
  \LL_2(\ego) = -\frac{d^2}{d\tau^2} + i(\tau+\ego\beta\chi(\ego^b\tau)\tau^2)\,,
\end{equation}
where $\beta\in\R$, $1/2<b<3/4$, and $\chi\in C_0^\infty(\R,[0,1])$ is chosen such that
\begin{equation}
  \label{eq:59}
\chi(x) =
\begin{cases}
  1 & |x|<1 \,, \\
  0 & |x|>2\,,
\end{cases}
\end{equation}
and so that
$$
\check{\chi} =
\sqrt{1-\chi^2} \mbox{ in } \mathbb R\,,
$$
is in $C^\infty(\R,[0,1])$.
  We shall frequently drop the reference to
$\ego$ in $\LL_2(\ego)$ and
write instead $\LL_2$ when no ambiguity is expected.

Clearly, $\LL_2$ is a closed operator whose domain is given by
\begin{displaymath}
  D(\LL_2) = \{ u\in H^2(\R) \,| \, \tau u\in L^2(\R)\}\,.
\end{displaymath}
We now need to establish that $\LL$ and $\LL_2$ share some properties
in common. In particular, from \cite{al08,Hen} we know that $\LL$ has
a compact resolvent, empty spectrum and that, for all $\mu_0\in \mathbb
R$, the resolvent norm is uniformly bounded in the half space $\Re \lambda
\leq \mu_0\,$:
\begin{equation}
  \label{eq:60}
\sup_{\Re\lambda\leq \mu_0} \|(\LL-\lambda)^{-1}\| < +\infty\,.
\end{equation}
Moreover, we will make use of the following regularity property for
$\LL\,$  (cf. \cite[Proposition 5.4]{AGH}),
\begin{equation}\label{regLL}
\| \tau u\| \leq C \, ( \| \LL u\| + \|u\|), \forall u \in D(\LL)\,,
\end{equation}
which implies together with \eqref{eq:60}, 
\begin{equation}\label{equt}
\| \tau u\| \leq C_{\mu_0}\,  \| (\LL-\mu) u\|, \forall u \in D(\LL)\,,\, \forall \mu \in [-1, \mu_0]\,.
\end{equation}
It can also be easily verified, by integrating by parts $\Re \langle  \LL -\mu) u\,,\, u\rangle$, that
\begin{displaymath}
  |\mu|\,\|u\|_2\leq C \| (\LL-\mu) u\|, \forall u \in D(\LL)\,,\, \forall \mu < -1\,,
\end{displaymath}
which implies  \eqref{equt} for $\mu < -1$ using again \eqref{regLL}.

Similarly, we get the following properties   for $\LL_2\,$:
\begin{proposition}
  \label{lem:1d-operators-R} 
For any $\ego >0$,  $\LL_2=\LL_2(\ego)$ has a compact resolvent. Moreover, for all $\mu_0 \in\R\,$,
  there exists $\ego_0>0$ and $C_{\mu_0}>0$ such that for all $0<\ego\leq
  \ego_0$ , the spectrum of $\LL_2(\ego)$ lies outside $\{\Re \lambda \leq \mu_0\}$ and 
  \begin{equation}
    \label{eq:61}
\sup_{\Re\lambda\leq \mu_0} \|(\LL_2(\ego)-\lambda)^{-1}\| \leq C_{\mu_0} \,.
  \end{equation}
\end{proposition}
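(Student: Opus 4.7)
My approach is to treat $\LL_2(\ego)=\LL+iW_\ego$ as a perturbation of the complex Airy operator $\LL$ on $\R$, where $W_\ego(\tau):=\ego\beta\chi(\ego^b\tau)\tau^2$. Since $\supp\chi(\ego^b\cdot)\subset\{|\tau|\le 2\ego^{-b}\}$, for each fixed $\ego>0$ the function $W_\ego$ satisfies $\|W_\ego\|_\infty\le 4|\beta|\ego^{1-2b}$, so $W_\ego$ is a bounded multiplication operator on $L^2(\R)$. Hence $D(\LL_2(\ego))=D(\LL)$, and the compact resolvent property of $\LL$ (a consequence of \eqref{eq:60} together with the compact embedding of $D(\LL)$ into $L^2(\R)$) carries over to $\LL_2(\ego)$.

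The quantitative heart of the proof is the refined bound
\[
 \|W_\ego u\|_{L^2}\;\le\;2|\beta|\,\ego^{1-b}\,\|\tau u\|_{L^2},\qquad u\in D(\LL),
\]
obtained from $|\tau\chi(\ego^b\tau)|\le 2\ego^{-b}$ on $\supp W_\ego$ together with the factorization $\tau^2=\tau\cdot\tau$. Because $b<3/4$, the factor $\ego^{1-b}$ is genuinely small as $\ego\to 0$. Combined with the regularity inequality \eqref{equt}, which gives $\|\tau u\|\le C_{\mu_0}\|(\LL-\mu)u\|$ uniformly for $\mu\in[-1,\mu_0]$, this yields $\|W_\ego(\LL-\mu)^{-1}\|=O(\ego^{1-b})$ uniformly in such $\mu$; a Neumann series applied to the identity $\LL_2-\mu=(\LL-\mu)(I+(\LL-\mu)^{-1}iW_\ego)$ then produces both the absence of real spectrum $\le\mu_0$ and a uniform resolvent bound for real spectral parameters.

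The principal obstacle is extending the bound uniformly to complex $\lambda=\mu+i\nu$ with arbitrary $|\nu|$. The translation identity $T_\nu\LL T_\nu^{-1}=\LL-i\nu$ shows $\|(\LL-\lambda)^{-1}\|$ is independent of $\Im\lambda$, but the companion quantity $\|\tau(\LL-\lambda)^{-1}\|$ degrades linearly in $|\nu|$, so the naive Neumann step closes only in the regime $|\nu|\lesssim\ego^{-(1-b)}$. To handle larger $|\nu|$, I would exploit that $\LL_2$ agrees with $\LL$ off the localized set $\supp W_\ego=\{|\tau|\le 2\ego^{-b}\}$, and derive a direct a priori estimate on $(\LL_2-\lambda)u=f$: pairing against $\bar u$ yields $\|u'\|^2\le\mu_0\|u\|^2+\|f\|\|u\|$ by accretivity, and pairing against $\tau\bar u$ (with integration by parts, using that $W_\ego$ only perturbs the cross terms by $O(\ego^{1-b})$ in view of the key smallness estimate) produces control of $\|\tau u\|$ in terms of $\|f\|$ uniformly in $\nu$. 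Assembling these bounds and combining with the Neumann regime closes the argument and gives the claimed uniform resolvent bound on the whole half-plane $\{\Re\lambda\le\mu_0\}$.
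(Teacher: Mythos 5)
Your treatment of the compact resolvent (bounded perturbation of $\LL$) and your small-$|\Im\lambda|$ regime are sound, and the key smallness estimate $\|W_\ego u\|\le 2|\beta|\ego^{1-b}\|\tau u\|$ is exactly the right quantity to isolate. However, there is a genuine gap in your argument for large $|\nu|=|\Im\lambda|$. The pairing against $\tau\bar u$ that you propose produces the term $-\nu\int_\R\tau|u|^2\,d\tau$, which is only controlled by $|\nu|\,\|\tau u\|\,\|u\|$; this cannot be absorbed into $\|\tau u\|^2$ uniformly in $\nu$, so the resulting bound is $\|\tau u\|\lesssim\|f\|+|\nu|\|u\|$ and your bootstrap is circular. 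Your fallback, localization onto $\supp W_\ego=\{|\tau|\le 2\ego^{-b}\}$ where $\LL_2=\LL$ holds off the support, only works when $|\nu|$ dominates the potential on that support, i.e.\ when $|\nu|\gg\ego^{-b}$. Since $b>\tfrac12$, we have $\ego^{-(1-b)}\ll\ego^{-b}$, so there remains a wide intermediate window $\ego^{-(1-b)}\lesssim|\nu|\lesssim\ego^{-b}$ covered by neither your Neumann series nor your localization.

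The paper closes this window with an idea absent from your proposal: a translation $\tau\mapsto\tau-\nu_\ego$, where $\nu_\ego=\nu+\ego\beta\chi(\ego^b\nu)\nu_1(\ego)$ is a corrected center accounting for the quadratic term. After recentering, the operator becomes $\LL-\mu$ plus a perturbation supported in $\{|\tau+\nu_\ego|\le 2\ego^{-b}\}$, on which $|\tau|\le C(|\nu|+\ego^{-b})$; the perturbation is then of size $\ego(|\nu|+\ego^{-b})\|\tau w\|+O(\ego^{3-4b})\|w\|$, which is $o(1)$ precisely for $|\nu|\le\ego^{-(1+2b)/3}$ and $b<\tfrac34$ (this is where the hypothesis $b<\tfrac34$ enters, via the exponent $3-4b>0$). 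Only for $|\nu|>\ego^{-(1+2b)/3}\gg\ego^{-b}$ does the paper switch to the localization argument you sketch, using a partition $\chi_2^2+\check\chi_2^2=1$ adapted to $\supp W_\ego$, where the imaginary part of the quadratic form yields $\|\chi_2w\|\lesssim\ego^{(1+5b)/6}(\|g\|+\|w\|)$ because $\ego^{-b}\ll|\nu|$ there. You would need to add the recentering step (or an equivalent device) to make your large-$|\nu|$ argument go through.
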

 \begin{proof}~\\
 The first statement is an immediate consequence of the boundedness of $\LL -\LL_2$.\\
 To prove (\ref{eq:61}) we first show that
\begin{equation}\label{eq:170a}
\sup_{\{ \Re\lambda\leq \mu_0\}\cap\rho(\LL_2(\ego))} \|(\LL_2(\ego)-\lambda)^{-1}\| \leq C_{\mu_0}\,,
\end{equation}
where $\rho(\LL_2)$ denotes the resolvent set of $\LL_2$.\\
The spectrum of $\LL_2$ being discrete, this uniform bound
implies that $\sigma(\LL_2)$ lies outside $\{\Re \lambda <
\mu_0\}$, and hence, it also implies \eqref{eq:61}.\\
 
{\bf  Consider first the case $|\Im \lambda|\leq\ego^{-(1+2b)/3}$.}\\
Let $\lambda= \mu + i \nu\in\rho(\LL_2)$, $w\in D(\LL_2)$ and $g= (\LL_2-\lambda)w $.
It follows that
    \begin{equation}\label{c1z}
    | \nu | \leq\ego^{-(1+2b)/3}\,.
    \end{equation}
Let further
  $$
   \nu_\ego:= \nu + \ego \beta \chi(\ego^b \nu)  \nu_1(\ego)\,,
   $$
   and
   $$
   \nu_1(\ego)= \frac{\nu^2}{1-2\beta\ego\nu\chi(\ego^b\nu)}\,,
   $$
which is by \eqref{c1z}  well defined when 
  $4 \,\ego^{1-b} |\beta| < 1\,.
  $\\
We assume  in the sequel that
  \begin{equation}\label{condego0}
   4 \,\ego_0^{1-b} |\beta| \leq  \frac 12\,,\mbox{ and } 0< \ego_0 \leq 1\,.
  \end{equation}
  Note that under these assumptions, we have
  \begin{equation}\label{nuegonu}
  \nu_\ego = \nu \, (1+ \OO (\ego^{1-b}))\,.
  \end{equation}

Applying the transformation 
   \begin{displaymath}
    \tau^\prime = \tau-\nu_\ego \,,
  \end{displaymath}
yields (dropping the superscript $\prime$)
\begin{equation*}
  (\LL - \mu )w = g
  -i\ego\beta\Big( (\tau+\nu_\ego) ^2 \varphi_{\ego,\nu} (\tau)  - \nu_1(\ego) \chi (\ego^b \nu)
 \Big)w  \,,
\end{equation*}
where we have introduced
$$
\varphi_{\ego,\nu}(\tau) = \chi (\ego^b (\tau + \nu_\ego) )\,.
$$

By \eqref{eq:60}, \eqref{equt} and \eqref{nuegonu}, there exists, for any $\mu_0\in
\mathbb R$, a constant $C_{\mu_0}$ such that, for $\mu \leq \mu_0$ and $\ego \in (0,\ego_0]$,   
\begin{equation}\label{5.8}
  \|w\|_2 + \|\tau w\|_2  \leq C_{\mu_0} \Big(\|g\|_2
 +  \ego  \|\tau^2\varphi_{\ego,\nu} w\|_2+  \ego |\nu| \|\tau\varphi_{\ego,\nu} w\|_2 + \ego \,\| \left(\nu_{\ego}^2 \varphi_{\ego,\nu} -\nu_1(\ego) \chi(\ego^b\nu)\right) w    \| \Big)\,.
 \end{equation}
 To estimate the second term of the right hand side, we first observe that, for some constant $C_0>0$, 
\begin{displaymath}
  |\tau \varphi_{\ego,\nu}(\tau) |\leq |\nu_\ego |+2\ego^{-b} \leq C_0 (|\nu| + \ego^{-b}) \,,
\end{displaymath}
and hence, using the assumptions on $\nu$, $b$ and $\ego$
\begin{equation}
  \label{eq:62} 
  \ego\|\tau^2 \varphi_{\ego,\nu}  w\|_2 \leq C_0 \ego ( |\nu| +  \ego^{-b}) \|\tau w\|_2  \leq 2C_0   \ego^{2(1-b)/3}\|\tau w\|_2
  \,. 
\end{equation}
For the third term, we simply observe that 
\begin{equation}\label{eq:171aa}
\ego |\nu| \|\tau\varphi_{\ego,\nu} w\|_2\leq   \ego^{2(1-b)/3}\|\tau w\|_2\,.
\end{equation}
It remains to obtain a bound for the last term on the right-hand-side
of \eqref{5.8}
$$
r_\ego:= \ego \| \left( \nu_{\ego}^2  \varphi_{\ego,\nu} -\nu_1(\ego) \chi(\epsilon^b \nu) \right) w    \| \,.
$$
To this end we first observe that
\begin{equation}\label{eq:172a}
r_\ego \leq \ego \nu_{\ego}^2   \|  (\varphi_{\ego,\nu}- \chi(\epsilon^b \nu))  w\| +
\ego  | \nu_\ego^2    -\nu_1(\ego)| \chi(\ego^b \nu)  \| w    \| \,.
\end{equation}
Using the fact that
\begin{displaymath} 
  |\varphi_{\ego,\nu}(\tau)  -\chi(\ego^b\nu)|= |\chi (\ego^b \tau +\ego^b \nu_\epsilon) -  \chi(\ego^b\nu)| \leq \epsilon^b \, ( \sup |\chi'|)\, (|\tau | + |\nu_\epsilon-\nu|)\,,
\end{displaymath}
Equation \eqref{nuegonu} and the assumptions on $\ego,\nu, b$, 
yield for the first term on the right-hand-side of \eqref{eq:172a}
\begin{equation}\label{172aa}
\ego \nu_{\ego}^2   \|  (\varphi_{\ego,\nu}- \chi(\epsilon^b \nu))  w\|
\leq
 C\, \ego^{(1-b)/3} \left( \|\tau w\|_2 +  \ego^{(1-b)/3} \|w\|_2\right) \,.
\end{equation}
For the second term on the right-hand-side of \eqref{eq:172a}, we get
from the identity
  \begin{displaymath}
    \nu^2 + 2\beta\ego\frac{\nu^3\chi(\ego^b\nu)}{1-2\beta\ego\nu\chi(\ego^b\nu)}= \frac{\nu^2}{1-2\beta\ego\nu\chi(\ego^b\nu)}= \nu_1(\ego)\,,
  \end{displaymath}
  the estimate
  \begin{equation}\label{172ab}
  \ego  | \nu_\ego^2    -\nu_1(\ego)| \, |\chi(\ego^b \nu)|\,  \| w    \| \leq C\,  \ego^3 \nu^4 |\chi(\ego^b \nu)|   \| w    \| \leq C\, \ego^{3-4b}  \| w    \| \,.
  \end{equation}
Using \eqref{5.8}-\eqref{172ab}, we obtain
\begin{equation*}
   \|w\|_2 + \|\tau w\|_2 \leq \hat C_{\mu_0} \big(\|g\|_2+
  \ego^{(1-b)/3}\| \tau   w\|_2+ (\ego^{3-4b} + \ego^{\frac{2(1-b)}{3}})\| w\|_2 \big) \,,
  \end{equation*}
  and choosing sufficiently small $\ego_0$ (which could depend on
  $\mu_0$) we finally obtain, for $b <\frac 34$,
  the existence of $C_{\mu_0}$ such that, for any $\ego \in (0,\ego_0]$,
  any $\lambda$ s.t.  $ \Re \lambda \leq \mu_0$, and any $w\in \mathcal D (\LL)$,
  \begin{equation}
\label{eq:63}
   \|w\|_2\leq \hat C_{\mu_0} \|g\|_2 \,.
\end{equation}
Consequently,
\begin{equation}\label{eq:64}
  \sup_{
    \begin{subarray}
  \quad   \quad \quad \quad  \lambda \in \rho(\LL_2(\ego) ) \\
    \quad  \quad \quad   \mu \leq\mu_0 \\
    \quad \quad \quad   |\nu| \leq \ego^{-(1+2b)/3}
    \end{subarray}}\|(\LL_2(\ego)-\lambda)^{-1}\|\leq C_{\mu_0} \,.
\end{equation}

{\bf Consider now $\lambda$ such that $\Re \lambda\leq\mu_0$ and $|\Im \lambda| >
\ego^{-(1+2b)/3}$.\\}

As before let $w\in D(\LL_2)$  and $g= (\LL_2-\lambda)w $. Let $\chi_2(\tau)
=\chi(\ego^b\tau/2)$ and \break $\check  \chi_2(\tau) =\check
\chi(\ego^b\tau/2)$. Hence we have  
$\check{\chi}_2^2+\chi_2^2=1$. 
Clearly,
\begin{displaymath}
   \Im\langle\chi_2^2w,(\LL_2-\lambda)w\rangle=
  - \nu \|\chi_2w\|_2^2
   +\langle\tau(1+\ego\beta\tau\chi)\chi_2^2w,w\rangle+2\, \Im\langle\chi_2^\prime w,(\chi_2w)^\prime\rangle\,.
\end{displaymath}
Consequently,
\begin{equation}
\label{eq:65}
  \ego^{-(1+2b)/3}\|\chi_2w\|_2^2\leq
  C(\ego^{-b}\|\chi_2w\|_2^2+ \ego^{b}\|w\|_2^2+  \ego^b\|(\chi_2w)^\prime\|_2^2 +
  \ego^b\|g\|_2^2 + \ego^{1-2b}\|\chi_2w\|_2^2)\,.
\end{equation}
Furthermore, as
\begin{displaymath}
    \Re\langle\chi_2^2w,(\LL_2-\lambda)w\rangle=\|(\chi_2w)^\prime\|_2^2 -
   \mu \|\chi_2w\|_2^2-\|\chi_2^\prime w\|_2^2\,,
\end{displaymath}
we obtain that
\begin{displaymath}
  \|(\chi_2w)^\prime\|_2^2 \leq C_{\mu_0} (\|\chi_2w\|_2^2+\ego^{2b}\|w\|_2^2+\|g\|_2^2)\,.
\end{displaymath}
Substituting the above into \eqref{eq:65} yields for a new constant $C_{\mu_0}$
\begin{equation}
  \label{eq:66}
\|\chi_2w\|_2 \leq C_{\mu_0} \, \ego^{(1+5b)/6}(\|g\|_2+\|w\|_2)\,.
\end{equation}
We now write, observing that $\check \chi_2(\tau) \, \chi (e^b \tau) =0\,$ on $\mathbb R\,$, 
\begin{displaymath} 
(\LL_2-\lambda)(\check{\chi}_2w) =  (\LL-\lambda)(\check{\chi}_2w)=\check{\chi}_2g + 2\check{\chi}_2^\prime w^\prime+ \check{\chi}_2^{\prime\prime}w \,.
\end{displaymath}
Hence, using \eqref{eq:60}, 
\begin{equation}
\label{eq:67}
  \|\check{\chi}_2w\|_2 \leq C_{\mu_0} (\|g\|_2 + \ego^b\|w^\prime\|_2 + \ego^{2b}\|w\|_2) \,.
\end{equation}
As
\begin{displaymath}
   \Re\langle w,(\LL_2-\lambda)w\rangle=\|w^\prime\|_2^2 -
   \mu \|w\|_2^2 \,,
\end{displaymath}
we easily obtain that
\begin{displaymath}
  \|w^\prime\|_2\leq C_{\mu_0} (\|g\|_2+\|w\|_2)\,.
\end{displaymath}
Substituting the above into \eqref{eq:67} yields
\begin{displaymath} 
   \|\check{\chi}_2w\|_2 \leq C_{\mu_0} (\|g\|_2 + \ego^b\|w\|_2)\,,
\end{displaymath}
which combined with \eqref{eq:66} yields 
\begin{equation}\label{case2}
  \sup_{
    \begin{subarray}{c}
    \lambda \in \rho (\LL_2) \\
      \mu \leq\mu_0 \\
    \quad   |\nu| > \ego^{-(1+2b)/3}
    \end{subarray}}\|(\LL_2-\lambda)^{-1}\|\leq C_{\mu_0} \,.
\end{equation}
The above together with \eqref{eq:64} yields \eqref{eq:170a}. 
\end{proof}
\subsection{Dirichlet realization in the half-line}
Denote the Dirichlet realization of \eqref{eq:58} in $\mathbb R_+$ by
$\LL_2^+(\ego)$ (or $\LL_2^+$ for simplicity). Its domain is given by
$D(\LL^+)$ (see \eqref{eq:38}). We recall that $\LL^+$ has compact
resolvent and that its spectrum consists of eigenvalues with
multiplicity $1$. In the sequel we denote these eigenvalues (ordered
by non decreasing real part) by $\{\vartheta_n\}_{n=1}^\infty$ and their associated
eigenfunctions by $\{v_n\}_{n=1}^\infty$ (recall that $\vartheta_n=|\nu_n|e^{i\pi/3}$).
\begin{proposition}\label{lem:1d-operators-R+}
Let $\mu_0 < \Re \vartheta_2$, $\delta_0 >0$  and
\begin{equation}
  \label{eq:68}
\Lambda (\ego,\delta,\mu_0))=\{\lambda \in \mathbb C\,,\, -1 \leq \Re\lambda\leq \mu_0 \mbox{
  and } | \lambda -  \vartheta_1 - \ego \beta \tau_{m,2}|\geq \delta \} \,. 
\end{equation}
  There exist positive $\ego_0$ and $C$ such that $ (\LL_2^+(\ego)
  -\lambda)$ is invertible whenever $\lambda\in\Lambda (\ego,\delta,\mu_0)$\,, for all  $\ego \in
  (0,\ego_0]$ and $\ego^{2-b}\leq \delta \leq \delta_0$. Moreover
  \begin{equation}
    \label{eq:69}
\sup_{\lambda\in\Lambda (\ego,\delta,\mu_0)}
\|(\LL_2^+(\ego) -\lambda)^{-1}\|\leq  \frac{C}{\delta}\,.
\end{equation}
  \end{proposition}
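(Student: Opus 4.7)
I plan to use a Grushin/Feshbach reduction based on the rank-one Riesz projection $\Pi_1$ of $\LL^+$ at the algebraically simple eigenvalue $\vartheta_1$, with complementary projection $P:=I-\Pi_1$. With the normalization of \eqref{eq:39}, $\Pi_1 u = \langle u,\bar v_1\rangle v_1$, and $P$ commutes with $\LL^+$, so the restriction $(\LL^+)_P:=\LL^+|_{PL^2(\R_+)}$ has spectrum $\{\vartheta_n\}_{n\geq 2}$. I take for granted, from the standard analysis of the half-line Dirichlet complex Airy operator, the uniform resolvent bound $\|((\LL^+)_P - \lambda)^{-1}\|\leq C$ and its $\tau$-regularity counterpart $\|\tau((\LL^+)_P - \lambda)^{-1} f\|_2\leq C\|f\|_2$ for $-1\leq\Re\lambda\leq\mu_0<\Re\vartheta_2$, which are the natural analogues of \eqref{eq:60} and \eqref{regLL} on the half-line.

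Given $g\in L^2(\R_+)$ with $(\LL_2^+(\ego)-\lambda)u=g$, decompose $u=av_1+u_\perp$ and $g=\gamma v_1+g_\perp$ with $u_\perp, g_\perp\in PL^2$. Pairing the resolvent equation against $v_1$ (for the $\Pi_1$-component) and applying $P$ gives the block system
\begin{equation*}
(\vartheta_1-\lambda)a + i\ego\beta\!\int\!\chi(\ego^b\tau)\tau^2\bigl(av_1+u_\perp\bigr)v_1\,d\tau = \gamma,
\end{equation*}
\begin{equation*}
((\LL^+)_P-\lambda)u_\perp + i\ego\beta\, P\!\bigl[\chi(\ego^b\tau)\tau^2(av_1+u_\perp)\bigr] = g_\perp.
\end{equation*}
The super-exponential decay of $v_1$ lets me replace $\chi(\ego^b\tau)$ by $1$ inside $\int\chi\tau^2 v_1^2\,d\tau$ up to $O(\ego^{\infty})$, and a contour deformation of the resulting integral identifies $i\ego\beta\int \tau^2 v_1^2\,d\tau$ with $\ego\beta\tau_{m,2}$ (cf.\ \eqref{eq:41}--\eqref{deftaum}). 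The scalar line therefore rewrites as
\begin{equation*}
(\vartheta_1+\ego\beta\tau_{m,2}-\lambda)\,a = \gamma - i\ego\beta\!\int\!\chi(\ego^b\tau)\tau^2 u_\perp\, v_1\,d\tau + O(\ego^{\infty})a,
\end{equation*}
exhibiting the perturbed first eigenvalue of $\LL_2^+(\ego)$ that the hypothesis of the proposition avoids.

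The decisive step is controlling $u_\perp$. Setting $Mu:=\chi(\ego^b\tau)\tau^2 u$, the support condition $|\chi\tau|\leq 2\ego^{-b}$ gives $\|Mu\|_2\leq 2\ego^{-b}\|\tau u\|_2$, while the decay of $v_1$ yields $\|Mv_1\|_2\leq C$. Inverting $((\LL^+)_P-\lambda)$ in the non-scalar equation and using both the $L^2$- and $\tau$-regularity bounds leads to
\begin{equation*}
\|u_\perp\|_2+\|\tau u_\perp\|_2 \leq C(\|g_\perp\|_2+\ego|a|) + C\ego^{1-b}\|\tau u_\perp\|_2,
\end{equation*}
so for $\ego$ small enough ($b<1$) I absorb the last term. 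Cauchy--Schwarz then gives $|\int\chi\tau^2 u_\perp v_1\,d\tau|\leq\|u_\perp\|_2\|\tau^2 v_1\|_2\leq C\|u_\perp\|_2$, and the scalar equation yields $|(\vartheta_1+\ego\beta\tau_{m,2}-\lambda)a|\leq|\gamma|+C\ego\|g_\perp\|_2+C\ego^2|a|$. The hypothesis $\delta\geq\ego^{2-b}\gg\ego^2$ absorbs $C\ego^2|a|$ into $\delta|a|$ and produces $|a|\leq(C/\delta)\|g\|_2$; combining this with $\|u_\perp\|_2\leq C\|g\|_2\leq(C\delta_0/\delta)\|g\|_2$ (using $\delta\leq\delta_0$) proves \eqref{eq:69}. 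Invertibility of $\LL_2^+(\ego)-\lambda$ on $\Lambda(\ego,\delta,\mu_0)$ then follows from this a priori estimate combined with the Fredholm property inherited from the compactness of the resolvent of $\LL_2^+$.

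For large $|\Im\lambda|$, where the Grushin route is inefficient, I would rerun the high-frequency cutoff argument ($\chi_2(\tau)=\chi(\ego^b\tau/2)$, $\check\chi_2=\sqrt{1-\chi_2^2}$) from the proof of Proposition \ref{lem:1d-operators-R}, which delivers a $\delta$-independent bound, admissible since $\delta\leq\delta_0$. The main obstacle is the non-normality of $\LL^+$: securing both the uniform $L^2$-resolvent bound and the $\tau$-regularity of $((\LL^+)_P-\lambda)^{-1}$ across the strip $-1\leq\Re\lambda\leq\mu_0$ is what permits the unbounded perturbation $M$ (whose naive operator norm $\sim\ego^{1-2b}$ is large for $b>1/2$) to be treated as a small relative perturbation.
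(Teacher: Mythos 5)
Your plan mirrors the paper's own proof step for step: the same rank-one Grushin reduction based on $\Pi_1$, the same replacement of $\chi(\ego^b\tau)$ by $1$ inside moments of $v_1$ using super-exponential decay, and the same large-$|\Im\lambda|$ cutoff fall-back. The one genuine defect is your ``$\tau$-regularity counterpart'' $\|\tau((\LL^+)_P-\lambda)^{-1}f\|_2\leq C\|f\|_2$ claimed uniformly over the strip $\{-1\leq\Re\lambda\leq\mu_0\}$; this is false. The $\lambda$-independent estimate one actually has is $\|\tau u\|_2\leq C(\|\LL^+u\|_2+\|u\|_2)$, and combined with the uniform resolvent bound it only yields $\|\tau((\LL^+)_P-\lambda)^{-1}f\|_2\leq C(1+|\lambda|)\|f\|_2$. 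Unlike for $\LL$ on $\R$, one cannot translate in $\tau$ to annihilate $\Im\lambda$ on the half-line (the translation moves the Dirichlet boundary), so the $(1+|\lambda|)$ factor is unavoidable. Reinstating it in your fixed-point inequality for $\|\tau u_\perp\|_2$, closure requires $\ego^{1-b}(1+|\lambda|)\ll 1$, i.e.\ $|\Im\lambda|\lesssim\ego^{b-1}$, and since $b\in(1/2,3/4)$ gives $\ego^{b-1}<\ego^{-(1+2b)/3}$, this leaves a window $\ego^{b-1}\lesssim|\Im\lambda|\lesssim\ego^{-(1+2b)/3}$ covered neither by your Grushin step nor by the cutoff argument you propose to borrow.

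The paper avoids this by not seeking a closed $\tau$-weighted estimate on the range of $I-\Pi_1$. It records \eqref{eq:77}, namely $\|\tau w\|_2\leq C(\|g\|_2+|\lambda|\,\|w\|_2)$ for the full $w$, feeds this into \eqref{eq:76}, and combines with \eqref{eq:74} to obtain a single inequality of the form $\|w\|_2\leq C(\tfrac1\delta+1)\bigl(\|g\|_2+\ego^{2-b}[\,|\lambda|+1\,]\|w\|_2\bigr)$. The two potentially dangerous factors are anti-correlated: small $\delta$ forces $\lambda$ to lie near $\vartheta_1+\ego\beta\tau_{m,2}$, hence $|\lambda|$ bounded and $1/\delta$ controlled by $\delta\geq\ego^{2-b}$; large $|\lambda|$ forces $\delta$ large, killing the $1/\delta$. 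This closes the estimate for all $|\Im\lambda|\lesssim\ego^{b-2}$, which (since $b<1$) overlaps the cutoff threshold $\ego^{-(1+2b)/3}$ with no gap. Replacing your uniform $\tau$-regularity claim by this bootstrap repairs the argument; as a secondary remark, the intermediate assertion $\|u_\perp\|_2\leq C\|g\|_2$ also fails when $\delta\sim\ego^{2-b}$ (one only gets $\|u_\perp\|_2\lesssim(1+\ego/\delta)\|g\|_2$), though the final $C/\delta$ bound on $\|w\|_2$ is unharmed.
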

\begin{proof}
  Let $\lambda\in\rho(\LL_2^+)\cap\Lambda (\ego,\delta,\mu_0)$.  Let $w\in D(\LL_2^+) $\,, $g =
  (\LL_2^+-\lambda) w$ and $\tau_{m,2}$ be given by \eqref{eq:41}. Write
  \begin{equation}
\label{eq:70} 
    (\LL^+-\lambda-\ego\beta\tau_{m,2})w = g - \ego\beta(i\chi_\ego \tau^2-\tau_{m,2})w \,,
  \end{equation}
  with
  $$ 
  \chi_\ego (\tau) = \chi (\ego^b \tau)\,.
  $$
  Recall the definition of $\Pi_k$ from \eqref{eq:44}. Applying $\Pi_1$
  to \eqref{eq:70}, we obtain
\begin{equation}
\label{eq:71} 
  (\vartheta_1-\lambda-\ego\beta\tau_{m,2})\Pi_1w=\Pi_1g - \ego\beta\Pi_1\big((\chi_\ego i\tau^2-\tau_{m,2})w\big)\,.
\end{equation}
From the definition of $\tau_{m,2}$ in \eqref{eq:41} we have
\begin{equation}
\label{eq:72}
  \Pi_1\big((i\tau^2-\tau_{m,2})w\big)=\Pi_1\big((i\tau^2-\tau_{m,2})(I-\Pi_1)w\big)\,.
\end{equation}
Furthermore, from the definition of $\Pi_1$ we have
\begin{equation}
\label{eq:73}
  \big\|\Pi_1\big((\chi_\ego i\tau^2-\tau_{m,2})w\big)-
  \Pi_1\big((i\tau^2-\tau_{m,2})w\big)\big\|_2\leq C \exp
  \Big\{-\frac{\sqrt{2}}{3}|\ego|^{-\frac {3b}{2}}\Big\}\|w\|_2\,.
\end{equation}
Here we have used the decay properties of the Airy function $v_1$ as
  $\tau\to+\infty\,$. See for example  \cite{AS}.\\
Consequently, we may write
\begin{displaymath}
    \big\|\Pi_1\big((\chi_\ego i\tau^2-\tau_{m,2})w\big)\big\|_2\leq C\Big(\|(I-\Pi_1)w\|_2+ \exp
  \Big\{-\frac{\sqrt{2}}{3}|\ego|^{-\frac {3b}{2}}\Big\}\|w\|_2\Big).
\end{displaymath}
By the above and \eqref{eq:71} we then have
\begin{equation}
\|\Pi_1w\|_2\leq\frac{C}{|\lambda-\vartheta_1-\ego \beta\tau_{m,2}|}\Big(\|\Pi_1g\|_2+ \ego\|(I-\Pi_1)w\|_2+\exp
  \Big\{-\frac{\sqrt{2}}{3}|\ego|^{-\frac {3b}{2}}\Big\}\|w\|_2\Big).
\end{equation}
Since $\lambda\in\Lambda(\ego,\delta,\mu_0))$, we obtain
\begin{equation}
  \label{eq:74} 
\|\Pi_1w\|_2\leq\frac{C}{|\lambda-\vartheta_1-\ego \beta\tau_{m,2}|}(\|g\|_2+ \ego\|(I-\Pi_1)w\|_2 ).
\end{equation}
Next, we apply $(I-\Pi_1)$ to both sides of \eqref{eq:70} to obtain
\begin{displaymath}
     (\LL^+ -\lambda+ \ego\beta\tau_{m,2})(I-\Pi_1)w = (I-\Pi_1)g  - \ego\beta (I-\Pi_1)\big((\chi_\ego i\tau^2-\tau_{m,2})w\big) .
\end{displaymath}
It has been established in \cite{AGH} that
\begin{equation}\label{eq:AGH}
 \|
(\LL^+-\hat \lambda)^{-1}(I-\Pi_1)\|\leq C \mbox{  for  }\Re \hat \lambda \leq \mu_0\,.
\end{equation} 
Hence,
\begin{displaymath}
  \|(I-\Pi_1)w\|_2 \leq  C(\|(I-\Pi_1)g\|_2 + \ego\|\chi_\ego \tau^2w\|_2+\ego\|w\|_2) \,.
\end{displaymath}
Having in mind the support  of $\chi_{\ego}$ we  obtain
\begin{equation}
\label{eq:75}
  \ego\|\chi_\ego \tau^2w\|_2\leq 2 \ego^{1-b}\|\tau w\|_2 \,,
\end{equation}
and hence the existence of $C>0$ such that, for all $w\in D(\LL^+)$, 
\begin{equation}
  \label{eq:76}
\|(I-\Pi_1)w\|_2\leq  C\, (\| g\|_2 + \ego^{1-b}\|\tau w\|_2+\ego\|w\|_2) \,.
\end{equation}

Since $w\in D(\LL^+)$, we can apply \cite[Proposition 5.8]{AGH}
and  \eqref{regLL} in the case of $\LL$ to
\eqref{eq:70} to obtain
\begin{displaymath}
  \|\tau w\|_2 \leq C(\|g\|_2 + |\lambda| \|w\|_2+ \ego\|\chi_\ego\tau^2w\|_2)\,.
\end{displaymath}
From \eqref{eq:75} we then get for $\ego_0$ small enough and $\ego\in (0,\ego_0)$
\begin{equation}
\label{eq:77}
   \|\tau w\|_2 \leq C(\|g\|_2 + |\lambda| \|w\|_2)\,.
\end{equation}
Combining \eqref{eq:77} with \eqref{eq:74} and \eqref{eq:76} yields
$$
\begin{array}{ll}
  \|w\|_2&\leq \|\Pi_1w\|_2 + \|(I-\Pi_1)w\|_2 \\ & \leq
C \Big(  \frac{1}{|\lambda-\vartheta_1-\ego \beta\tau_{m,2}|} +1\Big) \left(\|g\|_2+\ego^{2-b}[| \lambda| +1] \|w\|_2]\right) ,
\end{array}$$
 which implies the existence of $\ego_0$ and $\hat C >0$ such that,
 for $\ego\in (0,\ego_0]$ and \break  $|\Im \lambda| \leq \frac{1}{\hat C}
 \ego^{b-2}$,  
 \begin{equation}\label{eq:180a} 
  \|w\|_2\leq \hat C   \Big(  \frac{1}{|\lambda-\vartheta_1-\ego \beta\tau_{m,2}|} +1\Big)  \|g\|_2 \,.
  \end{equation}
 
  To complete the proof of \eqref{eq:69} we need to consider the case
  $|\Im \lambda| > \hat C^{-1} \ego^{b-2}$. To this end we need only observe
  that the proof of \eqref{case2}, where a bound on $\|(\LL_2-\lambda)^{-1}\|$
  is obtained, can be adapted without any changes for the Dirichlet
  realization $\LL_2^+$, whenever $|\Im \lambda| \geq \ego^{-(1+2b)/3}$. As $
  \ego^{b-2}>\ego^{-(1+2b)/3}$ we may conclude that there exist $C>0$
  and $\ego_0$ such that for $(\delta,\ego)$ satisfying the assumption of
  the proposition we have
   \begin{equation}
\label{eq:78}
\sup_{
  \begin{subarray}
   \,\lambda \in  \rho(\LL_2^+(\ego)) \\
   \lambda \in\Lambda (\ego,\delta,\mu_0) 
  \end{subarray}}
\|(\LL_2^+(\ego)-\lambda)^{-1}\|\leq \frac{C }{\delta}\,.
  \end{equation}
  
By the discreteness of $ \sigma(\LL_2^+(\ego))$ it now follows that
$\sigma(\LL_2^+(\ego))\cap\Lambda(\ego,\delta,\mu_0)=\emptyset\,$. This completes the proof of the proposition. 
\end{proof}

Using  the quantitative version of the  Gearhart-Pr\"uss Theorem (see
\cite{Sj} or \cite{Helbook}), we immediately obtain  
 \begin{corollary}
\label{cor:1d-operators-semigroup}
Let $b\in (\frac 12,\frac 34)$.  Let $e^{-t\LL_2^+(\ego)}$ denote the
semigroup associated with $-\LL_2^+(\ego)$.  There exist positive
$\ego_0$ and $C>0$ such that, for $\ego \in (0,\ego_0]$ and $
\ego^{2-b}\leq \delta\leq\Re\vartheta_1+1$, we have
  \begin{equation}
    \label{eq:79}
\|e^{-t\LL_2^+(\ego)}\|\leq \frac{C}{\delta}e^{-t(\Re\vartheta_1-\delta)}\,.
  \end{equation}
\end{corollary}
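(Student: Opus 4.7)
My plan is to obtain the semigroup bound by a direct application of the quantitative Gearhart--Pr\"uss theorem of \cite{Sj} (see also \cite{Helbook}) to the resolvent estimate \eqref{eq:69}. In the relevant formulation: if $A$ is a closed $m$-accretive operator on a Hilbert space and $\omega\in\R$ satisfies $\{\Re z\leq \omega\}\subset\rho(A)$ with $M(\omega):=\sup_{\Re z\leq \omega}\|(A-z)^{-1}\|<+\infty$, then $\|e^{-tA}\|\leq C_0\,M(\omega)\,e^{-\omega t}$ for all $t\geq 0$, with $C_0$ a universal constant. I would apply this with $A=\LL_2^+(\ego)$ and $\omega=\Re\vartheta_1-\delta$.

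The verification proceeds in three steps. First I would check that $\LL_2^+(\ego)$ is $m$-accretive: since its imaginary part is the real multiplication operator $\tau\mapsto \tau+\ego\beta\chi(\ego^b\tau)\tau^2$, integration by parts gives $\Re\langle u,\LL_2^+(\ego)u\rangle=\|u'\|_2^2\geq 0$ for $u\in D(\LL_2^+)$, and maximality follows from the non-emptiness of the resolvent set supplied by Proposition~\ref{lem:1d-operators-R+}; hence $\LL_2^+(\ego)$ generates a contraction semigroup on $L^2(\R_+)$. Next I would verify that the half-plane $\{\Re z\leq \Re\vartheta_1-\delta\}$ lies in $\rho(\LL_2^+(\ego))$ with a uniform resolvent bound $C/\delta$. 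Fix $\mu_0\in(\Re\vartheta_1,\Re\vartheta_2)$. On the strip $\{-1\leq\Re z\leq \Re\vartheta_1-\delta\}\cap\{|\Im z|\leq \mu_0+4\}$, Proposition~\ref{lem:1d-operators-R+} supplies $\|(\LL_2^+-z)^{-1}\|\leq C/\delta$ once one checks that the excluded disk $\{|\lambda-\vartheta_1-\ego\beta\tau_{m,2}|<\delta\}$ is disjoint from the strip. On $\{\Re z<-1\}$, the elementary energy identity $\Re\langle u,(\LL_2^+-z)u\rangle=\|u'\|_2^2-\Re z\|u\|_2^2$ yields $|\Re z|\,\|u\|_2\leq\|(\LL_2^+-z)u\|_2$. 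For $|\Im z|>\ego^{-(1+2b)/3}$ one reuses the cutoff argument from the second half of the proof of Proposition~\ref{lem:1d-operators-R} (splitting $w=\chi_2 w+\check\chi_2 w$ and invoking \eqref{eq:60}), which transfers verbatim to the Dirichlet realization. Combining these three regimes produces $M(\omega)\leq C/\delta$, and Gearhart--Pr\"uss then yields \eqref{eq:79}.

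The main subtlety I expect is verifying the geometric separation between the exclusion disk and the contour $\{\Re z=\Re\vartheta_1-\delta\}$. When $\beta\tau_{m,2}\geq 0$ the disk lies entirely in $\{\Re z\geq\Re\vartheta_1-\delta+\ego\beta\tau_{m,2}\}$ and disjointness is immediate; when $\beta\tau_{m,2}<0$ the real-part distance from the contour to the centre of the disk is only $|\delta+\ego\beta\tau_{m,2}|$, so one must exploit the imaginary-part offset $\Im\vartheta_1\ne 0$ and the smallness $\ego\beta\tau_{m,2}=o(1)$ to confirm, for $\ego_0$ small enough, that the disk remains geometrically separated from the contour; if necessary a small contour shift of order $\ego$ is absorbed into the final constant. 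Modulo this routine bookkeeping, no ingredients beyond Propositions~\ref{lem:1d-operators-R} and \ref{lem:1d-operators-R+} and the quantitative Gearhart--Pr\"uss theorem are needed.
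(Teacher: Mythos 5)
Your approach is the same as the paper's: apply the quantitative Gearhart--Pr\"uss theorem with $\omega=\Re\vartheta_1-\delta$, feeding in the resolvent bound $C/\delta$ from Proposition~\ref{lem:1d-operators-R+} extended to $\{\Re z\le -1\}$ by accretivity. The paper offers nothing beyond this one-liner, and your verification of $m$-accretivity and the geometric check in the case $\beta\tau_{m,2}\ge 0$ are both fine.

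Where the write-up goes wrong is exactly the subtlety you flag for $\beta<0$: both devices you propose for disposing of it are unsound. First, $\Im\vartheta_1\neq 0$ cannot separate the excluded disk from the contour $\{\Re z=\Re\vartheta_1-\delta\}$, because that contour is a full vertical line that meets every imaginary level, including $\Im z=\Im\vartheta_1$; the only relevant gauge is the real-part gap $\delta+\ego\beta\tau_{m,2}$, which becomes negative once $\delta<\ego|\beta|\tau_{m,2}$. Second, a contour shift of size $O(\ego)$ cannot be ``absorbed into the final constant'': replacing $\omega$ by $\Re\vartheta_1-\delta-c\ego$ changes the decay rate and produces a factor $e^{c\ego t}$ which is unbounded in $t$. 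The obstruction is in fact spectral, not merely geometric. By Proposition~\ref{lem:1d-eigenvalues} the eigenvalue $\vartheta_{1,\ego}$ sits at $\vartheta_1+\ego\beta\tau_{m,2}+\OO(\ego^{2-b})$, so for $\beta<0$ and $\delta$ near its lower end $\ego^{2-b}$ one has $\Re\vartheta_{1,\ego}<\Re\vartheta_1-\delta$; the half-plane $\{\Re z<\Re\vartheta_1-\delta\}$ then contains an eigenvalue, Gearhart--Pr\"uss cannot be applied with that $\omega$, and indeed \eqref{eq:79} cannot hold as written since $\|e^{-t\LL_2^+(\ego)}\|\ge e^{-t\Re\vartheta_{1,\ego}}$. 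Your proof (and the literal statement) closes only when $\beta\tau_{m,2}\ge 0$, or after strengthening the lower bound on $\delta$ to $\ego^{2-b}-\ego\beta\tau_{m,2}$, or, most naturally, after replacing $\Re\vartheta_1$ by $\Re\vartheta_1+\ego\beta\tau_{m,2}$ in the exponent so as to track the shifted disk centre appearing in \eqref{eq:68}--\eqref{eq:69}.
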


   \begin{remark}\label{rem5.6}
     Note that $C$ and $\ego_0$, in both \eqref{eq:79} and
     \eqref{eq:69} a priori depend on $\beta$. Nevertheless, as the proof
     of \eqref{eq:69} simply assumes that $\beta$ is bounded, we may drop
     this dependence by confining $\beta$ to a bounded interval.
   \end{remark}

   By \cite[Example 4.1.2]{AH} $\LL_2^+(\ego)$ possesses a complete
   system of generalized eigenfunctions in $L^2(\R_+)$. Denote by
   $\{\tilde{\vartheta}_k\}_{k=1}^{+\infty}$ the sequence of distinct eigenvalues
   ordered by non decreasing real part and the corresponding
   projection operators by $\tilde{\Pi}_k$.  Whenever an emphasis of
   the dependence on $\ego$ is necessary, we use the notation
$$
\tilde{\vartheta}_k=\vartheta_{k,\ego}  \mbox{ and } \tilde{\Pi}_k = \Pi_{k,\ego} \,.
$$

We now attempt to obtain a bound for  the variation
of the eigenvalues and eigenfunctions of $\LL_2^+(\ego)$ as function
of $\ego\,$.
\begin{proposition}
  \label{lem:1d-eigenvalues} For any $\beta_0>0$, $b\in (\frac 12,\frac
  34)$ and $\mu_0 < \Re \vartheta_2$, there exists a positive $\ego_1$ such
  that, for all $\beta\in[-\beta_0,\beta_0]$ and $\ego \in (0, \ego_1]$,
  $\LL_2(\ego)$ has in the half plane $\{\Re \lambda \leq \mu_0 \}$  at
    most a single eigenvalue of multiplicity $1$ denoted by
  $\vartheta_{1,\ego}$ which satisfies
      \begin{equation}\label{eq:184a}
  |\vartheta_{1,\ego}- \vartheta_1 - \, \ego \beta \tau_{m,2}| \leq \ego^{2-b}\,.  
\end{equation}
\end{proposition}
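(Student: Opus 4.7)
I read the statement as concerning $\LL_2^+(\ego)$ (the Dirichlet realization introduced just before the proposition); indeed the asserted location $\vartheta_1+\ego\beta\tau_{m,2}$ of the eigenvalue makes sense only for that operator. The plan is to treat $\LL_2^+(\ego)$ as a perturbation of $\LL^+$ by the unbounded-but-cut-off term $i\ego\beta\,\chi(\ego^b\tau)\tau^2$. Proposition~\ref{lem:1d-operators-R+} already confines any possible eigenvalue in $\{\Re\lambda\le\mu_0\}$ to a tiny disk; a Riesz-projection comparison with $\Pi_1$ will then show that exactly one such eigenvalue exists and is algebraically simple.

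\textbf{Step 1: Localization.} First, for $\Re\lambda<0$ the identity $\Re\langle w,(\LL_2^+-\lambda)w\rangle=\|w'\|^2-\Re\lambda\,\|w\|^2$ (integration by parts; the boundary terms vanish because $w\in H^1_0(\R_+)$) shows $|\Re\lambda|\,\|w\|\le\|(\LL_2^+-\lambda)w\|$, so no eigenvalue lies in the left half-plane. For $-1\le\Re\lambda\le\mu_0$, I apply Proposition~\ref{lem:1d-operators-R+} with the smallest admissible radius $\delta=\ego^{2-b}$: the resolvent then exists off the closed disk $\overline{D}\bigl(\vartheta_1+\ego\beta\tau_{m,2},\,\ego^{2-b}\bigr)$. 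Hence every eigenvalue of $\LL_2^+(\ego)$ in $\{\Re\lambda\le\mu_0\}$ lies in that disk, and \eqref{eq:184a} follows.

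\textbf{Step 2: Existence and simplicity via Riesz projection.} Fix $\delta_\star\in\bigl(0,\tfrac12(\Re\vartheta_2-\Re\vartheta_1)\bigr)$ and let $\Gamma$ be the circle $|\lambda-\vartheta_1|=\delta_\star$. For $\ego$ small $\Gamma$ encloses both $\vartheta_1$ and $\vartheta_1+\ego\beta\tau_{m,2}$, lies in $\rho(\LL^+)$, and sits at distance $\ge\delta_\star/2\gg\ego^{2-b}$ from $\vartheta_1+\ego\beta\tau_{m,2}$, so by \eqref{eq:69} we have $\|(\LL_2^+(\ego)-\lambda)^{-1}\|\le C/\delta_\star$ on $\Gamma$. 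Form
\begin{displaymath}
\Pi_1=\frac{1}{2\pi i}\oint_\Gamma(\LL^+-\lambda)^{-1}d\lambda,\qquad P_\ego=\frac{1}{2\pi i}\oint_\Gamma(\LL_2^+(\ego)-\lambda)^{-1}d\lambda,
\end{displaymath}
and apply the second resolvent identity
\begin{displaymath}
(\LL_2^+-\lambda)^{-1}-(\LL^+-\lambda)^{-1}=-i\ego\beta\,(\LL_2^+-\lambda)^{-1}\,\chi(\ego^b\tau)\tau^2\,(\LL^+-\lambda)^{-1}.
\end{displaymath}
On $\supp\chi(\ego^b\cdot)$ one has $|\chi(\ego^b\tau)\tau^2|\le 2\ego^{-b}|\tau|$, while \eqref{regLL} combined with \eqref{eq:60} gives $\|\tau(\LL^+-\lambda)^{-1}\|\le C_\Gamma$ uniformly on $\Gamma$. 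Therefore the difference of resolvents is $O(\ego^{1-b})$ on $\Gamma$, and integration yields $\|P_\ego-\Pi_1\|\le C''\ego^{1-b}$. Since $\mathrm{rank}\,\Pi_1=1$ and $1-b>0$, the classical lemma that projections at norm distance $<1$ have the same rank forces $\mathrm{rank}\,P_\ego=1$ for $\ego$ small. This yields a unique algebraically simple eigenvalue of $\LL_2^+(\ego)$ inside $\Gamma$, which by Step~1 is $\vartheta_{1,\ego}$ and satisfies \eqref{eq:184a}.

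\textbf{Main obstacle.} The genuine difficulty is that the perturbation $i\ego\beta\chi(\ego^b\tau)\tau^2$ has operator norm of order $\ego^{1-2b}$, which blows up for $b>\tfrac12$, so a direct Neumann-series argument is unavailable. The rescue is the asymmetric bookkeeping above: applied to the right of $(\LL^+-\lambda)^{-1}$, the factor $\tau^2\chi_\ego$ is absorbed using the first moment bound \eqref{regLL} (costing only $\ego^{-b}$), making the effective perturbative size $\ego^{1-b}$. This is precisely the mechanism already exploited inside the proof of Proposition~\ref{lem:1d-operators-R+}, and it is what makes the Riesz projection comparison converge.
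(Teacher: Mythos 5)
Your proposal is correct and follows essentially the same two-step route as the paper: Proposition~\ref{lem:1d-operators-R+} with $\delta=\ego^{2-b}$ for localization, then a Riesz-projection comparison with the $\beta=0$ operator $\LL^+$ to pin down the rank (and you correctly read $\LL_2(\ego)$ as the Dirichlet realization $\LL_2^+(\ego)$, since the whole-line $\LL_2(\ego)$ has empty spectrum by Proposition~\ref{lem:1d-operators-R}). The only deviation is cosmetic: the paper freezes $\ego$ and argues that $\beta\mapsto\Pi_1(\ego,\beta)$ is continuous on $[-\beta_0,\beta_0]$ so its integer-valued rank is constant, while you compare directly to $\Pi_1$ on a fixed $\ego$-independent circle and get the quantitative bound $\|P_\ego-\Pi_1\|=O(\ego^{1-b})<1$ — both rely on exactly the same mechanism of absorbing $\chi(\ego^b\tau)\tau^2$ at cost $\ego^{-b}$ via the first-moment bound.
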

\begin{proof}
  By \eqref{eq:69} there exists $\ego_1 >0$ such that for any $\ego \in
  (0,\ego_1]$, the set \break $\sigma(\LL_2^+(\ego))\cap\{\Re\lambda\leq\mu_0\}$ is either empty
  or
\begin{equation}
\label{eq:80}
\sigma(\LL_2^+(\ego))\cap\{\Re\lambda\leq\mu_0\}\subset B(\vartheta_1+\ego\beta\tau_{m,2},\ego^{2-b})\,.
\end{equation}
Throughout the proof we use the notation $\LL_2(\ego, \beta)$ instead of
$\LL_2(\ego)$ in order to emphasize the dependence on $\beta$ of $\LL_2$.
Let $\Pi_1(\ego,\beta)$ denote the projector associated with the
spectrum of $\LL_2^+(\ego, \beta)$ in the disk $D(\vartheta_1 +  \ego \tau_{m,2},
\ego^{2-b})$. For fixed $\ego \in (0, \ego_1]$, the operator-valued
function $\beta \mapsto ( \widehat \LL_2^+(\ego,\beta) -\lambda)^{-1}\in \LL(L^2(\mathbb
R_+))$, in view of Remark~\ref{rem5.6}, is uniformly continuous for
$\lambda$ on $\partial D (\vartheta_1+  \ego \tau_{m,2},\ego^{2-b})$ and $\beta\in [-\beta_0,\beta_0]$.  We
indeed observe that
$$
\begin{array}{l}
 ( \LL_2^+(\ego,\beta) -\lambda)^{-1} -  ( \LL_2^+(\ego,\beta') -\lambda)^{-1}\\
 \qquad = i \ego (\beta-\beta^\prime) ( \LL_2^+(\ego,\beta) -\lambda)^{-1} \circ
 (\chi(\ego^b\tau) i\tau^2-\tau_{m,2}) \circ ( \LL_2^+(\ego,\beta') -\lambda)^{-1}\,, 
\end{array} \,.
$$
The projector $\Pi_1(\ego,\beta)$ (which can be expressed by a Cauchy
integral of the resolvent along $\partial D (\vartheta_1+  \ego
\tau_{m,2},\ego^{2-b})$) is Lipschitz continuous in $\beta$ in $[-\beta_0,\beta_0]$,
and its rank is therefore a continuous integer valued function of $\beta$.
This rank is consequently constant and, noting that for $\beta =0$ we
have $\Pi_1(\ego,0) = \Pi_1$, must be equal to one. Hence
$\sigma(\LL_2^+(\ego,\beta)) \cap D(\vartheta_1+\ego\beta\tau_{m,2},\ego^{2-b})$ contains
precisely one eigenvalue of multiplicity one for all $\beta\in[-\beta_0,\beta_0]$.
Moreover $\Pi_1(\ego,\beta) = \widetilde\Pi_1= \Pi_{1,\ego}\,$.
\end{proof}
\begin{remark}
  Note that, expressing $\Pi_{1,\ego}$ by a Cauchy integral along a
  fixed circle centered at $\vartheta_1$ and contained in the half space
  $\{\Re \lambda < \Re \vartheta_2\} $, we obtain by (\ref{eq:69}) that there exists
  a constant $C >0$ such that, $\forall \ego \in (0,\ego_1]$,
\begin{equation}\label{esttildev1}
1 \leq \|\Pi_{1,\ego} \|\leq C\,.\end{equation}
\end{remark}

We can now obtain the following complement to Proposition
  \ref{cor:1d-operators-semigroup}: 
\begin{proposition}
\label{lem:1d-operators-semigroup2}
For every $\mu_0 <\Re \vartheta_2 $  there exists $M_{\mu_0} >0$ and $\ego_{\mu_0} >0$ such that, $\forall \ego \in (0,\ego_{\mu_0}]$, 
  \begin{equation}
    \label{eq:81}
\|e^{-t\LL_2^+(\ego)}(I-\Pi_{1,\ego})\|\leq M_{\mu_0} \,e^{-t \mu_0 }\,.  \end{equation}
\end{proposition}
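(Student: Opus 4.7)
The plan is to apply the quantitative Gearhart--Pr\"uss theorem to the restriction of $\LL_2^+(\ego)$ to the closed invariant subspace $(I-\Pi_{1,\ego})L^2(\R_+)$. By Proposition~\ref{lem:1d-eigenvalues} the spectral projector $\Pi_{1,\ego}$ for the simple eigenvalue $\vartheta_{1,\ego}$ commutes with $\LL_2^+(\ego)$, so the restriction has spectrum $\sigma(\LL_2^+(\ego))\setminus\{\vartheta_{1,\ego}\}$, which avoids $\{\Re\lambda\leq\mu_0\}$ for $\mu_0<\Re\vartheta_2$ and $\ego$ small. Everything reduces to proving a uniform bound
\begin{displaymath}
\sup_{\Re\lambda\leq\mu_0}\|(\LL_2^+(\ego)-\lambda)^{-1}(I-\Pi_{1,\ego})\|\leq K_{\mu_0},\qquad \ego\in(0,\ego_{\mu_0}].
\end{displaymath}

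To prove this bound I fix a small $r\in(0,(\Re\vartheta_2-\Re\vartheta_1)/4)$ and treat three regions. For $\Re\lambda\leq-1$, the elementary accretivity identity $\Re\langle(\LL_2^+(\ego)-\lambda)u,u\rangle\geq -\Re\lambda\,\|u\|^2$ gives $\|(\LL_2^+(\ego)-\lambda)^{-1}\|\leq 1$ directly. For $\lambda$ with $-1\leq\Re\lambda\leq\tilde\mu_0:=\max(\mu_0,\Re\vartheta_1+2r)$ and $|\lambda-\vartheta_{1,\ego}|\geq r$, the localization \eqref{eq:184a} gives $|\lambda-\vartheta_1-\ego\beta\tau_{m,2}|\geq r/2$ for $\ego$ small, so Proposition~\ref{lem:1d-operators-R+} applied with $\delta=r/2$ and upper bound $\tilde\mu_0<\Re\vartheta_2$ yields $\|(\LL_2^+(\ego)-\lambda)^{-1}\|\leq 2C/r$. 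Combined with the uniform bound \eqref{esttildev1} on $\|\Pi_{1,\ego}\|$, this handles the region after multiplication on the right by $(I-\Pi_{1,\ego})$.

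In the remaining disk $|\lambda-\vartheta_{1,\ego}|<r$, I use that $\vartheta_{1,\ego}$ is a simple pole of the resolvent, so that its Laurent expansion reads
\begin{displaymath}
(\LL_2^+(\ego)-\lambda)^{-1}=\frac{\Pi_{1,\ego}}{\vartheta_{1,\ego}-\lambda}+R_\ego(\lambda),
\end{displaymath}
with $R_\ego$ operator-valued holomorphic on the open disk $|\lambda-\vartheta_{1,\ego}|<2r$ (no other eigenvalue of $\LL_2^+(\ego)$ lies there for small $\ego$). Since $\Pi_{1,\ego}(I-\Pi_{1,\ego})=0$, the pole cancels and $(\LL_2^+(\ego)-\lambda)^{-1}(I-\Pi_{1,\ego})=R_\ego(\lambda)(I-\Pi_{1,\ego})$ is operator-valued holomorphic on the disk. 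The maximum principle then bounds its norm on the closed disk of radius $r$ by its supremum on the circle $|\lambda-\vartheta_{1,\ego}|=r$, where the preceding paragraph already provides a uniform bound.

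Combining the three regions yields the sought uniform resolvent estimate, and inserting it into the quantitative Gearhart--Pr\"uss theorem (as in Corollary~\ref{cor:1d-operators-semigroup}, cf.\ \cite{Sj,Helbook}) applied to the restricted generator delivers \eqref{eq:81}. The main technical point is tracking uniformity in $\ego$: this rests on the localization \eqref{eq:184a} of $\vartheta_{1,\ego}$, the uniform bound \eqref{esttildev1} on $\|\Pi_{1,\ego}\|$, and the $\delta^{-1}$ resolvent estimate of Proposition~\ref{lem:1d-operators-R+}, all of which are $\ego$-independent in the relevant ranges.
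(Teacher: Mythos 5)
Your proof is correct, and the route is genuinely different from the paper's, although both ultimately rest on a quantitative Gearhart--Pr\"uss argument. The paper simply quotes the Helffer--Sj\"ostrand formula \cite[Theorem~1.6]{HeSj}, which bounds $\|e^{-t\LL_2^+(\ego)}(I-\Pi_{1,\ego})\|$ in terms of the resolvent norm along the single line $\{\Re\lambda=\mu_0\}$ (controlled by \eqref{eq:69}), a lower bound on the time-integral $\int_0^{t/2}\|e^{-s\LL_2^+(\ego)}\|^{-2}e^{-2\mu_0 s}\,ds$ (supplied by Corollary~\ref{cor:1d-operators-semigroup}), and the uniform projector bound \eqref{esttildev1}. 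You instead establish a uniform resolvent bound on the entire half-plane $\{\Re\lambda\leq\mu_0\}$ for the operator restricted to $(I-\Pi_{1,\ego})L^2(\R_+)$, via the three-region split: accretivity for $\Re\lambda\leq -1$, Proposition~\ref{lem:1d-operators-R+} (applied with fixed $\delta=r/2$ after converting distances to $\vartheta_{1,\ego}$ into distances to $\vartheta_1+\ego\beta\tau_{m,2}$ through \eqref{eq:184a}) for the mid-region, and pole cancellation plus the maximum-modulus principle in the disk around $\vartheta_{1,\ego}$, where only $\vartheta_{1,\ego}$ can lie for small $\ego$ by Proposition~\ref{lem:1d-eigenvalues}. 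Concretely, you prove the resolvent estimate \eqref{eq:82} first and deduce \eqref{eq:81} from it by Gearhart--Pr\"uss, whereas the paper proves \eqref{eq:81} directly and then derives \eqref{eq:82} by a Laplace transform — the logical order is inverted. A small advantage of your route is that the decay rate from \eqref{eq:79} is not needed: m-accretivity of $\LL_2^+(\ego)$ together with \eqref{esttildev1} already yields the a priori bound on the restricted semigroup. The paper's argument is shorter at the cost of relying on the Helffer--Sj\"ostrand formula as a black box.
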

\begin{proof}
Let $\Re \vartheta_1<\mu_0<\Re\vartheta_2$. Recalling \eqref{eq:69}, we observe that
\begin{displaymath}
  r(\mu_0) := \sup_{\Re\lambda=\mu_0}\|(\LL_2^+(\ego)-\lambda)^{-1}\| < + \infty\,.
\end{displaymath}
By \cite[Theorem 1.6]{HeSj} we have
\begin{displaymath}
  \|e^{-t\LL_2^+(\ego)}(I-\Pi_{1,\ego})\|\leq\frac{e^{-\mu_0t}}{r(\mu_0)\int_0^{t/2}\|e^{-s\LL_2^+(\ego)}\|^{-2}e^{-2\mu_0s}\,ds} \, \|\,I- \Pi_{1,\ego}\|\,. 
\end{displaymath}
We may now use \eqref{eq:69}, \eqref{esttildev1} and \eqref{eq:79}
to prove that \eqref{eq:81} holds uniformly for $t\in \R_+$. 
\end{proof}
As
  \begin{displaymath}
    (\LL_2^+(\ego) -\lambda)^{-1}(I-\Pi_{1,\ego})= \int_0^\infty e^{-t(\LL_2^+(\ego) -\lambda)}(I-\Pi_{1,\ego})\,dt\,,
  \end{displaymath}
we obtain from \eqref{eq:81}  the following corollary:
\begin{corollary}
For every $\mu_0 <\Re \vartheta_2 \,,$  there exists $M_{\mu_0} >0$ and $\ego_{\mu_0} >0$ such that, $\forall \ego \in (0,\ego_{\mu_0}]$, 
\begin{equation}\label{eq:82}
\sup_{\Re \lambda \leq\mu_0}\| (I-\Pi_{1,\ego})(\LL_2^+(\ego)-\lambda)^{-1}\| \leq M_{\mu_0}\,.
\end{equation}
\end{corollary}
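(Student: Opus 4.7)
The plan is to derive the corollary directly from the integral representation already written in the excerpt together with the exponential semigroup bound of Proposition~\ref{lem:1d-operators-semigroup2}. There is essentially nothing else to do, but one has to be a bit careful about gaining a small margin between $\mu_0$ and the decay rate so that the time integral converges uniformly.

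More precisely, I would fix $\mu_0 < \Re\vartheta_2$ and choose an auxiliary level $\mu_1$ with $\mu_0 < \mu_1 < \Re\vartheta_2$. Applying Proposition~\ref{lem:1d-operators-semigroup2} at the level $\mu_1$, there exist $M_{\mu_1}>0$ and $\ego_{\mu_1}>0$ such that, for every $\ego\in(0,\ego_{\mu_1}]$ and every $t\geq 0$,
\begin{equation*}
\|e^{-t\LL_2^+(\ego)}(I-\Pi_{1,\ego})\|\leq M_{\mu_1}\,e^{-t\mu_1}\,.
\end{equation*}
In particular this integrability over $t\in\R_+$ justifies the integral representation
\begin{equation*}
(\LL_2^+(\ego)-\lambda)^{-1}(I-\Pi_{1,\ego})=\int_0^\infty e^{-t(\LL_2^+(\ego)-\lambda)}(I-\Pi_{1,\ego})\,dt
\end{equation*}
whenever $\Re\lambda\leq\mu_0<\mu_1$ (the left-hand side makes sense by Proposition~\ref{lem:1d-operators-R+} applied away from $\vartheta_{1,\ego}$, which lies off the range of $I-\Pi_{1,\ego}$).

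Taking operator norms inside the integral and pulling out the scalar factor $e^{t\Re\lambda}$ gives
\begin{equation*}
\|(\LL_2^+(\ego)-\lambda)^{-1}(I-\Pi_{1,\ego})\|\leq \int_0^\infty e^{t\Re\lambda}\,\|e^{-t\LL_2^+(\ego)}(I-\Pi_{1,\ego})\|\,dt\leq M_{\mu_1}\int_0^\infty e^{-t(\mu_1-\Re\lambda)}\,dt\,,
\end{equation*}
and since $\mu_1-\Re\lambda\geq\mu_1-\mu_0>0$ uniformly for $\Re\lambda\leq\mu_0$, the right-hand side is bounded by $M_{\mu_1}/(\mu_1-\mu_0)$. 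Setting $M_{\mu_0}:=M_{\mu_1}/(\mu_1-\mu_0)$ and $\ego_{\mu_0}:=\ego_{\mu_1}$ yields the claim.

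There is no real obstacle here; the only subtlety is picking the strict intermediate level $\mu_1$ so that the time integral converges at a uniform rate on the whole half-plane $\{\Re\lambda\leq\mu_0\}$. One cannot simply plug in the rate $\mu_0$ from the corollary statement into the semigroup bound, because the exponent $\mu_0-\Re\lambda$ would only be nonnegative rather than bounded below, and the $t$-integral would diverge when $\Re\lambda=\mu_0$.
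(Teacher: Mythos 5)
Your proposal is correct and follows essentially the same route as the paper, which also obtains \eqref{eq:82} by writing the resolvent on the range of $I-\Pi_{1,\ego}$ as the Laplace transform of the semigroup and invoking the bound \eqref{eq:81}. Your insertion of an intermediate level $\mu_1\in(\mu_0,\Re\vartheta_2)$ to keep the time integral uniformly convergent up to $\Re\lambda=\mu_0$ is a legitimate refinement that the paper's terse presentation leaves implicit, but it does not change the argument.
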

We conclude by obtaining the dependence on $\ego$ and the decay as
$\tau \to + \infty$ of the eigenfunction $\tilde v_1:=v_{1,\ego}$.

\begin{proposition} Under the assumptions of Proposition
  \ref{lem:1d-eigenvalues},   the corresponding eigenfunction
  $v_{1,\ego}$ can be normalized such that 
\begin{equation}\label{normtildev1}
\int_0^{+\infty}  v_{1,\ego}^2(\tau)\, d\tau =1\,,
\end{equation}
and 
\begin{equation}\label{eq:184c}
  \|v_{1,\ego} -v_1\|_2\leq C_1  \ego\,.
\end{equation}
Moreover, for any $ \Upsilon<\sqrt{2}/3\,$,   there
exists $C_\Upsilon >0$ and $\ego_1 >0$ such that,  for all $0<\ego\leq
\ego_1$, 
\begin{equation}
\label{eq:83}
  \|e^{\Upsilon \tau^{3/2}}\, v_{1,\ego}\|_2\leq C_\alpha\,.
\end{equation}
\end{proposition}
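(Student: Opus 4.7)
By Proposition \ref{lem:1d-eigenvalues} the projector $\Pi_{1,\ego}$ has rank one, so its range is a one-dimensional eigenspace for $\vartheta_{1,\ego}$. I would take $\widetilde w_\ego:=\Pi_{1,\ego}v_1$ as a distinguished element of this range and estimate $\widetilde w_\ego - v_1=(\Pi_{1,\ego}-\Pi_1)v_1$ from the Riesz representation of the projectors on a small fixed circle $\Gamma$ around $\vartheta_1$, contained in $\Lambda(\ego,\delta,\mu_0)\cap\rho(\LL^+)$. The second resolvent identity together with $\LL_2^+(\ego)-\LL^+ = i\ego\beta\chi(\ego^b\tau)\tau^2$ and $(\LL^+-\lambda)^{-1}v_1=(\vartheta_1-\lambda)^{-1}v_1$ give
\begin{equation*}
\widetilde w_\ego-v_1 = -\frac{\ego\beta}{2\pi}\oint_\Gamma(\LL_2^+(\ego)-\lambda)^{-1}\bigl[\chi(\ego^b\tau)\tau^2 v_1\bigr]\frac{d\lambda}{\vartheta_1-\lambda}.
\end{equation*}
The rapid Airy decay of $v_1$ ensures $\|\tau^2 v_1\|_2<\infty$, and together with the uniform resolvent estimate \eqref{eq:69} this yields $\|\widetilde w_\ego-v_1\|_2 \le C\ego$. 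I would then set $v_{1,\ego}:=c_\ego \widetilde w_\ego$ with $c_\ego\in\C$ chosen to enforce \eqref{normtildev1}; since $\int\widetilde w_\ego^2\,d\tau$ differs by $O(\ego)$ from the (nonzero) analogous integral for $v_1$, one obtains $|c_\ego-c_0|\le C\ego$ for the appropriate limiting constant $c_0$, and \eqref{eq:184c} follows after identifying $v_1$ with its properly scaled version.

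\textbf{Weighted estimate.} For \eqref{eq:83} the plan is to combine the explicit form of $v_{1,\ego}$ beyond the cutoff with a weighted energy estimate inside. Since $\chi(\ego^b\tau)=0$ for $\tau\ge 2\ego^{-b}$, the function $v_{1,\ego}$ solves the pure complex Airy equation there, whose only $L^2$ solution is $a_\ego\,\Ai(e^{i\pi/6}\tau+\mu_\ego)$ with $\mu_\ego:=-e^{-i\pi/3}\vartheta_{1,\ego}$ satisfying $|\mu_\ego-\nu_1|\le C\ego$; the classical asymptotic $|\Ai(z)|\lesssim|z|^{-1/4}e^{-(2/3)\Re(z^{3/2})}$ on $|\arg z|<\pi$ then gives pointwise decay at the sharp rate. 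On $[0,2\ego^{-b}]$ I would introduce a truncated weight $\phi_R(\tau):=\Upsilon\tau^{3/2}\chi_R(\tau)$, with $\chi_R$ equal to $1$ on $[0,R]$ and to $0$ on $[2R,\infty)$, and write out both real and imaginary parts of $\langle e^{2\phi_R}v_{1,\ego},(\LL_2^+(\ego)-\vartheta_{1,\ego})v_{1,\ego}\rangle=0$. Integration by parts (the boundary term vanishing by the Dirichlet condition) turns the real part into an identity for $\int e^{2\phi_R}|v_{1,\ego}'|^2\,d\tau$ in terms of weighted $L^2$ norms of $v_{1,\ego}$, while the imaginary part yields
\begin{equation*}
\int Ve^{2\phi_R}|v_{1,\ego}|^2\,d\tau=\Im(\vartheta_{1,\ego})\int e^{2\phi_R}|v_{1,\ego}|^2\,d\tau-2\int\phi_R'\, e^{2\phi_R}\Im(\overline{v_{1,\ego}}\,v_{1,\ego}')\,d\tau.
\end{equation*}
A Cauchy--Schwarz estimate on the current, combined with the pointwise lower bound $V\ge\tau/2$ (valid on the support of $\phi_R$ since on $\supp\chi(\ego^b\cdot)$ one has $\ego|\beta|\chi(\ego^b\tau)\tau\le 2|\beta|\ego^{1-b}\to 0$), then controls $\int e^{2\phi_R}|v_{1,\ego}|^2\,d\tau$ uniformly in $R$, and sending $R\to\infty$ yields \eqref{eq:83}.

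\textbf{Main obstacle.} The delicate point is the sharpness of the constant $\sqrt 2/3$ in \eqref{eq:83}. A naive real Agmon weight combined with symmetric Cauchy--Schwarz only yields an admissible $\Upsilon$ strictly below the critical value. To reach the optimal rate the plan is to replace the real weight by the real part of the WKB phase $\Phi(\tau):=\int_0^\tau\sqrt{is-\vartheta_{1,\ego}}\,ds$, whose real part is asymptotic to $(\sqrt 2/3)\tau^{3/2}$; the conjugated function $\widetilde v:=e^{\Phi}v_{1,\ego}$ then satisfies
\begin{equation*}
-\widetilde v''+2\Phi'\widetilde v' + \bigl(\Phi''+i\ego\beta\chi(\ego^b\tau)\tau^2\bigr)\widetilde v = 0,
\end{equation*}
in which the only zeroth-order terms are genuinely lower order, so uniform boundedness of $\widetilde v$ in an appropriate weighted $L^2$ space produces the sharp decay. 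An alternative would be to combine the explicit Airy representation on $[2\ego^{-b},\infty)$ with a matching argument at $\tau=2\ego^{-b}$, but this would require upgrading the $L^2$-convergence of Step~1 to pointwise convergence on intervals of length comparable to $\ego^{-b}$ via elliptic regularity applied to the perturbation equation for $v_{1,\ego}-v_1$, which is at least as delicate as the WKB argument.
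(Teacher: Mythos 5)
Your first part (the $L^2$ estimate \eqref{eq:184c}) is correct and is essentially a Cauchy-integral restatement of the paper's argument: the paper writes $v_1 = \alpha_{1,\ego} v_{1,\ego} + \psi_\ego$ with $\psi_\ego = (I-\Pi_{1,\ego})v_1$ and bounds $\|\psi_\ego\|_2$ using the uniform bound on $(I-\Pi_{1,\ego})(\LL_2^+(\ego)-\lambda)^{-1}$ from \eqref{eq:82}, while you obtain the same $\OO(\ego)$ bound by the second resolvent identity inside a contour integral and the uniform estimate \eqref{eq:69}. Both are fine; the paper is a touch more direct since it applies the resolvent bound at a single point rather than integrating over a circle, but they are mathematically equivalent. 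The normalization step (choosing $c_\ego$ so that $\int v_{1,\ego}^2 = 1$, with $c_\ego$ well defined because $\int v_1^2 = 1\neq 0$ and $\widetilde w_\ego\to v_1$) is also the paper's approach, as is the reliance on \cite{AsDa} for $\|\Pi_{1,\ego}\| = \|v_{1,\ego}\|_2^2$.

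Your ``Main obstacle'' paragraph, however, is based on a misconception and would send you on an unnecessary detour. The claim that a real Agmon weight combined with Cauchy--Schwarz only reaches $\Upsilon$ strictly below $\sqrt2/3$ is false, and the paper's proof of \eqref{eq:83} shows exactly why: with $w := e^{\Upsilon\tau^{3/2}}v_{1,\ego}$, the real and imaginary parts of $\langle e^{2\Upsilon\tau^{3/2}}v_{1,\ego},(\LL_2^+(\ego)-\vartheta_{1,\ego})v_{1,\ego}\rangle = 0$ give the two identities
\[
\|w'\|_2^2 = \tfrac{9\Upsilon^2}{4}\|\tau^{1/2}w\|_2^2 + \Re\vartheta_{1,\ego}\|w\|_2^2
\quad\text{and}\quad
\|\tau^{1/2}w\|_2^2 + 3\Upsilon\Im\langle\tau^{1/2}w,w'\rangle + \beta\ego\|\chi^{1/2}\tau w\|_2^2 = \Im\vartheta_{1,\ego}\|w\|_2^2\,.
\]
Applying Young's inequality $3\Upsilon|\langle\tau^{1/2}w,w'\rangle|\leq\tfrac{3\Upsilon}{2}\epsilon\|\tau^{1/2}w\|_2^2+\tfrac{3\Upsilon}{2\epsilon}\|w'\|_2^2$ and substituting the real-part identity to eliminate $\|w'\|_2^2$ gives a coefficient $\tfrac{3\Upsilon}{2}\epsilon+\tfrac{27\Upsilon^3}{8\epsilon}$ in front of $\|\tau^{1/2}w\|_2^2$, which is minimized at $\epsilon=\tfrac{3\Upsilon}{2}$ with value $\tfrac{9\Upsilon^2}{2}$. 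Together with the $\OO(\ego^{1-b})$ perturbation this yields precisely $\big[1-\tfrac{9\Upsilon^2}{2}-C\ego^{1-b}\big]\|\tau^{1/2}w\|_2^2\le C\|w\|_2^2$, which is a useful inequality if and only if $\Upsilon<\sqrt2/3$, the sharp threshold. No complex WKB phase or matching with the explicit Airy solution is needed; your first ``Weighted estimate'' paragraph was actually on track, and the step you were missing is simply the substitution of the real-part identity into the Cauchy--Schwarz bound. Introducing the conjugation $\widetilde v=e^\Phi v_{1,\ego}$ with a complex $\Phi$ would in fact create new difficulties (the first-order term $2\Phi'\widetilde v'$ is no longer anti-self-adjoint), and the pointwise-Airy-matching alternative you mention would require elliptic regularity estimates that are strictly more work than the one-line substitution above.
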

\begin{proof}
 We first observe, by \cite{AsDa}, that we can normalize the
 eigenfunction $\tilde v_1= v_{1,\ego}$ so that \eqref{normtildev1}
 holds.  Once this normalization is applied, we may write 
  \begin{equation}
  \label{eq:84}
\tilde{\Pi}_1 = \Pi_{1,\ego}= \langle \,\cdot\,,\, \bar{{v}}_{1,\ego}\rangle  \, v_{1,\ego}\,.
\end{equation}
Note that the above normalization determines $\tilde v_1$ up to a
multiplication by $\pm 1$. Since $\Pi_{1,\ego}$ is a rank one projection,
it follows by \cite{AsDa} 
$$
\|\Pi_{1,\ego}\| = \|v_{1,\ego} \|^2\,,
$$
which implies together with  \eqref{esttildev1} that
\begin{equation}\label{esttildev1a}
1 \leq \|v_{1,\ego}\|_2\leq C\,.
\end{equation}
To prove \eqref{eq:184c} we first observe that
  \begin{displaymath}
    (\LL_2^+(\ego) -\vartheta_1)v_1= -i\ego\beta\tau^2\chi v_1 \,.
  \end{displaymath}
By \eqref{eq:82} we then have
\begin{equation}
\label{eq:85}
  \|(I- \Pi_{1,\ego})v_1\|_2 \leq M_1\, \ego \,.
\end{equation}
Hence, for some $M_1>0$ and for any $\ego \in (0,\ego_1]$,
there exist $\alpha_{1,\ego}$ and a function  $\psi_\ego \in(I-\Pi_{1,\ego})L^2(\R_+)$ 
satisfying 
\begin{equation}\label{contalphaetpsi}
\|\psi_\ego\|_2\leq M_1 \ego\, ,\, |\alpha_{1,\ego}| \leq M_1\,,
\end{equation}
such that
\begin{equation}
  v_1=\alpha_{1,\ego} \, v_{1,\ego}+ \psi_\ego  \,.
\end{equation}
Taking the inner product with $\bar{v}_1$ and having in mind the
normalization of $v_1$ and $\tilde v_1$ yields  
\begin{displaymath}
 1 = \alpha_{1,\ego} \int_0^{+\infty}   v_{1,\ego}(\tau)  v_1(\tau)\,d\tau  + \int_0^{+\infty}  \psi_\ego(\tau) v_{1}(\tau)\, d\tau  \,.
\end{displaymath}
Taking the inner product with $\bar{v}_{1,\ego}$ yields 
\begin{displaymath}
 \int_0^{+\infty} v_{1,\ego} (\tau)  v_1 (\tau) \, d\tau   = \alpha_{1,\ego} \,.
\end{displaymath}
Consequently,
\begin{displaymath}
1 = \alpha_{1,\ego}^2 +  \int_0^{+\infty}  \psi_\ego (\tau) v_{1}(\tau)\, d\tau \,.
\end{displaymath}
By \eqref{contalphaetpsi}  we must therefore have 
$$
| \alpha_{1,\ego}^2 - 1|\leq C\, \ego\,.
$$
Possibly changing $v_{1,\ego}$ into $-v_{1,\ego}$ we get \eqref{eq:184c}.\\

To obtain the decay of $v_{1,\ego}$ we observe, following Agmon \cite{ag82}
that 
\begin{displaymath}
\begin{array}{ll}
  0& =\Re\langle e^{2\Upsilon\tau^{3/2}}\tilde{v}_1,(\LL_2^+-\tilde{\vartheta}_1)\tilde{v}_1\rangle\\ &  =
\|(e^{\Upsilon\tau^{3/2}}\tilde{v}_1)^\prime\|_2^2-\frac{9\Upsilon^2}{4}\|\tau^{1/2}e^{\Upsilon\tau^{3/2}}\tilde{v}_1\|_2^2-
\Re\tilde{\vartheta}_1\|e^{\Upsilon\tau^{3/2}}\tilde{v}_1\|_2^2 \,,
\end{array}
\end{displaymath}
and
\begin{equation}
\begin{array}{ll}
  0&=
  \Im\langle e^{2\Upsilon\tau^{3/2}} \tilde{v}_1,(\LL_2^+-\tilde{\vartheta}_1)\tilde{v}_1\rangle\\ & 
  =\|\tau^{1/2}e^{\Upsilon\tau^{3/2}}\tilde{v}_1\|_2^2- \Im\tilde{\vartheta}_1\|e^{\Upsilon\tau^{3/2}}\tilde{v}_1\|_2^2 \\
  &  \qquad +3\Upsilon\, \Im\langle\tau^{1/2}e^{\Upsilon\tau^{3/2}}\tilde{v}_1,
(e^{\Upsilon\tau^{3/2}}\tilde{v}_1)^\prime\rangle+\beta\ego\|\chi^{1/2}\tau e^{\Upsilon\tau^{3/2}}\tilde{v}_1\|_2^2\,.
\end{array}
\end{equation}
Combining the above identities yields
\begin{displaymath}
  \Big[1-\frac{9\Upsilon^2}{2}-C\ego^b\Big]\|\tau^{1/2}e^{\Upsilon\tau^{3/2}}\tilde{v}_1\|_2^2\leq C\|e^{\Upsilon\tau^{3/2}}\tilde{v}_1\|_2^2 \,.
\end{displaymath}
Consequently, for $0< \Upsilon < \sqrt{2}/3\,$, there exists $\hat C_\Upsilon >0$
and $\ego_1>0$ such that, for $\ego \in
(0,\ego_1]$, 
\begin{displaymath}
  \|e^{\Upsilon\tau^{3/2}}\tilde{v}_1\|_2^2 \leq \hat C_\Upsilon \, \|{\mathbf 1}_{\{\tau\leq \hat C_\Upsilon\}} \, \tilde{v}_1\|_2^2 \leq \hat C_\Upsilon  \,  \| \tilde v_1\| \,.
\end{displaymath}
We can then conclude by using \eqref{esttildev1}.
\end{proof}

 \begin{remark}More generally, one can prove that, 
  for every $k\in\N$ there exist positive $C_k$ and $\ego_k$ such that, 
  for all $\ego \in (0, \ego_k]\,$,  $\vartheta_{k,\ego} $ is simple  and satisfies
$$
|\vartheta_{k,\ego}-\vartheta_k|\leq C_k \, \ego \,.
 $$
 We can normalize the corresponding eigenfunction $v_{k,\ego}$ by
 $$
 \int_0^{+\infty}  v_{k,\ego}^2(\tau)\, d\tau =1\,,
 $$
 and with this normalization
$$
  \|v_{k,\ego} -v_k\|_2\leq C_k\, \ego\,.
$$
Moreover  for any $ \Upsilon<\sqrt{2}/3$ and any $k\in \mathbb N$,   there
exist $C_k^\Upsilon >0$ and $\ego_k >0$ such that,  for all $\ego\in (0,
\ego_k]$, 
$$
  \|e^{\Upsilon\tau^{3/2}}\, v_{k,\ego}\|_2\leq C_\Upsilon^k\,.
$$
 The proof for $k\geq2$ can indeed be
similarly obtained by considering
\begin{displaymath}
  (\LL_2^+(\ego) -\lambda)^{-1}\Big(I-\sum_{n=1}^{k-1}\tilde{\Pi}_n\Big)
\end{displaymath}
instead of $(\LL_2^+(\ego) -\lambda)^{-1} (I- \tilde{\Pi}_1)$.
\end{remark}

\subsection{Application to 2D separable operators}
The above-derived one-dimensional estimates can now be used to derive
similar estimates for some operators that can be represented as a sum
of $\LL_2^+$ and an operator that depends on $\sigma$ only (see the
definition of $(\sigma,\tau)$ in \eqref{eq:34}). We begin the estimation with
the following auxiliary lemma which will be used in the next section.
\begin{lemma}
\label{cor:taum}
  Let $g\in L^2(\R_+)$, and $\tau_m$ be given by \eqref{eq:41}. Then,
  \begin{equation}
    \label{eq:86}
\big\|\tilde{\Pi}_1\big((\tau-e^{-i\pi/3}\tau_m)g\big)-\tilde{\Pi}_1\big((\tau-e^{-i\pi/3}\tau_m)(I-\tilde{\Pi}_1)g\big)\big\|_2
\leq C\ego\, \|g\|_2 \,.
  \end{equation}
\end{lemma}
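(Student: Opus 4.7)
The plan is to rewrite the left-hand side as $\|\tilde\Pi_1\big((\tau-e^{-i\pi/3}\tau_m)\,\tilde\Pi_1 g\big)\|_2$ (by linearity) and exploit the rank-one structure of $\tilde\Pi_1$. Since $\tilde\Pi_1 g = c_g(\ego)\,v_{1,\ego}$ with $|c_g(\ego)|\le C\|g\|_2$ (by \eqref{esttildev1} and the normalization \eqref{normtildev1}), the estimate reduces to
\begin{displaymath}
\big\|\tilde\Pi_1\big((\tau-e^{-i\pi/3}\tau_m)\,v_{1,\ego}\big)\big\|_2 \le C\,\ego.
\end{displaymath}
Writing $\tilde\Pi_1 f = \langle f,\bar v_{1,\ego}\rangle_\tau \, v_{1,\ego}$, this in turn amounts to showing $|I(\ego)|\le C\ego$, where
$I(\ego) := \langle (\tau-e^{-i\pi/3}\tau_m)\,v_{1,\ego},\bar v_{1,\ego}\rangle_\tau$.

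The key observation is $I(0)=0$: indeed, the very definition of $\tau_m$ in \eqref{eq:41}, together with the normalization $\langle v_1,\bar v_1\rangle_\tau=1$, gives
$\langle (\tau-e^{-i\pi/3}\tau_m)v_1,\bar v_1\rangle_\tau = 0$.
I would therefore split
\begin{displaymath}
I(\ego) = I(\ego)-I(0) = \langle (\tau-e^{-i\pi/3}\tau_m)(v_{1,\ego}-v_1),\,\bar v_{1,\ego}\rangle_\tau + \langle (\tau-e^{-i\pi/3}\tau_m)\,v_1,\,\bar v_{1,\ego}-\bar v_1\rangle_\tau,
\end{displaymath}
and apply Cauchy--Schwarz to each piece. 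Using \eqref{eq:184c} to bound $\|v_{1,\ego}-v_1\|_2\le C\ego$ (and the same estimate for the adjoint eigenfunctions, which is obtained by exactly the same argument applied to $(\LL^+_2)^*$), and the exponential decay estimate \eqref{eq:83} to ensure that $\|(\tau-e^{-i\pi/3}\tau_m)\,v_{1,\ego}\|_2$, $\|(\tau-e^{-i\pi/3}\tau_m)\,\bar v_{1,\ego}\|_2$ and $\|(\tau-e^{-i\pi/3}\tau_m)v_1\|_2$ are uniformly bounded in $\ego\in(0,\ego_1]$, yields the required $\OO(\ego)$ estimate.

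The main obstacle is that the multiplier $\tau$ is unbounded on $\R_+$, so the closeness $\|v_{1,\ego}-v_1\|_2=\OO(\ego)$ of \eqref{eq:184c} cannot by itself absorb the factor $\tau$. The resolution is the Agmon-type weighted bound \eqref{eq:83}, which furnishes a uniform $L^2$ control of $\tau\,v_{1,\ego}$ (and of $\tau\,\bar v_{1,\ego}$); combined with Cauchy--Schwarz this is precisely what reduces the $\tau$-multiplied inner product to the plain $L^2$ difference estimated by $C\ego$.
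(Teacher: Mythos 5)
Your proof is correct and rests on the same two ingredients the paper uses: the vanishing identity $\Pi_1\bigl((\tau-e^{-i\pi/3}\tau_m)\Pi_1\,\cdot\,\bigr)=0$ coming from the definition of $\tau_m$, and the $\OO(\ego)$ closeness of the $\ego$-perturbed spectral data to the unperturbed ones, with the Agmon weight \eqref{eq:83} absorbing the unbounded factor $\tau$. The difference is only one of packaging: the paper telescopes at the operator level, writing $\tilde\Pi_1 M\tilde\Pi_1 = (\tilde\Pi_1-\Pi_1)M\tilde\Pi_1 + \Pi_1 M(\tilde\Pi_1-\Pi_1)$ and then invoking the operator-norm bound $\|\tilde\Pi_1-\Pi_1\|\le C\ego$ of \eqref{eq:87}; you instead exploit the rank-one form $\tilde\Pi_1 f=\langle f,\bar v_{1,\ego}\rangle v_{1,\ego}$ to reduce to the scalar $I(\ego)$ and telescope $I(\ego)-I(0)$, which amounts to using \eqref{eq:184c} directly rather than via its consequence \eqref{eq:87}. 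Both arguments are of the same strength. One small remark: your parenthetical about needing ``the same estimate for the adjoint eigenfunctions, obtained by applying the argument to $(\LL_2^+)^*$'' is superfluous. The projection here is complex-symmetric, $\tilde\Pi_1=\langle\,\cdot\,,\bar v_{1,\ego}\rangle\,v_{1,\ego}$, so the eigenfunction of the adjoint is simply the complex conjugate $\bar v_{1,\ego}$; consequently $\|\bar v_{1,\ego}-\bar v_1\|_2=\|v_{1,\ego}-v_1\|_2\le C\ego$ follows at once from \eqref{eq:184c} with no separate analysis.
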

\begin{proof}
By (\ref{eq:184c}) we have 
\begin{equation}
\label{eq:87}
  \|\Pi_1-\tilde{\Pi}_1\|\leq C  \, \ego \,.
\end{equation}
From the definition of $\tau_m$ we have
\begin{displaymath}
  \Pi_1\big((\tau-e^{-i\pi/3}\tau_m)\Pi_1g\big)=  0\,.
\end{displaymath}
We may thus write
\begin{displaymath}
   \tilde{\Pi}_1\big((\tau-e^{-i\pi/3}\tau_m)\tilde{\Pi}_1g\big)=
   (\tilde{\Pi}_1-\Pi_1)\big((\tau-e^{-i\pi/3}\tau_m)\tilde{\Pi}_1g\big)+
   \Pi_1\big((\tau-e^{-i\pi/3}\tau_m)(\tilde{\Pi}_1-\Pi_1)g\big)\,.
\end{displaymath}
By \eqref{eq:87} and \eqref{eq:83} we have 
\begin{displaymath}
  \|(\tilde{\Pi}_1-\Pi_1)\big((\tau-e^{-i\pi/3}\tau_m)\tilde{\Pi}_1g\big)\|_2\leq C\ego\,\|(\tau-e^{-i\pi/3}\tau_m)\tilde{\Pi}_1g\|_2\leq C\ego\, \|g\|_2 \,,
\end{displaymath}
and since 
\begin{displaymath}
  \Pi_1\big((\tau-e^{-i\pi/3}\tau_m)(\tilde{\Pi}_1-\Pi_1)g\big)=\langle\bar{v}_1,(\tau-e^{-i\pi/3}\tau_m)(\tilde{\Pi}_1-\Pi_1)g\rangle v_1\,,
\end{displaymath}
we obtain from \eqref{eq:87} and the decay properties of the Airy
function that
\begin{displaymath}
  \|\Pi_1\big((\tau-e^{-i\pi/3}\tau_m)(\tilde{\Pi}_1-\Pi_1)g\big)\|_2\leq C\,\ego\, \|g\|_2 \,.
\end{displaymath}
\end{proof}
The next proposition will also be needed  in the next section.
\begin{proposition}
\label{cor:1d-operators-2d-one} 
Let, for $\ego >0$,  $\beta \in \mathbb R$,  $1/2<b<3/4$, and $I$  an
open interval in $\R$  (we may set $I=\R$ as well),  
\begin{subequations}
\label{eq:207}
    \begin{equation}
    \Mg(\ego,I,\beta,\chi) = \LL^+_2(\ego,\beta)  -\ego\partial^2_\sigma \,, 
  \end{equation}
be defined on
\begin{equation}
  D(\Mg(\ego,I,\beta,\chi))=\{u\in H^2(S_I)\cap H^1_0(S_I) \,| \, \tau u\in L^2(S_I)\} \,,
\end{equation}
\end{subequations}
where $S_I=I\times\R_+$. \\
 Then,
there exist $\ego_0 >0$ and $C>0$ such that, for any triple
$(\ego,\delta,I)$ satisfying $ \delta \in [\ego^{2-b}\,,\, \Re\tilde{\vartheta}_1+1)$ 
and $\ego \in (0,\ego_0]$, the spectrum $\Mg(\ego,I,\beta)$ lies outside $\{
\Re\lambda\leq\Re \tilde{\vartheta}_1-\delta\}$ and
\begin{equation}
  \label{eq:88}
\sup_{
    \Re\lambda\leq\Re \tilde{\vartheta}_1-\delta}
\|(\Mg(\ego,I,\beta,\chi)-\lambda)^{-1}\|\leq \frac{C}{\delta}\,.
\end{equation}
\end{proposition}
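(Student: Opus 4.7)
The plan is to exploit the separable structure
$\Mg(\ego,I,\beta,\chi) = \LL_2^+(\ego,\beta) \otimes I_\sigma + I_\tau \otimes \Hg$,
where $\Hg := -\ego\,\partial_\sigma^2$ is the Dirichlet realization on $L^2(I)$. The operator $\Hg$ is self-adjoint and non-negative, while $\LL_2^+(\ego,\beta)$ acts only in the $\tau$ variable, so the two operators commute in the strong sense. Denoting by $\{E_\mu\}_{\mu\geq 0}$ the spectral resolution of $\Hg$, I would write, for any $\lambda$ in the common resolvent set,
\begin{equation*}
  (\Mg-\lambda)^{-1} \;=\; \int_0^{+\infty} \big(\LL_2^+(\ego,\beta) - (\lambda-\mu)\big)^{-1} \otimes dE_\mu\,,
\end{equation*}
which immediately yields $\|(\Mg-\lambda)^{-1}\| \leq \sup_{\mu\geq 0}\|(\LL_2^+(\ego,\beta)-(\lambda-\mu))^{-1}\|$.

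The problem is thereby reduced to a uniform one-dimensional resolvent bound on the shifted parameter $\lambda-\mu$ as $\mu$ ranges over $[0,+\infty)$. Fix $\lambda$ with $\Re\lambda \leq \Re\tilde\vartheta_1 - \delta$ and $\delta \geq \ego^{2-b}$. I would split the supremum into two regimes. When $-1 \leq \Re(\lambda-\mu) \leq \mu_0$ for a fixed $\mu_0 \in (\Re\vartheta_1,\Re\vartheta_2)$, I would invoke Proposition~\ref{lem:1d-operators-R+} after verifying $\lambda-\mu \in \Lambda(\ego,\delta/2,\mu_0)$. Using \eqref{eq:184a}, the relevant distance is estimated by
\begin{equation*}
  |\lambda - \mu - \vartheta_1 - \ego\beta\tau_{m,2}| \;\geq\; |\Re\vartheta_1 + \ego\beta\tau_{m,2} + \mu - \Re\lambda| \;\geq\; \delta - C\ego^{2-b} + \mu \;\geq\; \delta/2\,,
\end{equation*}
once $\ego_0$ is chosen small enough that $C\ego^{2-b} \leq \delta/2$ throughout the admissible range. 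In the complementary regime $\Re(\lambda-\mu) < -1$, I would exploit accretivity of $\LL_2^+$: integration by parts using the Dirichlet condition at $\tau=0$ yields $\Re\langle\LL_2^+ u,u\rangle = \|u'\|_2^2 \geq 0$, and hence $\|(\LL_2^+ - z)^{-1}\| \leq |\Re z|^{-1} \leq 1$ whenever $\Re z \leq -1$. Since $\delta$ is bounded above by $\Re\tilde\vartheta_1+1$, this is no worse than $C/\delta$.

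Combining the two regimes gives $\sup_{\mu\geq 0}\|(\LL_2^+-(\lambda-\mu))^{-1}\| \leq C'/\delta$, and the spectral-integral identity then yields \eqref{eq:88}. The statement about the spectrum is a by-product: the tensor decomposition gives $\sigma(\Mg) \subseteq \sigma(\LL_2^+(\ego,\beta)) + [0,+\infty)$, while Proposition~\ref{lem:1d-eigenvalues} combined with the resolvent bound in Proposition~\ref{lem:1d-operators-R+} implies $\sigma(\LL_2^+(\ego,\beta)) \cap \{\Re\lambda \leq \Re\tilde\vartheta_1 - \delta\} = \emptyset$, so the same holds for $\Mg$ in that half-plane. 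I expect the main technical nuisance to be justifying the spectral-integral representation rigorously: closedness of $\Mg$ on its natural domain and strong convergence of the integral on a suitable common core. This is standard for a sectorial operator commuting with a self-adjoint one, but the Dirichlet boundary of $I$ together with the $\tau$-dependent unbounded part of $\LL_2^+$ require one to single out a convenient core such as $C_0^\infty(I)\otimes D(\LL_2^+)$ and check that the identity extends by density.
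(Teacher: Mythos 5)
Your spectral-resolution decomposition of $\Mg$ via the self-adjoint operator $-\ego\partial_\sigma^2$ is precisely the paper's intended route (Fourier series in $\sigma$ on a bounded $I$, Fourier transform for $I=\R$), and the reduction to $\sup_{\mu\geq0}\|(\LL_2^+-(\lambda-\mu))^{-1}\|$ together with the treatment of the far half-plane $\Re(\lambda-\mu)<-1$ by accretivity is fine. The issue is in the near regime, when you try to place $\lambda-\mu$ in $\Lambda(\ego,\delta/2,\mu_0)$. Proposition~\ref{lem:1d-operators-R+} only applies with parameter $\delta'\in[\ego^{2-b},\delta_0]$; taking $\delta'=\delta/2$ therefore needs $\delta\geq 2\ego^{2-b}$, whereas the hypothesis of the present proposition only guarantees $\delta\geq\ego^{2-b}$. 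Likewise, your chain $\delta-C\ego^{2-b}+\mu\geq\delta/2$ requires $\mu\geq C\ego^{2-b}-\delta/2$, which is not automatic at $\mu=0$ when $\delta$ sits near its minimal allowed value: shrinking $\ego_0$ cannot enforce $C\ego^{2-b}\leq\delta/2$ because $\delta$ is a free parameter allowed to equal $\ego^{2-b}$. So the estimate as written breaks down on the boundary case $\ego^{2-b}\leq\delta\lesssim\ego^{2-b}$.

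The gap is local and repairable, but it does require an idea you have not invoked. When the shifted point $z=\lambda-\mu$ falls inside the disk $B(\vartheta_1+\ego\beta\tau_{m,2},\delta)$ where Proposition~\ref{lem:1d-operators-R+} gives nothing directly, you should instead use the actual pole structure of $(\LL_2^+(\ego)-z)^{-1}$. Since $\Re z\leq\Re\tilde\vartheta_1-\delta$, you have $|z-\tilde\vartheta_1|\geq\Re\tilde\vartheta_1-\Re z\geq\delta$, and $\tilde\vartheta_1$ is the only eigenvalue of $\LL_2^+(\ego)$ in the relevant half-plane. Writing
\begin{equation*}
(\LL_2^+(\ego)-z)^{-1}=\frac{1}{\tilde\vartheta_1-z}\,\tilde\Pi_1+(\LL_2^+(\ego)-z)^{-1}(I-\tilde\Pi_1)\,,
\end{equation*}
the first term is bounded by $C/\delta$ via \eqref{esttildev1}, and the second is uniformly bounded by \eqref{eq:82}; since $\delta$ is bounded above by $\Re\tilde\vartheta_1+1$, the uniform bound is itself $\leq C'/\delta$. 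Combined with your Case~1 (where $|z-\vartheta_1-\ego\beta\tau_{m,2}|\geq\delta$ and \eqref{eq:69} applies directly with parameter $\delta$ rather than $\delta/2$), this closes the regime $-1\leq\Re z\leq\mu_0$ for all $\delta\geq\ego^{2-b}$, not merely $\delta\geq2\ego^{2-b}$. The essential observation is that the half-plane condition is stated relative to the true eigenvalue $\tilde\vartheta_1$, while \eqref{eq:69} is stated relative to the nearby point $\vartheta_1+\ego\beta\tau_{m,2}$; the discrepancy between them, of order $\ego^{2-b}$, is exactly comparable to the smallest admissible $\delta$, so a dichotomy between the two centers is needed rather than a uniform comparison.
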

The proof can easily be obtained from \eqref{eq:69}
by using Fourier series in $\sigma$, or by using a Fourier transform in the
case $I=\R$ (see also in \cite[Lemma 4.12]{AGH}).\\
Finally we will also make use, in the next section, of the following
proposition:
\begin{proposition}
\label{cor:1d-2d-two}
Let $\tau_m$  and $\PP$ be defined by \eqref{eq:27a} and  \eqref{eq:41}. 
Let further
\begin{equation}
  \label{eq:89}
\Mg^2_\ego = \LL^+_2(\ego) +\ego\, \PP \,, 
\end{equation}
be the closed operator on $L^2(\R^2_+)$ with domain 
\begin{displaymath}
  D(\Mg^2_\ego)=\{  u\in H^2(\R^2_+)\cap H^1_0(\R_+^2)\,|\, (\sigma^2+\tau)u\in
  L^2(\R^2_+)\,\}\,. 
\end{displaymath}
Then, there exist $\ego_0$ and $C>0$, such that,  for all pairs
$(\ego,\delta)$ for which \break  $ \ego^{2-b}\leq \delta \leq \Re \vartheta_1+1$ and $\ego \in (0,\ego_0]\,$,  we
have 
\begin{subequations}
  \label{eq:90}
  \begin{equation}
\|e^{-t\Mg^2_\ego}\|\leq \frac{C}{\delta}e^{-t(\Re\vartheta_1-\delta+\ego\mu_1^r)}\,,
\end{equation}
where $\mu_1^r=\inf \Re\sigma(\PP)$. \\
Furthermore, for each
$\varpi<\Re \vartheta_2$, there exists $M_\varpi>0$ such that
\begin{equation}
  \|e^{-t\Mg^2_\ego}(I-\tilde{\Pi}_1)\|\leq M_\varpi e^{-t\varpi}\,.
\end{equation}
\end{subequations}
\end{proposition}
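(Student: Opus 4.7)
The plan is to exploit the tensorial structure of $\Mg^2_\ego$. Under the natural identification $L^2(\R^2_+)\cong L^2(\R,d\sigma)\,\widehat\otimes\,L^2(\R_+,d\tau)$, the operator $\LL_2^+(\ego)$ acts only on the $\tau$-variable and $\PP$ acts only on the $\sigma$-variable, so
\begin{displaymath}
\Mg^2_\ego \;=\; \ego\,\PP\otimes I_\tau \;+\; I_\sigma\otimes\LL_2^+(\ego).
\end{displaymath}
These two summands commute on the natural core of Schwartz-type functions, which is dense in $D(\Mg^2_\ego)$. After checking that $\Mg^2_\ego$ is closed (it is already given as such in the statement) and maximally accretive, the resulting $C_0$-semigroup is the tensor semigroup
\begin{displaymath}
e^{-t\Mg^2_\ego} \;=\; e^{-t\ego\PP}\,\widehat\otimes\, e^{-t\LL_2^+(\ego)},
\end{displaymath}
and in particular
\begin{displaymath}
\|e^{-t\Mg^2_\ego}\|\leq \|e^{-t\ego\PP}\|\cdot\|e^{-t\LL_2^+(\ego)}\|.
\end{displaymath}

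For the $\sigma$-factor, $\PP$ is a complex harmonic oscillator with compact resolvent, whose spectrum consists of the simple eigenvalues $(2n+1)e^{i\pi/12}\sqrt{\tau_m}+\beta\tau_{m,2}$; its resolvent is uniformly bounded on any closed half-plane $\{\Re\lambda\leq\varrho\}$ with $\varrho<\mu_2^r:=\Re\lambda_2(\PP)$, as can be seen either from an explicit Mehler-type formula or by Gearhart--Pr\"uss together with the standard quadratic-form/Davies bound. Consequently there exists $C_\PP>0$ such that
\begin{displaymath}
\|e^{-t\PP}\|\leq C_\PP\, e^{-t\mu_1^r}\qquad\forall t\geq 0,
\end{displaymath}
and by dilation $\|e^{-t\ego\PP}\|\leq C_\PP e^{-t\ego\mu_1^r}$. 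Combining this with Corollary~\ref{cor:1d-operators-semigroup} applied to $\LL_2^+(\ego)$ immediately yields (a).

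For (b), observe that $\widetilde\Pi_1=I_\sigma\otimes\widetilde\Pi_1$ in the tensor notation, so
\begin{displaymath}
e^{-t\Mg^2_\ego}(I-\widetilde\Pi_1)\;=\;e^{-t\ego\PP}\,\widehat\otimes\,\bigl(e^{-t\LL_2^+(\ego)}(I-\widetilde\Pi_1)\bigr).
\end{displaymath}
Given $\varpi<\Re\vartheta_2$, pick $\mu_0$ with $\varpi<\mu_0<\Re\vartheta_2$ and apply Proposition~\ref{lem:1d-operators-semigroup2} to the $\tau$-factor, getting $\|e^{-t\LL_2^+(\ego)}(I-\widetilde\Pi_1)\|\leq M_{\mu_0}e^{-t\mu_0}$ for all $\ego\in(0,\ego_{\mu_0}]$. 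Multiplying by the $\sigma$-factor bound gives
\begin{displaymath}
\|e^{-t\Mg^2_\ego}(I-\widetilde\Pi_1)\|\;\leq\;C_\PP M_{\mu_0}\, e^{-t(\mu_0+\ego\mu_1^r)},
\end{displaymath}
and shrinking $\ego_0$ so that $\mu_0+\ego\mu_1^r\geq\varpi$ for $\ego\in(0,\ego_0]$ produces the required estimate.

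The only real point requiring care is the legitimacy of the tensor factorization of the semigroup: one must verify that $\ego\PP\otimes I_\tau$ and $I_\sigma\otimes\LL_2^+(\ego)$ commute strongly (their resolvents commute on a common core such as $\mathcal S(\R)\otimes\mathcal S(\overline{\R_+})$ intersected with $D(\Mg^2_\ego)$) and that the algebraic sum is essentially the closed operator $\Mg^2_\ego$ defined in the statement. Once that is in place---and this is standard given the explicit weighted Sobolev domain described in the statement---the rest is a direct combination of the one-dimensional estimates proved earlier in this section together with the complex-harmonic-oscillator semigroup bound.
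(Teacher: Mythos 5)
Your proof is correct and follows essentially the same route as the paper: verify that $\Mg^2_\ego$ is maximally accretive (the paper does this via the generalized Lax--Milgram technique of~\cite{AH}, using the gradient bound $|\nabla\hat V_\ego|\leq C\sqrt{1+\hat V_\ego^2}$), then exploit separability to write $e^{-t\Mg^2_\ego}=e^{-t\LL_2^+(\ego)}\widehat\otimes\,e^{-t\ego\PP}$, and conclude by combining Corollary~\ref{cor:1d-operators-semigroup} with a semigroup bound on $\PP$. The only cosmetic differences are that the paper simply cites Davies (\cite[Corollary~14.5.2]{da07}) for $\|e^{-t\PP}\|\leq Ce^{-t\mu_1^r}$ rather than invoking Mehler/Gearhart--Pr\"uss, and that the paper leaves part~(b) implicit, whereas you correctly supply it by applying Proposition~\ref{lem:1d-operators-semigroup2} to the $\tau$-factor.
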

\begin{proof}
Note first that the potential
\begin{displaymath}
\hat V_\ego(\tau,\sigma) = \tau+\beta\ego\tau^2\chi(\ego^b\tau) + \ego\, e^{-i\pi/3}\sigma^2\tau_m 
\end{displaymath}
has positive real and imaginary parts for $\ego \in (0,\ego_0]$ (with
$\ego_0$ small enough) and satisfies for some $C(\ego,\beta,b)>0$
\begin{displaymath}
  |\nabla \hat{V}_\ego |\leq C\sqrt{1+\hat V_\ego^2} \quad \text{ in } \R^2_+ \,.
\end{displaymath}
Hence we may use the technique in \cite{AH} to obtain that
$\Mg^2_\ego:D(\Mg^2_\ego)\to L^2(\R^2_+)$ has a bounded inverse.  

  Since $\Mg^2_\ego$ is separable, we have (cf. \cite{AGH})
  \begin{displaymath}
    e^{-t\Mg^2_\ego}=  e^{-t\LL_2^+(\ego)}\otimes e^{-t \PP}\,.
  \end{displaymath}
  The proof of (\ref{eq:90}a) follows from \eqref{eq:79} and the
  fact, proven in \cite[Corollary 14.5.2]{da07}, 
\begin{equation}
\label{eq:91}
  \|e^{-t \PP}\|\leq C  e^{-t\,\mu_r^1}\,.
\end{equation}
\end{proof}

\begin{remark}
\label{rem:Dirichlet-separable}
  The validity of \eqref{eq:90} remains intact, if we define
  an operator $\Mg^2_\ego(L)$ as the Dirichlet realization of 
  $\Mg^2_\ego$ in the semi-infinite strip $S_L=(-L,L)\times\R_+$. \end{remark}

\section{V1 potentials: 2D simplification}
\label{sec:simplified}

In this section we estimate the resolvent of the operator appearing on
the left hand side of \eqref{eq:37}, which is obtained from the Taylor
expansion of $V$ near some $x_0\in\Se^m$. To remove the large $\tau$ effect
of the term $i\ego\beta\tau^2$, we attach to it a cutoff function
$\chi(\ego^b\tau)$ as in the previous section. Since the term $i \ego
\sigma^2\tau$ has a significant effect for large values of $|\sigma|$ we
separately estimate the resolvent in the region $|\sigma|\gg1$.  We address
the effect of the term $2 \ego \omega \partial_\tau$ in a later stage. Other error
terms appearing in \eqref{eq:36}-\eqref{eq:21a} will be treated in
Subsection \ref{sec:modification}.

\subsection{The operator $\B_\ego$}
Let 
\begin{equation}
\label{eq:92} 
  \B_\ego  = \LL_2^+(\ego,\beta)  + \ego\,  \Kg
\end{equation}
where $\LL_2^+(\ego,\beta)$ is defined by \eqref{eq:58}, with domain 
given by (\ref{eq:38}b), and
\begin{equation}
  \Kg = -\partial^2_\sigma + i\sigma^2\tau \,.
\end{equation}
We first give  a characterization of the domain of  the Dirichlet realization of
$\B_\ego$ in $\R^2_+$.  We may assume that $\ego_0 >0$ is small enough so that
$$
V_\ego (\tau,\sigma):= \tau (1 +\ego \sigma^2 + \ego \beta \tau  \chi(\ego^b \tau))
$$
is non negative for $\ego \in (0,\ego_0]$.\\
The operator $\B_\ego$ has   the form $-\Delta+iV_\ego$. 
It can be verified that $|\nabla V_\ego |^2+V_\ego^2$ tends to $+\infty$ as
$\sigma^2 +\tau^2$ tends to $+\infty$ and that there exist $C:= C(\ego,\beta,b)>0$
such that
\begin{displaymath}
  |D^2V_\ego |\leq C \sqrt{1+|\nabla V_\ego|^2+V_\ego^2} \quad \text{in } \R^2_+\,. 
\end{displaymath}
Hence 
\begin{equation}\label{domBego}
  D(\B_\ego) = \{u\in H^2(\R^2_+)\cap H^1_0(\R^2_+)\,|\,\tau(1+\sigma^2)u\in L^2(\R^2_+) \}\,,
\end{equation}
and the resolvent of $ \B_\ego$ is compact by \cite[Corollary 5.10]{AGH}.
Note that while the corollary in \cite{AGH} refers separable
operators, we may still apply it, given the Dirichlet
boundary conditions in (\ref{domBego}).  As a matter of fact, we can use all the
estimates of \cite[Section 5.2]{AGH}, obtained in the
absence of boundaries.  

\subsection{Large $|\sigma|$ simplification}
In the following we estimate the resolvent of \eqref{eq:92} for large
$|\sigma|$, or more precisely, in $(\ego^{-{\mathfrak a}}, +\infty)\times \mathbb
R_+$.  It is convenient to shift $\B_\ego$ to a fixed
domain $Q=\R_+\times\R_+$ by using the transformation
$\sigma\to\sigma+\ego^{-{\mathfrak a}}$.
\begin{proposition}
\label{lem:large-sigma} Let ${\mathfrak a} >0$ such that $1/6<{\mathfrak a}<1/4$, and let 
  \begin{subequations}
      \begin{equation}
\label{eq:93}
\CC_\ego  = \LL_2^+(\ego)  -\ego\partial^2_\sigma + i\ego((\sigma+\ego^{-{\mathfrak a}})^2\tau) \,,  
  \end{equation}
be defined on
\begin{equation}
  D(\CC_\ego) = \{u\in H^2(Q)\cap H^1_0(Q)\,| \,\tau(1+ \sigma^2)u\in L^2(Q) \}\,,
\end{equation}
  \end{subequations}
  where $Q=\R_+\times\R_+$. 
Then, for all $\gamma_0>0$ there exist positive $C(\gamma_0)$ and
  $\ego_0$ such that for all $\ego\in (0,\ego_0]$, we have
  $B(\vartheta_1,\gamma_0\ego) \subset \rho(\CC_\ego)$ ,
\begin{subequations}
  \label{eq:94}
  \begin{equation}
  \sup_{\lambda\in B(\vartheta_1,\gamma_0\ego)}   \|(\CC_\ego -\lambda)^{-1}\| \leq \frac{C}{\ego^{1-2{\mathfrak a}}} \,,
  \end{equation}
 and
\begin{equation}
   \sup_{\lambda\in B(\vartheta_1,\gamma_0\ego)}   \|\partial_\sigma(\CC_\ego-\lambda)^{-1}\| \leq \frac{C}{\ego^{3/2-2{\mathfrak a}}} \,.
\end{equation}
\end{subequations}
\end{proposition}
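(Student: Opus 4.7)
The announced rate $\ego^{2\mathfrak a-1}$ comes from the fact that on $Q$ we have $\sigma\geq 0$, so $(\sigma+\ego^{-\mathfrak a})^2\geq\ego^{-2\mathfrak a}$, and the additional imaginary potential $i\ego(\sigma+\ego^{-\mathfrak a})^2\tau$ contributes a mass of order $\ego^{1-2\mathfrak a}\tau$ that, since $\mathfrak a<1/4$, is much larger than the radius $\gamma_0\ego$; this is what pushes $\lambda\in B(\vartheta_1,\gamma_0\ego)$ away from the spectrum of $\LL_2^+(\ego)$. My plan is to exploit the spectral decomposition of $\LL_2^+(\ego)$ in the $\tau$-variable provided by Proposition \ref{lem:1d-eigenvalues}: let $\tilde\Pi_1=\Pi_{1,\ego}$ be the rank-one projector onto $\tilde v_1=v_{1,\ego}$, acting pointwise in $\sigma$, and write $u=w(\sigma)\tilde v_1(\tau)+u_\perp$ with $u_\perp\in\mathrm{Ran}(I-\tilde\Pi_1)$, then project the equation $(\CC_\ego-\lambda)u=f$ onto the two complementary subspaces.

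Projection onto $\mathrm{Ran}(\tilde\Pi_1)$, using $\tilde\Pi_1\tau\tilde\Pi_1=\tilde\tau_m\tilde\Pi_1$ with $\tilde\tau_m=\int\tau\tilde v_1^2\,d\tau=e^{-i\pi/3}\tau_m+O(\ego)$, yields the scalar equation
\begin{equation*}
(\Mg_\ego+\vartheta_{1,\ego}-\lambda)w=f_1(\sigma)-i\ego(\sigma+\ego^{-\mathfrak a})^2 g(\sigma),\quad \Mg_\ego=-\ego\partial_\sigma^2+i\ego\tilde\tau_m(\sigma+\ego^{-\mathfrak a})^2,
\end{equation*}
on $L^2(\R_+)$ with Dirichlet at $\sigma=0$, where $f_1(\sigma)=\langle f(\sigma,\cdot),\bar{\tilde v}_1\rangle_\tau$ and $g(\sigma)=\langle u_\perp(\sigma,\cdot),\tau\bar{\tilde v}_1\rangle_\tau$. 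Since $\arg(i\tilde\tau_m)=\pi/6+O(\ego)$, rotation by $e^{-i\pi/6}$ gives the accretivity estimate
\begin{equation*}
\Re[e^{-i\pi/6}\langle w,\Mg_\ego w\rangle]\geq c_0\ego\|\partial_\sigma w\|^2+c_0\ego\|(\sigma+\ego^{-\mathfrak a})w\|^2\geq c_0\ego^{1-2\mathfrak a}\|w\|^2,
\end{equation*}
and combining with $|\vartheta_{1,\ego}-\lambda|\leq C\ego$ from \eqref{eq:184a} I obtain, for small $\ego$, $\|w\|\leq C\ego^{2\mathfrak a-1}(\|f_1\|+\ego\|(\sigma+\ego^{-\mathfrak a})^2 g\|)$. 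On the complementary subspace, projection gives an equation for $u_\perp$ whose principal part $(\LL_2^+(\ego)-\lambda)$ is uniformly invertible on $\mathrm{Ran}(I-\tilde\Pi_1)$ by Proposition \ref{lem:1d-operators-semigroup2}, while the additional terms $\ego(-\partial_\sigma^2)$ and $i\ego(\sigma+\ego^{-\mathfrak a})^2\tau$ contribute nonnegative real parts after the same rotation; the right-hand side carries the transfer term $-i\ego(\sigma+\ego^{-\mathfrak a})^2 w(\sigma)h(\tau)$ with $h=(I-\tilde\Pi_1)[\tau\tilde v_1]$ rapidly decaying. Combining with the global imaginary-part identity $\ego\|(\sigma+\ego^{-\mathfrak a})\tau^{1/2}u\|^2\leq C(\|u\|\|f\|+\|u\|^2)$ to bootstrap the weighted coupling $\|(\sigma+\ego^{-\mathfrak a})^2g\|$ closes the loop and yields $\|u\|\leq C\ego^{2\mathfrak a-1}\|f\|$, which is \eqref{eq:94}(a). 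Estimate \eqref{eq:94}(b) then follows from the real-part identity $\Re\langle u,(\CC_\ego-\lambda)u\rangle=\|\partial_\tau u\|^2+\ego\|\partial_\sigma u\|^2-\Re\lambda\|u\|^2$: the right-hand side is bounded by $C(\|u\|\|f\|+\|u\|^2)\leq C\ego^{2(2\mathfrak a-1)}\|f\|^2$, so $\ego\|\partial_\sigma u\|^2\leq C\ego^{2(2\mathfrak a-1)}\|f\|^2$ and hence $\|\partial_\sigma u\|\leq C\ego^{2\mathfrak a-3/2}\|f\|$.

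The main technical obstacle lies in closing the loop between the $w$-equation and the $u_\perp$-equation, because the coupling term in the $w$-equation carries the unbounded weight $(\sigma+\ego^{-\mathfrak a})^2$. Bounding $\ego\|(\sigma+\ego^{-\mathfrak a})^2 g\|$ requires a weighted control on $u_\perp$ whose loss in powers of $\ego$ must not exceed $\ego^{1-2\mathfrak a}$, so that the contribution on the right-hand side of the $w$-estimate is absorbed into $\|w\|$. The rapid $\tau$-decay of $\tilde v_1$ and of $h$ established by \eqref{eq:83} is essential here, since it ensures the kernels $\tau\bar{\tilde v}_1$ and $h$ stay in $L^2$ and produce no further loss; the compatibility $\arg(i\tilde\tau_m)\approx\pi/6$, which keeps $\Mg_\ego$ accretive after rotation, is what ultimately allows the $\ego^{1-2\mathfrak a}$ gap between $\lambda$ and the effective spectrum to be converted into the resolvent bound.
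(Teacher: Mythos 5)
Your proposal takes a genuinely different route from the paper's. The paper proves this proposition by a \emph{partition of unity in $\sigma$}: it introduces cutoffs $\phi_k^\ego(\sigma)=\phi_k(\ego^{\mathfrak b}\sigma)$ with $\frac16<\mathfrak b<\mathfrak a$, restricts $\CC_\ego$ to the strips $S_k=((k-1)\ego^{-\mathfrak b},(k+1)\ego^{-\mathfrak b})\times\R_+$, and applies to each strip the dilation $(\sigma,\tau)\to\mathfrak d(\ego,k)(\sigma,\tau)$ with $\mathfrak d(\ego,k)=(1+\ego[\ego^{-\mathfrak a}+k\ego^{-\mathfrak b}]^2)^{1/3}$. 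The rescaling shifts the effective Airy threshold upward by the factor $\mathfrak d^2\geq 1+O(\ego^{1-2\mathfrak a})$, which is what creates the spectral gap; the quadratic $\sigma$-weight produces only bounded remainder terms because $|\sigma-k\ego^{-\mathfrak b}|\lesssim\ego^{-\mathfrak b}$ on $S_k$. By contrast, you project onto the lowest $\tau$-mode $\tilde v_1$ over all of $Q$ and rely on accretivity of the reduced scalar operator $\Mg_\ego$ after rotation by $e^{-i\pi/6}$. The reduced equation and the rotation argument are correct, and your derivation of $(\ref{eq:94}\mathrm{b})$ from $(\ref{eq:94}\mathrm{a})$ via the real-part identity is also correct.

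However, there is a genuine gap in closing the coupling estimate, which you flag as ``the main technical obstacle'' but do not resolve. Write $S=\sigma+\ego^{-\mathfrak a}$. Your accretivity bound yields
\begin{equation*}
c_0\ego\|Sw\|^2\;\leq\;\|w\|\,\|f_1\|+\ego\,\bigl|\langle Sw,\,Sg\rangle_\sigma\bigr|\,,
\end{equation*}
and since $g(\sigma)=\langle\tau u_\perp(\sigma,\cdot),\bar{\tilde v}_1\rangle_\tau$ with $\tilde\Pi_1 u_\perp=0$, the Cauchy--Schwarz bound gives $|g(\sigma)|\leq\|\tau^{1/2}\tilde v_1\|_\tau\,\|\tau^{1/2}u_\perp(\sigma,\cdot)\|_\tau$, hence $\|Sg\|_\sigma\leq\|\tau^{1/2}\tilde v_1\|_\tau\,\|S\tau^{1/2}u_\perp\|$. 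Substituting $u_\perp=u-w\tilde v_1$ returns the term $\|\tau^{1/2}\tilde v_1\|_\tau^2\,\|Sw\|$ on the right, whose coefficient is a fixed Airy constant with no reason to be smaller than the accretivity constant $\Re[e^{-i\pi/6}i\tilde\tau_m]\approx\tau_m$ on the left; so the absorption does not obviously close. Alternatively, bounding $\ego\|S^2g\|$ directly requires controlling $\ego\|S^2\tau^{1/2}u_\perp\|$, i.e.\ a \emph{two}-weight estimate in $S$, whereas the imaginary-part identity you invoke provides only $\ego\|S\tau^{1/2}u\|^2\leq C(\|u\|\,\|f\|+\|u\|^2)$, one power short. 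Attempting to upgrade by pairing $(\CC_\ego-\lambda)u$ against $S^2 u$ brings in $\|S^2u\|$ on the right (from $\langle S^2u,f\rangle$) and weighted commutator terms $\ego\langle Su,\partial_\sigma u\rangle$, and again does not close without new input. In short, the unbounded $\sigma$-weight is precisely what the paper's strip-by-strip localization is designed to neutralize, and your global projection does not have a substitute for that mechanism. (Note that the paper \emph{does} use your projection idea, but only for $\CC_N$ on the bounded strip $S_N=(-2\ego^{-\mathfrak a},2\ego^{-\mathfrak a})\times\R_+$ in the proof of Proposition~\ref{prop:simplified-V1}, where the $\sigma$-weight is a priori bounded by $\ego^{-2\mathfrak a}$.)
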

\begin{proof}~\\
  \paragraph{Partition of unity.}
   We begin the proof by introducing an
  appropriate partition of unity. Let $\{\phi_k\}_{k=0}^{+\infty}$ denote a
  sequence of cutoff function in $C^\infty(\R,[0,1])$ satisfying
\begin{subequations}\label{eq:95}
  \begin{equation}
    \phi_k(x) =
    \begin{cases}
      1 & |x-k)|<1/4 \\
      0 & |x-k|>3/4 \,,
    \end{cases} \mbox{ and }
|\phi_k^\prime|+ |\phi''_k| \leq C \,,
  \end{equation}
and
\begin{equation}
   \sum_{k=0}^{+\infty} \phi_k^2=1\,\mbox{ in } \mathbb R_+\,.
\end{equation}
\end{subequations}
For ${\mathfrak b}$ satisfying $1/6<{\mathfrak b}<{\mathfrak a}\,$, 
we introduce
\begin{displaymath}
 \phi_k^\ego(\sigma)=\phi_k(\ego^{\mathfrak b}\sigma)\,.
\end{displaymath}
Let
\begin{equation}\label{defSk}
S_k=((k-1)\ego^{-{\mathfrak b}},(k+1)\ego^{-{\mathfrak b}})\times\R_+
\end{equation} 
and $\CC_{\ego,k}$ denote the Dirichlet realization in $S_k$ 
of the differential operator given by (\ref{eq:93}).  Its domain, for
$k\geq 1$, is given by
\begin{subequations}
\label{eq:209a}
  \begin{equation}
   D(\CC_{\ego,k}) = \{u\in H^2(S_k)\cap H^1_0(S_k)\,| \,\tau u\in L^2(S_k) \}\,.
\end{equation}
For  $k=0$ the domain is given by
\begin{equation}
D(\CC_0) = \{u\in H^2(S_0^+)\cap H^1_0(S_0^+)\,| \,\tau^2u\in L^2(S_0^+) \}\,,
\end{equation}
\end{subequations}
where $S_0^+=S_0\cap Q$.\\
 We attempt to estimate $(\CC_\ego -\lambda)^{-1}$
by the following approximate resolvent
  \begin{equation}
\Rg^{app}_{\CC}= \sum_{k=0}^{+\infty} \phi_k^\ego(\CC_{\ego,k}-\lambda)^{-1}\phi_k^\ego \,.
\end{equation}
Clearly,
\begin{subequations}
  \label{eq:96}
\begin{equation}
  (\CC_\ego-\lambda) \Rg^{app}_{\CC} = I + \Eg_\CC \,,
\end{equation}
where
\begin{equation}
   \Eg_\CC = -\sum_{k=0}^{+\infty} \ego[\partial^2_\sigma,\phi_k^\ego](\CC_{\ego,k}-\lambda)^{-1}\phi_k^\ego \,.
\end{equation}
\end{subequations}
Note that since $\CC_\ego$ has a compact resolvent, boundedness
  of the right inverse of $(\CC_\ego-\lambda)$ immediately implies its surjectivity and
  injectivity, and hence an identity between its right and the left inverses.
To bound $\|(\CC_\ego-\lambda)^{-1}\|$ we have to establish that $\|\Eg_\CC\|
\to 0$ as $\ego\to 0$. To this end we need to show the existence of
  $\ego_0$ such that, for any $k$ and any $\ego\in (0,\ego_0]$, the
  disk  $B(\vartheta_1,\gamma_0\ego)$ belongs to  $\rho(\CC_{\ego,k})$ and to obtain 
  an estimate of $\|(\CC_{\ego,k}-\lambda)^{-1}\|$ in this disc.
 
\paragraph{Control of $ (\CC_{\ego,k}-\lambda)^{-1}$.}~\\

  Let $w\in D(\CC_{\ego,k})$ and  $g\in L^2(S_k)$ (or $L^2(S_0^+)$ when $k=0$) such that
\begin{equation}
  \label{eq:97}
(\CC_{\ego,k}-\lambda)w=g\,.
\end{equation}
We rewrite \eqref{eq:97} in the form
\begin{displaymath}
\begin{array}{l}
 \big(-\partial^2_\tau+i(\mathfrak d(\ego,k) ^3 +\ego\beta\chi(\ego^b\tau) \tau)\tau  -\ego\partial^2_\sigma-\lambda\big)w\\
 \qquad\qquad   =  g
 -i\big\{\ego[\sigma-k\ego^{-{\mathfrak b}}]^2+2[\ego^{1-{\mathfrak a}}+k\ego^{1-{\mathfrak b}}](\sigma-k\ego^{-{\mathfrak b}})\big\}\tau w \,,
 \end{array}
\end{displaymath}
where
$$
 \mathfrak d(\ego,k):= (1+\ego[\ego^{-{\mathfrak a}}+k\ego^{-{\mathfrak b}}]^2)^{1/3}\,.
$$
Using the dilation
\begin{equation}
\label{eq:98}
  (\sigma,\tau) \to \mathfrak d(\ego,k)(\sigma,\tau)
\end{equation}
yields, in $I (\ego,k,\mathfrak d (\ego,k))\times \mathbb R_+$, with 
$$ 
\begin{array}{l}
I(\ego,k,\mathfrak d) = \mathfrak d  ((k-1)\ego^{-{\mathfrak b}},(k+1)\ego^{-{\mathfrak b}}) \,, \\
\, \beta (\ego,k) = \beta \mathfrak
d(\ego, k)^{-4}\,,\, \chi^{\mathfrak d} (\tau) = \chi (\ego^b\mathfrak d(\ego, k)^{-1}\tau
)\,,
\end{array}
$$
and 
\begin{equation}\label{diff}
  \Big(  \Mg(\ego,I (\ego,k,\mathfrak d (\ego,k)),\beta(\ego,k) , \chi^{\mathfrak d(\ego,k)}) - \frac{\lambda}{\mathfrak d(\ego,k) ^2
  }\Big) \widehat w = \mathfrak d(\ego, k)^{-2} \widehat h\,.
\end{equation}
Here
\begin{equation}\label{nothat}
\begin{array}{l}
h=g
 -i\big\{\ego[\sigma-k\ego^{-{\mathfrak b}}]^2+2[\ego^{1-{\mathfrak a}}+k\ego^{1-{\mathfrak b}}](\sigma-k\ego^{-{\mathfrak b}})\big\}\tau w\,,\\
  \widehat w = w\circ \mathfrak d(\ego,k)^{-1}\,,\, \widehat h = h \circ \mathfrak d(\ego,k)^{-1}\,,
  \end{array}
 \end{equation}
and   $ \Mg (\ego, I, \beta, \chi )$ is the Dirichlet realization in the interval $I\times \mathbb R_+$ of
$$
 \Mg (\ego, I, \beta, \chi) := \LL_2^+(\ego,\beta, \chi)  -\ego \partial_{\sigma}^2\,.
$$
Since $\mathfrak d(\ego,k)\geq 1$, the new parameter $\beta (\ego,k) = \beta
\mathfrak d(\ego, k)^{-4}$ is bounded. More caution should be used
below while assessing the effect of \eqref{eq:98} on $\chi^{\mathfrak
  d(\ego,k)}$. Nevertheless, it is safe to apply \eqref{eq:88} as long as
$\mathfrak d (\ego ,k)$ remains in a bounded interval $[1,\mathfrak
d_0]$.
 
We now assume that $\lambda \in B(\vartheta_1,\gamma_0\ego)$. Observe that $$\mathfrak d
(\ego ,k) ^3 \geq 1 + \ego^{1-2 {\mathfrak a}} \,.$$
We first assume  $k\leq  \ego^{-(1/2-{\mathfrak b})}$, so that
$$ 
1 \leq \mathfrak d(\ego,k) \leq \mathfrak d_0:=(3 + 2\ego_0^{1-2{\mathfrak a}})^\frac 13 \,.
$$ 
We now attempt to apply \eqref{eq:88}, with $\delta=\Re\vartheta_1(1-{\mathfrak d}^{-2})/2$,
$\beta=\beta (\ego,k)$, and $I= I(\ego,k)$. Here we note that the constant $C$ in
\eqref{eq:88} is independent of $I$ and that, for $\ego$ small enough, we have
$$
\delta \geq \Re\vartheta_1  \mathfrak d_0^{-1} (\mathfrak d_0^2+\mathfrak d_0
+1)^{-1} (\mathfrak d^3-1)\geq \Re\vartheta_1  \mathfrak d_0^{-1} (\mathfrak
d_0^2+\mathfrak d_0 +1)^{-1} \ego^{1-2\mathfrak a} \,.
$$
Obviously, $1-2\mathfrak a<1<2-b$, and hence we have, for $\ego$ small enough,
$$
\ego^{2-b}  \leq \delta \,.
$$
In addition, for sufficiently small $\ego$,
$$
 \mathfrak d(\ego,k) ^{-2} \, \Re \lambda\,   \leq  (\Re \vartheta_1 + \gamma_0 \ego)
 \mathfrak d(\ego,k) ^{-2} \leq  \Re \vartheta_1 - \delta\,,\, \forall k\,.
$$
Hence all the conditions,  needed for the sake of applying \eqref{eq:88}, are met,
and with the aid of the identity 
\begin{displaymath}
  \ego^{1-{\mathfrak
    a}}+k\ego^{1-{\mathfrak b}}=[\ego({\mathfrak d}^3-1)]^{1/2}\,.
\end{displaymath}
we obtain that
\begin{equation*}
\label{eq:99} 
  \|w\|_2\leq \frac{C}{{\mathfrak d}^2-1}
  \left(\|g\|_2+\ego^{1/2-{\mathfrak b}} ({\mathfrak
      d}^3-1)^{1/2}\|\tau w\|_2\right) 
  \,.
\end{equation*}
Clearly,
\begin{displaymath}
  \frac{[({\mathfrak
      d}^3-1)]^{1/2}}{{\mathfrak d}^2-1}= \frac{1}{[{\mathfrak
      d}-1]^{1/2}}\frac{[{\mathfrak d}^2+{\mathfrak
      d}+1]^{1/2}}{{\mathfrak d}+1}<\frac{1}{[{\mathfrak
      d}-1]^{1/2}}\,.
\end{displaymath}
Substituting the above into \eqref{eq:99} and taking account of the fact
that, for \break $k\leq \ego^{-(1/2-{\mathfrak b})}$, 
$$
{\mathfrak d(\ego,k)}^2-1\geq\ego^{1-2{\mathfrak a}}/(2 (\mathfrak d_0 +1))\,,
$$ 
we obtain
\begin{equation}
\label{eq:100}
  \|w\|_2\leq \frac{C}{\ego^{1-2{\mathfrak a}} } \left(\|g\|_2+\ego^{1-\mathfrak b-\mathfrak a}\|\tau w\|_2\right) =  C \ego^{1-2{\mathfrak a}} \|g\|_2+  C \,\ego^{\mathfrak a -\mathfrak b}\|\tau w\|_2
  \,.
\end{equation}
  
We now  consider the case
$k>\ego^{-(1/2-{\mathfrak b})}$. We begin by observing that
in this case 
$$
\mathfrak d^3 \geq 2 \quad \mbox{ and }  \quad  \frac{\Re \lambda}{\mathfrak d(\ego,k)^2} \leq (2^{-\frac 23}+ \gamma_0 \ego)\,  \Re
  \vartheta_1 \,.
$$
Hence there exists $\gamma_1< 1$ and $\ego_0$ such that for $\ego\in (0,\ego_0]$
$$
 \frac{\Re \lambda}{\mathfrak d(\ego,k)^2} \leq \gamma_1  \Re
  \vartheta_1 \,.
$$
We then use resolvent
estimates for $\Mg(\ego, I(\ego,k, \mathfrak d(\ego,k)) ,0)$. Thus, writing
\begin{equation*}
 \Big(\Mg(\ego, I(\ego,k, \mathfrak d(\ego,k)) ,0,0)-
  \frac{\lambda}{  \mathfrak d(\ego,k)^2}\Big)\widehat w= 
(  \mathfrak d (\ego,k))^{-2}\Big(\hat h  -  i\ego (  \mathfrak d
(\ego,k))^{-2}\beta  \chi^{\mathfrak d}(\tau) \tau^2 \, \widehat w \Big) \,,
\end{equation*}
we may use  \eqref{eq:88} (with $\beta=0$)  and the fact that
\begin{displaymath}
   (\mathfrak d
(\ego,k))^{-4}\|\chi^{\mathfrak d}(\tau) \tau^2\hat{w}\|_2\leq\ego^{-b}(\mathfrak d
(\ego,k))^{-3}\|\tau \hat{w}\|_2
\end{displaymath}
to obtain 
\begin{equation}
\label{eq:101}
  \|w\|_2\leq C \left( \|g\|_2+\ego^{1-b}\|\tau w\|_2\right) 
  \,.
\end{equation}
Since
\begin{displaymath}
\begin{array}{l}
  \Im\langle w,(\CC_{\ego,k}-\lambda)w\rangle\\
  \quad =\|\tau^{1/2}w\|_2^2+\ego\beta\|\chi^{1/2}(\ego^b \tau) \tau
  w\|_2^2+\ego\big\|[|\sigma+\ego^{-{\mathfrak a}}|^2\tau]^{1/2}w\big\|_2^2 -\Im\lambda\|w\|_2^2 \,,
  \end{array}
\end{displaymath}
we obtain,  with the aid of the inequality 
 $$
 \|\chi^{1/2}(\ego^b \tau) \tau w\|_2^2 \leq \sqrt{2}\, \ego^{-b} \, \| \tau^\frac 12 w\|^2_2\,,
$$
and the fact that $\Im \lambda \leq \gamma_0 \ego_0\,$,  the estimate
\begin{equation}
  \label{eq:102}
\|\tau^{1/2}w\|_2 \leq C(\|g\|_2+\|w\|_2)\,.
\end{equation}
Furthermore, as
\begin{displaymath}
   \Re\langle w,(\CC_{\ego,k}-\lambda)w\rangle=\|\partial_\tau w\|_2^2+\ego\|\partial_\sigma w\|_2^2-\Re\lambda\|w\|_2^2 \,,
\end{displaymath}
we readily deduce that
\begin{equation}
\label{eq:103}
\ego^{1/2}\|\partial_\sigma w\|_2+ \|\partial_\tau w\|_2\leq C(\|g\|_2+\|w\|_2)\,.
\end{equation}
   Finally, as
\begin{displaymath} 
\begin{array}{ll}
   \Im\langle\tau w,(\CC_{\ego,k}-\lambda)w\rangle
   & =\|\tau w\|_2^2+\ego\beta\|\chi^{1/2}(\ego ^b\tau) \tau^{3/2}w\|_2^2\\
   &\quad +\ego\big\|[|\sigma+\ego^{-{\mathfrak a}}| \tau] w\big\|_2^2\\ &\quad  -\Im\lambda\|\tau^{1/2}w\|_2^2 +
   2\, \Im\langle w,w_\tau\rangle\,, 
   \end{array}
\end{displaymath}
we may use \eqref{eq:102} and \eqref{eq:103} to 
establish that 
\begin{equation}
\label{eq:104}
\|\tau w\|_2 \leq C(\|g\|_2+\|w\|_2)\,.
\end{equation}
Substituting the above into either \eqref{eq:100} or \eqref{eq:101}
we get the existence of  $\ego_0>0$, such that for any $k$,  any $\ego
\in (0,\ego_0]$, and $\lambda \in B(\vartheta_1, \gamma_0 \ego)\cap \rho
(\CC_{\ego,k})$, 
\begin{equation}
  \label{eq:105}
\|(\CC_{\ego,k}-\lambda)^{-1}\|\leq \frac{C}{\ego^{1-2{\mathfrak a}}} \,.
\end{equation}
Using the discreteness of the spectrum in $B(\vartheta_1,
  \gamma_0 \ego)$ we indeed get  from the above  uniform estimate that
$\sigma(\CC_{\ego,k})\cap B(\vartheta_1,  \gamma_0 \ego)=\emptyset\,$.
By the above and \eqref{eq:103} we then have
\begin{equation}
  \label{eq:106}
\|\partial_\sigma(\CC_{\ego,k}-\lambda)^{-1}\|\leq \frac{C}{\ego^{3/2-2 {\mathfrak a}}} \,.
\end{equation}
Hence we have 
established
\begin{equation}
  \label{eq:107}
\|(\CC_{\ego,k}-\lambda)^{-1}\|+\ego^{1/2}\|\partial_\sigma(\CC_{\ego,k}-\lambda)^{-1}\|\leq \frac{C}{\ego^{1-2{\mathfrak a}}}\,.
\end{equation}

\paragraph{Estimation of  $\|(\CC_{\ego}-\lambda)^{-1}\|$ }~\\
From \eqref{eq:107} it follows
that there exists $\ego_0>0$, such that for any $k$ and any $\ego \in (0,\ego_0]$
\begin{displaymath}
  \|\ego(\partial^2_\sigma\phi_k^\ego) (\CC_{\ego,k}-\lambda)^{-1}\phi_k^\ego g\|_2\leq
  C\,\ego^{2{\mathfrak b}+2{\mathfrak a}}\, \|{\mathbf 1}_{S_k}g\|_2 \,,
\end{displaymath}
whereas from \eqref{eq:106} it follows that (for $k=0$ we write
$S_0^+$ instead of $S_k$)
\begin{displaymath}
    \|\ego(\partial_\sigma\phi_k^\ego) \partial_\sigma(\CC_{\ego,k}-\lambda)^{-1}\phi_k^\ego g\|_2 \leq  C\, \ego^{{\mathfrak b}+2{\mathfrak a}-1/2}\,\|{\mathbf 1}_{S_k}g\|_2 \,.
\end{displaymath}
Since
\begin{displaymath}
  \langle\ego[\partial^2_\sigma,\phi_k^\ego](\CC_{\ego,k}-\lambda)^{-1}\phi_k^\ego
  g,\ego[\partial^2_\sigma,\phi_m^\ego](\CC_{\ego,k}-\lambda)^{-1}\phi_m^\ego g\rangle =0
\end{displaymath}
whenever $|k-m|\geq2$, we may conclude that 
\begin{displaymath}
\begin{array}{ll}
  \|\sum_{k=0}^\infty\ego[\partial^2_\sigma,\phi_k^\ego](\CC_{\ego,k}-\lambda)^{-1}\phi_k^\ego g\|_2^2 & \leq
   4 C\,\ego^{ 2({\mathfrak b}+2{\mathfrak a}-1/2)}\,\Big(\|{\mathbf 1}_{S_0^+}g\|_2^2+\sum_{k=1}^\infty\|{\mathbf
    1}_{S_k}g\|_2^2\Big) \\ &  \leq 4 C\, \ego^{2({\mathfrak b}+2{\mathfrak a}-1/2)}\, \|g\|_2^2 \,.
    \end{array}
\end{displaymath}
Consequently, by (\ref{eq:96}c) we obtain that
\begin{equation}
\label{eq:108}
  \lim_{\ego\to0} \Eg_\CC =0 \,.
\end{equation}

To complete the proof we use  the fact that by \eqref{eq:108} the operator
$I+\Eg_{\CC}$ is invertible for sufficiently small $\ego$ to obtain
\begin{displaymath}
  (\CC_\ego -\lambda)^{-1}=\Rg^{app}_{\CC}(I+\Eg_{\CC})^{-1}\,.
\end{displaymath}
Hence, we can conclude from \eqref{eq:107} the existence of $\ego_0>0$
and $C>0$ such that, for $\ego \in (0,\ego_0]$,
\begin{displaymath}
  \|(\CC_\ego-\lambda)^{-1}\|\leq 2\,  \|\Rg^{app}_{\CC}\|\leq 8 \sup_{k\geq 0}
  \|(\CC_{\ego,k}-\lambda)^{-1}\|\leq \frac{C}{\ego^{1-2{\mathfrak a}}} \,.
\end{displaymath}
This completes the proof of (\ref{eq:94}a). The proof of
(\ref{eq:94}b) easily follows from the fact that
\begin{displaymath}
   \Re\langle w,(\CC_\ego-\lambda)w\rangle=\|w_\tau\|_2^2+\ego\|w_\sigma\|_2^2-\Re\lambda\|w\|_2^2 \,,
\end{displaymath}
for all $w\in D(\CC_\ego)$. 
\end{proof}

\subsection{Resolvent estimates for $\B_\ego$ }
\label{sec:first-simpl-oper}
Let 
$$
\Lambda_\gamma^1(\ego)=\vartheta_1 +\gamma \ego \lambda_1\mbox{  for some } \gamma\in[0,1]\,,
$$ 
where $\lambda_1$ is given by \eqref{eq:42}. Let further $b$, in the
definition of $\B_\ego$ (see  \eqref{eq:92}) satisfy 
\begin{equation}\label{condsurb}
\frac 12 < b <  \frac 34\,,
\end{equation}
 and $r(\ego)$
satisfy, for some $q <1/6\,$,
\begin{equation}\label{condsurr}
\lim_{\ego \ar 0} r(\ego) =0\,\mbox{ and }  \lim_{\ego \ar 0} \ego^{-q} r(\ego)=+\infty\,.
\end{equation}
In the following we prove the inclusion of $\partial
B(\Lambda_\gamma^1(\ego),r(\ego) \ego)$ in the resolvent set of $\B_\ego$
and obtain a bound on the resolvent norm  there.

\begin{proposition}
  \label{prop:simplified-V1}
Under the previous conditions, there
  exist positive $C$ and $\ego_0$ such that $\partial B (\Lambda_\gamma^1,r(\ego)\ego)\subset
  \rho (\B_\ego)$ for all $\ego\in (0,\ego_0]$ and $\gamma \in [0,1]$. Furthermore, the
  inequality
\begin{equation}
\label{eq:109}
    \|( \B_\ego -\lambda)^{-1}\| \leq \frac{C}{(r(\ego) +1-\gamma)\ego} \,,
  \end{equation}
    holds true.
\end{proposition}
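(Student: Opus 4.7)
The strategy is to build an approximate inverse of $\B_\ego-\lambda$ by combining a $\sigma$-partition of unity (separating small and large $|\sigma|$) with the spectral decomposition of $\LL_2^+(\ego)$ in the $\tau$-variable induced by $\tilde\Pi_1=\Pi_{1,\ego}$. This adapts the partition-of-unity scheme of Proposition~\ref{lem:large-sigma} to the finer scale $r(\ego)\ego$ around $\Lambda_\gamma^1(\ego)$.

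Fix $\mathfrak a\in(1/6,1/4)$ and choose $\eta_0,\eta_\pm\in C^\infty(\R,[0,1])$ with $\eta_0^2+\eta_+^2+\eta_-^2=1$, $\supp\eta_0\subset\{|\sigma|\leq 2\ego^{-\mathfrak a}\}$, $\supp\eta_\pm\subset\{\pm\sigma\geq\ego^{-\mathfrak a}\}$. On $\supp\eta_\pm$ the translation $\sigma\mapsto\sigma\mp\ego^{-\mathfrak a}$ brings us to the setting of Proposition~\ref{lem:large-sigma}, which provides a resolvent of $\CC_\ego$ bounded by $O(\ego^{2\mathfrak a-1})$; this is much smaller than the target $O(1/((r(\ego)+1-\gamma)\ego))$ since $2\mathfrak a-1>-1$.

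For the central region, decompose $u=v_{1,\ego}(\tau)w(\sigma)+u^\perp$ with $u^\perp\in L^2(\R)\otimes(I-\tilde\Pi_1)L^2(\R_+)$. Using $i\tau v_{1,\ego}=\mu_\ego v_{1,\ego}+(I-\tilde\Pi_1)(i\tau v_{1,\ego})$ with $\mu_\ego:=\langle\bar v_{1,\ego},i\tau v_{1,\ego}\rangle_\tau=e^{i\pi/6}\tau_m+O(\ego)$, the operator $\B_\ego$ takes block form whose diagonal entries are
\begin{equation*}
A_{11}=\vartheta_{1,\ego}-\ego\partial_\sigma^2+\ego\mu_\ego\sigma^2,\qquad A_{22}=\LL_2^+(\ego)|_{(I-\tilde\Pi_1)}-\ego\partial_\sigma^2+i\ego\sigma^2(I-\tilde\Pi_1)(\tau\,\cdot),
\end{equation*}
while the off-diagonal entries each carry a factor $\ego\sigma^2$, hence are $O(\ego^{1-2\mathfrak a})$ on $\supp\eta_0$. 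The complex harmonic oscillator $A_{11}$ has isolated simple ground eigenvalue $\Lambda_1^1(\ego)+O(\ego^{2-b})$, with next eigenvalue separated by a positive amount of order $\ego$; since $r(\ego)\geq\ego^q$ with $q<1/6<1-b$, the $O(\ego^{2-b})$ perturbation is dominated by $r(\ego)\ego$, so an explicit estimate yields $\|(A_{11}-\lambda)^{-1}\|\leq C/((r(\ego)+1-\gamma)\ego)$ on $\partial B(\Lambda_\gamma^1(\ego),r(\ego)\ego)$. The block $A_{22}$ has spectrum in $\{\Re z\geq\Re\vartheta_2-o(1)\}$ by Proposition~\ref{lem:1d-operators-semigroup2}, giving $\|(A_{22}-\lambda)^{-1}\|=O(1)$.

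Define
\begin{equation*}
\Rg^{\mathrm{app}}(\lambda)=\eta_0\bigl[(A_{11}-\lambda)^{-1}\tilde\Pi_1+(A_{22}-\lambda)^{-1}(I-\tilde\Pi_1)\bigr]\eta_0+\sum_\pm\eta_\pm T_\pm^{-1}(\CC_\ego-\lambda)^{-1}T_\pm\eta_\pm,
\end{equation*}
and compute $(\B_\ego-\lambda)\Rg^{\mathrm{app}}=I+\Eg$. The error $\Eg$ consists of (i) the off-diagonal couplings, each of norm $O(\ego^{1-2\mathfrak a})$ once one establishes $\|\tau(A_{22}-\lambda)^{-1}\|=O(1)$ by the regularity argument of \eqref{regLL}--\eqref{equt}, and (ii) the commutators $[\B_\ego,\eta_j]=-\ego(\eta_j''+2\eta_j'\partial_\sigma)$ composed with the local resolvents, contributing $O(\ego^\mathfrak a/(r(\ego)+1-\gamma)^{1/2})$ via an integration-by-parts estimate analogous to \eqref{eq:103}. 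Under the hypotheses $\mathfrak a\in(1/6,1/4)$, $b\in(1/2,3/4)$ and $r(\ego)\geq\ego^q$ with $q<1/6$, both contributions vanish as $\ego\to 0$, so $I+\Eg$ is invertible and \eqref{eq:109} follows from $\|(\B_\ego-\lambda)^{-1}\|\leq\|\Rg^{\mathrm{app}}\|\,\|(I+\Eg)^{-1}\|$. The main obstacle is the delicate balance among the scales $\ego^{2-b}$ (the shift of the true ground state from $\Lambda_1^1(\ego)$), $r(\ego)\ego\geq\ego^{1+q}$ (the radius on which the resolvent is estimated), and $\ego^{1-2\mathfrak a}$ (the size of the off-diagonal couplings on $\supp\eta_0$); the constraint $q<1/6$ is precisely what allows these perturbative arguments to close.
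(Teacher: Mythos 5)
The overall architecture of your proof — a $\sigma$-partition of unity into a central strip $|\sigma|\lesssim\ego^{-\mathfrak a}$ and two outer zones handled by Proposition~\ref{lem:large-sigma}, combined with the rank-one spectral projection $\tilde\Pi_1=\Pi_{1,\ego}$ on the central strip — is exactly the route taken in the paper. Your use of $\mu_\ego=\langle\bar v_{1,\ego},i\tau v_{1,\ego}\rangle_\tau$ in place of the paper's $e^{i\pi/6}\tau_m$ is a cosmetic variant that shifts the $O(\ego^{2-b})$ error from the potential coefficient into the eigenvalue bookkeeping and is harmless.

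There is, however, a genuine gap in your treatment of the commutator errors. You claim the central commutator $-\ego(\eta_0''+2\eta_0'\partial_\sigma)$ composed with the local resolvent contributes $O(\ego^{\mathfrak a}/(r(\ego)+1-\gamma)^{1/2})$ ``via an integration-by-parts estimate analogous to \eqref{eq:103}.'' That bound is incorrect. The estimate \eqref{eq:103}-type integration by parts gives $\ego\|w_\sigma\|^2\leq\Re\langle g,w\rangle+\Re\lambda\,\|w\|^2$, and since $\Re\lambda\approx\Re\vartheta_1$ is an $O(1)$ quantity here, the $\Re\lambda\|w\|^2$ term \emph{cannot} be discarded. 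Carrying it along yields only $\|w_\sigma\|\leq C\|g\|/\bigl((r(\ego)+1-\gamma)\ego^{3/2}\bigr)$, so the dominant commutator contribution is $\ego^{1+\mathfrak a}\cdot\ego^{-3/2}/(r(\ego)+1-\gamma)=\ego^{\mathfrak a-1/2}/(r(\ego)+1-\gamma)$, which with $\mathfrak a<1/4$ and $r(\ego)\sim\ego^q$ does \emph{not} tend to zero. Your claimed $(r+1-\gamma)^{-1/2}$ power would only follow if one dropped the $\Re\lambda\|w\|^2$ term entirely, which is not legitimate.

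The missing ingredient is the refined estimate $\|\partial_\sigma(\CC_N-\lambda)^{-1}\|\leq C/\bigl((r(\ego)+1-\gamma)\ego\bigr)$ of \eqref{eq:125}, with \emph{no} extra power of $\ego^{-1/2}$. Obtaining it requires splitting $w_\sigma$ into $\tilde\Pi_1 w_\sigma$ and $(I-\tilde\Pi_1)w_\sigma$ and, for the first piece, pairing the projected equation \eqref{eq:64a} against $\tilde\Pi_1 w$ so that the effective spectral parameter appearing in the energy identity is $\Re(\lambda-\tilde\vartheta_1)=O(\ego)$ rather than $\Re\lambda=O(1)$; that $O(\ego)$ gain is precisely what makes the commutator error $O(\ego^{\mathfrak a}/(r(\ego)+1-\gamma))$, which vanishes once $\mathfrak a>q$. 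Since you have already set up $A_{11}=\vartheta_{1,\ego}-\ego\partial_\sigma^2+\ego\mu_\ego\sigma^2$, the needed cancellation is in reach, but you must actually exploit it instead of falling back on the crude bound \eqref{eq:103}.
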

\begin{proof}~
\paragraph{Construction of the right approximate resolvent.}~\\
We  introduce a $C^\infty$ partition of unity $(\zeta_-,\eta , \zeta_+)$ of $\mathbb
R$ such that 
  \begin{equation}
\label{eq:110}
    \eta(x)=
    \begin{cases}
      1 &\mbox{ if } |x|<1\,,\\
      0 &\mbox{ if }  |x|>2 \,,
    \end{cases}
  \end{equation}
and
\begin{equation}
\label{eq:111}
\zeta_+(x)=0 \mbox{ if } x <1\,,\,\zeta_-(x)=\zeta_+(-x)\,.
\end{equation}
Let further 
\begin{equation}\label{eq:53a}
\eta_\ego(\sigma)=\eta(\ego^{\mathfrak a}\sigma)
\mbox{ and } \zeta_\ego^\pm(\sigma)=\zeta_\pm(\ego^{\mathfrak a}\sigma)
\end{equation} for some $\mathfrak a$ satisfying
\begin{equation}\label{condsurmathfraka}
 1/6<{\mathfrak a}<(1-q)/4\,.
 \end{equation}

Next, let $S_N=(-2\ego^{-{\mathfrak a}},2\ego^{-{\mathfrak a}})\times\R_+$
and $\CC_N$ denote the Dirichlet realization in $S_N$ associated with
the differential operator
given by \eqref{eq:92}\,.\\
Let further $S_D^+= (\ego^{-{\mathfrak a}},+\infty)\times\R_+$, $S_D^-=
(-\infty,-\ego^{-{\mathfrak a}})\times\R_+$, and $\CC_D^\pm$ denote the
Dirichlet realizations in $S_D^\pm$ associated with the differential
operator given by \eqref{eq:92}.  The corresponding domains are
\begin{displaymath}
  D(\CC_D^\pm) = \{u\in H^2(S_D^\pm)\cap H^1_0(S_D^\pm)\,| \,\sigma^2(1+\tau)u\in
  L^2(S_D^\pm) \}\,.
\end{displaymath}
We can now  formally introduce the approximate resolvent in the form
\begin{equation}
\label{eq:112}
  \Rg^{app}_\B = \eta_\ego(\CC_N-\lambda)^{-1}\eta_\ego+
  \zeta_\ego^+(\CC_D^+-\lambda)^{-1}\zeta_\ego^+ + \zeta_\ego^-(\CC_D^--\lambda)^{-1}\zeta_\ego^- \,.
\end{equation}
\paragraph{Estimation of $ \|(\CC_D^\pm -\lambda)^{-1}\|$.}~\\
By \eqref{eq:94}, observing that $B(\Lambda_\gamma^1, r(\ego)\ego) \subset B(\vartheta_1,\gamma_0 \ego)$ for some $\gamma_0 >\gamma$ and $\ego_0$ small enough,  we have, for all  $\ego\in (0,\ego_0]$,  
\begin{equation}
\label{eq:113}
  \|(\CC_D^\pm-\lambda)^{-1}\|+ \ego^{1/2} \|\partial_\sigma(\CC_D^\pm-\lambda)^{-1}\|\leq
  \frac{C}{\ego^{1-2{\mathfrak a}}}\,.
\end{equation}
Note that the estimates for $\CC_D^-$ are deduced from the estimates
for $\CC_D^+$ by using the intertwining relation
$\CC_D^+=\Rg^{-1}\CC_D^-\Rg\,$, where $\Rg$ represents the
reflection $\sigma\to-\sigma\,$.
\paragraph{Estimation of $ \|(\CC_N-\lambda)^{-1}\|$.}~\\
It remains to obtain an estimate for $\|(\CC_N-\lambda)^{-1}\|$. 
 Let 
$w\in D(\CC_N)$ and $g\in L^2(S_N)$ satisfy
\begin{equation}
\label{eq:114}
   (\CC_N-\lambda)w=g\,.
\end{equation}
Let further 
\begin{displaymath}
  \tilde{\CC}_N =\LL_2^+(\ego) + \ego (\PP -\beta \tau_{m,2})= \LL_2^+(\ego) -\ego\partial^2_\sigma + e^{i\pi/6}\ego\sigma^2\tau_m \,,  
\end{displaymath}
where $\tau_m$ is given by \eqref{eq:41} and $\PP$ is given by \eqref{eq:27a}.\\
 We can now write
\eqref{eq:114} in the following form
\begin{displaymath}
   (\tilde{\CC}_N-\lambda)w=g-i\ego\sigma^2(\tau-e^{-i\pi/3}\tau_m)w \,.
\end{displaymath}
Applying the projection $\tilde{\Pi}_1$, given by \eqref{eq:84} (which
stands for ${\rm Id} \, \widehat \otimes\, \tilde{ \Pi} _1$  as in Section \ref{sec:2}) ,
to the above
balance yields
\begin{equation}
\label{eq:115}
  (\tilde{\CC}_N-\lambda)\tilde{\Pi}_1w = \tilde{\Pi}_1g
  -i\ego\sigma^2\tilde{\Pi}_1\big((\tau-e^{-i\pi/3}\tau_m)w\big) \,.
\end{equation}
From \eqref{eq:86}  it follows that
\begin{displaymath}
 \| \tilde{\Pi}_1\big((\tau-e^{-i\pi/3}\tau_m)\phi \big)-
 \tilde{\Pi}_1\big((\tau-e^{-i\pi/3}\tau_m)(I-\tilde{\Pi}_1)\phi \big)\|_2\leq C\ego\|\phi \|_2\,,\,  \forall \phi\in L^2(S_N)\,.
\end{displaymath}
Let $\phi(\sigma,\tau)= \sigma^2 w(\sigma,\tau)$. We can now conclude, as $|\sigma|\leq2\ego^{-{\mathfrak a}}$ in $S_N$, that
\begin{equation}
\label{eq:116}
\begin{array}{ll}
  \big\|\sigma^2\tilde{\Pi}_1\big((\tau-e^{-i\pi/3}\tau_m)w\big)\big\|&  \leq C(
  \| \tilde \Pi_1 (\tau-e^{-i\pi/3}\tau_m)\sigma^2(I-\tilde{\Pi}_1)w\|_2+\ego\|\sigma^2w\|_2)\\ & \leq
  C\, \ego^{-2{\mathfrak a}}\, (\|  (I-\tilde{\Pi}_1)w\|_2+\ego\|w\|_2)\,. 
\end{array}
\end{equation}
Since
\begin{equation}\label{eq:194a}
   (\tilde{\CC}_N-\lambda)\tilde{\Pi}_1=\left( -\ego\partial^2_\sigma +
   e^{i\pi/6}\ego\sigma^2\tau_m+(\tilde{\vartheta}_1-\lambda)\right)\, \tilde{\Pi}_1\,,
\end{equation}
we obtain from \eqref{eq:184a},   \eqref{eq:115}, \eqref{eq:116}, and the Riesz-Schauder
theory, that for $\lambda \in \partial B (\Lambda^1_\gamma, r(\ego)\ego)$,
\begin{equation}
\label{eq:117}
  \|(\tilde{\CC}_N-\lambda)^{-1} \widetilde \Pi_1\|\leq \frac{C}{(r(\ego) +1-\gamma)\ego} \,.
\end{equation}
By (\ref{eq:90}b) and Remark \ref{rem:Dirichlet-separable} we have, 
\begin{equation}
\label{eq:118} 
   \|(\tilde{\CC}_N-\lambda)^{-1}(I-\widetilde \Pi_1)\|\leq C \,. 
\end{equation}
Applying   (\ref{eq:117}) to \eqref{eq:115} yields, 
with the aid of \eqref{eq:116},
\begin{equation}
\label{eq:119}
  \|\tilde{\Pi}_1w\|_2\leq \frac{C}{(r(\ego) +1-\gamma)\ego}(\|g\|_2 +
  \ego^{1-2{\mathfrak a}}\|(I-\tilde{\Pi}_1)w\|_2+\ego^{2-2{\mathfrak a}}\|w\|_2)\,.
\end{equation}
 We now apply $I-\tilde{\Pi}_1$ to \eqref{eq:115} to obtain 
\begin{equation}
\label{eq:120}
  (\tilde{\CC}_N-\lambda)(I-\tilde{\Pi}_1)w = (I-\tilde{\Pi}_1)g -
  i\ego\sigma^2(I-\tilde{\Pi}_1)\big((\tau-e^{-i\pi/3}\tau_m)w\big) \,.
\end{equation}

Since the norm of $I-\tilde{\Pi}_1$ is uniformly bounded (see
\eqref{esttildev1}), we have
\begin{displaymath} 
  \big\|\sigma^2(I- \widetilde \Pi_1)\big((\tau-e^{-i\pi/3}\tau_m)w\big)\big\|_2\leq C\, \|\sigma^2(\tau-e^{-i\pi/3}\tau_m)w\|_2\leq
 C\ego^{-2{\mathfrak a}}\|(\tau-e^{-i\pi/3}\tau_m)w\|_2\,.
\end{displaymath}
In  the same manner we have obtained
\eqref{eq:104}  we can now obtain
\begin{equation}
\label{eq:121}
  \|\tau w\|_2 \leq C\, (\|w\|_2+\|g\|_2)\,. 
\end{equation}
Consequently,
\begin{equation}\label{eq:6.35}
   \big\|\sigma^2(I-\tilde{\Pi}_1)\big((\tau-e^{-i\pi/3}\tau_m)w\big)\big\|_2
  \leq C\ego^{-2{\mathfrak a} }(\|w\|_2+\|g\|_2)\,.
\end{equation}
We now apply  (\ref{eq:118})  to \eqref{eq:120} to obtain, with the aid
of (\ref{eq:116})  and the above inequality,
\begin{equation}
\label{eq:122}
   \|(I-\tilde{\Pi}_1)w\|_2\leq C(\|g\|_2 +  \ego^{1-2{\mathfrak a}}\|w\|_2)\,.
\end{equation}
Substituting the above into \eqref{eq:119} yields
\begin{displaymath}
  \|\tilde{\Pi}_1w\|_2\leq \frac{C}{(r(\ego) +1-\gamma)\ego}\, \|g\|_2 +  \frac{C\ego^{1-4{\mathfrak a}}}{r(\ego) +1-\gamma}\, \|w\|_2\,.
\end{displaymath}
The above together with \eqref{eq:122},
\eqref{condsurr}, and \eqref{condsurmathfraka} 
yield the existence of $\ego_0>0$ and $
C$ such that, for all $\ego\in (0,\ego_0]$, $\partial B (\Lambda_\gamma^1,r(\ego)\ego)$ belongs to 
$\rho (\CC_N)$ and
\begin{equation}
\label{eq:123}
  \|(\CC_N-\lambda)^{-1}\|\leq \frac{C}{(r(\ego) +1-\gamma)\ego}\,,\, \forall \lambda \in\partial B (\Lambda_\gamma^1,r(\ego)\ego)\,.
\end{equation}
\paragraph{A bound on $\partial_\sigma(\CC_N-\lambda)^{-1}$}~\\
As in the proof of Proposition \ref{lem:large-sigma} we need an
estimate for $\partial_\sigma(\CC_N-\lambda)^{-1}$. While \eqref{eq:103} still holds,
it is unsatisfactory in the present context.  Let $(w,g)$ satisfy
\eqref{eq:114}. To achieve a better estimate of $\|w_\sigma\|_2$, we
separately estimate $ \|\tilde{\Pi}_1w_\sigma\|_2$ and $ \|
(I-\tilde{\Pi}_1)w_\sigma\|_2$. \\
  
  To facilitate the estimation of $ \|\tilde{\Pi}_1w_\sigma\|_2$, we rewrite
  \eqref{eq:115}-\eqref{eq:194a} in the following manner
\begin{equation}\label{eq:64a}
   \ego \Big(-\partial^2_\sigma+ e^{i\pi/6}\tau_m\sigma^2 -\frac{\lambda-{\tilde \vartheta}_1}{\ego}\Big)\tilde{\Pi}_1w = \tilde{\Pi}_1g -i\ego\sigma^2\tilde{\Pi}_1\big((\tau-e^{-i\pi/3}\tau_m)w\big) \,.
\end{equation}

Taking the inner product of \eqref{eq:64a} with $\tilde{\Pi}_1w$
we obtain from the real part and \eqref{eq:116}
\begin{displaymath}
  \ego\|\tilde{\Pi}_1w_\sigma\|_2^2 \leq C \ego\|\tilde{\Pi}_1w\|_2^2 +
  \|\tilde{\Pi}_1w\|_2\,\left(\|\tilde{\Pi}_1g\|_2+\ego^{1-2{\mathfrak a}}
  \|(I-\tilde{\Pi}_1)w\|_2+\ego^{2-2{\mathfrak a}}\|w\|_2\right)\,.
\end{displaymath}
Hence,
\begin{displaymath}
  \|\tilde{\Pi}_1w_\sigma\|_2 \leq \hat C \left(  \|\tilde{\Pi}_1w\|_2 + \ego^{-1} \|\tilde{\Pi}_1g\|_2+\ego^{-2{\mathfrak a}}
  \|(I-\tilde{\Pi}_1)w\|_2+\ego^{1-2{\mathfrak a}}\|w\|_2 \right)\,.
\end{displaymath}
Using \eqref{eq:122} and \eqref{condsurmathfraka} we then obtain
\begin{displaymath}
\begin{array}{ll}
  \|\tilde{\Pi}_1w_\sigma\|_2 & \leq C\, \left(\|\tilde{\Pi}_1w\|_2+\ego^{1-4{\mathfrak a}}\|w\|_2+\ego^{-1}\|g\|_2 \right)\\
   & \leq  \tilde C  \, ( \| w\|_2 +\ego^{-1}  \| g\|_2))\,,
   \end{array}
\end{displaymath}
from which we deduce, with the aid of \eqref{eq:114} and  \eqref{eq:123}, 
\begin{equation}
\label{eq:124}
  \|\tilde{\Pi}_1w_\sigma\|_2 \leq \frac{C}{(r(\ego) +1-\gamma)\ego}\|g\|_2\,.
\end{equation}

To estimate $\|(I- \tilde{\Pi}_1)w_\sigma\|_2$, 
we now take the inner product of \eqref{eq:120} with $(I-\tilde{\Pi}_1)w$ to
obtain, with the aid of \eqref{eq:121}\,,
\begin{displaymath}
  \ego\|(I-\tilde{\Pi}_1)w_\sigma\|_2^2 \leq C \|(I-\tilde{\Pi}_1)w\|_2^2 +
  C\|(I-\tilde{\Pi}_1)w\|_2(\|g\|_2+\ego^{1-2{\mathfrak a}}  \|w\|_2)\,.
\end{displaymath}
Making use of \eqref{eq:122} then yields 
\begin{displaymath}
  \|(I-\tilde{\Pi}_1)w_\sigma\|_2\leq C(\ego^{-1/2}\|g\|_2+\ego^{1/2-2{\mathfrak a}}\|w\|_2)\leq \check C (\ego^{-1} \|g\|_2+ \|w\|_2) \,,
\end{displaymath}
which as above leads to
\begin{equation}
\label{eq:65b}
  \|(I-\tilde{\Pi}_1) w_\sigma\|_2 \leq \frac{C}{(r(\ego) +1-\gamma)\ego}\|g\|_2\,.
\end{equation}
Then  \eqref{eq:124} and \eqref{eq:65b} give  the existence of $C$ and
$\ego_0$ such that, for all $\ego\in (0,\ego_0]$, 
\begin{equation}
\label{eq:125}
  \|\partial_\sigma(\CC_N-\lambda)^{-1}\|\leq \frac{C}{(r(\ego) +1-\gamma)\ego}\,.
\end{equation}
\paragraph{The approximate resolvent}~\\
The preceding paragraphs prove that the approximate resolvent
$\Rg^{app}_{\B} $, introduced in \eqref{eq:112} is well defined. We now
prove that it serves as a good approximation for the resolvent. We note
that
\begin{subequations}
  \label{eq:126}
\begin{equation}
  ( \B_\ego -\lambda) \Rg^{app}_{\B} = I + \Eg_\B \,,
\end{equation}
where
\begin{equation}
   \Eg_\B = -\ego[\partial^2_\sigma,\eta_\ego](\CC_N-\lambda)^{-1}\eta_\ego
   -\ego[\partial^2_\sigma,\zeta_\ego^+](\CC_D^+-\lambda)^{-1}\zeta_\ego^+-\ego[\partial^2_\sigma,\zeta_\ego^-](\CC_D^--\lambda)^{-1}\zeta_\ego^-\,.
\end{equation}
\end{subequations}
As
\begin{displaymath}
  \ego[\partial^2_\sigma,\zeta_\ego^\pm](\CC_D^\pm-\lambda)^{-1}\zeta_\ego^\pm =
  \left(\ego^{1+2{\mathfrak a}}(\zeta^{\pm,\prime\prime})_\ego+ 2 \ego^{1+{\mathfrak a}}(\zeta^{\pm,\prime})_\ego \partial_\sigma\right)(\CC_D^\pm-\lambda)^{-1}\zeta_\ego^\pm \,,
\end{displaymath}
we obtain by \eqref{eq:113} that
\begin{equation}
\label{eq:127}
  \|\ego[\partial^2_\sigma,\zeta_\ego^\pm ](\CC_D^\pm-\lambda)^{-1}\zeta_\ego\|\leq C\ego^{3 {\mathfrak a}-1/2} \,.
\end{equation}
Furthermore, since
\begin{displaymath} 
  \ego[\partial^2_\sigma,\eta_\ego](\CC_N-\lambda)^{-1}\eta_\ego =
  (\ego^{1+2{\mathfrak a}}(\eta^{\prime\prime})_\ego+2 \ego^{1+{\mathfrak a}}(\eta^\prime)_\ego\,  \partial_\sigma)(\CC_N-\lambda)^{-1}\eta_\ego \,,
\end{displaymath}
we obtain from \eqref{eq:123} and \eqref{eq:125} 
\begin{displaymath}
  \|\ego[\partial^2_\sigma,\eta_\ego](\CC_N-\lambda)^{-1}\eta_\ego\|\leq C\frac{\ego^{\mathfrak a}}{r(\ego) +1-\gamma}
  \,.
\end{displaymath}
The above, together with \eqref{eq:127} and (\ref{eq:126}b), yields
\begin{displaymath}
  \|\Eg_\B\|\leq C \Big( \frac{\ego^{\mathfrak a}}{r(\ego) +1-\gamma}+ \ego^{3{\mathfrak a}-1/2}\Big)\,.
\end{displaymath}
Hence, \eqref{condsurr}, and \eqref{condsurmathfraka} imply that $
\|\Eg_\B\|$ tends to $0$ as $\ego\to 0$. Consequently, for sufficiently
small $\ego$, $I+\Eg_\B$ is invertible and we may use (\ref{eq:126}a)
to obtain the right inverse to $(\B_\ego -\lambda)$. \\
For $\lambda \in \rho(\B_\ego) \cap \partial B (\Lambda_\gamma ^1, r(\ego)\ego)$, this right
inverse is identical with the left inverse and we get
\begin{displaymath}
   \|( \B_\ego -\lambda)^{-1}\|\leq C\, \Rg^{app}_{\B} \leq \frac{\widehat C}{(r(\ego) +1-\gamma)\ego} \,.
\end{displaymath}
The spectrum of  $\B_\ego$ being discrete, we may conclude from the
above estimate that $\sigma(\B_\ego)\cap\partial B (\Lambda_\gamma^1, r(\ego)\ego)=\emptyset\,$,
which completes the proof of the proposition.
\end{proof}

For later reference we separately estimate the $\sigma$-derivatives of $(
\B_\ego -\lambda)^{-1}$
\begin{proposition}
  Under the conditions of Proposition \ref{prop:simplified-V1}, for
  any \break $\mathfrak a \in(1/6, (1-q)/4)$, there exists $\ego_0$ and
  $C_{\mathfrak a}$ such that,   for all $\lambda \in \partial B (\Lambda_\gamma^1, \ego
  r(\ego))$, we have  
  \begin{subequations} 
 \begin{equation}  
\label{eq:128}
 \| \partial_\sigma (\B_\ego-\lambda)^{-1}\| \leq C_{\mathfrak a} \Big(\frac{1}{\ego^{3/2-2{\mathfrak
       a}}} + \frac{1}{(r(\ego) +1-\gamma)\ego}\Big)\,,
 \end{equation}
and
\begin{equation}   
\label{eq:129}
\|\partial^2_{\sigma\sigma} (\B_\ego -\lambda)^{-1}\| \leq C_{\mathfrak a} \Big(\frac{1}{\ego^{2-2{\mathfrak
      a}}}+ \frac{1}{(r(\ego)+ 1-\gamma) \ego^{3/2}}\Big)\,.
\end{equation}
  \end{subequations}
\end{proposition}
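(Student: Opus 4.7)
The plan is to differentiate the approximate resolvent $\Rg^{app}_\B$ from \eqref{eq:112} and exploit the factorisation $(\B_\ego - \lambda)^{-1} = \Rg^{app}_\B (I+\Eg_\B)^{-1}$ established at the end of the proof of Proposition \ref{prop:simplified-V1}, together with the fact that $\|(I + \Eg_\B)^{-1}\| \leq 2$ for $\ego$ small. It therefore suffices to bound $\partial_\sigma \Rg^{app}_\B$ and $\partial^2_{\sigma\sigma} \Rg^{app}_\B$ by the right-hand sides of \eqref{eq:128}--\eqref{eq:129}.

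For \eqref{eq:128} I would apply $\partial_\sigma$ to each of the three summands of \eqref{eq:112}. When $\partial_\sigma$ acts directly on a localised resolvent, the norms are controlled by \eqref{eq:113} and \eqref{eq:125}, giving contributions of order $\ego^{2{\mathfrak a}-3/2}$ from $\CC_D^\pm$ and of order $1/((r(\ego)+1-\gamma)\ego)$ from $\CC_N$. When $\partial_\sigma$ lands on a cutoff $\eta_\ego$ or $\zeta_\ego^\pm$, \eqref{eq:53a} produces an extra factor $\ego^{{\mathfrak a}}$, making those contributions subdominant for ${\mathfrak a} \in (1/6,(1-q)/4)$. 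This yields \eqref{eq:128} directly.

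The substantive content of \eqref{eq:129} is a bound on the second $\sigma$-derivatives of the localised resolvents. The main tool would be an energy identity obtained by integrating by parts in $\|\Re[(\CC_\ast - \lambda) v]\|_2^2$: the Dirichlet conditions at $\tau = 0$ and at the relevant lateral $\sigma$-boundary kill all boundary terms and convert the cross product $\Re\langle \partial^2_\tau v, \partial^2_\sigma v\rangle$ into $\|\partial_\sigma \partial_\tau v\|_2^2 \geq 0$, yielding
\begin{equation*}
\ego\,\|\partial^2_\sigma v\|_2 \leq C\big(\|g\|_2 + \|v\|_2\big), \qquad v = (\CC_\ast - \lambda)^{-1} g.
\end{equation*}
Applied to $\CC_D^\pm$, where \eqref{eq:113} gives $\|v\|_2 \leq C \ego^{2{\mathfrak a}-1}\|g\|_2$, this immediately produces $\|\partial^2_\sigma (\CC_D^\pm - \lambda)^{-1}\| \leq C\ego^{2{\mathfrak a}-2}$, the first term in \eqref{eq:129}.

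The main obstacle is the corresponding bound for $\CC_N$, since a naive use of $\|w\|_2 \leq C/((r(\ego)+1-\gamma)\ego)\|g\|_2$ produces only $\ego^{-2}/(r(\ego)+1-\gamma)$, missing the desired $\ego^{-3/2}$. To save half a power of $\ego$ I would split $w = \tilde\Pi_1 w + (I - \tilde\Pi_1) w$ as in the proof of \eqref{eq:125}. For $\tilde\Pi_1 w$ the scalar equation \eqref{eq:64a} can be differentiated twice in $\sigma$ and combined with \eqref{eq:116} to yield $\ego \|\partial^2_\sigma \tilde\Pi_1 w\|_2 \leq C(\|g\|_2 + \ego^{1-2{\mathfrak a}}\|w\|_2)$. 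For $(I - \tilde\Pi_1) w$ I would apply the energy identity above to the reduced equation $(\Mg_N - \lambda)(I - \tilde\Pi_1) w = (I - \tilde\Pi_1) g - i\ego (I - \tilde\Pi_1)(\sigma^2 \tau w) =: F$; since $\Mg_N = \LL_2^+ - \ego\partial^2_\sigma$ is separable and Proposition \ref{lem:1d-operators-semigroup2} implies $\|(\Mg_N - \lambda)^{-1}(I - \tilde\Pi_1)\| \leq C$ uniformly for $\Re \lambda$ away from $\Re \vartheta_2$, one gets $\|(I - \tilde\Pi_1)w\|_2 \leq C\|F\|_2 \leq C(\|g\|_2 + \ego^{1-2{\mathfrak a}}\|w\|_2)$ and hence, by the energy identity, $\ego \|\partial^2_\sigma (I - \tilde\Pi_1)w\|_2 \leq C(\|g\|_2 + \ego^{1-2{\mathfrak a}}\|w\|_2)$. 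Inserting the bound on $\|w\|_2$ and using $1 + 2{\mathfrak a} < 3/2$, both contributions combine to $\|\partial^2_\sigma(\CC_N - \lambda)^{-1}\| \leq C/((r(\ego)+1-\gamma)\ego^{3/2})$. Finally, all contributions to $\partial^2_{\sigma\sigma} \Rg^{app}_\B$ that come from derivatives hitting one or two cutoffs carry at least one factor of $\ego^{{\mathfrak a}}$ and are strictly dominated by the two main terms, which completes \eqref{eq:129}.
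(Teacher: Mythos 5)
Your plan for \eqref{eq:128} is essentially the paper's argument seen from the other side of the duality: you differentiate $\Rg^{app}_\B$ inside the factorisation $(\B_\ego-\lambda)^{-1}=\Rg^{app}_\B(I+\Eg_\B)^{-1}$, while the paper applies the cutoffs $\eta_\ego,\zeta_\ego^\pm$ directly to $w=(\B_\ego-\lambda)^{-1}g$; either way the inputs are \eqref{eq:113} and \eqref{eq:125}, and the commutator terms are subdominant by the same $\ego^{\mathfrak a}$ gain. For \eqref{eq:129}, however, you take a materially different and substantially heavier route, and miss the one economy the paper relies on. The paper does not return to the local operators $\CC_N,\CC_D^\pm$ at all: it integrates by parts once, at the level of $\B_\ego$ itself,
\[
-\Re\langle w_{\sigma\sigma},(\B_\ego-\lambda)w\rangle
= \|w_{\tau\sigma}\|_2^2+\ego\|w_{\sigma\sigma}\|_2^2+2\ego\,\Im\langle w_\sigma,\sigma\tau w\rangle-\Re\lambda\|w_\sigma\|_2^2\,,
\]
so that $\|w_{\sigma\sigma}\|_2\le C\ego^{-1/2}(\|w_\sigma\|_2+\ego\|\sigma\tau w\|_2+\ego^{-1/2}\|g\|_2)$, and then simply recycles the already-established first-derivative bound \eqref{eq:128} for $\|w_\sigma\|_2$ together with \eqref{eq:78a} for $\|\sigma\tau w\|_2$. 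The $\ego^{-1/2}$ prefactor is precisely what converts $\ego^{2\mathfrak a-3/2}$ into $\ego^{2\mathfrak a-2}$ and $[(r(\ego)+1-\gamma)\ego]^{-1}$ into $[(r(\ego)+1-\gamma)\ego^{3/2}]^{-1}$; nothing else is needed. Your route --- re-entering the localisation and, for the $\CC_N$ piece, re-running the $\tilde\Pi_1/(I-\tilde\Pi_1)$ decomposition to recover the missing half power --- does arrive at the right exponents (the $\sigma^2$ confinement gives $1+2\mathfrak a<3/2$ for $\mathfrak a<1/4$ in both the projected and complementary components), so it is sound, but it in effect reproves \eqref{eq:128} rather than invoking it. Two small imprecisions worth noting: the scalar equation \eqref{eq:64a} already exhibits $\partial^2_\sigma\tilde\Pi_1 w$ as an explicit term, so one reads the bound off directly rather than ``differentiates twice''; and $\ego\|\partial^2_\sigma v\|_2\le C(\|g\|_2+\|v\|_2)$ is not a single integration-by-parts identity --- it also requires the auxiliary a-priori bounds on $\|v_\sigma\|_2$ and on $\|\sigma\tau v\|_2$ that the paper supplies via \eqref{eq:103} and \eqref{eq:78a}.
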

\begin{proof}\strut
\paragraph{Estimation of $\partial_\sigma (\B_\ego -\lambda)^{-1}$}~\\
     Let $w\in D( \B_\ego)$ and $g\in L^2(\R^2_+)$
     satisfy $ ( \B_\ego -\lambda)w=g \,.  $ Clearly,
 \begin{displaymath}
     ( \B_\ego -\lambda)(\eta_\ego w)= \eta_\ego g- 2\ego^{1+{\mathfrak a}}(\eta^\prime)_\ego w_\sigma  -  \ego^{1+2{\mathfrak a}}(\eta^{\prime\prime})_\ego \, w  \,,
 \end{displaymath}
 where $(\eta^\prime)_\ego(\sigma)=\eta^\prime(\ego^{\mathfrak a}\sigma)$ and
 $(\eta^{\prime\prime})_\ego(\sigma)=\eta^{\prime\prime}(\ego^{\mathfrak a}\sigma)$. \\
 By \eqref{eq:125} we then have
 \begin{equation}
 \label{eq:130}
   \|(\eta_\ego w)_\sigma\|_2\leq \frac{C}{(r(\ego) +1-\gamma)\ego}[ \|\eta_\ego g\|_2+
   \ego^{1+{\mathfrak a}}\|(\eta^\prime )_\ego\, w_\sigma\|_2  +  \ego^{1+2{\mathfrak a}}\| (\eta^{\prime\prime})_\ego\, w\|_2] \,.
 \end{equation}
 Similarly, as
 \begin{equation*}
     ( \B_\ego -\lambda)(\zeta^\pm_\ego w)= \zeta^\pm_\ego g- 2\ego^{1+{\mathfrak
         a}}(\zeta^{\pm,\prime})_\ego w_\sigma  -  \ego^{1+2{\mathfrak
         a}}(\zeta^{\pm,\prime\prime})_\ego w  \,, 
 \end{equation*} 
 we may use \eqref{eq:113} to obtain 
 \begin{equation}
\label{eq:131}
   \|(\zeta^\pm_\ego w)_\sigma\|_2\leq \frac{C}{\ego^{3/2-2{\mathfrak a}}}[ \|\zeta^\pm_\ego g\|_2+
   \ego^{1+{\mathfrak a}}\|(\zeta^{\pm,\prime})_\ego w_\sigma\|_2  +  \ego^{1+2{\mathfrak a}}\|(\zeta^{\pm,\prime\prime})_\ego \,w\|_2] \,.
 \end{equation}
 Combining \eqref{eq:130} and \eqref{eq:131} yields (recalling that
 ${\mathfrak a}>1/6$ and $q < \frac 1 6$)
  \begin{displaymath}
 \|w_\sigma\|_2\leq C\Big(\frac{1}{\ego^{3/2-2{\mathfrak a}}} + \frac{1}{(r(\ego) +1-\gamma)\ego}\Big)[
 \|g\|_2 +  \ego^{1+2{\mathfrak a}}\|w\|_2] \,. 
 \end{displaymath}
 With the aid of \eqref{eq:109} we then obtain, for any pair $(w,g)$ satisfying $
    ( \B_\ego -\lambda)w=g \,,$
 \begin{equation*}
 \|w_\sigma\|_2\leq C\Big(\frac{1}{\ego^{3/2-2{\mathfrak a}}} +
 \frac{1}{(r(\ego) +1-\gamma)\ego}\Big) 
 \|g\|_2 \,,
 \end{equation*}
from which~\eqref{eq:128} easily follows.

\paragraph{Estimation of $\partial^2_{\sigma\sigma} (\B_\ego -\lambda)^{-1}$}~\\
For the same pair $(w,g)$, an integration by parts  yields
 \begin{displaymath}
   -\Re\langle w_{\sigma\sigma},( \B_\ego-\lambda)w\rangle = \|w_{\tau\sigma}\|_2^2 +
   \ego\|w_{\sigma\sigma}\|_2^2 + 2\ego\Im\langle w_\sigma,\sigma \tau w \rangle -\Re\lambda \|w_\sigma\|_2^2\,.
 \end{displaymath}
Note here that $<w_{\sigma\sigma}, w_{\tau\tau}> = \| w_{\sigma\tau}\|^2$ for all
  $w\in H^2(\R^2_+)\cap H^1_0(\R^2_+)$ and hence also for all $w\in D(\B_\ego)$.\\
  Hence,
 \begin{equation}
 \label{eq:132}
   \|w_{\sigma\sigma}\|_2\leq \frac{C}{\ego^{1/2}}(\|w_\sigma\|_2
   +\ego\|\sigma\tau w\|_2+\ego^{-1/2}\|g\|_2)\,.  
 \end{equation}
 As
 \begin{displaymath}
    \Im\langle\tau w,( \B_\ego-\lambda)w\rangle = \Im\langle w,w_\tau\rangle  + \|\tau
    w\|_2^2+\beta\ego\|\tau^{3/2}\chi(\ego^b \tau) w\|_2^2 \,+
    \ego\|\sigma\tau w\|_2^2 -\Im\lambda \|\tau^{1/2}w\|_2^2\,,
 \end{displaymath}
and since both \eqref{eq:103} and \eqref{eq:104} hold in this case as well,
 we easily obtain, in view of the fact that $\tau\chi (\ego^b\tau) \leq2\ego^{-b}$, that
 \begin{equation}\label{eq:78a}
   \|\sigma\tau w\|_2 \leq \frac{C}{\ego^{1/2}}(\|g\|_2 +\|w\|_2)\,.
 \end{equation}
 Substituting \eqref{eq:128}  and \eqref{eq:78a} into
 \eqref{eq:132} then yields
 \begin{equation*}
  \|w_{\sigma\sigma}\|_2\leq C\Big(\frac{1}{\ego^{2-2{\mathfrak
        a}}}+\frac{1}{(r(\ego)+1-\gamma)\ego^{3/2}}\Big)\|g\|_2\,, 
 \end{equation*}
for any pair $(w,g)$ which satisfies  $( \B_\ego -\lambda)w=g \,,$
which completes the proof of \eqref{eq:129}.
  \end{proof}

For later reference we also need the following additional estimate:
\begin{proposition}
\label{cor:large-sigma}
  Under the conditions preceding Proposition \ref{prop:simplified-V1},  for
  all $ {\mathfrak a}$  in\break  $(1/6,(1-q)/4)$, there exists
  $C_{\mathfrak a}>0$ and $\ego_0>0$ 
  such that, for any $\ego\in (0,\ego_0]$, 
  \begin{equation}
    \label{eq:133}
  \|{\mathbf 1}_{|\sigma|\geq 2\ego^{-{\mathfrak a}}}( \B_\ego-\lambda)^{-1}\|+ \ego^{1/2} \|{\mathbf 1}_{|\sigma|\geq 2\ego^{-{\mathfrak a}}}\partial_\sigma( \B_\ego-\lambda)^{-1}\|\leq
  \frac{C_{\mathfrak a}}{\ego^{1-2{\mathfrak a}}}\,.
  \end{equation}
\end{proposition}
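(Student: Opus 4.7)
The plan is to exploit the block structure of the approximate resolvent $\Rg^{app}_\B$ introduced in \eqref{eq:112}. On the set $\{|\sigma|\geq 2\ego^{-\mathfrak a}\}$ the cutoff $\eta_\ego$ vanishes identically, so the contribution of $(\CC_N-\lambda)^{-1}$ to $\mathbf{1}_{|\sigma|\geq 2\ego^{-\mathfrak a}}\Rg^{app}_\B$ drops out, leaving only the two Dirichlet pieces $(\CC_D^\pm -\lambda)^{-1}$, whose operator norms and $\partial_\sigma$-norms have already been controlled by $C/\ego^{1-2\mathfrak a}$ and $C/\ego^{3/2-2\mathfrak a}$ in \eqref{eq:113}. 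This matches exactly the target bound in \eqref{eq:133}.

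More precisely, I would first observe that the partition-of-unity relation $\zeta_- +\eta +\zeta_+ \equiv 1$ on $\mathbb R$, together with \eqref{eq:110}--\eqref{eq:111} and \eqref{eq:53a}, forces $\zeta_\ego^+\equiv 1$ and $\zeta_\ego^-\equiv 0$ on $\{\sigma\geq 2\ego^{-\mathfrak a}\}$ and symmetrically on $\{\sigma\leq -2\ego^{-\mathfrak a}\}$. In particular the derivatives of $\zeta_\ego^\pm$ vanish on $\{|\sigma|\geq 2\ego^{-\mathfrak a}\}$. Hence
\begin{equation*}
\mathbf{1}_{|\sigma|\geq 2\ego^{-\mathfrak a}}\Rg^{app}_\B
= \mathbf{1}_{\sigma\geq 2\ego^{-\mathfrak a}}(\CC_D^+-\lambda)^{-1}\zeta_\ego^+
+ \mathbf{1}_{\sigma\leq -2\ego^{-\mathfrak a}}(\CC_D^--\lambda)^{-1}\zeta_\ego^-\,,
\end{equation*}
and similarly for $\mathbf{1}_{|\sigma|\geq 2\ego^{-\mathfrak a}}\partial_\sigma \Rg^{app}_\B$, with no commutator terms surviving.

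Next, I would use the identity
\begin{equation*}
(\B_\ego-\lambda)^{-1} = \Rg^{app}_\B\,(I+\Eg_\B)^{-1}\,,
\end{equation*}
established in the proof of Proposition~\ref{prop:simplified-V1}, together with the already proven fact that $\|\Eg_\B\|\to 0$ as $\ego\to 0$, so $\|(I+\Eg_\B)^{-1}\|$ is bounded uniformly for $\ego\in(0,\ego_0]$. Combining this with the preceding display and \eqref{eq:113} gives
\begin{equation*}
\|\mathbf{1}_{|\sigma|\geq 2\ego^{-\mathfrak a}}(\B_\ego-\lambda)^{-1}\| \leq \|\mathbf{1}_{|\sigma|\geq 2\ego^{-\mathfrak a}}\Rg^{app}_\B\|\,\|(I+\Eg_\B)^{-1}\| \leq \frac{C_{\mathfrak a}}{\ego^{1-2\mathfrak a}}\,,
\end{equation*}
and the analogous computation applied to $\partial_\sigma$ yields the second bound in \eqref{eq:133}.

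There is no substantive obstacle: the only point requiring care is the verification that $\zeta_\ego^\pm$ are identically $1$ (with vanishing derivative) on $\{|\sigma|\geq 2\ego^{-\mathfrak a}\}$, so that the cutting by $\mathbf{1}_{|\sigma|\geq 2\ego^{-\mathfrak a}}$ genuinely removes the $(\CC_N-\lambda)^{-1}$ contribution and does not generate spurious boundary terms when $\partial_\sigma$ is applied. The upper bound $\mathfrak a < (1-q)/4$ is never needed here beyond what was already required to define $\Rg^{app}_\B$.
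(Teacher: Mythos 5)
Your proposal is correct and follows essentially the same route as the paper: both factor $(\B_\ego-\lambda)^{-1}=\Rg^{app}_\B(I+\Eg_\B)^{-1}$, observe that multiplication by ${\mathbf 1}_{|\sigma|\geq 2\ego^{-{\mathfrak a}}}$ annihilates the $\CC_N$-block (and the opposite $\zeta_\ego^\mp$-block) so that only ${\mathbf 1}_{\pm\sigma\geq 2\ego^{-{\mathfrak a}}}(\CC_D^\pm-\lambda)^{-1}\zeta_\ego^\pm(I+\Eg_\B)^{-1}$ survives, and then invoke \eqref{eq:113}. Your extra care about $\zeta_\ego^\pm$ and its derivative on $\{|\sigma|\geq 2\ego^{-{\mathfrak a}}\}$ is exactly the point the paper uses implicitly, so there is nothing to add.
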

\begin{proof}
  Since for sufficiently small $\ego$ we have
  \begin{displaymath}
    ( \B_\ego-\lambda)^{-1} = \Rg^{app}_{\B}(I+\Eg_\B)^{-1}\,,
  \end{displaymath}
it follows by \eqref{eq:112} that
\begin{displaymath}
 {\mathbf 1}_{\sigma\geq 2\ego^{-{\mathfrak a}}} ( \B_\ego-\lambda)^{-1} =  {\mathbf 1}_{\sigma\geq
   2\ego^{-{\mathfrak a}}} (\CC_D^+-\lambda)^{-1}\zeta_\ego^+ (I+\Eg_\B)^{-1}\,. 
\end{displaymath}
By \eqref{eq:113} we then have
\begin{displaymath}
   \|{\mathbf 1}_{\sigma\geq 2\ego^{-{\mathfrak a}}}( \B_\ego-\lambda)^{-1}\|\leq   \frac{C_{\mathfrak a}}{\ego^{1-2{\mathfrak a}}}\,.
\end{displaymath}
In a similar manner we write 
\begin{displaymath}
 {\mathbf 1}_{\sigma\geq 2\ego^{-{\mathfrak a}}} \partial_\sigma( \B_\ego-\lambda)^{-1} =  {\mathbf 1}_{\sigma\geq
   2\ego^{-{\mathfrak a}}} \partial_\sigma(\CC_D^+-\lambda)^{-1}\zeta_\ego^+ (I+\Eg_\B)^{-1}\,. 
\end{displaymath}
Once again by \eqref{eq:113} we obtain
\begin{displaymath}
\ego^{\frac 12}\,  \|{\mathbf 1}_{\sigma\geq 2\ego^{-{\mathfrak a}}}\partial_\sigma( \B_\ego-\lambda)^{-1}\|\leq
  \frac{C_{\mathfrak a}}{\ego^{1-2{\mathfrak a}}}\,.
\end{displaymath}
\end{proof}

\subsection{Curvature effects}
\label{sec:modification}
In the following, we estimate the effect of some of the error terms 
in \eqref{eq:36} and \eqref{eq:21a}.  Since the
  estimation of these terms is complex, it is preferable to consider
  them as modifications of $\B_\ego$ and not in the context of the
  original operator $\A_h$, which is addressed in Section \ref{sec:upper}.

\subsubsection{Effect 1} The first effect is generated by the
first error term in \eqref{eq:21a}. 
   \begin{proposition}
\label{lem:modified}
Consider on $D( \B_\ego)$ the operator
 \begin{equation}
   \label{eq:134}
 \hat{\B}_\ego = \B_\ego - \theta\ego^2\tau\chi(\ego^{\tilde b}\tau)\partial^2_\sigma \,,
 \end{equation}
 where $\theta\in\R$, $\tilde{b}\in (0,1/2 -q)$ and  $\chi\in C^\infty(\R_+,[0,1])$ is given by
 \eqref{eq:59}.
 Then, there exist positive $C$ and
 $\ego_0$ such that, for every $\ego \in (0, \ego_0]$, $\partial
 B(\Lambda_\gamma^1,r(\ego) \ego)\cap\sigma(\hat{\B}_\ego )=\emptyset$ and 
 \begin{subequations}
   \label{eq:135}
    \begin{equation}
  \sup_{\lambda\in\partial B(\Lambda_\gamma^1,r(\ego) \ego)}\|(\hat{\B}_\ego -\lambda)^{-1}\| \leq \frac{C}{(r(\ego) +1-\gamma)\ego}  \,,\,\forall \lambda \in
  \partial B(\Lambda_\gamma^1,r(\ego) \ego)\,. 
 \end{equation}
Furthermore, we have, for all $\lambda \in
  \partial B(\Lambda_\gamma^1,r(\ego) \ego)$,
\begin{equation}
   \|\partial_\tau(\hat{\B}_\ego -\lambda)^{-1}\|  + \ego^{1/2}\|\partial_\sigma(\hat{\B}_\ego -\lambda)^{-1}\| \leq \frac{C}{(r(\ego) +1-\gamma)\ego}\,,
\end{equation}
and
\begin{equation}
   \|\partial_{\tau\tau}(\hat{\B}_\ego -\lambda)^{-1}\|  + \ego\|\partial_{\sigma\sigma}(\hat{\B}_\ego -\lambda)^{-1}\| \leq \frac{C}{(r(\ego) +1-\gamma)\ego}\,.
\end{equation}
 \end{subequations}
   \end{proposition}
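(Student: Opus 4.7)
The plan is to view $\hat{\B}_\ego$ as a small perturbation of $\B_\ego$, exploiting the fact that the extra term $-\theta\ego^2\tau\chi(\ego^{\tilde b}\tau)\partial^2_\sigma$ carries an $\ego^2$ prefactor while its multiplicative coefficient satisfies the pointwise bound $|\tau\chi(\ego^{\tilde b}\tau)|\leq 2\ego^{-\tilde b}$. For any $\lambda$ on $\partial B(\Lambda_\gamma^1,r(\ego)\ego)$ I will start from the identity
\begin{equation*}
(\hat{\B}_\ego - \lambda)(\B_\ego - \lambda)^{-1}=I-T, \qquad T:=\theta\ego^2\tau\chi(\ego^{\tilde b}\tau)\,\partial^2_\sigma(\B_\ego-\lambda)^{-1},
\end{equation*}
and combine the pointwise bound with \eqref{eq:129} to obtain
\begin{equation*}
\|T\|\leq C_{\mathfrak a}\Big(\ego^{2\mathfrak a-\tilde b}+\frac{\ego^{1/2-\tilde b}}{r(\ego)+1-\gamma}\Big).
\end{equation*}
The exponent $\mathfrak a$ being free in $(1/6,(1-q)/4)$, the assumptions $q<1/6$ and $\tilde b<1/2-q$ leave room to fix $\mathfrak a$ so that $2\mathfrak a>\tilde b$; the lower bound $r(\ego)+1-\gamma\geq c\ego^q$ furnished by \eqref{condsurr} turns the second summand into $O(\ego^{1/2-\tilde b-q})$, which is also vanishing by hypothesis. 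Hence $\|T\|\to 0$ uniformly in $\gamma\in[0,1]$ and in $\lambda$ on the circle, so $I-T$ is invertible for small $\ego$ and $(\hat{\B}_\ego-\lambda)^{-1}=(\B_\ego-\lambda)^{-1}(I-T)^{-1}$, which together with \eqref{eq:109} yields (a).

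For the first-order derivative bound I would set $w=(\hat{\B}_\ego-\lambda)^{-1}g$ and read off, from $\Re\langle w,(\hat{\B}_\ego-\lambda)w\rangle$ after integration by parts,
\begin{equation*}
\|\partial_\tau w\|_2^2+\ego\|\partial_\sigma w\|_2^2+\theta\ego^2\|(\tau\chi)^{1/2}\partial_\sigma w\|_2^2=\Re\lambda\,\|w\|_2^2+\Re\langle w,g\rangle.
\end{equation*}
The curvature term is majorised by $2|\theta|\ego^{2-\tilde b}\|\partial_\sigma w\|_2^2=o(\ego)\|\partial_\sigma w\|_2^2$ and so can be absorbed; combined with part (a) this proves (b). For (c) the cleanest route is to avoid any new weighted integration by parts: rewriting the equation as $(\B_\ego-\lambda)w=g+\theta\ego^2\tau\chi\,\partial^2_\sigma w$ and invoking \eqref{eq:129} gives
\begin{equation*}
\|\partial^2_\sigma w\|_2\leq C_{\mathfrak a}\Big(\ego^{2\mathfrak a-2}+\frac{1}{(r(\ego)+1-\gamma)\ego^{3/2}}\Big)\bigl(\|g\|_2+2|\theta|\ego^{2-\tilde b}\|\partial^2_\sigma w\|_2\bigr).
\end{equation*}
The multiplier of $\|\partial^2_\sigma w\|_2$ on the right is precisely the $\|T\|$-type quantity and hence $o(1)$, so it may be absorbed, delivering the $\partial^2_\sigma$ half of (c). The bound on $\|\partial^2_\tau(\hat{\B}_\ego-\lambda)^{-1}\|$ is then extracted by reading $\partial^2_\tau w$ off the equation $\hat{\B}_\ego w=g+\lambda w$, the auxiliary norms $\|\tau w\|_2$, $\ego\|\tau^2\chi w\|_2$ and $\ego\|\sigma^2\tau w\|_2$ being controlled via $\Im\langle\tau w,(\hat{\B}_\ego-\lambda)w\rangle$ exactly as in the proof of Proposition~\ref{prop:simplified-V1}.

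The principal obstacle is the tight bookkeeping of the three small-parameter exponents $\mathfrak a$, $\tilde b$ and $q$: the strict inequalities $q<1/6$ and $\tilde b<1/2-q$ are precisely what permit $\mathfrak a\in(1/6,(1-q)/4)$ to be chosen with $2\mathfrak a>\tilde b$, thereby keeping both summands in the bound on $\|T\|$ vanishing with $\ego$. Once this balancing is in hand, the estimates (a)--(c) are essentially inherited from Proposition~\ref{prop:simplified-V1}, the only novelty being the perturbative passage from $\B_\ego$ to $\hat{\B}_\ego$ and the verification that the curvature cross-term produced by integration by parts stays negligible compared with $\ego\|\partial_\sigma w\|_2^2$.
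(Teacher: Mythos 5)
Your treatment of (\ref{eq:135}a), (\ref{eq:135}b) and the $\partial^2_{\sigma\sigma}$ half of (\ref{eq:135}c) is correct and coincides in substance with the paper's proof: the paper runs the same perturbation argument as an a priori estimate on a pair $(u,g)$ with $(\hat\B_\ego-\lambda)u=g$ (rather than as a Neumann series $(I-T)^{-1}$), uses exactly the combination of $|\tau\chi(\ego^{\tilde b}\tau)|\leq 2\ego^{-\tilde b}$ with \eqref{eq:129}, and your exponent bookkeeping ($2\mathfrak a>\tilde b$, $1/2-\tilde b-q>0$) matches the paper's choice $\mathfrak a\in(\tfrac14-\tfrac q2,\tfrac14-\tfrac q4)$. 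Only remember to pass from "bounded right inverse" to "resolvent bound" via the discreteness of $\sigma(\hat\B_\ego)$, as the paper does.

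The gap is in the $\|\partial^2_{\tau\tau}(\hat\B_\ego-\lambda)^{-1}\|$ estimate. Reading $\partial^2_\tau w$ off the equation forces you to bound every other term of the equation in $L^2$, and in particular $\ego\|\sigma^2\tau w\|_2$. You assert this is "controlled via $\Im\langle\tau w,(\hat\B_\ego-\lambda)w\rangle$ exactly as in the proof of Proposition~\ref{prop:simplified-V1}", but that identity only produces the positive term $\ego\|\sigma\tau w\|_2^2$ (since $\Im\langle\tau w,i\ego\sigma^2\tau w\rangle=\ego\|\sigma\tau w\|_2^2$), yielding \eqref{eq:78a}, i.e. $\ego^{1/2}\|\sigma\tau w\|_2\leq C(\|g\|_2+\|w\|_2)$ --- one full power of the unbounded weight $\sigma$ short of what you need. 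No bound on $\ego\|\sigma^2\tau w\|_2$ is established anywhere in the paper, and obtaining one would require a separate weighted identity. This is precisely why the paper does \emph{not} read $\partial^2_\tau w$ off the equation: it instead computes $-\Re\langle u_{\tau\tau},(\hat\B_\ego-\lambda)u\rangle$, so that $\|u_{\tau\tau}\|_2^2$ appears directly as the leading positive term while the purely imaginary potential $i\ego\sigma^2\tau$ contributes, after integration by parts in $\tau$, only the purely imaginary quantity $i\ego\|\sigma\tau^{1/2}u_\tau\|_2^2$ (invisible to the real part) plus cross terms involving $\partial_\tau(\sigma^2\tau)=\sigma^2$ paired with $u_\tau$, which are absorbed using (\ref{eq:135}a,b) (compare the decomposition $\Gg=\Gg_1+i\Gg_2+\Gg_3$ in the proof of Proposition~\ref{lem:large-tau-V1}, designed for exactly this difficulty). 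You should replace the read-off step by this quadratic-form computation.
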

   \begin{proof}
     For sake of brevity we use the notation
     $\tilde{\chi}(\tau)=\chi(\ego^{\tilde b}\tau)$ where $\chi$ is given by
     \eqref{eq:59}.  We keep the same notation as in the previous
     subsection for the cut-off functions given by
     \eqref{eq:110}-\eqref{condsurmathfraka} .

 Let $u\in D( \B_\ego)$ and $g \in L^2(\R^2_+)$ satisfy
 \begin{displaymath}
    (\hat{\B}_\ego -\lambda)u=g \,.
 \end{displaymath}
 We rewrite the above balance in the following manner
 \begin{equation}
 \label{eq:136}
    ( \B_\ego -\lambda)u= g + \theta\ego^2\tau\tilde{\chi}u_{\sigma\sigma}\,.
 \end{equation}
Keeping in mind that $|\tau \tilde \chi (\tau)| \leq 2 \ego^{-\tilde b}$ we use \eqref{eq:129}
with $\mathfrak a \in (\frac{1}{4}-\frac q2 , \frac{1}{4}-\frac q4)$ to obtain  
 \begin{displaymath}
    \|u_{\sigma\sigma}\|_2   \leq
   C\Big(\ego^{2{\mathfrak
       a}}+\frac{\ego^{1/2}}{(r(\ego)+1-\gamma)}\Big)(\ego^{-2}\|g
   \|_2+\ego^{-\tilde{b}}\|u_{\sigma\sigma}\|_2)\,. 
  \end{displaymath}
 Since $\tilde{b}\in (0,1/2 -q)$ we may conclude that
 \begin{equation}
   \label{eq:137}
 \|u_{\sigma\sigma}\|_2 \leq C\Big(\frac{1}{\ego^{2-2{\mathfrak
       a}}}
 +\frac{1}{(r(\ego)+1-\gamma)\ego^{3/2}}\Big)\| g \|_2\,.   
 \end{equation}
 
 Applying \eqref{eq:109} to \eqref{eq:136} yields
 \begin{displaymath}
   \|u\|_2 \leq\frac{C}{(r(\ego) \ego+1-\gamma)}(\|g\|_2+\ego^{2-\tilde{b}}\|u_{\sigma\sigma}\|_2)\,.
 \end{displaymath}
 We first establish (\ref{eq:135}a) for $\lambda\in \rho (\hat \B_\ego)\cap \partial
 B(\Lambda^1_\gamma, \ego r(\ego))$, by substituting \eqref{eq:137} (observing
 that $ \tilde b < \frac12 - q < 2 \mathfrak a$) into the above
 inequality. Since the spectrum of $\hat\B_\ego$ is discrete,
 we can deduce, as for $\B_\ego$, that $\sigma(\hat\B_\ego)\cap\partial B(\Lambda^1_\gamma,
 \ego r(\ego))=\emptyset$, and hence, that (\ref{eq:135}a) is satisfied
 without restriction.

The proof of (\ref{eq:135}b) follows immediately from (\ref{eq:135}a)
and the identity
\begin{displaymath}
  \Re\langle u,  (\hat{\B}_\ego -\lambda)u\rangle = \|u_\tau\|_2^2 + \ego \|u_\sigma\|_2^2 +
  \theta\ego^2 \|[\tilde{\chi}\tau]^{1/2}u_\sigma\|_2^2-  \Re\lambda\|u\|_2^2 \,,
\end{displaymath}
which holds for every $u\in D(\hat{\B}_\ego )$. To prove (\ref{eq:135}c)
we use \eqref{eq:137} and the following identity, that holds for
every $u\in D(\hat{\B}_\ego )$,
\begin{displaymath}
\begin{array}{l}
  -\Re\langle u_{\tau\tau},  (\hat{\B}_\ego -\lambda)u\rangle \\
  \qquad \quad  = \|u_{\tau\tau}\|_2^2 + \ego
  \|u_{\tau\sigma}\|_2^2 +
  \theta\ego^2 \big(\|[\tilde{\chi}\tau]^{1/2}u_{\sigma\tau}\|_2^2-
  \Re\langle[\tilde{\chi}\tau]^\prime u_\sigma,u_{\tau\sigma}\rangle\big)-\Re\lambda\|u\|_2^2 \,,
  \end{array}
\end{displaymath}
together with (\ref{eq:135}a,b) and the fact that
$\|[\tilde{\chi}\tau]^\prime\|_\infty\leq1+2\|\chi^\prime\|_\infty$. 
  \end{proof}
We shall also need in the sequel the following estimate
  \begin{proposition}
  Under the conditions of Proposition \ref{lem:modified}, for any 
   ${\mathfrak a}$ in the interval  $(1/6,(1-q)/4)$ there exists $C_{\mathfrak a}>0$ and $\ego_0>0$
   such that for any $\ego\in (0,\ego_0]$,  and $\lambda\in\partial B(\Lambda_\gamma^1,r(\ego) \ego)$,
  \begin{equation}
\label{eq:138}
  \|{\mathbf 1}_{|\sigma|\geq 2\ego^{-{\mathfrak
        a}}}(\hat{\B}_\ego-\lambda)^{-1}\|+ \ego^{1/2} \|{\mathbf
    1}_{|\sigma|\geq 2\ego^{-{\mathfrak a}}}\partial_\sigma(\hat{\B}_\ego-\lambda)^{-1}\|\leq
  \frac{C_{\mathfrak a}}{\ego^{1-2{\mathfrak a}}}\,. 
  \end{equation}
\end{proposition}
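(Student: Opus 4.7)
The plan is to mirror the strategy used for $\B_\ego$ in Proposition \ref{cor:large-sigma} (i.e. \eqref{eq:133}), adapting the partition of unity from the proof of Proposition \ref{prop:simplified-V1} and incorporating the perturbation argument of Proposition \ref{lem:modified}. The point is that on $\{|\sigma|\geq 2\ego^{-\mathfrak a}\}$ the cut-off $\eta_\ego$ vanishes, so only the large-$\sigma$ Dirichlet pieces of the approximate resolvent survive.

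Concretely, I would introduce the Dirichlet realizations $\hat{\CC}_D^\pm$ in $S_D^\pm$ and $\hat{\CC}_N$ in $S_N$ associated with the full differential operator $-\partial_\tau^2+i(\tau+\ego\beta\chi(\ego^b\tau)\tau^2)-\ego\partial_\sigma^2-\theta\ego^2\tau\chi(\ego^{\tilde b}\tau)\partial_\sigma^2+i\ego\sigma^2\tau$, and build the approximate resolvent
\[
\Rg^{app}_{\hat\B}=\eta_\ego(\hat{\CC}_N-\lambda)^{-1}\eta_\ego+\zeta^+_\ego(\hat{\CC}_D^+-\lambda)^{-1}\zeta^+_\ego+\zeta^-_\ego(\hat{\CC}_D^--\lambda)^{-1}\zeta^-_\ego,
\]
exactly as in \eqref{eq:112}. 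The first key step is to upgrade \eqref{eq:113} to the perturbed operators: writing $(\hat{\CC}_D^\pm-\lambda)w=g$ as $(\CC_D^\pm-\lambda)w=g+\theta\ego^2\tau\tilde\chi w_{\sigma\sigma}$ with $\tilde\chi(\tau)=\chi(\ego^{\tilde b}\tau)$, using $|\tau\tilde\chi|\leq 2\ego^{-\tilde b}$, and invoking \eqref{eq:94} together with a second-$\sigma$-derivative estimate for $(\CC_D^\pm-\lambda)^{-1}$ (derived by an energy identity $-\Re\langle w_{\sigma\sigma},(\CC_D^\pm-\lambda)w\rangle$ exactly as at the end of the proof of Proposition \ref{lem:modified}). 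Since $\tilde b<1/2-q$ and $\mathfrak a<(1-q)/4$, one absorbs the $\ego^{2-\tilde b}\|w_{\sigma\sigma}\|_2$ contribution on the left and obtains
\[
\|(\hat{\CC}_D^\pm-\lambda)^{-1}\|+\ego^{1/2}\|\partial_\sigma(\hat{\CC}_D^\pm-\lambda)^{-1}\|\leq \frac{C_{\mathfrak a}}{\ego^{1-2\mathfrak a}},\qquad \lambda\in B(\vartheta_1,\gamma_0\ego).
\]

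Next I would run the commutator/error analysis of \eqref{eq:126}: the identity $(\hat\B_\ego-\lambda)\Rg^{app}_{\hat\B}=I+\Eg_{\hat\B}$ holds, and the extra term $-\theta\ego^2\tau\tilde\chi\partial_\sigma^2$ contributes additional commutators of the form $\ego^2\tau\tilde\chi[\partial_\sigma^2,\zeta_\ego^\pm]$ and $\ego^2\tau\tilde\chi[\partial_\sigma^2,\eta_\ego]$, which are of size $\ego^{2-\tilde b+\mathfrak a}(\|\partial_\sigma\cdot\|+\ego^{\mathfrak a}\|\cdot\|)$ composed with the local resolvents. Combined with \eqref{eq:127}, \eqref{eq:135}b on $\hat\CC_N$, and the analogous bounds for $\hat\CC_D^\pm$ just obtained, one gets $\|\Eg_{\hat\B}\|\to 0$ as $\ego\to0$, so $I+\Eg_{\hat\B}$ is uniformly boundedly invertible for small $\ego$. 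On $\{|\sigma|\geq 2\ego^{-\mathfrak a}\}$, the choice of support of $\eta$ makes $\eta_\ego\equiv 0$, so
\[
\mathbf{1}_{|\sigma|\geq 2\ego^{-\mathfrak a}}(\hat\B_\ego-\lambda)^{-1}=\mathbf{1}_{|\sigma|\geq 2\ego^{-\mathfrak a}}\bigl(\zeta^+_\ego(\hat{\CC}_D^+-\lambda)^{-1}\zeta^+_\ego+\zeta^-_\ego(\hat{\CC}_D^--\lambda)^{-1}\zeta^-_\ego\bigr)(I+\Eg_{\hat\B})^{-1},
\]
and applying the resolvent bound for $\hat{\CC}_D^\pm$ yields the first summand of \eqref{eq:138}; differentiating in $\sigma$ and using the $\ego^{1/2}\|\partial_\sigma(\hat{\CC}_D^\pm-\lambda)^{-1}\|$ bound (together with the harmless $\ego^{\mathfrak a}$ factor produced by $\partial_\sigma\zeta^\pm_\ego$) produces the second summand.

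The main obstacle will be the second-$\sigma$-derivative bound for $(\CC_D^\pm-\lambda)^{-1}$ needed to justify the perturbation: the boundary of $S_D^\pm$ is at $\sigma=\pm\ego^{-\mathfrak a}$ (not at $\pm\infty$), and the energy identity $-\Re\langle w_{\sigma\sigma},(\CC_D^\pm-\lambda)w\rangle=\|w_{\sigma\tau}\|_2^2+\ego\|w_{\sigma\sigma}\|_2^2+2\ego\Im\langle w_\sigma,\sigma\tau w\rangle-\Re\lambda\|w_\sigma\|_2^2$ requires that $\langle w_{\sigma\sigma},w_{\tau\tau}\rangle=\|w_{\sigma\tau}\|_2^2$, which is valid on the subspace $H^2\cap H^1_0$ of the half-strip but has to be checked with the Dirichlet condition on the lateral boundary $\sigma=\ego^{-\mathfrak a}$. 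Once that integration by parts is justified (standard density argument using smooth functions vanishing near the corner), the estimate follows as in \eqref{eq:132}--\eqref{eq:129} and the rest of the argument goes through verbatim.
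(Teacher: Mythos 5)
The paper's own proof is far shorter than your proposal and does not re-run the approximate-resolvent construction at all. It simply rewrites $(\hat\B_\ego-\lambda)u=g$ as $(\B_\ego-\lambda)u=g-\theta\ego^2\tau\tilde\chi u_{\sigma\sigma}$, applies the already-established localized bound \eqref{eq:133} for $\B_\ego$ to the left-hand side to get
$\|\mathbf{1}_{|\sigma|\geq 2\ego^{-\mathfrak a}}u\|_2\leq \frac{C}{\ego^{1-2\mathfrak a}}\bigl(\|g\|_2+\ego^{2-\tilde b}\|u_{\sigma\sigma}\|_2\bigr)$,
and then plugs in the bound \eqref{eq:137} on $\|u_{\sigma\sigma}\|_2$ that was already obtained in the course of proving Proposition \ref{lem:modified}. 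The $\partial_\sigma$ part is handled in the same way using the second half of \eqref{eq:133}. Building a fresh partition of unity, fresh local operators $\hat\CC_D^\pm,\hat\CC_N$, fresh local resolvent estimates and a fresh error analysis, as you propose, is unnecessary.

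Beyond being heavier, your route has a concrete gap in the step you call ``one absorbs the $\ego^{2-\tilde b}\|w_{\sigma\sigma}\|_2$ contribution on the left.'' When you feed the perturbation into the $\partial_\sigma^2$-estimate for $(\CC_D^\pm-\lambda)^{-1}$ on $S_D^\pm=(\ego^{-\mathfrak a},\infty)\times\R_+$, the dominant contribution from \eqref{eq:132}-type estimates scales as $\ego^{-(2-2\mathfrak a)}\|g'\|_2$, so the self-consistent bound reads
$\|w_{\sigma\sigma}\|_2\leq \frac{C}{\ego^{2-2\mathfrak a}}\|g\|_2 + C\ego^{2\mathfrak a-\tilde b}\|w_{\sigma\sigma}\|_2$,
and absorption requires $2\mathfrak a>\tilde b$. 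But the hypotheses only give $\mathfrak a>1/6$ (so $2\mathfrak a>1/3$) and $\tilde b<1/2-q$, and for $q<1/6$ one has $1/2-q>1/3$, so a combination like $\mathfrak a$ close to $1/6$ and $\tilde b$ close to $1/2-q$ breaks the inequality. The paper sidesteps this because \eqref{eq:137} is derived once with a fixed auxiliary exponent $\mathfrak a'\in(1/4-q/2,\,1/4-q/4)$, for which $2\mathfrak a'>1/2-q>\tilde b$ does hold, and that bound is then used \emph{independently} of the cutoff exponent $\mathfrak a$ appearing in $\mathbf{1}_{|\sigma|\geq2\ego^{-\mathfrak a}}$. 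In other words, the exponent governing the second-derivative estimate and the exponent governing the localization can, and must, be decoupled; your construction ties them together and thereby gets stuck. If you insist on the approximate-resolvent route you would need to do the same decoupling (derive the $\|w_{\sigma\sigma}\|$ bound with a bigger intermediate exponent, then localize with the statement's $\mathfrak a$), at which point the argument collapses back to the paper's two-line proof anyway.
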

\begin{proof}
 Let, as in the previous proof,  $u\in D(\hat{\B}_\ego)$, $\lambda \in \mathbb
 C$ and  $ g\in L^2(\R^2_+)$ 
 such that $ g=(\hat{\B}_\ego -\lambda)u$.  Since 
\begin{displaymath}
     ( \B_\ego -\lambda)u= g - \theta\ego^2\tau\tilde{\chi}u_{\sigma\sigma}  \,,
 \end{displaymath} 
we  obtain from \eqref{eq:133} that
\begin{displaymath}
  \|{\mathbf 1}_{|\sigma|\geq 2\ego^{-{\mathfrak a}}}u\|_2 \leq
  \frac{C}{\ego^{1-2{\mathfrak a}}}(\| g\|_2+\ego^{2-\tilde b}\|u_{\sigma\sigma}\|)\,.
\end{displaymath}
By \eqref{eq:137} we then have, using the fact that $\tilde  b<1/2-q\,$, 
\begin{displaymath}
   \|{\mathbf 1}_{|\sigma|\geq 2\ego^{-{\mathfrak a}}}u\|_2 \leq
   C\Big(\frac{1}{\ego^{1-2{\mathfrak a}}}+\frac{1}{r(\ego) \ego^{ 1/2+\tilde b-2{\mathfrak a}}}\Big)\|g\|_2 \leq  \frac{\tilde C}{\ego^{1-2{\mathfrak a}}}\, \| g\|_2 \,.
\end{displaymath}
In a similar manner we show that
\begin{displaymath}
   \ego^{1/2} \|{\mathbf 1}_{|\sigma|\geq 2\ego^{-{\mathfrak a}}}u_\sigma\|_2\leq
  \frac{C_{\mathfrak a}}{\ego^{1-2{\mathfrak a}}}\, \|g\|_2\,.
\end{displaymath}
\end{proof}

We finally establish an asymptotic estimate, which is needed in
Section \ref{sec:upper}. It is valid in a region where $\tau$ is large, but the
the cutoff function $\chi$ is still $1$ (i.e. $1\ll\tau\leq\ego^{-b}$).
\begin{proposition}
\label{lem:large-tau-V1}
Let $0<a<b$. 
Then, there exist positive $C_a$
and $\ego_0$ such that for all $\ego\in (0,\ego_0]$ and $\lambda\in\partial
B(\Lambda_\gamma^1,r(\ego) \ego)$,
\begin{subequations}
\label{eq:139}
  \begin{multline}
    \|{\mathbf 1}_{\tau\geq\ego^{-a}}(\hat{\B}_\ego -\lambda)^{-1}\| +
    \ego^{a/2}\|{\mathbf 1}_{\tau\geq\ego^{-a}}\partial_\tau(\hat{\B}_\ego
    -\lambda)^{-1}\| + \\
   \ego^{(a+1)/2}\|{\mathbf 1}_{\tau\geq\ego^{-a}}\partial_\sigma(\hat{\B}_\ego -\lambda)^{-1}\|
   \leq C_a\ego^a \,,
  \end{multline}
and
\begin{equation}
   \|{\mathbf 1}_{\tau\geq\ego^{-a}}\partial_\tau^2(\hat{\B}_\ego -\lambda)^{-1}\| +
   \ego\|{\mathbf 1}_{\tau\geq\ego^{-a}}\partial_\sigma^2(\hat{\B}_\ego -\lambda)^{-1}\| \leq
   C_a \,.
\end{equation}
\end{subequations}
\end{proposition}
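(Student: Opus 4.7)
The plan is to exploit the fact that on the set $\{\tau\geq \ego^{-a}/2\}$ the imaginary part of the potential, $V(\sigma,\tau)=\tau(1+\ego\beta\chi(\ego^b\tau)\tau+\ego\sigma^2)$, is bounded below by $\ego^{-a}/4$ for $\ego$ small enough (here the condition $a<b$ enters, together with $\chi$'s support, to handle $\ego\beta\chi\tau^2$), while $|\Im\lambda|$ stays bounded on $\partial B(\Lambda^1_\gamma,r(\ego)\ego)$. This mismatch of scales produces the gain $\ego^a$ in \eqref{eq:139}.

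Let $(u,g)$ satisfy $g=(\hat{\B}_\ego-\lambda)u$, and introduce a cutoff $\phi_0\in C^\infty(\R_+)$ with $\phi_0=1$ on $[\ego^{-a},\infty)$, $\phi_0=0$ on $[0,\ego^{-a}/2]$, and $\|\phi_0^{(j)}\|_\infty\leq C_j\ego^{ja}$. Writing $w=\phi_0 u$ and using $(\hat{\B}_\ego-\lambda)w=\phi_0 g-2\phi_0'u_\tau-\phi_0''u$, the imaginary part of $\langle(\hat{\B}_\ego-\lambda)w,w\rangle$ together with $V\geq\ego^{-a}/4$ on $\supp\phi_0$ yields
\begin{displaymath}
\|w\|\leq C\ego^a\bigl(\|g\|+\|\phi_0'u_\tau\|+\|\phi_0''u\|\bigr),
\end{displaymath}
while the real part gives
\begin{displaymath}
\|\partial_\tau w\|^2+\ego\|\partial_\sigma w\|^2\leq C\bigl(\|g\|+\|\phi_0'u_\tau\|+\|\phi_0''u\|\bigr)\|w\|+|\Re\lambda|\|w\|^2.
\end{displaymath}

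To close the loop we introduce a finite nested family of cutoffs $\phi_0\prec\phi_1\prec\cdots\prec\phi_N$, each supported in $\{\tau\geq\ego^{-a}/2\}$, with $\phi_{k+1}\equiv 1$ on $\supp\phi_k'$, and with $\|\phi_k^{(j)}\|_\infty\leq C_{j,N}\ego^{ja}$. Setting
\begin{displaymath}
A_k:=\|\phi_k u\|+\ego^{a/2}\|\phi_k u_\tau\|+\ego^{(a+1)/2}\|\phi_k u_\sigma\|,
\end{displaymath}
the estimates above applied level by level chain into the recursion $A_k\leq C\ego^a\|g\|+C\ego^{2a}A_{k+1}$. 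Iterating gives $A_0\leq 2C\ego^a\|g\|+(C\ego^{2a})^N A_N$. Since $r(\ego)\gtrsim\ego^{q'}$ for any $q'>q$, the bulk bound of Proposition~\ref{lem:modified} yields $A_N\leq C\ego^{-1-q'}\|g\|$. Choosing $N$ large enough (depending on $a$ and $q'$) that $2aN-1-q'\geq a$ absorbs the remainder and proves \eqref{eq:139}a. This iteration is the main obstacle: because each cutoff-commutator costs a factor $\ego^{-1}$ from the bulk resolvent, only the repeated gain of $\ego^{2a}$ (with $N\sim 1/a$) allows us to reach the final order $\ego^a$; this is why the constant $C_a$ must blow up as $a\to 0$.

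For \eqref{eq:139}b, I would invoke the algebraic identity
\begin{displaymath}
\|(\hat{\B}_\ego-\lambda)w\|^2=\|Pw\|^2+\|Qw\|^2-2\,\Im\langle Pw,Qw\rangle,
\end{displaymath}
where $P=-\partial_\tau^2-\ego(1+\theta\ego\tau\tilde\chi)\partial_\sigma^2-\Re\lambda$ and $Q=V-\Im\lambda$. Integration by parts on the cross term in $\|Pw\|^2$, using the Dirichlet boundary condition at $\tau=0$, converts $2\ego\Re\langle w_{\tau\tau},(1+\theta\ego\tau\tilde\chi)w_{\sigma\sigma}\rangle$ into $2\ego\int(1+\theta\ego\tau\tilde\chi)|w_{\sigma\tau}|^2$ modulo an $O(\ego^2)$ error, so that
\begin{displaymath}
\|Pw\|^2\geq\|w_{\tau\tau}\|^2+\ego^2\|(1+\theta\ego\tau\tilde\chi)w_{\sigma\sigma}\|^2-C(\Re\lambda)^2\|w\|^2.
\end{displaymath}
A direct computation of the commutator $[P,Q]$ (anti-self-adjoint) yields $\Im\langle Pw,Qw\rangle=-\int V'\Im(\bar w w_\tau)-2\ego^2\int\sigma\tau(1+\theta\ego\tau\tilde\chi)\Im(\bar w w_\sigma)$. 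With $V'=1+O(\ego^{1-b})+\ego\sigma^2$, and using the $V$-weighted estimate $\int V|w|^2\leq C\ego^a\|g\|^2$ (a byproduct of the imaginary-part inequality that controls $\ego\sigma^2|w|^2$), one bounds the commutator by the first-derivative quantities already controlled by part (a). Since $\|(\hat{\B}_\ego-\lambda)w\|\leq\|g\|+2\|\phi_0'u_\tau\|+\|\phi_0''u\|\leq C\|g\|$, assembling these contributions and using $\|Qw\|\geq 0$ gives the bound $\|w_{\tau\tau}\|^2+\ego^2\|w_{\sigma\sigma}\|^2\leq C_a\|g\|^2$, establishing \eqref{eq:139}b after restoring the indicator $1_{\tau\geq\ego^{-a}}$.
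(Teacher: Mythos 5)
Your proof is correct and takes essentially the same route as the paper: for (a), both arguments localize to $\{\tau\gtrsim\ego^{-a}\}$ where the imaginary potential dominates $|\Im\lambda|$, gain a power of $\ego^{a}$ per cutoff modulo commutator terms, and iterate a finite (depending on $a$) number of times (nested cutoffs in a fixed annulus for you, dyadic thresholds $\xi/2^{k}$ in the paper) before closing with the bulk bound of Proposition~\ref{lem:modified}, while for (b) your identity $\|(\hat{\B}_\ego-\lambda)w\|^2=\|Pw\|^2+\|Qw\|^2-2\Im\langle Pw,Qw\rangle$ is an algebraic repackaging of the paper's pairings of $(\hat{\B}_\ego-\lambda)u$ against $\zeta^2u_{\tau\tau}$ and $\zeta^2u_{\sigma\sigma}$, with the same absorption of the commutator into the positive weighted quantities controlled by part (a). The only quibble is quantitative and harmless: tracking the first-derivative component of $A_k$ gives a per-step gain of $\ego^{3a/2}$ rather than $\ego^{2a}$, which merely changes the required number of iterations $N$.
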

\begin{proof}
Let $\zeta_+$ be given by \eqref{eq:111} and for some $\xi >0$, 
     $\zeta_\xi(\tau)=\zeta_+(2 \tau/\xi)$. 
   Let $u \in D(\hat{\B}_\ego)$, $\lambda\in\partial
B(\Lambda_\gamma^1,r(\ego) \ego)$, and
 $g\in L^2(\R^2_+)$ satisfy
 \begin{displaymath}
    (\hat{\B}_\ego -\lambda)u=g \,.
 \end{displaymath}
 \paragraph{Proof of (\ref{eq:139}a)}
As the identity
\begin{multline}
\label{eq:140}
\Re\langle (\hat{\B}_\ego -\lambda)u\,,\, \zeta^2u\rangle + \Im
\langle(\hat{\B}_\ego -\lambda)u\rangle\,,\, \zeta^2u\rangle\\  =
\|\partial_\tau(\zeta u)\|_2^2
-(\Im\lambda+\Re\lambda)\|\zeta u\|_2^2 +
\ego\|\zeta\partial_\sigma
u\|_2^2+\theta\ego^2\|\tau^{1/2}\zeta\tilde{\chi}^{1/2}\partial_\sigma
u\|_2^2\\
\qquad \qquad   -\|\zeta^\prime u\|_2^2  + 2\Im\langle
\zeta^\prime u,\partial_\tau(\zeta u)\rangle +
\|\tau^{1/2}\zeta
u\|_2^2+ \beta\ego\|\tau \zeta ({\chi(\ego^b\tau))^\frac 12}
u\|_2^2 +\ego\|\tau^{1/2}\sigma\zeta u\|_2^2\,,
\end{multline}
holds for any $C^\infty$ function $\zeta$ with support in $\mathbb R_+$, we get, for $\zeta=\zeta_\xi$, 
\begin{displaymath} 
  \|\tau^{1/2}\zeta_\xi u\|_2^2- (\Im\lambda+\Re\lambda)\|\zeta_\xi u\|_2^2 \leq
  4 \|\zeta_\xi^\prime u  \|_2^2+4 \|\zeta_\xi u\|_2\|\zeta_\xi g\|_2\,.
\end{displaymath}
Observing that $|\zeta'_\xi|\leq \frac{C_0}{\xi}$,  we deduce
\begin{displaymath} 
\left(\frac{\xi}{2} - (\Im\lambda+\Re\lambda)\right)  \|\zeta_\xi u\|_2^2\leq
  \frac{4C_0}{\xi ^2} \| {\mathbf
    1}_{\tau\geq\xi/2}u  \|_2^2+4 \|\zeta_\xi u\|_2\|g\|_2\,.
\end{displaymath}
Hence, for $\xi\geq4(\Im\lambda+\Re\lambda)\geq \Re \vartheta_1 >0\,$, 
\begin{displaymath}
  \|{\mathbf 1}_{\tau\geq\xi}u\|_2 \leq \frac{C}{\xi}(\|g\|_2+ \xi^{-\frac 12} \|{\mathbf
    1}_{\tau\geq\xi/2}u\|_2) \leq   \frac{\hat C}{\xi}(\|g\|_2+  \|{\mathbf
    1}_{\tau\geq\xi/2}u\|_2) \,.
\end{displaymath}
Applying the above inequality $k$ times recursively yields,  for any $\xi$ satisfying \break
 $\xi\geq4^k(\Im\lambda+\Re\lambda)$\,,
\begin{equation}
\label{eq:141}
  \|{\mathbf 1}_{\tau\geq\xi}u\|_2 \leq \frac{C_k}{\xi}(\|g\|_2 +\xi^{-(k-1)}\|{\mathbf
    1}_{\tau\geq\xi/2^k}u\|_2)\,.
\end{equation}
Choosing $\xi=\ego^{-a}$, we obtain,  for $\ego^a\, 4^k(\Im\lambda+\Re\lambda)\leq 1$,
\begin{displaymath}
  \|{\mathbf 1}_{\tau\geq\ego^{- a}}u\|_2 \leq C_k\ego^a\, (\|g\|_2 +\ego^{a(k-1)}\|u\|_2)\,.
\end{displaymath}
Choosing $k\geq  \frac{3}{a}$ yields, 
\begin{displaymath}
    \|{\mathbf 1}_{\tau\geq\ego^{-a}}u\|_2 \leq C_a(\ego^a\|g\|_2\,.
    +\ego^3\|u\|_2)\,,
\end{displaymath}
With the aid of \eqref{eq:135}, we may now conclude the existence, of $\ego_0$ and $C$ such
that, for any $\ego\in(0,\ego_0]$,
\begin{equation}\label{eq:231aa}
    \|{\mathbf 1}_{\tau\geq\ego^{-a}}u\|_2 \leq  C \, \ego^a\|g\|_2\,.
\end{equation}
An additional conclusion that can be drawn from  \eqref{eq:140} is
$$
\| \partial_\tau (\zeta_\xi u)\|^2_2 + \ego \| \zeta_\xi  \partial_\sigma u \|^2_2 \leq   8 \|\zeta_\xi^\prime u  \|_2^2+ \frac{C}{\xi} \|\zeta_\xi g\|^2_2\leq \frac{\hat C}{ \xi} (\|g\|^2_2 + \xi^{-1} \|\mathbf{1}_{\tau \geq \xi/2} u\|^2_2) \,,
$$
which leads to
\begin{equation*}
  \|{\mathbf 1}_{\tau\geq\xi}\partial_\tau u \|_2 + \ego^{1/2}\|{\mathbf
    1}_{\tau\geq\xi}\partial_\sigma u \|_2\leq \frac{C}{\xi^{1/2}}(\|g\|_2+ \xi^{-\frac 12} \|{\mathbf
    1}_{\tau\geq\xi/2}u\|_2)\,.
\end{equation*}
Using \eqref{eq:141} with $\xi$ replaced by $\xi/2$, we obtain  for any
$k$  the existence of positive constants $C_k$ and $\xi_k$ such that
for  all $\xi \geq \xi_k$   
\begin{equation}
\label{eq:142}
  \|{\mathbf 1}_{\tau\geq\xi}\partial_\tau u \|_2 + \ego^{1/2}\|{\mathbf
    1}_{\tau\geq\xi}\partial_\sigma u \|_2\leq \frac{C_k}{\xi^{1/2}}(\|g\|_2+ \xi^{-\frac 12 -k} \| u\|_2)\,.
\end{equation}
Setting once again $\xi=\ego^{-a}$, and $k$ to be sufficiently large  completes the proof of
(\ref{eq:139}a).
 \paragraph{Proof of (\ref{eq:139}b)}~\\
 We first prove that  
\begin{equation}
  \label{eq:143}
\|{\mathbf 1}_{\tau\geq\ego^{-a}}\partial_\tau^2(\hat{\B}_\ego -\lambda)^{-1}\|  \leq C_a \,.
\end{equation}
Let, for some $C^\infty$ function $\zeta$ supported in $\mathbb R^+$, 
\begin{equation}
  \label{eq:144}
\kern -4px \Gg:=\kern -3px -\kern-1px  \langle(\hat{\B}_\ego -\lambda)u, \zeta^2u_{\tau\tau}\rangle \kern -2px =\kern -3px
-\kern -1px \langle (\LL_2^+
-\lambda)u,\zeta^2u_{\tau\tau}\rangle \kern -1px + \kern -1px \ego  \langle(1+\ego \theta \tilde \chi \tau ) u_{\sigma\sigma}
\kern -0.5px - \kern -0.5px i  \sigma^2 \tau    u ,  \zeta^2  u_{\tau\tau}  \rangle \,.
\end{equation}

We estimate each term separately, repeatedly applying the following integration by parts formula
$$
\langle   \tau  v ,  \hat {\zeta}^2  v_{\tau\tau}  \rangle = - \| \tau^\frac 12 \hat \zeta v_\tau\|^2 - \langle  v , \pa_\tau(\tau \hat \zeta^2) v_\tau \rangle\,,
$$
for various choices of $v$ and $\hat \zeta$. We thus have 
$$
  - \langle \LL_2^+(\ego ) u, \zeta^2u_{\tau\tau} \rangle =\| \zeta u_{\tau\tau} \|^2 - i \langle \tau (1 + \beta\ego  \tau \chi(\ego^b\tau) )u , \zeta^2 u_{\tau\tau}\rangle \,,
  $$
  $$
  \begin{array}{l}
   - i \langle \tau (1 + \beta\ego  \tau \chi(\ego^b\tau) )u , \zeta^2 u_{\tau\tau}\rangle \\
   \qquad = i \left(  \|\tau^\frac 12  (1 + \beta\ego  \tau \chi(\ego^b\tau)^\frac 12  \zeta u_\tau\|^2 + \langle u\,,\,    \pa_\tau(\tau (1 + \beta\ego  \tau \chi(\ego^b\tau)) \zeta^2) \, u_\tau \rangle  \right)\,,
   \end{array}
  $$
$$
  \langle \lambda u, \zeta^2u_{\tau\tau} \rangle =  \lambda \, \left(  - \| \pa_\tau (\zeta u) \|^2  +  \|\zeta' u \|^2 + \langle \pa_\tau (\zeta u) , \zeta' u\rangle  - \langle \zeta' u, \pa_\tau (\zeta u) \rangle  \right)\,,
  $$
  $$
 - i \ego \langle \sigma^2  \tau  u ,  \zeta^2  u_{\tau\tau}  \rangle = i\ego \left(    \| \tau^\frac 12 \sigma  \zeta u_\tau\|^2  + \langle \sigma^2  u , \pa_\tau(\tau  \zeta^2) u_\tau \rangle   \right)\,,
  $$
and
  $$
  \begin{array}{ll}
   \langle( - (1+\ego \theta \tilde \chi \tau ) u_{\sigma\sigma} ,  \zeta^2  u_{\tau\tau}  \rangle  &=   - \langle(  (1+\ego \theta \tilde \chi \tau ) u_{\sigma} ,  \zeta^2  (u_{\sigma})_{\tau\tau}  \rangle \\
   & = \|  (1+\ego \theta \tilde \chi \tau ) ^\frac 12 \zeta u_{\sigma\tau} \|^2  + \langle u_\sigma\,,\, \pa_\tau ((1+\ego \theta \tilde \chi \tau)\zeta^2) u_{\sigma\tau}\rangle\,.
   \end{array}$$
We now decompose $\Gg$ in the following manner
$$
\Gg= \Gg_1 + i \Gg_2 +\Gg_3\,,
$$  
where $\Gg_1$ and $\Gg_2$ are positive terms
defined by 
\begin{subequations}
\label{eq:145}
  \begin{align}
\Gg_1&=  \|\zeta  u_{\tau\tau}\|_2^2 + \Re \lambda \| \zeta' u\|^2  +  \ego  \|  (1+\ego \theta \tilde \chi \tau ) ^\frac 12 \zeta u_{\sigma\tau} \|^2\,, \\
\Gg_2&=  \|\tau^\frac 12  (1 + \beta\ego  \tau \chi(\ego^b\tau)^\frac 12  \zeta
u_\tau\|^2   + \Im \lambda  \| \zeta' u\|^2  + \ego    \| \tau^\frac 12 \sigma  \zeta
u_\tau\|^2\,,\\
\intertext{and $\Gg_3$ is given by}
\Gg_3&= i  \langle u\,,\,    \pa_\tau(\tau (1 + \beta\ego  \tau \chi(\ego^b\tau) \zeta^2))
\, u_\tau \rangle \notag \\
&\quad +  \lambda \, \left(  - \| \pa_\tau (\zeta u) \|^2  + <\langle \pa_\tau (\zeta u) ,
  \zeta' u\rangle  - <\langle \zeta' u, \pa_\tau (\zeta u) \rangle  \right) \notag  \\
& \quad + i\ego \langle
\sigma^2  u , \pa_\tau(\tau  \zeta^2) \, u_\tau \rangle \notag \\
&\quad + \ego \langle u_\sigma\,,\, \pa_\tau ((1+\ego \theta \tilde \chi \tau)\zeta^2) \,
u_{\sigma\tau}\rangle\,. 
\end{align}
\end{subequations}

For the first term on the right-hand side of (\ref{eq:145}c), we have
$$
 |  \langle u\,,\,    \pa_\tau(\tau (1 + \beta\ego  \tau \chi(\ego^b\tau) ) \zeta^2) \, u_\tau \rangle| \leq  C \left( \| \zeta u\|^2_2 + \| \zeta u_\tau\|^2_2 + \| \tau \zeta'u_\tau\|^2_2 \right)\,.
 $$
 It is readily verified that the second line is bounded by $ C(\| \zeta'
 u\|^2_2 +  \|\partial_\tau(\zeta u)\|_2^2 )$. Let $\epsilon>0$. 
For the third line of (\ref{eq:145}c) we have
\begin{displaymath}
  | \langle \sigma  \tau^{\frac 12} \zeta  u_\tau,\tau^{-\frac 12} \zeta\ \sigma u\rangle +   2  \langle \zeta  \sigma \tau^\frac 12 u_\tau, \zeta' \tau^\frac 12  \ \sigma u\rangle | \
\leq \epsilon \|\sigma\tau^{1/2}\zeta  u_\tau\|^2_2 + \frac{C} {\epsilon} \left(   \| \zeta' \tau^\frac 12  \ \sigma u\|^2_2 + \| \tau^{-\frac 12} \zeta \sigma u\|^2_2\right).
\end{displaymath}
Finally for  the forth term, we have for some $C_\epsilon>0$
\begin{displaymath}
  \ego |\langle u_\sigma\,,\, \pa_\tau ((1+\ego \theta \tilde \chi \tau)\zeta^2) u_{\sigma\tau}\rangle|
\leq \ego \epsilon   \|  (1+\ego \theta \tilde \chi \tau ) ^\frac 12 \zeta u_{\sigma\tau}
\|^2_2  + \ego C_\epsilon  \| \partial_\tau(\zeta [1+\ego\theta\tau\tilde{\chi}]^{1/2})\,u_\sigma
\|^2_2 \,. 
\end{displaymath}
Combining the above yields that for every $\epsilon>0$ there exists $C_\epsilon>0$
such that
\begin{equation}
  \label{eq:146}
|\Gg_3| \leq\epsilon \, (\Gg_1+\Gg_2) + C_\epsilon\, \Gg_4
\end{equation}
where
\begin{equation}
\label{eq:233}
  \Gg_4=\|{\mathbf 1}_{\tau\geq\xi/2}u\|_2^2 +
\ego\|{\mathbf 1}_{\tau\geq\xi/2}u_\sigma\|_2^2+ \|\zeta u_\tau\|_2^2+ \ego\|{\mathbf
  1}_{\tau\geq\xi/2}\tau^{1/2}\sigma u\|_2^2 \,.
\end{equation}
To obtain (\ref{eq:146}) we have used the pointwise inequality 
\begin{displaymath}
  |\zeta_\xi|+|\zeta^\prime_\xi|+\ego|\zeta(\tau\tilde{\chi})^\prime|\leq C {\mathbf
  1}_{\tau\geq\xi/2}\,.
\end{displaymath}

We now observe that, by \eqref{eq:144}, 
$$
 \|\zeta  u_{\tau\tau}\|^2_2 \leq \Gg_1 +\Gg_2\leq \sqrt{2}|\Gg_1+i \Gg_2| \leq
 \sqrt{2}(|\Gg| + | \Gg_3|) \leq \epsilon (\Gg_1+\Gg_2 +\|\zeta  u_{\tau\tau}\|^2_2 )
 + C_\epsilon (\Gg_4 +\|g\|^2_2)\,, 
$$
which implies
\begin{equation}
\label{eq:234}
\|\zeta  u_{\tau\tau}\|^2_2 \leq C (\|g\|^2_2 + \Gg_4)\,.  
\end{equation}
A proper bound of $\Gg_4$ would thus complete the proof of
\eqref{eq:143}. To this end, we now show that there exists $C$ and $\ego_0$ such
that we have, with $\zeta=\zeta_\xi$, $\ego \in (0,\ego_0]$ and $\xi =
\ego^{-\mathfrak a}$,
\begin{equation}\label{estg4}
 \Gg_4 \leq C \,  \|g\|^2_2\,.
\end{equation}

The first term appearing on the right-hand-side of \eqref{eq:233},
 may be estimated by using \eqref{eq:231aa} (which remains valid if
 ${\mathbf 1}_{\tau\geq\ego^{-a}}$ is replaced by  ${\mathbf 1}_{\tau\geq\ego^{-a}/2}$).
To estimate the second term and the third term on
the right-hand-side of \eqref{eq:233}, we
use \eqref{eq:142} (with $\xi$ replaced by $\xi/2$). 
Finally, to estimate the last term on the right-hand-side of \eqref{eq:233},
   we may use \eqref{eq:140} to obtain
   $$
\ego  \| \tau^\frac 12 \sigma {\bf 1}_{\tau \geq  \xi} u\|^2   \leq  \ego  \|
\tau^\frac 12 \sigma \zeta u\|^2 \leq C\,  (\|g\|^2 + \|\zeta u\|^2 + \| \zeta'u\|^2)\leq
\hat C\,  \|g\|^2\,. 
   $$
Consequently, by \eqref{eq:233} we have \eqref{estg4} which when
substituted into \eqref{eq:234} yields \eqref{eq:143}.

Note, for future use, that the proof provides us,
for sufficiently small $\ego_0$,
\begin{equation}\label{eq:add}
\|\tau^\frac 12 \zeta u \|^2 \leq 2 \Gg_2 \leq C\,  \| g\|^2\,.
\end{equation}

\paragraph{Estimation of $\|{\mathbf 1}_{\tau\geq\ego^{-a}}\partial_\sigma^2(\hat{\B}_\ego -\lambda)^{-1}\| $}~\\
 To complete the proof of (\ref{eq:139}b), it remains necessary to show that
\begin{equation}\label{eq:125ba}   \ego\|{\mathbf 1}_{\tau\geq\ego^{-a}}\partial_\sigma^2(\hat{\B}_\ego -\lambda)^{-1}\| \leq
   C_a \,. 
  \end{equation}
  To this end we write 
  $$
\hat \Gg:=- \langle(\hat{\B}_\ego -\lambda)u, \zeta^2u_{\sigma\sigma }\rangle =   \langle (\LL_2^+(\ego ) -\lambda)u _\sigma ,\zeta^2u_{\sigma}\rangle - \ego  \langle( - (1+\ego \theta \tilde \chi \tau ) u_{\sigma\sigma} + i  \sigma^2 \tau    u ),  \zeta^2  u_{\sigma\sigma}  \rangle \,.
$$
We have
$$
\begin{array}{ll}
\hat\Gg &= \left(\ego \|  (1+\ego \theta \tilde \chi \tau )^\frac 12  \zeta u_{\sigma\sigma}\|^2_2
 + \| \zeta u_{\tau\sigma} \|^2_2\right) \\ & \quad + i \left( \ego \| \sigma \tau^\frac 12   \zeta u_\sigma\|^2_2 + \| \tau^\frac 12  (1 + \beta\ego  \tau \chi(\ego^b\tau) )^\frac 12 \zeta u_\sigma \|^2_2 \right) \\
&\quad  -\lambda \|\zeta u_\sigma\|^2_2  +2  i\ego  \langle \tau^\frac 12 \sigma    \zeta  u ,  \zeta   \tau^\frac 12 u_{\sigma}  \rangle + 2 \langle \zeta u_{\tau\sigma}\,,\,\zeta'  u_\sigma\rangle \\
&:= \hat \Gg_1 + i \,\hat \Gg_2 + \hat \Gg_3\,,
\end{array}
$$
from which we obtain as in the proof of \eqref{eq:143} (recall that
$\|\ego\theta\tau\tilde{\chi}\|_\infty\leq C\ego^{1-\tilde b}$ and $|\beta\ego \tau
\chi(\ego^b\tau)| \leq C \ego^{1-b}$)
\begin{equation}
\label{eq:147}
  \ego\|\zeta u_{\sigma\sigma}\|^2_2 \leq C \left( \ego^{-1} \|g\|^2_2 +   \|\zeta^\prime u_\sigma\|^2_2 +
   \|\zeta u_\sigma\|^2_2 +\ego^2 \|  \tau^{1/2} \zeta u \|^2_2\right)\,.
\end{equation}
To bound the last  term on the right-hand-side we use 
\eqref{eq:add}, whereas for first two terms we obtain from 
\eqref{eq:142} (with $\xi$ replaced by $\xi/2$)
$$
  \|\zeta^\prime u_\sigma\|_2^2 +  \|\zeta u_\sigma\|_2^2     \leq C \| {\bf 1}_{\tau \geq \xi/2} u_\sigma\|^2_2\leq \hat C \ego^{-1}\|g\|^2_2\,.
$$
 Hence,
\begin{displaymath}
   \ego\|{\mathbf 1}_{\tau\geq\ego^{-a}}u_{\sigma\sigma}\|_2 \leq C_a\|g\|_2\, ,
\end{displaymath}
which completes the proof of
(\ref{eq:125ba}).
 \end{proof}

\subsubsection{Effect 2}
We now address an additional  modification of  $ \B_\ego$, 
resulting from the third term on the right-hand-side of
\eqref{eq:36}. 
 \begin{proposition}
\label{prop:effect-2}
 Let, for $\omega \in \R$, 
   \begin{equation}
 \label{eq:148}
 \tilde{\B}_\ego =\hat{\B}_\ego - 2\omega\ego\partial_\tau  \,,
 \end{equation}
 be defined on $e^{-\ego\omega\tau}D( \B_\ego)$. Let further, for some $0<a < a' <1$,
 $I_\ego=(\ego^{-a},\ego^{-a'})$.

 Then, there exist positive $C$ and
 $\ego_0$ such that, for every $\ego\in (0,\ego_0]$, the circle $ \partial B(\Lambda_\gamma^1, \ego r(\ego))$ is included in $\rho(\tilde{\B}_\ego)$,  and,  for $\lambda \in
 \partial B(\Lambda_\gamma^1, \ego r(\ego))$\,, 
 \begin{subequations}
 \label{eq:149}
   \begin{multline}
  \|{\mathbf 1}_{\tau\leq\ego^{-a'}}(\tilde{\B}_\ego -\lambda)^{-1}{\mathbf
    1}_{\tau\leq\ego^{-a'}}\| + \|{\mathbf
    1}_{\tau\leq\ego^{-a'}}\partial_\tau(\tilde{\B}_\ego -\lambda)^{-1}{\mathbf
    1}_{\tau\leq\ego^{-a'}}\|  \\ + \ego^{1/2}\|{\mathbf
    1}_{\tau\leq\ego^{-a'}}\partial_\sigma (\tilde{\B}_\ego -\lambda)^{-1}{\mathbf
    1}_{\tau\leq\ego^{-a'}}\| + \|{\mathbf
    1}_{\tau\leq\ego^{-a'}}\partial_\tau^2(\tilde{\B}_\ego -\lambda)^{-1}{\mathbf
    1}_{\tau\leq\ego^{-a'}}\|  \\  + \ego\|{\mathbf
    1}_{\tau\leq\ego^{-a'}}\partial_\sigma^2(\tilde{\B}_\ego -\lambda)^{-1}{\mathbf
    1}_{\tau\leq\ego^{-a'}}\|\leq \frac{C}{(r(\ego) +1-\gamma)\ego} \,, 
 \end{multline} 
  \begin{multline}
 \ego^{-a}\|{\mathbf 1}_{\tau\in I_\ego}(\tilde{\B}_\ego -\lambda)^{-1}{\mathbf
    1}_{\tau\leq\ego^{-a'}}\| 
 + \ego^{-a/2}\|{\mathbf
     1}_{\tau\in I_\ego}\partial_\tau(\tilde{\B}_\ego -\lambda)^{-1}{\mathbf
     1}_{\tau\leq\ego^{-a'}}\| 
 \\ +  \ego^{1/2}\|{\mathbf
     1}_{\tau\in I_\ego}\partial_\sigma (\tilde{\B}_\ego -\lambda)^{-1}{\mathbf
     1}_{\tau\leq\ego^{-a'}}\| 
+ \|{\mathbf
    1}_{\tau\in I_\ego}\partial_\tau^2(\tilde{\B}_\ego -\lambda)^{-1}{\mathbf
     1}_{\tau\leq\ego^{-a'}}\| 
 \\ + \ego\|{\mathbf
     1}_{\tau\in I_\ego}\partial_\sigma^2(\tilde{\B}_\ego -\lambda)^{-1}{\mathbf
     1}_{\tau\leq\ego^{-a'}}\| \leq C \,,
 \end{multline}
 
and for every $1/6<{\mathfrak a}<1/4$
 \begin{multline}
    \| {\mathbf 1}_{\tau\leq\ego^{-a'}}{\mathbf 1}_{|\sigma|\geq
      2\ego^{-{\mathfrak a}}}(\tilde{\B}_\ego-\lambda)^{-1}{\mathbf
      1}_{\tau\leq\ego^{-a'}}\| \\ +  \ego^{1/2} \|{\mathbf 1}_{\tau\leq\ego^{-a'}}{\mathbf 1}_{|\sigma|
    \geq 2\ego^{-{\mathfrak a}}}\partial_\sigma(\tilde{\B}_\ego-\lambda)^{-1}{\mathbf 1}_{\tau\leq\ego^{-a'}}\|\leq
  \frac{C } {\ego^{1-2{\mathfrak a}}    }\,. 
 \end{multline}
 \end{subequations}
 \end{proposition}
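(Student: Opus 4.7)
The plan is to reduce everything to the already-established resolvent bounds for $\hat{\B}_\ego$ (namely Proposition~\ref{lem:modified}, the decay estimate \eqref{eq:138}, and the large-$\tau$ estimate of Proposition~\ref{lem:large-tau-V1}) by removing the first-order term $-2\omega\ego\partial_\tau$ via a gauge transformation. A direct calculation shows
\begin{equation*}
e^{\ego\omega\tau}\,\tilde{\B}_\ego\,e^{-\ego\omega\tau}=\hat{\B}_\ego+\ego^2\omega^2,
\end{equation*}
which is the whole point of defining $D(\tilde{\B}_\ego)=e^{-\ego\omega\tau}D(\B_\ego)$. Consequently, setting $\tilde\lambda=\lambda-\ego^2\omega^2$, one has the identity
\begin{equation*}
(\tilde{\B}_\ego-\lambda)^{-1}=e^{-\ego\omega\tau}\,(\hat{\B}_\ego-\tilde\lambda)^{-1}\,e^{\ego\omega\tau}.
\end{equation*}
Since $\ego^2\omega^2=o(\ego\, r(\ego))$ (the assumption $q<1/6$ gives $r(\ego)\gg\ego^q\gg\ego$), the point $\tilde\lambda$ lies on a circle which differs from $\partial B(\Lambda^1_\gamma,r(\ego)\ego)$ only by a factor $1+o(1)$ in radius, so Propositions~\ref{lem:modified}, \eqref{eq:138} and~\ref{lem:large-tau-V1} still apply to $(\hat{\B}_\ego-\tilde\lambda)^{-1}$ after an inessential adjustment of the function~$r$. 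In particular, $\partial B(\Lambda^1_\gamma,r(\ego)\ego)$ lies in $\rho(\tilde{\B}_\ego)$.

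The crucial point is that on the support of ${\mathbf 1}_{\tau\leq\ego^{-a'}}$ we have $|\ego\omega\tau|\leq\ego^{1-a'}|\omega|\to0$, so both $e^{\pm\ego\omega\tau}{\mathbf 1}_{\tau\leq\ego^{-a'}}$ are multiplication operators with norm $\leq e^{\ego^{1-a'}|\omega|}$, which is uniformly bounded for small $\ego$. Therefore the conjugation yields, for any measurable set $E\subset\R^2_+$,
\begin{equation*}
\|{\mathbf 1}_E\,(\tilde{\B}_\ego-\lambda)^{-1}\,{\mathbf 1}_{\tau\leq\ego^{-a'}}\|\leq C\,\|{\mathbf 1}_E\,(\hat{\B}_\ego-\tilde\lambda)^{-1}\|.
\end{equation*}
Estimate (a) follows directly by taking $E=\{\tau\leq\ego^{-a'}\}$ and invoking (\ref{eq:135}a); estimate (c) follows with $E=\{|\sigma|\geq 2\ego^{-\mathfrak a}\}\cap\{\tau\leq\ego^{-a'}\}$ from \eqref{eq:138}; estimate (b) follows with $E=\{\tau\in I_\ego\}$ from Proposition~\ref{lem:large-tau-V1}, noting that ${\mathbf 1}_{\tau\in I_\ego}\leq{\mathbf 1}_{\tau\geq\ego^{-a}}$ and that $a<a'$.

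To handle the derivative estimates, observe that $\sigma$-derivatives commute with $e^{-\ego\omega\tau}$ and therefore pass through the conjugation without any correction, so the bounds (\ref{eq:135}b,c), \eqref{eq:138}, and (\ref{eq:139}a,b) on $\partial_\sigma$ and $\partial_\sigma^2$ of $(\hat{\B}_\ego-\tilde\lambda)^{-1}$ transfer verbatim. For $\tau$-derivatives we use
\begin{align*}
\partial_\tau(e^{-\ego\omega\tau}v)&=e^{-\ego\omega\tau}(\partial_\tau v-\ego\omega v),\\
\partial_\tau^2(e^{-\ego\omega\tau}v)&=e^{-\ego\omega\tau}(\partial_\tau^2 v-2\ego\omega\partial_\tau v+\ego^2\omega^2 v),
\end{align*}
and the extra terms $\ego\omega v$, $\ego\omega\partial_\tau v$, $\ego^2\omega^2 v$ are all of strictly lower order than the dominant derivative in view of Propositions~\ref{lem:modified} and~\ref{lem:large-tau-V1}, and are thus absorbed. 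Finally, for (b), the indicator ${\mathbf 1}_{\tau\in I_\ego}={\mathbf 1}_{\tau\geq\ego^{-a}}{\mathbf 1}_{\tau\leq\ego^{-a'}}$ activates the large-$\tau$ smallness in Proposition~\ref{lem:large-tau-V1}; the powers $\ego^a$, $\ego^{a/2}$, $\ego^{1/2}$ appearing on the left-hand side of (\ref{eq:149}b) match exactly the exponents coming from (\ref{eq:139}a,b) after the $e^{-\ego\omega\tau}$ factor is estimated by a constant. The main obstacle, which is essentially bookkeeping rather than conceptual, is to verify carefully that the neighborhood of $\partial B(\Lambda^1_\gamma,\ego r(\ego))$ swept by the map $\lambda\mapsto\lambda-\ego^2\omega^2$ remains inside the region where the three cited propositions apply; this is settled by the observation $\ego\omega^2=o(r(\ego))$, which is exactly the role of the condition $q<1/6$ in~\eqref{condsurr}.
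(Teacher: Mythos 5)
Your proposal is correct and takes essentially the same route as the paper: conjugate by $e^{\pm\ego\omega\tau}$ to reduce to the resolvent of $\hat{\B}_\ego$ shifted by $\ego^2\omega^2$, use the cutoffs ${\mathbf 1}_{\tau\leq\ego^{-a'}}$ to bound the exponentials, and absorb the lower-order corrections $\ego\omega$, $\ego^2\omega^2$ arising from $\partial_\tau$-conjugation. (Incidentally your sign $e^{\ego\omega\tau}\tilde{\B}_\ego e^{-\ego\omega\tau}=\hat{\B}_\ego+\ego^2\omega^2$ is the correct one — the paper's displayed identity has a sign typo, harmless since $\ego^2\omega^2\ll\ego\,r(\ego)$ in either direction.)
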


 \begin{proof}
    It can be easily verified that
   \begin{displaymath}
     \tilde{\B}_\ego =  e^{-\ego\omega\tau} (\hat{\B}_\ego -\ego^2\omega^2 )e^{\ego\omega\tau}\,.
   \end{displaymath}
  For the first statement in (\ref{eq:149}a), we have
 $$
 \begin{array}{ll}
    \|{\mathbf 1}_{\tau\leq\ego^{-a'}}(\tilde{\B}_\ego -\lambda)^{-1}{\mathbf
    1}_{\tau\leq\ego^{-a'}}\| & =\| e^{-\ego\omega\tau}  {\mathbf 1}_{\tau\leq\ego^{-a'}} (\hat{\B}_\ego -\lambda -\ego^2\omega^2 )^{-1}{\mathbf
    1}_{\tau\leq\ego^{-a'}} e^{\ego\omega\tau}\|\\
    &\leq e^{2|\omega| \ego^{1-a'}}\|  (\hat{\B}_\ego -\lambda -\ego^2\omega^2 )^{-1}\|\,,
    \end{array}
    $$
    and we can use \eqref{eq:135} (note that  \eqref{eq:135} is valid
     in the ring 
\begin{displaymath}
  A_\ego=B (\Lambda^1_\gamma,2 \ego r(\ego))\setminus B (\Lambda^1_\gamma,\frac 12 \ego
     r(\ego))
\end{displaymath}
and that $\partial B (\Lambda^1_\gamma + \ego^2\omega^2 ,\ego r(\ego)) \subset
    A_\ego$     for $\ego_0>0$ small enough) to obtain
    \begin{displaymath}
      \|{\mathbf 1}_{\tau\leq\ego^{-a'}}(\tilde{\B}_\ego -\lambda)^{-1}{\mathbf
    1}_{\tau\leq\ego^{-a'}}\| \leq \frac{C}{(r(\ego) +1-\gamma)\ego} \,.
    \end{displaymath}
   The rest of the inequalities embedded in (\ref{eq:149}a) are
    similarly obtained by using
    (\ref{eq:135}b) and (\ref{eq:135}c).

To bound the first term on the right-hand-side of (\ref{eq:149}b), we use
    \eqref{eq:139} to obtain
\begin{displaymath}
   \|{\mathbf 1}_{\tau\in I_\ego}(\tilde{\B}_\ego
   -\lambda)^{-1}{\mathbf 1}_{\tau\leq\ego^{-(1-a)}}\| \leq e^{2|\omega| \ego^{1-a'}}  \|{\mathbf 1}_{\ego^{-a}\leq\tau}(\hat{\B}_\ego
   -\lambda-\ego^2\omega^2)^{-1}\| \leq C_a\,.
\end{displaymath}
As
\begin{displaymath}
  \partial_\tau(\tilde{\B}_\ego
   -\lambda)^{-1}= \partial_\tau e^{\ego\omega\tau} (\hat{\B}_\ego
   -\lambda-\ego^2\omega^2)^{-1}e^{-\ego\omega\tau}=e^{\ego\omega\tau} [\partial_\tau+\ego\omega](\hat{\B}_\ego
   -\lambda-\ego^2\omega^2)^{-1}e^{\ego\omega\tau}\,,
\end{displaymath}
 we may conclude, once again with the aid of \eqref{eq:139}, that 
\begin{displaymath}
   \|{\mathbf 1}_{\tau\in I_\ego}\partial_\tau(\tilde{\B}_\ego
   -\lambda)^{-1}{\mathbf 1}_{\tau\leq\ego^{-(1-a)}}\| \leq C_a\,  \ego^\frac a 2\,.
\end{displaymath} 
The rest of the inequalities in \eqref{eq:149} can be proved in a
similar manner. 
\end{proof}
\subsection{A linear potential estimate}
We conclude this section by the following  estimate,
which is somewhat similar to \cite[Lemma 7.5]{AGH}. Let
\begin{displaymath}
  \check \B_\delta= -\partial^2_\tau +i(1+\delta)\tau - \partial^2_\sigma \,,
\end{displaymath}
be defined on
\begin{displaymath}
  D(\check \B_\delta)=\{u\in H^2(\R^2_+)\cap H^1_0(\R^2_+) \,|\,\tau u\in L^2(\R^2_+)\,\}.
\end{displaymath}
\begin{proposition}
  For any $a>0$, there exist $\delta_0>0$ and $C>0$ such that for all $\delta\in
  (0,\delta_0]$, $p<\frac{2}{3\, \Re \vartheta_1}$, and $\Re\lambda\leq \Re\vartheta_1+p\delta$ we have
  \begin{subequations}
    \label{eq:150}
    \begin{equation}
\|{\mathbf 1}_{\tau\geq\delta^{-a}}(\check \B_\delta-\lambda)^{-1}\|+\|{\mathbf
  1}_{\tau\geq\delta^{-a}}\partial_\tau(\check \B_\delta-\lambda)^{-1}\|\leq C\,, 
    \end{equation}
and
\begin{equation}
  \|\partial_\sigma(\check \B_\delta-\lambda)^{-1}\| \leq \frac{C}{\delta^{1/2}} \,.
\end{equation}
  \end{subequations}
\end{proposition}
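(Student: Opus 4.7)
The plan is to exploit the separability $\check \B_\delta = (-\partial_\sigma^2) \otimes I + I \otimes \LL_+^\delta$, where $\LL_+^\delta := -\partial_\tau^2 + i(1+\delta)\tau$ is the Dirichlet realization on $\R_+$. A Fourier transform $\FF_\sigma$ in $\sigma$ yields
\begin{displaymath}
(\check \B_\delta - \lambda)^{-1} = \FF_\sigma^{-1}(\LL_+^\delta + \xi^2 - \lambda)^{-1}\FF_\sigma.
\end{displaymath}
The dilation $\tau \mapsto (1+\delta)^{-1/3}\tau$ conjugates $\LL_+^\delta$ to $(1+\delta)^{2/3}\LL^+$, so its eigenvalues are $\vartheta_k^\delta := (1+\delta)^{2/3}\vartheta_k$ with the associated eigenfunctions $v_k^\delta$ obtained by the same dilation (and renormalization). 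I would split the one-dimensional resolvent via the Riesz projection $\Pi_1^\delta$ at $\vartheta_1^\delta$:
\begin{displaymath}
(\LL_+^\delta + \xi^2 - \lambda)^{-1} = \frac{\Pi_1^\delta}{\vartheta_1^\delta + \xi^2 - \lambda} + (I-\Pi_1^\delta)(\LL_+^\delta + \xi^2 - \lambda)^{-1},
\end{displaymath}
and treat the two pieces separately.

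For (b), the pole part reduces on the Fourier side to the scalar multiplier $|\xi|/|\vartheta_1^\delta + \xi^2 - \lambda|$. Expanding $(1+\delta)^{2/3} = 1 + (2/3)\delta + O(\delta^2)$ gives $\Re \vartheta_1^\delta - \Re\lambda \geq \delta[(2/3)\Re\vartheta_1 - p] + O(\delta^2)$, and since $\Re\vartheta_1 = |\nu_1|/2 > 1$ the hypothesis $p < 2/(3\Re\vartheta_1)$ ensures $(2/3)\Re\vartheta_1 - p$ is positive and bounded below uniformly. Hence $|\vartheta_1^\delta + \xi^2 - \lambda| \geq c\delta + \xi^2$, and maximizing $|\xi|/(c\delta + \xi^2)$ yields the sharp bound $1/(2\sqrt{c\delta})$, giving the $O(\delta^{-1/2})$ rate. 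For the regular part, applying the real-part energy identity to $u = (I-\Pi_1^\delta)(\LL_+^\delta + \xi^2 - \lambda)^{-1}f$,
\begin{displaymath}
\|\partial_\tau u\|_2^2 + \xi^2\|u\|_2^2 - \Re\lambda\,\|u\|_2^2 = \Re\langle f,u\rangle,
\end{displaymath}
combined with the uniform bound $\|u\|_2 \leq C\|f\|_2$ (inherited from \eqref{eq:AGH} via the dilation, valid since $\Re\lambda \leq \Re\vartheta_1 + p\delta$ stays below $\Re\vartheta_2^\delta$ for small $\delta$), yields $|\xi|\,\|u\|_2 \leq C\|f\|_2$ uniformly in $\xi$ and $\delta$. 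Combining the two contributions completes (b).

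For (a), the key input is the super-exponential decay of $v_1^\delta$: either from the explicit formula $v_1^\delta(\tau) = C_\delta\Ai\!\bigl(e^{i\pi/6}(1+\delta)^{1/3}\tau + \nu_1\bigr)$ or by an Agmon-type estimate as in the proof of (\ref{eq:83}), one obtains
\begin{displaymath}
\|\mathbf{1}_{\tau\geq\delta^{-a}}v_1^\delta\|_2 + \|\mathbf{1}_{\tau\geq\delta^{-a}}\partial_\tau v_1^\delta\|_2 \leq C_a e^{-c_a\delta^{-3a/2}}.
\end{displaymath}
Since the rank-one pole part of $(\check\B_\delta - \lambda)^{-1}$ acts as $f \mapsto \FF_\sigma^{-1}\bigl((\vartheta_1^\delta + \xi^2 - \lambda)^{-1}\langle\FF_\sigma f(\xi,\cdot), \bar v_1^\delta\rangle\bigr)\,v_1^\delta(\tau)$, multiplying by $\mathbf{1}_{\tau\geq\delta^{-a}}$ produces an operator of norm $O(\delta^{-1}e^{-c_a\delta^{-3a/2}}) = O(\delta^{\infty})$, and the same holds after applying $\partial_\tau$. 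The regular part is already uniformly bounded without the cutoff, and the energy identity above also gives $\|\partial_\tau u\|_2 \leq C\|f\|_2$ on the regular part; assembling these pieces yields (a).

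The main obstacle is the clean separation of the pole and regular contributions uniformly in $(\xi,\Im\lambda,\delta)$, together with the verification of the multiplier estimate $\sup_\xi |\xi|/(c\delta+\xi^2) = O(\delta^{-1/2})$ — here the worst case is $\Im\lambda = \Im\vartheta_1^\delta$, which forces the balance between $|\xi|$ and the Airy-type gap $\xi^2 + c\delta$. Once this separation is established, both (a) and (b) follow directly, with the cutoff in (a) neutralizing the pole through the Agmon decay of $v_1^\delta$ and (b) absorbing it into the $\delta^{-1/2}$ factor.
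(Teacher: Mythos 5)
Your argument is correct and is essentially the same as the paper's: both approaches separate the resolvent through the Riesz projection onto the first Airy mode $v_1$ (equivalently $v_1^\delta$), reduce the rank-one part to a scalar whose real part is bounded below by $c\delta + \xi^2$ (your explicit Fourier multiplier computation $\sup_\xi |\xi|/(c\delta+\xi^2)=O(\delta^{-1/2})$ replaces the paper's integration by parts in $s$, but yields the same thing), control the $(I-\Pi_1)$ part via the uniform resolvent bound inherited from $\LL^+$, and obtain part (a) from the super-exponential decay of $v_1$ and $v_1'$. The only cosmetic difference is that the paper performs the dilation $(t,s)=(1+\delta)^{1/3}(\tau,\sigma)$ up front to reduce $\check\B_\delta$ to $\check\B_0$, whereas you track $\vartheta_1^\delta=(1+\delta)^{2/3}\vartheta_1$ directly.
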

\begin{proof}
  Since the proof is similar to the proof of \cite[Lemma 7.5]{AGH} we
  bring only its outlines. We first apply the transformation $(t,
  s)=(1+\delta)^{1/3}(\tau,\sigma)$ and argue for $\lambda^\prime=(1+\delta)^{-2/3}\lambda\,$. The
  operator then assumes the form \break $\check \B_0= - \pa_t^2 + i t
  -\pa_s^2$, and $\lambda^\prime$ satisfies $\Re \lambda^\prime \leq (\Re \vartheta_1 + p \delta)
  (1+\delta)^{-\frac 23}$.

  We next observe that
  \begin{equation}
\label{eq:231}
    \|(\check \B_0-\lambda')^{-1}(I-\Pi_1)\|+ \| \nabla_{t,s} (\check
    \B_0-\lambda')^{-1}(I-\Pi_1)\|\leq C \,.
  \end{equation}
Consequently, the proof of (\ref{eq:150}a) follows immediately from
the decay 
of the Airy function $v_1$ and its derivative. To prove
(\ref{eq:150}b) we begin by writing
\begin{displaymath}
  (\check \B_0-\lambda^\prime) \Pi_1= (-\partial_s^2+ \vartheta_1-\lambda^\prime)\Pi_1\,.
\end{displaymath}
Integration by parts then yields, using the fact that $\Re \lambda^\prime < \Re
\vartheta_1$ for sufficiently small $\delta_0$,
\begin{displaymath}
  \|\partial_s(\check \B_0-\lambda')^{-1}\Pi_1\|_2^2 \leq \|(\check
  \B_0-\lambda')^{-1}\Pi_1\|\, \|\Pi_1\|  \,,
\end{displaymath}
 which together with (\ref{eq:231}) yields (\ref{eq:150}b) . 
\end{proof}

   \section{Simplified operators: V2 potentials}
   \label{sec:simplified-V2}
   In this section we estimate the resolvent norm of the operator whose
   eigenvalues were formally found in \eqref{eq:51}.  For convenience,
   we use an even extension to $(\sigma,\tau)\in\R\times\R_+$, instead of
   considering it on $\R_+\times\R_+$ with a Neumann boundary condition for
   $\sigma=0$, as in Section \ref{sec:quazimode-V2}.
  
   \subsection{Definition and preliminary estimates}

We begin by defining for $\varepsilon >0$
\begin{equation}
\label{eq:151}
  \Ug_\varepsilon   = -[1+\varepsilon |\sigma|]\partial^2_\tau -\varepsilon \partial^2_\sigma + i\tau\,.
\end{equation}
To define $\Ug_\varepsilon$ and characterize its domain, we look at the
typical case $\varepsilon=1$.  We start from the bilinear form given by
\begin{displaymath}
  a(u,v)= \langle u_\tau,(1+|\sigma|)v_\tau\rangle+ \langle u_\sigma,v_\sigma\rangle +i\langle u,\tau v\rangle\,, 
\end{displaymath}
defined on $\Vg\times\Vg$ where
\begin{displaymath}
  \Vg = \{ u\in H^1_0(\R^2_+)\,|\,| \tau^{1/2}u\in L^2(\R^2_+) \;;\;
  |\sigma|^{1/2}u_\tau\in L^2(\R^2_+) \} \,,
\end{displaymath}
is equipped with the norm
\begin{displaymath}
  \|u\|_\Vg^2= \|u_\sigma\|_2^2 + \|(1+|\sigma|)^{1/2}u_\tau\|_2^2 +
  \|(1+\tau^{1/2})u\|_2^2\,.
\end{displaymath}
It can be easily verified that
\begin{displaymath}
  |a(u,v)| \leq \|u\|_\Vg\|v\|_\Vg\,,
\end{displaymath}
and that there exists $c>0$ such that
\begin{displaymath}
   |a(u,u)|\geq c \|u\|_\Vg^2 \,,\, \forall u\in \mathcal V\,.
\end{displaymath}
It follows from the Lax-Milgram Theorem (cf. \cite{Helbook,AH}) that we can define 
$\mathcal U_1$ as a closed semibounded operator on $L^2(\R^2_+)$, whose
domain is given by
$$
\begin{array}{ll}
D(\mathcal U_1)&  =\{ u\in \Vg, s.t \,  \Vg \ni v\mapsto a(u,v) \mbox{ can be
  extended} \\ 
& \qquad \qquad \mbox{ as a continuous antilinear map on } L^2(\R^2_+)\}\,.
\end{array}
$$
Since we consider a Dirichlet problem, we then have 
\begin{equation}
  \label{eq:152}
D(\mathcal U_1) =\{u\in \Vg, \, s.t.\; \mathcal U_1 u \in L^2(\R^2_+)\}\,.
\end{equation}
Moreover  (see the proof of Lemma 3.3 in \cite{AH})  the
subspace  $\tilde{ \mathcal V}$ of the functions in  
  $ C^\infty(\overline{\R^2_+})\cap H^1_0(\R^2_+)$ compactly  supported in
  $\overline{\R^2_+}$ is dense in $\mathcal V$ and in  $D(\mathcal
  U_1)$ for the graph norm. 
\begin{lemma}
$\mathcal U_1$ has compact resolvent.
\end{lemma}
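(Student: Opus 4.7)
The strategy is to show that the form domain $\mathcal{V}$ embeds compactly into $L^2(\mathbb{R}^2_+)$; since $D(\mathcal{U}_1)\subset\mathcal{V}$ with continuous inclusion, this implies $\mathcal{U}_1^{-1}:L^2\to L^2$ is compact. By the standard Rellich--Kondrachov argument combined with a tightness argument, it suffices to show that for any bounded sequence $(u_n)\subset\mathcal{V}$, one has both $H^1$-control and uniform decay of $\|u_n\|_{L^2(\{|\sigma|>R\}\cup\{\tau>R\})}\to 0$ as $R\to\infty$.

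The $H^1(\mathbb{R}^2_+)$-bound and the tightness as $\tau\to\infty$ are immediate from the defining norm of $\mathcal{V}$: indeed $\|(u_n)_\sigma\|_2+\|(u_n)_\tau\|_2+\|\tau^{1/2}u_n\|_2\leq \|u_n\|_\mathcal{V}$, so Chebyshev gives $\int_{\tau>R}|u_n|^2\leq R^{-1}\|u_n\|_\mathcal{V}^2$. The non-trivial point, and the main obstacle, is tightness as $|\sigma|\to\infty$: the $\mathcal{V}$ norm does not directly weight $u$ by a function of $\sigma$ that tends to infinity.

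To overcome this, I would exploit the Dirichlet condition at $\tau=0$ through a one-dimensional ground state estimate. For each fixed $\sigma\in\mathbb{R}$, the operator $L_\sigma:=-(1+|\sigma|)\partial_\tau^2+\tau$ on $L^2(\mathbb{R}_+,d\tau)$ with Dirichlet condition at $0$ is, after the scaling $t=\tau(1+|\sigma|)^{-1/3}$, unitarily equivalent to $(1+|\sigma|)^{1/3}(-\partial_t^2+t)$; its lowest eigenvalue (of its self-adjoint real part, which coincides with $L_\sigma$ here) is $|\nu_1|(1+|\sigma|)^{1/3}$, where $\nu_1$ is the rightmost zero of Airy's function. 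Consequently, for every $v\in H^1_0(\mathbb{R}_+)$ with $\tau^{1/2}v\in L^2$:
\begin{equation*}
(1+|\sigma|)\int_{\mathbb{R}_+}|v'(\tau)|^2\,d\tau+\int_{\mathbb{R}_+}\tau|v(\tau)|^2\,d\tau\geq |\nu_1|(1+|\sigma|)^{1/3}\int_{\mathbb{R}_+}|v(\tau)|^2\,d\tau.
\end{equation*}

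Applying this pointwise in $\sigma$ to $v(\tau)=u(\sigma,\tau)$ for $u\in\tilde{\mathcal{V}}$ and then integrating in $\sigma$ yields
\begin{equation*}
\|(1+|\sigma|)^{1/2}u_\tau\|_2^2+\|\tau^{1/2}u\|_2^2\geq |\nu_1|\,\|(1+|\sigma|)^{1/6}u\|_2^2.
\end{equation*}
Combining with $\|u_\sigma\|_2^2$ one sees that the full $\mathcal{V}$-norm controls $\|(1+|\sigma|)^{1/6}u\|_2$, which by density extends to all of $\mathcal{V}$. This supplies the missing tightness in $\sigma$: $\int_{|\sigma|>R}|u_n|^2\leq C R^{-1/3}\|u_n\|_\mathcal{V}^2$.

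With both tightness estimates in hand, a standard diagonal extraction argument (local $H^1$-boundedness yields $L^2$-convergence on each ball $B(0,R)\cap\mathbb{R}^2_+$ by Rellich; the two tightness bounds then convert this local convergence into convergence in $L^2(\mathbb{R}^2_+)$) produces a Cauchy subsequence. Hence $\mathcal{V}\hookrightarrow L^2(\mathbb{R}^2_+)$ is compact, and therefore so is the resolvent of $\mathcal{U}_1$.
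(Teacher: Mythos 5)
Your proposal is correct and follows essentially the same route as the paper: both apply the Dirichlet Airy ground-state bound fiberwise in $\sigma$ to convert the terms $\|(1+|\sigma|)^{1/2}u_\tau\|_2^2+\|\tau^{1/2}u\|_2^2$ of the $\mathcal V$-norm into control of $\|(1+|\sigma|)^{1/6}u\|_2^2$, and then combine this weighted bound (whose weight blows up at infinity) with local $H^1$ control to get compactness of $\mathcal V\hookrightarrow L^2(\R^2_+)$. The only cosmetic difference is that you phrase the conclusion via tightness plus Rellich on balls, while the paper invokes the compact embedding of the weighted $L^2$ space directly.
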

\begin{proof}
The operator being semi-bounded it is enough to prove the compact
injection of $\mathcal V$ into $L^2(\mathbb R^2_+)$. We observe that,
for all $u\in \tilde{ \mathcal V}$, 
  \begin{displaymath}
\begin{array}{ll}
\|u\|^2_{\mathcal V} & \geq     \int_{\R_+^2} (|\sigma|\, |u_\tau(\sigma,\tau) |^2 +
\tau|u (\sigma,\tau)|^2)\, d\sigma  d\tau\\
&\geq  \frac 12   \int_{\R_+} \left(\int_\mathbb R (|\sigma|\, |u_\tau(\sigma,\tau) |^2 +
\tau|u (\sigma,\tau)|^2)\, d\sigma \right) d\tau + \frac 12  \int_{\R_+^2}
\tau|u (\sigma,\tau)|^2\, d\sigma  d\tau  \\   &   \geq  \frac{|\nu_1|}{ 2} \, \int_{\R_+^2} | \sigma| ^{1/3} |u (\sigma,\tau)|^2 d \sigma \,d\tau\,
+\frac 12\,  \int_{\R_+^2} \tau |u(\sigma,\tau)|^2\,d\sigma d\tau    \,,
\end{array}
  \end{displaymath}
  where we recall that $|\nu_1|$ is the first eigenvalue of  the
  Dirichlet realization of the Airy  operator $D_\tau^2 +\tau$ in $\mathbb
  R^+$.\\ 
  By density the inequality is true for $u\in \mathcal V$ and proves  the
  continuous injection of $\Vg$ into an $L^2$ weighted space whose weight
  $(|\tau| + |\sigma|^\frac 13)$ tends to $+\infty$ as $(|\sigma| +|\tau|)$ tends to
  $+\infty$. This injection combined with the fact that $\mathcal V\subset
  H^{1,loc}(\overline{\mathbb R^2_+})$ completes the proof of the
  lemma.
\end{proof}

\begin{lemma}
\begin{equation}
\label{eq:153}
D(\mathcal U_1)= \{u\in \mathcal V\,,\, (1+|\sigma|)u_{\tau\tau} \in L^2\mbox{ and } u_{\sigma\sigma} \in L^2\}\,.
\end{equation}
\end{lemma}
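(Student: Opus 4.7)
The proof splits naturally into two inclusions, with almost all the work in $\subset$.

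\medskip
\noindent\textbf{Inclusion $\supset$.} Given $u\in\mathcal V$ satisfying $(1+|\sigma|)u_{\tau\tau}\in L^2$ and $u_{\sigma\sigma}\in L^2$, set $Pu:=-(1+|\sigma|)u_{\tau\tau}-u_{\sigma\sigma}\in L^2$. Since distributionally $\mathcal U_1 u=Pu+i\tau u$, it suffices to show $\tau u\in L^2$. To this end I would test the distributional identity against $\bar u\,\chi_n^2$ for a sequence of smooth cutoffs $\chi_n$ exhausting $\overline{\mathbb R^2_+}$ and extract the imaginary part to control $\|\chi_n\tau^{1/2}u\|^2$, then repeat with the heavier weight $\tau\bar u\chi_n^2$ to produce $\|\chi_n\tau u\|^2$ on the left, absorbing a commutator error of size $\|\chi_n'\|_\infty$ using the $\mathcal V$-norm already available. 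Fatou yields $\tau u\in L^2$, hence $\mathcal U_1 u\in L^2$ and $u\in D(\mathcal U_1)$.

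\medskip
\noindent\textbf{Inclusion $\subset$.} This is elliptic regularity for the uniformly elliptic operator $P=-(1+|\sigma|)\partial_\tau^2-\partial_\sigma^2$, whose only non-smoothness is the jump of $\partial_\sigma(1+|\sigma|)$ at $\sigma=0$. I would proceed in two steps.

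\emph{Local $H^2$.} Interior and boundary (at $\tau=0$) regularity follow from Nirenberg's difference-quotient method applied tangentially, re-injecting the equation $Pu=\mathcal U_1 u-i\tau u\in L^2_{\mathrm{loc}}$. Near $\sigma=0$ I would use a smooth partition of unity in $\sigma$ to reduce to two half-problems where $1\pm\sigma$ is smooth, so no regularity is lost at the kink. This gives $u\in H^2_{\mathrm{loc}}(\overline{\mathbb R^2_+})$.

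\emph{Global estimate.} By density of $\tilde{\mathcal V}$ in $D(\mathcal U_1)$ for the graph norm (stated in the paper), it is enough to prove
\begin{displaymath}
\|(1+|\sigma|)u_{\tau\tau}\|^2+\|u_{\sigma\sigma}\|^2+\|\tau u\|^2\leq C\bigl(\|\mathcal U_1 u\|^2+\|u\|_{\mathcal V}^2\bigr)
\end{displaymath}
for $u\in\tilde{\mathcal V}$. Expanding
\begin{displaymath}
\|\mathcal U_1 u\|^2=\|Pu\|^2+\|\tau u\|^2+2\,\mathrm{Im}\,\langle Pu,\tau u\rangle,
\end{displaymath}
and using $[P,\tau]=-2(1+|\sigma|)\partial_\tau$ together with formal self-adjointness of $P$ under the Dirichlet condition at $\tau=0$, the cross term becomes
\begin{displaymath}
\mathrm{Im}\,\langle Pu,\tau u\rangle=-\mathrm{Im}\int_{\mathbb R^2_+}(1+|\sigma|)\,u_\tau\,\bar u\,d\sigma\,d\tau\,,
\end{displaymath}
which is absorbed into the positive terms via Cauchy--Schwarz and an $\ego$-inequality, using the $\mathcal V$-control on $(1+|\sigma|)^{1/2}u_\tau$. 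Expanding also
\begin{displaymath}
\|Pu\|^2=\|(1+|\sigma|)u_{\tau\tau}\|^2+\|u_{\sigma\sigma}\|^2+2\,\mathrm{Re}\,\langle(1+|\sigma|)u_{\tau\tau},u_{\sigma\sigma}\rangle,
\end{displaymath}
and reducing the cross term by a double integration by parts (the boundary contributions at $\tau=0$ vanish thanks to the $H^1_0$-condition, those at $\sigma=\pm\infty$ by compact support), one peels $\|Pu\|^2$ into the two individual squares up to lower-order terms bounded by $\|u\|_{\mathcal V}^2$. Passing to the limit $n\to\infty$ concludes both that $(1+|\sigma|)u_{\tau\tau},u_{\sigma\sigma},\tau u\in L^2$.

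\medskip
\noindent\textbf{Main obstacle.} The delicate point is controlling $\mathrm{Im}\,\langle Pu,\tau u\rangle$: the commutator identity produces an integral involving a $(1+|\sigma|)$-weight on $u$ itself, a weight not included in the $\mathcal V$-norm. One must distribute the weight between $u_\tau$ and $u$ by Cauchy--Schwarz and then absorb the $u$-side into the positive $\|\tau u\|^2$ via a sharp $\ego$-inequality. A second, related subtlety is the integration-by-parts in $\sigma$ that splits $\|Pu\|^2$: the Lipschitz-only kink of $1+|\sigma|$ at $\sigma=0$ generates a jump term in $\partial_\sigma(1+|\sigma|)$ that must be shown to cancel, which is guaranteed by the evenness properties of $u_{\tau\tau}u_{\sigma\sigma}$-type pairings and the absence of a boundary at $\sigma=0$.
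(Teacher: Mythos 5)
Your $\subset$ argument starts from the same identity the paper uses, namely the expansion of $\|\mathcal U_1 u\|^2$ into $\|(1+|\sigma|)u_{\tau\tau}+u_{\sigma\sigma}\|^2+\|\tau u\|^2$ plus a cross term $\Im\langle u_\tau,(1+|\sigma|)u\rangle$, and the final split of $\|(1+|\sigma|)u_{\tau\tau}+u_{\sigma\sigma}\|^2$ into the two squares via an integration by parts in $\sigma$ is also the paper's step. But the device you propose for absorbing the cross term is not the right one and, as written, does not close. The weight in $\langle u_\tau,(1+|\sigma|)u\rangle$ is in $\sigma$, so after Cauchy--Schwarz you need $(1+|\sigma|)^{1/2}u$ in $L^2$, which is \emph{not} part of the $\mathcal V$-norm; and absorbing it into $\|\tau u\|^2$ cannot help because that term carries a $\tau$-weight, not a $\sigma$-weight. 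The paper's actual mechanism is to first pair $\mathcal U_1 u$ against $(1+|\sigma|)^{1/3}u$: this produces the intermediate weighted bounds $\|(1+|\sigma|)^{1/3}u\|$, $\|(1+|\sigma|)^{2/3}u_\tau\|$, $\|(1+|\sigma|)^{1/6}u_\sigma\|\lesssim \|\mathcal U_1 u\|+\|u\|$, and only then does Cauchy--Schwarz with the split $(1+|\sigma|)=(1+|\sigma|)^{2/3}(1+|\sigma|)^{1/3}$ close. Without this extra pairing your estimate has a genuine gap. A second, smaller point: the term $\Re\langle u_{\tau\sigma},\sign\sigma\,u_\tau\rangle$ created by the $\sigma$-kink of $1+|\sigma|$ does not cancel — the paper just bounds it by $\|u_\tau\|^2$ using Cauchy--Schwarz; claiming a cancellation from "evenness" is unfounded. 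Your preliminary local-$H^2$ step is not needed: the paper works directly on the dense subspace $\tilde{\mathcal V}$ and passes to the limit in the graph norm.

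On $\supset$: the paper's proof only establishes the inclusion $D(\mathcal U_1)\subset\{u\in\mathcal V:\,(1+|\sigma|)u_{\tau\tau},\,u_{\sigma\sigma}\in L^2\}$, and I would advise against trying to prove the reverse inclusion at all, because it appears to be false. Take $u(\sigma,\tau)=\phi(\sigma)\eta(\tau)\tau^{-5/4}$ with $\phi\in C_c^\infty(\mathbb R)$ and $\eta$ smooth, $\eta\equiv 0$ near $\tau=0$, $\eta\equiv 1$ for $\tau\geq 2$. Then $\tau^{1/2}u$, $|\sigma|^{1/2}u_\tau$, $u_\sigma$, $(1+|\sigma|)u_{\tau\tau}$, $u_{\sigma\sigma}$ are all in $L^2$ (the worst tail exponent is $\int^\infty\tau^{-3/2}\,d\tau$), so $u$ lies in the right-hand set. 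But $\tau u\sim\tau^{-1/4}$ gives $\int^\infty\tau^{-1/2}\,d\tau=\infty$, so $\tau u\notin L^2$ and hence $\mathcal U_1u\notin L^2$, i.e.\ $u\notin D(\mathcal U_1)$. Consistent with this, your proposed bootstrap — testing the distributional identity against $\tau\bar u\chi_n^2$ — cannot produce $\|\tau u\|^2$: pairing $-(1+|\sigma|)u_{\tau\tau}-u_{\sigma\sigma}$ with $\tau u\chi_n^2$ and integrating by parts yields weighted gradient terms $\|(1+|\sigma|)^{1/2}\tau^{1/2}u_\tau\chi_n\|^2$, $\|\tau^{1/2}u_\sigma\chi_n\|^2$ and lower-order commutators, but never $\|\tau u\chi_n\|^2$; the only place $\|\tau u\|^2$ appears is in $\langle i\tau u,\tau u\rangle$, which requires already knowing $\mathcal U_1u\in L^2$. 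In short: the half of your proof devoted to $\supset$ should be removed, and the $\subset$ half needs the $(1+|\sigma|)^{1/3}$ pairing to be complete.
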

\begin{proof}
It is enough to establish an inequality for $u\in \tilde{\mathcal
  V}$. We begin with the identity
\begin{equation}\label{starteq}
  \|\Ug_1u\|_2^2 = \| (1+|\sigma|)u_{\tau\tau}+u_{\sigma\sigma}\|_2^2 + \|\tau u\|_2^2
  -   2\Im\langle u_\tau,(1+|\sigma|)u\rangle\,.
\end{equation}
Then, we deduce from
\begin{displaymath}
\begin{array}{ll}
     \langle (1+|\sigma|)^{1/3}u,\Ug_1u\rangle & 
      =  \|
     (1+|\sigma|)^{\frac{2}{3}}u_\tau\|_2^2+ \|(1+|\sigma|^{1/6})u_\sigma\|_2^2 \\ &  \quad + i 
     \|(1+|\sigma|)^{1/6}\tau^{1/2}u\|_2^2  \\
     & \quad 
     +  \frac{1}{3} \langle\sign \sigma
     (1+|\sigma|)^{-2/3}u,u_\sigma\rangle \,,
     \end{array}
\end{displaymath}
that for any $\epsilon>0$ there exists $C_\epsilon>0$ such that 
$$
\begin{array}{l}
  \|
     (1+|\sigma|)^{\frac{2}{3}}u_\tau\|_2^2+ \|(1+|\sigma|^{1/6})u_\sigma\|_2^2 +    \|(1+|\sigma|)^{1/6}\tau^{1/2}u\|_2^2\\
     \quad   \leq | \frac{1}{3} \langle\sign \sigma
     (1+|\sigma|)^{-2/3}u,u_\sigma\rangle \,| + C_\epsilon \| \Ug_1u\|^2 + \epsilon \| (1+|\sigma|)^{1/3}u\|^2\,,
     \end{array}
$$
from which we conclude 
$$
\begin{array}{l}
  \|
     (1+|\sigma|)^{\frac{2}{3}}u_\tau\|_2^2+ \|(1+|\sigma|^{1/6})u_\sigma\|_2^2 +    \|(1+|\sigma|)^{1/6}\tau^{1/2}u\|_2^2\\
     \quad   \leq 
       C_\epsilon (\| \Ug_1u\|^2+ \|u\|^2)  + \epsilon \| (1+|\sigma|)^{1/3}u\|^2\,.
     \end{array}
$$
On the other hand, we have  (see the proof of the previous lemma)
\begin{displaymath}
  \|(1+|\sigma|)^{\frac{2}{3}}u_\tau\|_2^2+ \|(1+|\sigma|)^{1/6}\tau^{1/2}u\|_2^2
  \geq |\nu_1| \,\|(1+|\sigma|)^{1/3}u\|_2^2\,.
\end{displaymath}
For sufficiently small $\epsilon >0$ the above two inequalities imply
\begin{equation}\label{eq:154}
  \|(1+|\sigma|)^{1/3}u\|_2^2  +  \| (1+|\sigma|)^{\frac{2}{3}}u_\tau\|_2^2+ \|(1+|\sigma|^{1/6})u_\sigma\|_2^2 
     \leq 
       C  (\| \Ug_1u\|^2+ \|u\|_2^2) \,.
  \end{equation}
Returning to \eqref{starteq}, we estimate the third term of the right
hand side in the following manner
\begin{equation*}
  2|\langle u_\tau,(1+|\sigma|)u\rangle|\leq\|(1+|\sigma|)^{\frac{2}{3}}u_\tau\|_2^2+\|(1+|\sigma|)^{1/3}u\|_2^2 \leq
  C (\|\Ug_1u\|_2^2+ \|u\|^2_2)\,,
\end{equation*}
to obtain
\begin{equation}
\label{eq:155}
  \| (1+|\sigma|)u_{\tau\tau}+u_{\sigma\sigma}\|_2^2 + \|\tau u\|_2^2\leq C
  (\|\Ug_1u\|_2^2 +\|u\|_2^2)\,.
\end{equation}
Finally, since
\begin{displaymath}
  \Re\langle(1+|\sigma|)u_{\tau\tau},u_{\sigma\sigma}\rangle= \|[1+|\sigma|]^{1/2}u_{\tau\sigma}\|_2^2+\Re\langle u_{\tau\sigma},\sign \sigma\,
  u_\tau\rangle\,,
\end{displaymath}
(where we have used the fact that $u_\sigma$ and $u_{\sigma\sigma}$ vanish on $\partial\R^2_+$ when $u\in \tilde{\Vg}$)
we may conclude that
\begin{displaymath}
  \| (1+|\sigma|)u_{\tau\tau}+u_{\sigma\sigma}\|_2^2 \geq  \|
  (1+|\sigma|)u_{\tau\tau}\|_2^2+\|u_{\sigma\sigma}\|_2^2 - \|u_\tau\|_2^2 \,.
\end{displaymath}
By \eqref{eq:154} and \eqref{eq:155} we then obtain
\begin{equation}
\label{eq:156}
  \| (1+|\sigma|)u_{\tau\tau}\|_2^2+\|u_{\sigma\sigma}\|_2^2 \leq  C
  (\|\Ug_1u\|_2^2 +\|u\|_2^2)\,,\,\forall u\in \tilde{\Vg}\,.
\end{equation}
By density \eqref{eq:156} is extended to $u\in D( \Ug_1)$ establishing,
thereby, \eqref{eq:153}. 
\end{proof}

\subsection{Large $|\sigma|$ simplification}
In the similar fashion to \eqref{eq:93} we define  in $Q:=\mathbb
R_+\times \mathbb R_+$ the operator  
\begin{equation}  \label{eq:157}
\Tg_\varepsilon =  -[1+\varepsilon(\sigma+\varepsilon^{-{\mathfrak a}})]\partial^2_\tau - \varepsilon\partial^2_\sigma + i\tau\,,
\end{equation}
associated with  the bilinear form given by
\begin{displaymath}
  a^+(u,v)= \langle u_\tau,(1+(\sigma+\varepsilon^{-{\mathfrak a}}) )v_\tau\rangle+ \varepsilon  \langle u_\sigma,v_\sigma\rangle +i\langle u,\tau v\rangle\,, 
\end{displaymath}
defined on $\Vg^+\times\Vg^+$ where
\begin{displaymath}
  \Vg^+ = \{ u\in H^1_0(Q)\,|\, \tau^{1/2}u\in L^2(Q) \;;\;
  \sigma^{1/2}u_\tau\in L^2(Q) \} \,,
\end{displaymath}
is equipped with its natural Hilbertian norm. In the same manner we have
established \eqref{eq:154}, we can prove that  the domain of $\Tg_\varepsilon$
is 
\begin{equation}
  D(\Tg_\varepsilon) = \{u\in H^2(Q)\cap H^1_0(Q)\,| \,\tau u\in L^2(Q) \,;\,\sigma u_{\tau\tau}\in L^2(Q)\}\,.
\end{equation}
  We can also show as for $\mathcal U_\varepsilon$
that
$\Tg_\varepsilon$ has compact resolvent.\\
Let 
$$
\Lambda_\gamma^2(\varepsilon)=\lambda_0 + \gamma \varepsilon \check \lambda_1
$$ 
be given by \eqref{eq:56},
\eqref{deflambda0} and \eqref{deflambda1}, and let $r(\varepsilon)$ satisfy
\eqref{condsurr}.  We can now state and prove
\begin{proposition}
\label{lem:large-sigma-2}
  Let $1/4<{\mathfrak a}<(1-q)/2$. Then, there exist positive
$C$ and $\varepsilon_0$ such that for all $\varepsilon\in (0,\varepsilon_0]$,  $\partial B(\Lambda_\gamma^2,r(\varepsilon) \varepsilon)\subset\rho(\Tg_\varepsilon)$ and such that for  $\lambda\in\partial B(\Lambda_\gamma^2,r(\varepsilon) \varepsilon)$
\begin{subequations}
\label{eq:158}
  \begin{equation}
    \|(\Tg_\varepsilon -\lambda)^{-1}\| \leq \frac{C}{\varepsilon^{1-{\mathfrak a}}} \,,
  \end{equation}
    \begin{equation} 
    \| \partial_\tau\, (\Tg_\varepsilon -\lambda)^{-1}\| \leq \frac{C}{\varepsilon^{1-{\mathfrak a}}} \,,
  \end{equation}
   \begin{equation} 
    \| \partial_\tau^2\, (\Tg_\varepsilon -\lambda)^{-1}\| \leq \frac{C}{\varepsilon^{1-{\mathfrak a}}} \,,
  \end{equation}
  and
\begin{equation}
  \|\partial_\sigma(\Tg_\varepsilon-\lambda)^{-1}\| \leq \frac{C}{\varepsilon^{3/2-{\mathfrak a}}} \,.
\end{equation}
\end{subequations}
\end{proposition}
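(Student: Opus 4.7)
I would imitate the partition-of-unity argument used for Proposition~\ref{lem:large-sigma}, adjusting the dilation to match the anisotropy of $\Tg_\varepsilon$, whose $\sigma$-dependence now sits in the coefficient of $-\partial_\tau^2$ rather than in the imaginary potential. Fix $\mathfrak{b}$ with $\tfrac{1}{2}-\mathfrak a<\mathfrak b<\mathfrak a$, and introduce the cut-offs $\phi_k^\varepsilon(\sigma)=\phi_k(\varepsilon^{\mathfrak{b}}\sigma)$ as in \eqref{eq:95}, together with the Dirichlet realizations $\Tg_{\varepsilon,k}$ of the differential operator \eqref{eq:157} on the strips $S_k=((k-1)\varepsilon^{-\mathfrak{b}},(k+1)\varepsilon^{-\mathfrak{b}})\times\R_+$ (with an extra Dirichlet condition at $\sigma=0$ when $k=0$). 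The candidate resolvent is
\begin{equation*}
\Rg^{app}_\Tg=\sum_{k\geq 0}\phi_k^\varepsilon\,(\Tg_{\varepsilon,k}-\lambda)^{-1}\phi_k^\varepsilon,
\end{equation*}
so that $(\Tg_\varepsilon-\lambda)\Rg^{app}_\Tg=I+\Eg_\Tg$ with $\Eg_\Tg=-\sum_k \varepsilon[\partial_\sigma^2,\phi_k^\varepsilon](\Tg_{\varepsilon,k}-\lambda)^{-1}\phi_k^\varepsilon$.

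The core of the argument is the estimate for $(\Tg_{\varepsilon,k}-\lambda)^{-1}$. Writing $1+\varepsilon(\sigma+\varepsilon^{-\mathfrak a})=\mathfrak d^2+\varepsilon(\sigma-k\varepsilon^{-\mathfrak b})$ with $\mathfrak d(\varepsilon,k)^2:=1+\varepsilon^{1-\mathfrak a}+k\varepsilon^{1-\mathfrak b}$, I would perform the anisotropic dilation $\tau=\mathfrak d^{2/3}\tau'$ and $\sigma-k\varepsilon^{-\mathfrak b}=\mathfrak d^{-1/3}\sigma'$. After dividing by $\mathfrak d^{2/3}$, the equation $(\Tg_{\varepsilon,k}-\lambda)w=g$ transforms into
\begin{equation*}
\bigl(\LL^+-\varepsilon\partial_{\sigma'}^2-\varepsilon\mathfrak d^{-7/3}\sigma'\partial_{\tau'}^2-\mathfrak d^{-2/3}\lambda\bigr)\widehat w=\mathfrak d^{-2/3}\widehat g
\end{equation*}
on the rescaled strip. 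Since $|\sigma'|\leq \mathfrak d^{1/3}\varepsilon^{-\mathfrak b}$ on $S_k$, the cross term $\varepsilon\mathfrak d^{-7/3}\sigma'\partial_{\tau'}^2$ carries a coefficient of size $O(\varepsilon^{1-\mathfrak b}\mathfrak d^{-2})$ and can be absorbed by a Neumann-series perturbation of the principal part $\LL^+-\varepsilon\partial_{\sigma'}^2$. The latter falls under Proposition~\ref{cor:1d-operators-2d-one} (with $\beta=0$, so that $\tilde\vartheta_1=\vartheta_1$), and the requisite spectral gap is supplied by the dilation itself: since $\mathfrak d^{2/3}\geq 1+c\varepsilon^{1-\mathfrak a}$ uniformly in $k\geq 0$, and $\Re\lambda\leq \Re\vartheta_1+O(\varepsilon)$ on $\partial B(\Lambda^2_\gamma,r(\varepsilon)\varepsilon)$, one obtains
\begin{equation*}
\Re(\mathfrak d^{-2/3}\lambda)\leq \Re\vartheta_1-c'\varepsilon^{1-\mathfrak a}
\end{equation*}
for $\varepsilon$ small. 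The requirement $\varepsilon^{2-b}\leq c'\varepsilon^{1-\mathfrak a}$ of Proposition~\ref{cor:1d-operators-2d-one} holds because $b<3/4<1+\mathfrak a$, and one concludes $\|(\Tg_{\varepsilon,k}-\lambda)^{-1}\|\leq C\varepsilon^{-(1-\mathfrak a)}$ after returning to the original variables.

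The derivative bounds (\ref{eq:158}b)--(\ref{eq:158}d) then follow from (\ref{eq:158}a) together with the standard integration-by-parts identity
\begin{equation*}
\Re\langle w,(\Tg_\varepsilon-\lambda)w\rangle=\|[1+\varepsilon(\sigma+\varepsilon^{-\mathfrak a})]^{1/2}\partial_\tau w\|^2+\varepsilon\|\partial_\sigma w\|^2-\Re\lambda\|w\|^2,
\end{equation*}
with $\|\tau w\|$ controlled through $\Im\langle \tau w,(\Tg_\varepsilon-\lambda)w\rangle$ and $\partial_\tau^2 w$ recovered algebraically from the equation. Together with the finite overlap of the partition and the commutator expansion $[\partial_\sigma^2,\phi_k^\varepsilon]=\varepsilon^{2\mathfrak b}\phi_k''(\varepsilon^{\mathfrak b}\cdot)+2\varepsilon^{\mathfrak b}\phi_k'(\varepsilon^{\mathfrak b}\cdot)\partial_\sigma$, these bounds yield
\begin{equation*}
\|\Eg_\Tg\|=O\!\left(\varepsilon^{\mathfrak a+2\mathfrak b}+\varepsilon^{\mathfrak a+\mathfrak b-1/2}\right)\xrightarrow[\varepsilon\to 0]{} 0,
\end{equation*}
where the vanishing relies on $\mathfrak a+\mathfrak b>1/2$, which is compatible with $\mathfrak a>1/4$. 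The discreteness of $\sigma(\Tg_\varepsilon)$ (inherited from the compact resolvent property established for $\Ug_1$, which transfers to $\Tg_\varepsilon$ by the same variational argument) together with the invertibility of $I+\Eg_\Tg$ for small $\varepsilon$ then yield $\partial B(\Lambda^2_\gamma,r(\varepsilon)\varepsilon)\subset\rho(\Tg_\varepsilon)$ and \eqref{eq:158}.

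The main obstacle is the simultaneous normalization of both kinetic terms under a single dilation: unlike the V1 analogue \eqref{eq:93}, where the $\sigma$-dependence lies in the imaginary potential and an isotropic scaling suffices, here one is forced to choose the anisotropic exponents $2/3$ in $\tau$ and $-1/3$ in $\sigma$ so that both $-\partial_\tau^2$ and $-\varepsilon\partial_\sigma^2$ remain in normalized form, at the price of a residual cross term $\varepsilon\mathfrak d^{-7/3}\sigma'\partial_{\tau'}^2$. Absorbing this cross term uniformly in $k$ requires an a priori bound on $\partial_{\tau'}^2(\LL^+-\varepsilon\partial_{\sigma'}^2-\mu)^{-1}$, and verifying that the gap $\varepsilon^{1-\mathfrak a}$ is of the right scale both for small $k$ (gap controlled by the $\varepsilon^{-\mathfrak a}$ shift) and for large $k$ (gap controlled by $k\varepsilon^{1-\mathfrak b}$) is the technically delicate point.
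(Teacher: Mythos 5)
Your proposal follows the same overall architecture as the paper's proof (partition of unity of size $\varepsilon^{-\mathfrak b}$ in $\sigma$, block resolvents $\Tg_{\varepsilon,k}$, shifting the coefficient of $\partial_\tau^2$ to its block centre, treating the residual linear-in-$\sigma$ term as a perturbation), so the structure is sound. The genuine departure is the treatment of that residual term. You rescale \emph{both} variables anisotropically ($\tau=\mathfrak d^{2/3}\tau'$, $\sigma'=\mathfrak d^{1/3}(\sigma-k\varepsilon^{-\mathfrak b})$) so that the principal part lands exactly on $\LL^+-\varepsilon\partial_{\sigma'}^2$, and then propose to absorb the cross term $\varepsilon\mathfrak d^{-7/3}\sigma'\partial_{\tau'}^2$ by a Neumann series around the resolvent of $\LL^+-\varepsilon\partial_{\sigma'}^2$. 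The paper instead dilates only in $\tau$, keeps the cross term $\varepsilon(\sigma-k\varepsilon^{-\mathfrak b})w_{\tau\tau}$ on the right-hand side, and closes the loop by proving the a priori bound $\|w_{\tau\tau}\|\leq C\bigl([1+\varepsilon^{1-\mathfrak a}+k\varepsilon^{1-\mathfrak b}]^{1/2}\|w\|+\|g\|\bigr)$ directly from the energy identity $-\Re\langle(\Tg_k-\lambda)w,w_{\tau\tau}\rangle=\|[1+\varepsilon(|\sigma|+\varepsilon^{-\mathfrak a})]^{1/2}w_{\tau\tau}\|^2+\varepsilon\|w_{\tau\sigma}\|^2-\Im\langle w_\tau,(1+\varepsilon(\sigma+\varepsilon^{-\mathfrak a}))w\rangle-\Re\lambda\|w_\tau\|^2$ (the paper's (7.12)), then substituting into the shifted resolvent estimate. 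This is precisely the estimate you flag at the end as missing — a bound on $\partial_{\tau'}^2(\LL^+-\varepsilon\partial_{\sigma'}^2-\mu)^{-1}$ uniform in the gap — so your route is viable but requires you to actually prove it (which you can do by the same integration by parts, now in the dilated variables); the paper's route is a little more economical precisely because it never invokes a Neumann series and never needs the anisotropic change of variables.

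One small point where your version is actually tighter than the paper's: you require $\tfrac12-\mathfrak a<\mathfrak b<\mathfrak a$, which is exactly what is needed to make $\|\Eg_\Tg\|=O(\varepsilon^{\mathfrak a+2\mathfrak b}+\varepsilon^{\mathfrak a+\mathfrak b-1/2})\to 0$. The paper merely recalls the V1 constraint $\tfrac16<\mathfrak b<\mathfrak a$, which by itself does not force $\mathfrak a+\mathfrak b>\tfrac12$ when $\mathfrak a$ is close to $\tfrac14$; the correct reading is that $\mathfrak b$ should be chosen close enough to $\mathfrak a$ (as you do), and your formulation makes this explicit. Beyond that, the derivation of (\ref{eq:158}b)--(\ref{eq:158}d) from the block estimates and the standard energy identities is the same as the paper's; just note that the bound on $\|w_{\tau\tau}\|$ comes from the identity above (which integrates $-\varepsilon\langle w_{\sigma\sigma},w_{\tau\tau}\rangle$ by parts into $\varepsilon\|w_{\tau\sigma}\|^2$) rather than from reading $w_{\tau\tau}$ ``algebraically from the equation,'' since the latter would circularly involve $\|w_{\sigma\sigma}\|$.
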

\begin{proof}
  The proof is similar to the proof of Proposition
  \ref{lem:large-sigma}, and we therefore bring only its outlines. We
  begin by defining the partition of unity \eqref{eq:95}, $S_k$ as in
  \eqref{defSk} and $\Tg_k$ as in \eqref{eq:209a} with $\ego$ replaced
  by $\varepsilon$, where we recall that $\frac 16 < \mathfrak b < \mathfrak
  a$. Then, we set
  \begin{equation}
\Rg^{app}_{\Tg}= \sum_{k=0}^\infty \phi_k^\varepsilon(\Tg_k-\lambda)^{-1}\phi_k^\varepsilon \,,
\end{equation}
yielding
\begin{subequations}
\label{eq:159}
\begin{equation}
  (\Tg_\varepsilon-\lambda) \Rg^{app}_{\Tg} = I + \Eg_\Tg \,,
\end{equation}
where
\begin{equation}
   \Eg_\Tg = -\sum_{k=0}^\infty \varepsilon[\partial^2_\sigma,\phi_k^\varepsilon](\Tg_k-\lambda)^{-1}\phi_k^\varepsilon \,.
\end{equation}
\end{subequations}
We now prove that $\Eg_\Tg \to0$ as $\varepsilon\to0$. To this end we set
\begin{displaymath}
  (\Tg_k-\lambda)w=g
\end{displaymath}
for $g\in L^2(S_k)$ and $w\in D(\Tg_k)$. As in the derivation of
\eqref{eq:100} and \eqref{eq:101} we write
\begin{displaymath}
  (-[1+\varepsilon(k\varepsilon^{-{\mathfrak b}}+\varepsilon^{-{\mathfrak a}})]\partial^2_\tau -
  \varepsilon\partial^2_\sigma + i\tau-\lambda)w=g+\varepsilon(\sigma-k\varepsilon^{-{\mathfrak b}}+\varepsilon^{-{\mathfrak a}})w_{\tau\tau}\,.
\end{displaymath}
This allows us to conclude by extending the proof of \eqref{eq:88},
using a dilation in the $\tau$ variable with $\ego$ replaced by $\varepsilon$ and
$\beta=0$, to get, under our assumption that $\lambda$ is $\mathcal
O(\varepsilon)$-close to $\lambda_0$,
\begin{equation}
\label{eq:160}
   \|w\|_2\leq \frac{C}{\varepsilon^{1-{\mathfrak a}}+k\varepsilon^{1-{\mathfrak b}}}(\|g\|_2+\varepsilon^{1-{\mathfrak b}}\|w_{\tau\tau}\|_2)
  \,.
\end{equation}
As
\begin{displaymath}
  \Re\langle (\Tg_k-\lambda)w, w \rangle=\| [1+\varepsilon(\sigma+\varepsilon^{-{\mathfrak a}})]^\frac 12
  w_\tau\|_2^2 +\varepsilon\|w_\sigma\|_2^2
  -\Re\lambda\|w\|_2^2 \,, 
\end{displaymath}
we  immediately  obtain 
\begin{equation}
\label{eq:161} 
  \|  [1+\varepsilon(\sigma+\varepsilon^{-{\mathfrak a}})]^\frac 12 w_\tau\|_2\leq C(\|w\|_2+\|g\|_2) \,.
\end{equation} 
Furthermore, as $w_\sigma|_{\tau=0}\equiv0$, we have 
\begin{equation}\label{eq:7.12} 
\begin{array}{ll}
-  \Re\langle(\Tg_k-\lambda)w, w_{\tau\tau}\rangle& =\| [1+\varepsilon(|\sigma|+\varepsilon^{-{\mathfrak
      a}})]^\frac 12  w_{\tau\tau}\|_2^2 +
  \varepsilon\|w_{\tau\sigma}\|_2^2\\ &\quad  -\Im\langle w_\tau,(1+ \varepsilon ( \sigma +\varepsilon^{-{\mathfrak a}}) ) w\rangle-\Re\lambda\|w_\tau\|_2^2 \,.
  \end{array}
\end{equation}
Consequently, by \eqref{eq:161} we obtain 
\begin{equation*}
  \|w_{\tau\tau}\|_2\leq C\big([1+\varepsilon^{1-{\mathfrak a}}+k\varepsilon^{1-{\mathfrak b}}]^{1/2}\|w\|_2+\|g\|_2\big)\,,
\end{equation*}
which when substituted into \eqref{eq:160} yields
\begin{equation}
  \label{eq:162}
\| (\Tg_k-\lambda)^{-1}\|\leq  \frac{C}{\varepsilon^{1-{\mathfrak a}}+k\varepsilon^{1-{\mathfrak b}}}\,.
\end{equation}
Note, for future use, that we may conclude in addition 
\begin{equation*}
  \|w_{\tau\tau}\|_2\leq C\big( \|w\|_2+\|g\|_2\big)\,,
\end{equation*}
and hence also that
\begin{equation}
  \label{eq:163}
\|\partial_\tau^2  (\Tg_k-\lambda)^{-1}\|\leq  \frac{C}{\varepsilon^{1-{\mathfrak a}}+k\varepsilon^{1-{\mathfrak b}}}\,.
\end{equation}
We now proceed as in the proof of Proposition \ref{lem:large-sigma} to
show that $\Eg_\Tg \to0$ as $\varepsilon\to0$. Then,  as 
\begin{displaymath}
  \|(\Tg_\varepsilon-\lambda)^{-1}\|\leq C\|\Rg^{app}_{\Tg}\|\leq C\sup_{k \geq 0}
  \|(\Tg_k-\lambda)^{-1}\|\leq \frac{C}{\varepsilon^{1-2{\mathfrak a}}} \,,
\end{displaymath}
we have established (\ref{eq:158}a). The proofs of (\ref{eq:158}b) and
(\ref{eq:158}c) are now respectively deduced from \eqref{eq:161} and
\eqref{eq:163}, and the proof of (\ref{eq:158}d) follows from the
identity
\begin{displaymath}
   \Re\langle (\Tg_\varepsilon-\lambda)w,w\rangle=\|(1+\varepsilon|\sigma|)^{1/2}w_\tau\|_2^2+\varepsilon\|w_\sigma\|_2^2-\Re\lambda\|w\|_2^2 \,,
\end{displaymath}
which holds for every $w\in D(\Tg_\varepsilon)$. 
\end{proof}
\subsection{Second simplified operator}
\label{secsecond-simplified}
Let $\Lambda_\gamma^2$ be given by \eqref{eq:56}, and let $r(\varepsilon)$ satisfy
  \eqref{condsurr}. 
\begin{proposition}
  \label{prop:simplified-V2}
There exist positive
$C$ and $\varepsilon_0>0$ such that for all $\varepsilon$ in $(0,\varepsilon_0]$,  $ \partial B(\Lambda_\gamma^2,r(\varepsilon) \varepsilon)$ is included in $\rho(\Ug_\varepsilon)$ and 
\begin{equation}
\label{eq:164} 
    \|(\Ug_\varepsilon -\lambda)^{-1}\| \leq \frac{C}{(r(\varepsilon) +1-\gamma)\varepsilon}\,,\, \forall \lambda \in \partial B(\Lambda_\gamma^2,r(\varepsilon) \varepsilon)  \,.
  \end{equation}
\end{proposition}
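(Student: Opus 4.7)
The scheme mirrors the proof of Proposition~\ref{prop:simplified-V1}. Fix $\mathfrak{a}\in(1/4,(1-q)/2)$ and introduce the cut-offs $\eta_\varepsilon(\sigma)=\eta(\varepsilon^{\mathfrak a}\sigma)$, $\zeta^\pm_\varepsilon(\sigma)=\zeta_\pm(\varepsilon^{\mathfrak a}\sigma)$ from \eqref{eq:110}--\eqref{eq:53a}. Let $\Ug_N$ be the Dirichlet realization of the expression \eqref{eq:151} on $S_N=(-2\varepsilon^{-\mathfrak a},2\varepsilon^{-\mathfrak a})\times\R_+$, and let $\Tg^+_\varepsilon$, $\Tg^-_\varepsilon$ denote the Dirichlet realizations of \eqref{eq:151} on $(\varepsilon^{-\mathfrak a},+\infty)\times\R_+$ and $(-\infty,-\varepsilon^{-\mathfrak a})\times\R_+$ respectively, the second being obtained from $\Tg_\varepsilon$ of \eqref{eq:157} by the reflection $\sigma\to-\sigma$. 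Form the candidate right inverse
\begin{equation*}
\Rg^{app}_\Ug(\lambda)= \eta_\varepsilon(\Ug_N-\lambda)^{-1}\eta_\varepsilon + \zeta^+_\varepsilon(\Tg^+_\varepsilon-\lambda)^{-1}\zeta^+_\varepsilon + \zeta^-_\varepsilon(\Tg^-_\varepsilon-\lambda)^{-1}\zeta^-_\varepsilon,
\end{equation*}
so that $(\Ug_\varepsilon-\lambda)\Rg^{app}_\Ug=I+\Eg_\Ug$ with $\Eg_\Ug$ a commutator remainder. Note crucially that $[(1+\varepsilon|\sigma|)\partial^2_\tau,\phi]=0$ for every $\sigma$-only cut-off $\phi$, so $\Eg_\Ug$ contains only $[\partial^2_\sigma,\phi]$ terms.

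By Proposition \ref{lem:large-sigma-2} (applied, up to reflection, on each half-strip) one has, for $\lambda\in\partial B(\Lambda_\gamma^2,r(\varepsilon)\varepsilon)\subset B(\lambda_0,\gamma_0\varepsilon)$,
\begin{equation*}
\|(\Tg^\pm_\varepsilon-\lambda)^{-1}\| +\varepsilon^{1/2}\|\partial_\sigma(\Tg^\pm_\varepsilon-\lambda)^{-1}\|\leq C\varepsilon^{\mathfrak a-1}.
\end{equation*}
The main task is thus to bound $(\Ug_N-\lambda)^{-1}$ and its $\sigma$-derivative. Introduce the separable approximation $\tilde\Ug_N:=\LL^+-\varepsilon\partial^2_\sigma+\varepsilon\theta_0|\sigma|$, with $\theta_0=\lambda_0-e^{i\pi/6}\tau_m$ as in \eqref{eq:54}; any pair $(u,g)$ with $(\Ug_N-\lambda)u=g$ also satisfies
\begin{equation*}
(\tilde\Ug_N-\lambda)u = g - \varepsilon|\sigma|(\partial^2_\tau+\theta_0)u.
\end{equation*}
The definition \eqref{eq:41} of $\tau_m$ yields the key identity $\Pi_1[(\partial^2_\tau+\theta_0)v_1]=0$, so the $\Pi_1$-to-$\Pi_1$ piece of the perturbation vanishes (as in Lemma \ref{cor:taum}) and only cross terms between $\Pi_1 u$ and $(I-\Pi_1)u$ survive.

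On $\Pi_1 L^2(S_N)$, $\tilde\Ug_N$ reduces to the scalar shift $\vartheta_1$ plus the Dirichlet realization on $(-2\varepsilon^{-\mathfrak a},2\varepsilon^{-\mathfrak a})$ of $\varepsilon(-\partial^2_\sigma+\theta_0|\sigma|)$, i.e.\ an even extension of a rescaled complex Airy-Neumann operator whose first eigenvalue converges as $\varepsilon\to0$ to $\varepsilon\check\lambda_1$ given by \eqref{deflambda1}. A perturbation argument analogous to Proposition \ref{lem:1d-eigenvalues} together with the Riesz-Schauder theory yields $\|(\tilde\Ug_N-\lambda)^{-1}\Pi_1\|\leq C/[(r(\varepsilon)+1-\gamma)\varepsilon]$ on $\partial B(\Lambda_\gamma^2,r(\varepsilon)\varepsilon)$. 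On $(I-\Pi_1)L^2(S_N)$, the separability of $\tilde\Ug_N$, $\Re\vartheta_2>\Re\vartheta_1$ and $\arg\theta_0=\pi/6$ produce the uniform bound $\|(\tilde\Ug_N-\lambda)^{-1}(I-\Pi_1)\|\leq C$, analogous to \eqref{eq:118}. On $S_N$, the cross term is pointwise dominated by $2\varepsilon^{1-\mathfrak a}|(\partial^2_\tau+\theta_0)u|$, and an energy identity of the type \eqref{eq:7.12} supplies the $a$ $priori$ bound $\|\partial^2_\tau u\|_2\leq C(\|u\|_2+\|g\|_2)$. Plugging these into the Feshbach decomposition \eqref{eq:115}--\eqref{eq:122}, and running the $\partial_\sigma$-derivative argument of \eqref{eq:124}--\eqref{eq:65b}, produces
\begin{equation*}
\|(\Ug_N-\lambda)^{-1}\|+\|\partial_\sigma(\Ug_N-\lambda)^{-1}\|\leq \frac{C}{(r(\varepsilon)+1-\gamma)\varepsilon}.
\end{equation*}

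Collecting, $\|\eta'_\varepsilon\|_\infty=\OO(\varepsilon^{\mathfrak a})$ and the analogous bounds for the other cut-offs yield
\begin{equation*}
\|\Eg_\Ug\|\leq C\Big(\frac{\varepsilon^{\mathfrak a}}{r(\varepsilon)+1-\gamma}+\varepsilon^{2\mathfrak a-1/2}\Big),
\end{equation*}
which tends to $0$ because $\mathfrak a>1/4$ and $\mathfrak a>q$. Hence $I+\Eg_\Ug$ is invertible for $\varepsilon$ small, $\Ug_\varepsilon-\lambda$ admits a bounded right inverse, and the compactness of the resolvent of $\Ug_\varepsilon$ implies $\lambda\in\rho(\Ug_\varepsilon)$ with the claimed bound \eqref{eq:164}. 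The principal obstacle is the Feshbach-type analysis of $\Ug_N$: exploiting the identity $\Pi_1[(\partial^2_\tau+\theta_0)v_1]=0$ to decouple the $\Pi_1$ and $(I-\Pi_1)$ sectors to leading order, and securing the $\partial^2_\tau$ \emph{a priori} estimate on $S_N$ that the residual cross terms require in order to close the loop.
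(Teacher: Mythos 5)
Your proposal follows essentially the same route as the paper's proof: the three-piece partition of unity with exponents $\mathfrak a\in(1/4,(1-q)/2)$, the large-$|\sigma|$ estimate supplied by Proposition~\ref{lem:large-sigma-2} and its reflection, the introduction of the separable model $\tilde\Tg_N=\LL^+-\varepsilon\partial^2_\sigma+\varepsilon\theta_0|\sigma|$ on the central strip, the $\Pi_1/(I-\Pi_1)$ decoupling based on $\Pi_1\big((\tau-e^{-i\pi/3}\tau_m)\Pi_1 w\big)=0$, the a priori bound $\|w_{\tau\tau}\|_2\leq C(\|w\|_2+\|g\|_2)$ needed to absorb the cross term, the $\partial_\sigma$-derivative estimate, and the commutator error bound. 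Two small inaccuracies do not affect the argument but are worth noting: the perturbation identity should read $(\tilde\Ug_N-\lambda)u=g+\varepsilon|\sigma|(\partial^2_\tau+\theta_0)u$ (you wrote a minus sign), and the invertibility of $\tilde\Tg_N$ on the $\Pi_1$-block is obtained in the paper not via the $\varepsilon$-dependent perturbation argument of Proposition~\ref{lem:1d-eigenvalues} (which handles the $\ego$-deformed projection $\tilde\Pi_1$ in the V1 case) but rather, since in the V2 case $\Pi_1$ is the fixed Airy projection, directly from the spectral analysis of $-\partial^2_\sigma+\theta_0|\sigma|$ in the spirit of Proposition~\ref{cor:1d-2d-two} with Remark~\ref{rem:Dirichlet-separable}; your formulation in terms of exponential convergence of the finite-interval Dirichlet eigenvalue to $\check\lambda_1$ amounts to the same fact.
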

\begin{proof}
  Let $\eta$ and $\zeta_\pm$ be defined by \eqref{eq:110} and \eqref{eq:111}
  respectively. Let $1/4<{\mathfrak a}<(1-q)/2$. Next, let
  $S_N=(-2\varepsilon^{-{\mathfrak a}},2\varepsilon^{-{\mathfrak a}})\times\R_+$ and $\Tg_N$ denote the operator
  associated with the differential operator given by \eqref{eq:151} with domain 
\begin{displaymath}
  D(\Tg_N) = \{u\in H^2(S_N)\cap H^1_0(S_N)\,|\, \tau u\in L^2(S_N) \}\,.
\end{displaymath}
Let further $S_D^+= (\varepsilon^{-{\mathfrak a}},+\infty)\times\R_+$, $S_D^-=
(-\infty,-\varepsilon^{-{\mathfrak a}}) \times\R_+$, and $\Tg_D^\pm$ denote the operator associated with the differential operator 
given by \eqref{eq:92}, whose  domain can be characterized as  
\begin{displaymath}
  D(\Tg_D^\pm) = \{u\in H^2(S_D^\pm)\cap H^1_0(S_D^\pm)\,| \,\tau u \mbox{ and } \sigma u_{\tau\tau} \in
  L^2(S_D^\pm) \}\,.
\end{displaymath}
We can now define the approximate resolvent
\begin{equation}
\label{eq:165}
  \Rg^{app}_\Tg = \eta_\varepsilon(\Tg_N-\lambda)^{-1}\eta_\varepsilon+
  \zeta_\varepsilon^-(\Tg_D^+-\lambda)^{-1}\zeta_\varepsilon^- + \zeta_\varepsilon^+(\Tg_D^+-\lambda)^{-1}\zeta_\varepsilon^+ \,.
\end{equation}
By \eqref{eq:94} we have 
\begin{equation}
\label{eq:166}
  \|(\Tg_D^\pm-\lambda)^{-1}\|+ \varepsilon^{1/2} \|\partial_\sigma(\Tg_D^\pm-\lambda)^{-1}\|\leq
  \frac{C}{\varepsilon^{1-{\mathfrak a}}}\,.
\end{equation}
We seek an estimate for $\|(\Tg_N-\lambda)^{-1}\|$. 
Let $w\in D(\Tg_N)$
and $g\in L^2(S_N)$ satisfy
\begin{equation}
\label{eq:167}
   (\Tg_N-\lambda)w=g\,.
\end{equation}
Applying the projection $\Pi_1$, given by \eqref{eq:44},  to the above
balance yields
\begin{equation}
\label{eq:168}
\begin{array}{ll}
  (\LL^+ -\varepsilon\partial^2_\sigma-\lambda)\Pi_1w &= \Pi_1g
  +\varepsilon\, |\sigma| \, \Pi_1 \partial_{\tau\tau}^2 w  
  \\
  &= \Pi_1g
  +\varepsilon|\sigma|(i\Pi_1(\tau w)-\lambda_0\Pi_1w) 
  \end{array}\,.
\end{equation}
Let 
 \begin{displaymath}
   \tilde{\Tg}_N = \LL+\varepsilon(-\partial^2_\sigma + \theta_0|\sigma|) \,,  
 \end{displaymath}
where $\theta_0$ is given by \eqref{eq:54}. 
We now rewrite \eqref{eq:168} in the form
\begin{equation}
  \label{eq:169}
(\tilde{\Tg}_N-\lambda)\Pi_1w=\Pi_1g-i\varepsilon|\sigma|\Pi_1((\tau-e^{-i\pi/3}\tau_m)w) \,.
\end{equation}
From the definition of $\tau_m$ in \eqref{eq:41} it
follows that
\begin{displaymath}
 \Pi_1\big((\tau-e^{-i\pi/3}\tau_m)w\big)= \Pi_1\big((\tau-e^{-i\pi/3}\tau_m)(I-\Pi_1)w\big)\,.
\end{displaymath}
We can thus conclude that
\begin{equation}
\label{eq:170}
  \big\|\sigma\Pi_1\big((\tau-e^{-i\pi/3}\tau_m)w\big)\big\| \leq C\varepsilon^{-{\mathfrak a}}\|(I-\Pi_1)w\|_2\,.
\end{equation}
In  the same manner as it is established in \cite[Lemma
7.1]{AGH} or in Proposition~\ref{cor:1d-2d-two} we have
\begin{subequations}
\label{eq:171}
  \begin{equation}
   \|(\tilde{\Tg}_N-\lambda)^{-1}\Pi_1\|\leq \frac{C}{(r(\varepsilon) +1-\gamma)\varepsilon} \quad ; \quad
   \|(\tilde{\Tg}_N-\lambda)^{-1}(I-\Pi_1)\|\leq C \,. \tag{\ref{eq:171}a,b}
\end{equation}
\end{subequations}
Applying  (\ref{eq:171}a) to \eqref{eq:169} yields, with the aid of
\eqref{eq:170},
\begin{equation}
\label{eq:172}
  \|\Pi_1w\|_2\leq \frac{C}{(r(\varepsilon) +1-\gamma)\varepsilon}(\|g\|_2 +
  \varepsilon^{1-{\mathfrak a}}\|(I-\Pi_1)w\|_2)\,.
\end{equation}
We now apply $I-\Pi_1$ to \eqref{eq:167} to obtain 
\begin{equation}
\label{eq:173} 
  (\LL^+ -\varepsilon\partial^2_\sigma-\lambda)(I-\Pi_1)w = (I-\Pi_1)g
  +\varepsilon|\sigma|(I-\Pi_1)w_{\tau\tau} \,.
\end{equation}
Since $I-\Pi_1$ is bounded, we have
\begin{displaymath}
  \|\sigma(I-\Pi_1)w_{\tau\tau}\|_2\leq  C\varepsilon^{-{\mathfrak a}}\|w_{\tau\tau}\|_2\,.
\end{displaymath}
We can now obtain, using  \eqref{eq:7.12},  \eqref{eq:163}, and the
fact that $|\sigma|\leq2 \varepsilon^{-2 \mathfrak a}$  in $S_N$, that
\begin{equation}
\label{eq:174}
  \|w_{\tau\tau}\|_2 \leq C(\|w\|_2+\|g\|_2)\,. 
\end{equation}
Consequently,
\begin{equation}\label{eq:152a}
  \|\sigma(I-\Pi_1)w_{\tau\tau}\|_2\leq C\varepsilon^{-{\mathfrak a} }(\|w\|_2+\|g\|_2)\,.
\end{equation}
We now apply \cite[Eq. (7.16)]{AGH} to \eqref{eq:173} to obtain, with the aid
of the above inequality,
\begin{equation}
\label{eq:175}
   \|(I-\Pi_1)w\|_2\leq C(\|g\|_2 +  \varepsilon^{1-{\mathfrak a}}\|w\|_2)\,.
\end{equation}
Substituting the above into \eqref{eq:172} yields
\begin{displaymath}
  \|\Pi_1w\|_2\leq \frac{C}{(r(\varepsilon) +1-\gamma)\varepsilon}\|g\|_2 +  \frac{C\varepsilon^{1-2{\mathfrak a}}}{r(\varepsilon) +1-\gamma}\|w\|_2\,.
\end{displaymath}
Together with \eqref{eq:122} (recall that ${\mathfrak a}<(1-q) /2$,  and $r$ satisfies \eqref{condsurr}), this yields 
\begin{equation}
\label{eq:176}
  \|(\Tg_N-\lambda)^{-1}\|\leq \frac{C}{(r(\varepsilon) +1-\gamma)\varepsilon}\,.
\end{equation}
Note, for future use, that together with \eqref{eq:174} the above
inequality implies
\begin{equation}
\label{eq:154a}
  \| \partial_\tau^2 \, (\Tg_N-\lambda)^{-1}\|\leq \frac{C}{(r(\varepsilon) +1-\gamma)\varepsilon}\,.
\end{equation}
We also need an estimate for
$\partial_\sigma(\Tg_N-\lambda)^{-1}$. To this end we rewrite \eqref{eq:169}
in the following manner 
\begin{displaymath}
   \varepsilon \Big(-\partial^2_\sigma+i\theta_0|\sigma|-\frac{\lambda-\lambda_0}{\varepsilon}\Big)\Pi_1w = \Pi_1g -i\varepsilon|\sigma|\Pi_1\big((\tau-e^{-i\pi/3}\tau_m)w\big) \,.
\end{displaymath}
Taking the inner product with $\Pi_1w$ then
yields for the real part
\begin{displaymath}
  \varepsilon\|\Pi_1w_\sigma\|_2^2 \leq C \left(\varepsilon\|\Pi_1w\|_2^2 +
  \|\Pi_1w\|_2(\|\Pi_1g\|_2+\varepsilon^{1-{\mathfrak a}}
  \|(I-\Pi_1)w\|_2\right)\,.
\end{displaymath}
Using \eqref{eq:175} we then obtain
\begin{displaymath}
  \|\Pi_1w_\sigma\|_2 \leq C(\|\Pi_1w\|_2+ \varepsilon^{\frac 32 -2 {\mathfrak a}} \|w\|_2+\varepsilon^{-1}\|g\|_2)\,,
\end{displaymath}
from which we deduce,  using \eqref{eq:176} and the fact that $\mathfrak a<(1-q)/2$, 
\begin{equation}
\label{eq:177}
  \|\Pi_1w_\sigma\|_2 \leq \frac{C}{(r(\varepsilon) +1-\gamma)\varepsilon}\|g\|_2\,.
\end{equation}
We now take the real part of the inner product of \eqref{eq:173} with
$(I-\Pi_1)w$,  to obtain with the aid of  \eqref{eq:152a} and \eqref{eq:175}, 
\begin{displaymath} 
\begin{array}{l}
  \|((I-\Pi_1)w)_\tau\|_2^2+ \varepsilon\|(I-\Pi_1)w_\sigma\|_2^2\\ \qquad\qquad   \leq C \left( \varepsilon \|(I-\Pi_1)w\|_2^2 +
  \|(I-\Pi_1)w\|_2 \left( \|g\|_2+\varepsilon^{1-{\mathfrak a}}  \|w\|_2 \right)\right) \\
  \qquad \qquad  \leq \hat C \left( \|g\|_2+\varepsilon^{1-{\mathfrak a}}  \|w\|_2 \right)^2 \,.
  \end{array}
\end{displaymath}
Hence we have obtained 
\begin{displaymath}
\|(I-\Pi_1)w_\sigma\|_2\leq C(\varepsilon^{-1/2}\|g\|_2+\varepsilon^{1/2-{\mathfrak a}}\|w\|_2)\,,
\end{displaymath}
which combined with \eqref{eq:176} and \eqref{eq:177} yields
\begin{equation}
\label{eq:178}
  \|\partial_\sigma(\Tg_N-\lambda)^{-1}\|\leq \frac{C}{(r(\varepsilon) +1-\gamma)\varepsilon}\,.
\end{equation}
We may now proceed to obtain \eqref{eq:164} in precisely the same
manner as in the proof of Proposition \ref{prop:simplified-V1}.
\end{proof}

We now return to the problem appearing in Section
\ref{sec:quazimode-V2}. The operator is defined on the quarter plane
$Q=\mathbb R_+\times \mathbb R_+$ with a Dirichlet-Neumann condition.
Hence, we consider $\QQ_\varepsilon$ to be defined by \eqref{eq:151} (via
  a Lax-Milgram theorem) and whose domain can be characterized as
  \begin{equation}
    \label{eq:179} 
  D(\QQ_\varepsilon)  = \{u\in H^2(Q)\,|\, u(0,\sigma)=0 \,;\,
   \partial_\sigma  u (0,\tau) =0 \,;\,\tau u \mbox{ and } \sigma u_{\tau\tau} \in L^2(Q) \}\,.
  \end{equation}
 We can now make the following statement
\begin{proposition}
 There exist positive $C$ and
$\varepsilon_0$ such that for every $\varepsilon\in (0,\varepsilon_0]$, $\partial B(\Lambda_\gamma^2,r\varepsilon)$ belongs to $\rho(\QQ_\varepsilon)$ and 
\begin{equation}
\label{eq:180}
 \|(\QQ_\varepsilon -\lambda)^{-1}\| \leq \frac{C}{(r(\varepsilon) +1-\gamma)\varepsilon} \,,\, \forall \lambda \in \partial B(\Lambda_\gamma^2,r(\varepsilon) \varepsilon)\,.
\end{equation}
\end{proposition}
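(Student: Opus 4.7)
The plan is to exploit the reflection symmetry of $\Ug_\varepsilon$ in the $\sigma$ variable to reduce the proposition to the resolvent estimate \eqref{eq:164} already obtained for the Dirichlet realization on the half-plane $\mathbb R\times\mathbb R_+$. Observe that the differential expression $-[1+\varepsilon|\sigma|]\partial_\tau^2-\varepsilon\partial_\sigma^2+i\tau$ is invariant under the reflection $\Rg:(\sigma,\tau)\mapsto(-\sigma,\tau)$, since $|\sigma|$ is an even function of $\sigma$ and $\partial_\sigma^2$ commutes with $\Rg$. The domain $D(\Ug_\varepsilon)$ characterized in \eqref{eq:152}--\eqref{eq:153} is likewise invariant under $\Rg$.

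Given $f\in L^2(Q)$, define its even extension $f^e\in L^2(\mathbb R\times\mathbb R_+)$ by $f^e(\sigma,\tau)=f(|\sigma|,\tau)$, so that $\|f^e\|_{L^2(\mathbb R\times\mathbb R_+)}^2=2\|f\|_{L^2(Q)}^2$. For $\lambda\in\partial B(\Lambda_\gamma^2,r(\varepsilon)\varepsilon)$, Proposition \ref{prop:simplified-V2} ensures that $v:=(\Ug_\varepsilon-\lambda)^{-1}f^e$ is well defined. Since $\Rg$ commutes with $\Ug_\varepsilon$ and $\Rg f^e=f^e$, uniqueness of the inverse forces $\Rg v=v$, i.e. $v$ is even in $\sigma$. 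As $v\in H^2(\mathbb R\times\mathbb R_+)$, evenness of $v$ implies that $\partial_\sigma v$ is odd in $\sigma$, and hence $\partial_\sigma v(0,\tau)=0$ in the trace sense.

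Setting $u:=v|_Q$, we obtain $u\in H^2(Q)$ with $u(\sigma,0)=0$ (Dirichlet inherited from $v\in H_0^1(\mathbb R\times\mathbb R_+)$) and $\partial_\sigma u(0,\tau)=0$ (Neumann from evenness), while $\tau u$ and $\sigma u_{\tau\tau}$ lie in $L^2(Q)$ from the analogous properties of $v$. Moreover $(\QQ_\varepsilon-\lambda)u=f$ in $Q$, so $u$ realizes a right inverse of $\QQ_\varepsilon-\lambda$. The norm estimate is immediate from the symmetry:
\begin{equation*}
\|u\|_{L^2(Q)}^2=\tfrac12\|v\|_{L^2(\mathbb R\times\mathbb R_+)}^2\leq\tfrac12\|(\Ug_\varepsilon-\lambda)^{-1}\|^2\|f^e\|_{L^2(\mathbb R\times\mathbb R_+)}^2=\|(\Ug_\varepsilon-\lambda)^{-1}\|^2\|f\|_{L^2(Q)}^2.
\end{equation*}
Since $\QQ_\varepsilon$ has compact resolvent (established by repeating the Lax--Milgram construction and the weighted-injection argument that gave the compactness of $\Ug_1$, now in the quarter-plane with mixed Dirichlet--Neumann conditions), the existence of a bounded right inverse on $\partial B(\Lambda_\gamma^2,r(\varepsilon)\varepsilon)$ together with the discreteness of $\sigma(\QQ_\varepsilon)$ yields $\partial B(\Lambda_\gamma^2,r(\varepsilon)\varepsilon)\subset\rho(\QQ_\varepsilon)$, so that the right inverse coincides with the true resolvent and \eqref{eq:180} follows from \eqref{eq:164}.

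There is essentially no serious obstacle here; the only points requiring care are verifying that $\Rg$ preserves $D(\Ug_\varepsilon)$ (clear from \eqref{eq:152}--\eqref{eq:153}, as $|\sigma|$ is $\Rg$-invariant and the weight $(1+|\sigma|)$ transforms trivially) and that the restriction to $Q$ of an even $H^2$ function satisfies the Neumann condition at $\sigma=0$ in the trace sense, which is a standard consequence of the definition of the $H^1$-trace. Any further regularity or decay estimate for $(\QQ_\varepsilon-\lambda)^{-1}$ needed in the sequel (analogues of (\ref{eq:158}b)--(\ref{eq:158}d) at the corner) can be obtained by the same reflection procedure applied to the corresponding estimates for $\Ug_\varepsilon$.
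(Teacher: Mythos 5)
Your proof is correct and follows the same route as the paper, which simply notes that the result "follows immediately from Proposition \ref{prop:simplified-V2} and the fact that $\Ug_\varepsilon$ is an even extension of $\QQ_\varepsilon$." You have merely spelled out the symmetry/extension argument in full detail (even extension of the data, invariance of the resolvent under reflection, recovery of the Neumann condition at $\sigma=0$, and cancellation of the factor of $2$ in the norms), which matches the intended proof.
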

The proof follows immediately from Proposition
\ref{prop:simplified-V2} and the fact that $\Ug_\varepsilon$ is an even 
extension of $\QQ_\varepsilon$. \\
 We shall continue to obtain results for
  $\Ug_\varepsilon$. All of them, by the same token, are valid for $\QQ_\varepsilon$ as
  well.  

Another consequence Proposition \ref{prop:simplified-V2}   is  the following:
\begin{proposition}
\label{cor:second-large-sigma}
  Under the conditions of Proposition \ref{prop:simplified-V2}, for
  every $1/4<{\mathfrak a}<(1-q)/2$ there exists $C_{\mathfrak a}>0$ and $\varepsilon_0>0$
  such that, for any $\varepsilon\in (0,\varepsilon_0]\,$, 
  \begin{equation}
\label{eq:181}
  \|{\mathbf 1}_{|\sigma|\geq 2\varepsilon^{-{\mathfrak a}}}(\Ug_\varepsilon-\lambda)^{-1}\|+ \varepsilon^{1/2} \|{\mathbf 1}_{|\sigma|\geq 2\varepsilon^{-{\mathfrak a}}}\partial_\sigma(\Ug_\varepsilon-\lambda)^{-1}\|\leq
  \frac{C_{\mathfrak a}}{\varepsilon^{1-{\mathfrak a}}}\,.
  \end{equation}
\end{proposition}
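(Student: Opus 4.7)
The plan is to mimic the proof of the V1 analogue (Proposition~\ref{cor:large-sigma}) essentially verbatim, using the approximate resolvent $\Rg_\Tg^{\rm app}$ defined in \eqref{eq:165} during the proof of Proposition~\ref{prop:simplified-V2}. The core observation is that on the set $\{|\sigma|\geq 2\varepsilon^{-{\mathfrak a}}\}$ the cutoff $\eta_\varepsilon$ identically vanishes by \eqref{eq:53a} and \eqref{eq:110}, so only the two lateral Dirichlet pieces $\Tg_D^{\pm}$ contribute, and these are well controlled by the estimates (\ref{eq:158}a) and (\ref{eq:158}d).

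First, I would recall that for $\varepsilon$ small enough the operator $I+\Eg_\Tg$ is invertible with $\|(I+\Eg_\Tg)^{-1}\|\leq 2$, and that one has the identity $(\Ug_\varepsilon-\lambda)^{-1}=\Rg_\Tg^{\rm app}(I+\Eg_\Tg)^{-1}$ for all $\lambda\in\partial B(\Lambda_\gamma^2,r(\varepsilon)\varepsilon)$. Multiplying by ${\mathbf 1}_{|\sigma|\geq 2\varepsilon^{-{\mathfrak a}}}$ and using the support of $\eta_\varepsilon$ then gives
$${\mathbf 1}_{|\sigma|\geq 2\varepsilon^{-{\mathfrak a}}}(\Ug_\varepsilon-\lambda)^{-1}={\mathbf 1}_{|\sigma|\geq 2\varepsilon^{-{\mathfrak a}}}\bigl[\zeta_\varepsilon^+(\Tg_D^+-\lambda)^{-1}\zeta_\varepsilon^+ + \zeta_\varepsilon^-(\Tg_D^{-}-\lambda)^{-1}\zeta_\varepsilon^-\bigr](I+\Eg_\Tg)^{-1}.$$
Combining \eqref{eq:166} with the above identity immediately yields
$$\|{\mathbf 1}_{|\sigma|\geq 2\varepsilon^{-{\mathfrak a}}}(\Ug_\varepsilon-\lambda)^{-1}\|\leq \frac{C_{\mathfrak a}}{\varepsilon^{1-{\mathfrak a}}}.$$

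For the $\sigma$-derivative bound, I would repeat the same reduction after applying $\partial_\sigma$. On $\{|\sigma|\geq 2\varepsilon^{-{\mathfrak a}}\}$ we may replace $\zeta_\varepsilon^\pm$ by $1$ on the left, so the main term is $\partial_\sigma(\Tg_D^\pm-\lambda)^{-1}\zeta_\varepsilon^\pm$, which by \eqref{eq:166} (or equivalently (\ref{eq:158}d)) is bounded by $C\varepsilon^{{\mathfrak a}-3/2}$; multiplying by the prefactor $\varepsilon^{1/2}$ produces the desired $\varepsilon^{{\mathfrak a}-1}$. The only other terms one encounters come from the commutator $[\partial_\sigma,\zeta_\varepsilon^\pm]=\varepsilon^{\mathfrak a}(\zeta^{\pm,\prime})_\varepsilon$, and combining this with (\ref{eq:158}a) gives a contribution of order $\varepsilon^{2{\mathfrak a}-1}$, which is negligible since ${\mathfrak a}>1/4$.

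No serious obstacle is anticipated, as all required ingredients are already established: the approximate resolvent and its error estimate from the proof of Proposition~\ref{prop:simplified-V2}, and the resolvent and derivative estimates for $\Tg_D^\pm$ from Proposition~\ref{lem:large-sigma-2} (via the translation $\sigma\mapsto\sigma+\varepsilon^{-{\mathfrak a}}$ used to define $\Tg_\varepsilon$). The only minor point to verify is the uniformity in $\lambda\in\partial B(\Lambda_\gamma^2,r(\varepsilon)\varepsilon)$ of the bound on $\|(I+\Eg_\Tg)^{-1}\|$, but this follows from the fact that $\Eg_\Tg\to 0$ in operator norm as $\varepsilon\to 0$, uniformly on this circle, which was established in the course of proving Proposition~\ref{prop:simplified-V2}.
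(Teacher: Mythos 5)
Your proposal is correct and follows exactly the route the paper intends: the paper's own proof of Proposition~\ref{cor:second-large-sigma} simply states that it is identical to that of Proposition~\ref{cor:large-sigma}, namely writing $(\Ug_\varepsilon-\lambda)^{-1}=\Rg^{app}_\Tg(I+\Eg_\Tg)^{-1}$, observing that ${\mathbf 1}_{|\sigma|\geq 2\varepsilon^{-\mathfrak a}}$ annihilates the $\eta_\varepsilon(\Tg_N-\lambda)^{-1}\eta_\varepsilon$ piece and reduces $\zeta_\varepsilon^{\pm}$ to $1$, and then invoking \eqref{eq:166}. Your extra commutator term $[\partial_\sigma,\zeta_\varepsilon^{\pm}]$ in fact vanishes identically after multiplication by the indicator (since $(\zeta^{\pm,\prime})_\varepsilon$ is supported in $\varepsilon^{-\mathfrak a}\leq|\sigma|\leq 2\varepsilon^{-\mathfrak a}$), but your bound on it is harmless in any case.
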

\begin{proof}
 The proof is identical with the proof of Proposition~\ref{cor:large-sigma} and is therefore omitted.
\end{proof}
We continue by proving another estimate:
\begin{proposition}
  Under the conditions of Proposition \ref{prop:simplified-V2} we have
  \begin{subequations}
\label{eq:182}
\begin{equation}
  \|\partial_\tau^2(\Ug_\varepsilon-\lambda)^{-1}\|+ \|\partial_\tau (\Ug_\varepsilon-\lambda)^{-1}\|\leq  \frac{C}{ r(\varepsilon)\varepsilon} \,,
\end{equation}
and 
       \begin{equation}
  \|\partial_\sigma^2(\Ug_\varepsilon-\lambda)^{-1}\|\leq \frac{C}{\varepsilon^{2-{\mathfrak a}}}\,.
 \end{equation}
  \end{subequations}
\end{proposition}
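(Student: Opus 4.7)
The plan is to prove the two inequalities in \eqref{eq:182} by different devices: direct energy identities for the $\tau$-derivative bound (\ref{eq:182}a), and the partition of unity underlying Proposition~\ref{prop:simplified-V2} combined with Proposition~\ref{cor:second-large-sigma} for the $\partial_\sigma^2$ bound (\ref{eq:182}b). Throughout, let $w=(\Ug_\varepsilon-\lambda)^{-1}g$, so that $(\Ug_\varepsilon-\lambda)w=g$, and recall from \eqref{eq:164} that $\|w\|_2\le \frac{C}{(r(\varepsilon)+1-\gamma)\varepsilon}\|g\|_2$.

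For (\ref{eq:182}a), I first take the real part of $\langle(\Ug_\varepsilon-\lambda)w,w\rangle$. Integration by parts (the boundary contribution at $\tau=0$ vanishes thanks to the Dirichlet condition) yields
\begin{displaymath}
\|(1+\varepsilon|\sigma|)^{1/2}\partial_\tau w\|_2^2+\varepsilon\|\partial_\sigma w\|_2^2=\Re\langle g,w\rangle+\Re\lambda\,\|w\|_2^2,
\end{displaymath}
so that $\|\partial_\tau w\|_2^2\le \|g\|_2\|w\|_2+|\Re\lambda|\,\|w\|_2^2$, and \eqref{eq:164} immediately gives $\|\partial_\tau w\|_2\le \frac{C}{r(\varepsilon)\varepsilon}\|g\|_2$. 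Next I take $-\Re\langle(\Ug_\varepsilon-\lambda)w,\partial_\tau^2 w\rangle$; the analogous integration by parts (using $w(\sigma,0)=0$ and $w_\sigma(\sigma,0)=0$, the latter being a consequence of the former) produces
\begin{displaymath}
\|(1+\varepsilon|\sigma|)^{1/2}\partial_\tau^2 w\|_2^2+\varepsilon\|\partial_{\sigma\tau} w\|_2^2\le \|g\|_2\|\partial_\tau^2 w\|_2+C\|w\|_2\|\partial_\tau w\|_2+|\Re\lambda|\,\|\partial_\tau w\|_2^2.
\end{displaymath}
Absorbing the first term on the right via Cauchy--Schwarz and inserting the bounds on $\|w\|_2$ and $\|\partial_\tau w\|_2$ yields $\|\partial_\tau^2 w\|_2\le \frac{C}{r(\varepsilon)\varepsilon}\|g\|_2$, completing (\ref{eq:182}a).

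For (\ref{eq:182}b), I write the equation in the form $\varepsilon\partial_\sigma^2 w=-(1+\varepsilon|\sigma|)\partial_\tau^2 w+(i\tau-\lambda)w-g$, but this cannot be estimated directly in $L^2(\R\times\R_+)$ because $(1+\varepsilon|\sigma|)$ is unbounded. I therefore use the approximate resolvent $\Rg^{app}_\Tg$ of \eqref{eq:165}, so that $(\Ug_\varepsilon-\lambda)^{-1}=\Rg^{app}_\Tg(I+\Eg_\Tg)^{-1}$, and expand $\partial_\sigma^2\Rg^{app}_\Tg$ by the product rule. On the interior block $\eta_\varepsilon(\Tg_N-\lambda)^{-1}\eta_\varepsilon$ the region $\supp\eta_\varepsilon$ lies inside $\{|\sigma|\le 2\varepsilon^{-\mathfrak a}\}$, so $(1+\varepsilon|\sigma|)\le C$ there; combining this with \eqref{eq:154a} and a bound on $\|\tau w\|_2$ extracted from $\Im\langle(\Tg_N-\lambda)w,\tau w\rangle$ (as in \eqref{eq:78a}) gives a usable interior bound. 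On the exterior blocks $\zeta_\varepsilon^\pm(\Tg_D^\pm-\lambda)^{-1}\zeta_\varepsilon^\pm$ I will prove a second-derivative analog of (\ref{eq:158}d) of the form $\|\partial_\sigma^2(\Tg_D^\pm-\lambda)^{-1}\|\le \frac{C}{\varepsilon^{2-\mathfrak a}}$. This is obtained by taking $-\Re\langle(\Tg_D^\pm-\lambda)w,\partial_\sigma^2 w\rangle$, applying integration by parts in both variables (with the boundary terms at $\tau=0$ and $\sigma=\varepsilon^{-\mathfrak a}$ vanishing), and controlling the cross-term $\varepsilon\,\Re\int\mathrm{sgn}(\sigma)\overline{w_\sigma}\,w_{\tau\tau}$ by Cauchy--Schwarz via the bounds $\|\partial_\sigma(\Tg_D^\pm-\lambda)^{-1}\|\le \frac{C}{\varepsilon^{3/2-\mathfrak a}}$ from (\ref{eq:158}d) and $\|\partial_\tau^2(\Tg_D^\pm-\lambda)^{-1}\|\le \frac{C}{\varepsilon^{1-\mathfrak a}}$ from (\ref{eq:158}c). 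The commutator terms $[\partial_\sigma^2,\eta_\varepsilon]$ and $[\partial_\sigma^2,\zeta_\varepsilon^\pm]$ pick up factors of $\varepsilon^{\mathfrak a}$ or $\varepsilon^{2\mathfrak a}$ against the first- or zero-order resolvent norms, and are therefore of lower order.

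The main obstacle will be ensuring that the interior contribution fits under the stated exponent $\varepsilon^{-(2-\mathfrak a)}$; the crude bound $\|\eta_\varepsilon\partial_\sigma^2(\Tg_N-\lambda)^{-1}\eta_\varepsilon\|\le \frac{C}{r(\varepsilon)\varepsilon^2}$ obtained from the equation must be refined by again using $-\Re\langle(\Tg_N-\lambda)w,\partial_\sigma^2 w\rangle$ together with the sharper first-derivative estimate \eqref{eq:178} on $\Tg_N$, exactly as in the exterior argument. The constraint $\mathfrak a<(1-q)/2$ from Proposition~\ref{cor:second-large-sigma} is precisely what makes this refinement possible and guarantees that $\frac{C}{r(\varepsilon)\varepsilon^{3/2}}\le \frac{C}{\varepsilon^{2-\mathfrak a}}$ for the relevant range of $\mathfrak a$ and $r(\varepsilon)$, so that the exterior bound dominates.
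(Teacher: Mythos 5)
For part~(\ref{eq:182}a), your direct energy argument on $\Ug_\varepsilon$ is correct and more self-contained than the paper's own route, which instead reads off (\ref{eq:182}a) from the approximate-resolvent machinery behind Proposition~\ref{prop:simplified-V2} (namely (\ref{eq:158}b), (\ref{eq:158}c) for $\Tg_\varepsilon$ together with \eqref{eq:154a} for $\Tg_N$). Your two integrations by parts, $\Re\langle(\Ug_\varepsilon-\lambda)w,w\rangle$ and $-\Re\langle(\Ug_\varepsilon-\lambda)w,\partial_\tau^2w\rangle$, give the same $\frac{C}{r(\varepsilon)\varepsilon}$ bound more transparently; the only thing to note is that the coefficient $1+\varepsilon|\sigma|$ is merely $W^{1,\infty}$ in $\sigma$, which suffices since you never differentiate it in $\tau$ and the $\sigma$-integrations stay at first order.

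For part~(\ref{eq:182}b) the roles are reversed: you reach for the approximate resolvent while the paper uses precisely the direct energy identity you set aside at the outset as unavailable. The unboundedness of $1+\varepsilon|\sigma|$ is not an obstruction, because it sits in front of $w_{\tau\tau}$, and integrating $\Re\langle w_{\sigma\sigma},(1+\varepsilon|\sigma|)w_{\tau\tau}\rangle$ by parts first in $\sigma$ and then in $\tau$ produces
\begin{displaymath}
\|(1+\varepsilon|\sigma|)^{1/2}w_{\sigma\tau}\|_2^2-\varepsilon\,\Re\langle w_\sigma,\sign\sigma\,w_{\tau\tau}\rangle\,,
\end{displaymath}
a nonnegative term to discard plus a cross-term whose weight $\sign\sigma$ is \emph{bounded}. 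This single observation yields
\begin{displaymath}
\varepsilon\,\|w_{\sigma\sigma}\|_2^2\le C\bigl(\|w_\sigma\|_2^2+\varepsilon^2\|w_{\tau\tau}\|_2^2+\varepsilon^{-1}\|g\|_2^2\bigr)
\end{displaymath}
on the full domain, after which (\ref{eq:182}a) and the $\partial_\sigma$ bound conclude. Your longer route, with a second-derivative analogue of (\ref{eq:158}d) on $\Tg_D^\pm$ and a refined interior estimate, is feasible in spirit but considerably heavier, and it also forces you to keep the interior block separate.

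There is a concrete gap in your closing step. From \eqref{eq:178} and \eqref{eq:154a} the interior block indeed satisfies $\|\eta_\varepsilon\partial_\sigma^2(\Tg_N-\lambda)^{-1}\eta_\varepsilon\|\lesssim\frac{1}{r(\varepsilon)\varepsilon^{3/2}}$. However, with $r(\varepsilon)\sim\varepsilon^q$, the inequality $\frac{1}{r(\varepsilon)\varepsilon^{3/2}}\le\frac{C}{\varepsilon^{2-\mathfrak a}}$ is equivalent to $\mathfrak a\le\frac12-q$, which is \emph{strictly stronger} than the assumption $\mathfrak a<\frac{1-q}{2}=\frac12-\frac q2$ that you invoke. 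For $\mathfrak a\in\bigl(\frac12-q,\frac12-\frac q2\bigr)$ your chain of inequalities does not close, so the exterior term does not dominate as claimed. You would either need to impose $\mathfrak a\le\frac12-q$ explicitly (the paper's own final display $\frac{C}{\varepsilon^{1/2}}\bigl(\varepsilon^{\mathfrak a-3/2}+\frac{1}{r(\varepsilon)}+\varepsilon^{-1/2}\bigr)\|g\|_2$ tacitly rests on a $\|w_\sigma\|\lesssim\varepsilon^{\mathfrak a-3/2}\|g\|$ input of the same nature, so an explicit statement of the restriction would be valuable in either argument), or sharpen the interior estimate by first splitting $w=\tilde\Pi_1w+(I-\tilde\Pi_1)w$ as in the proof of Proposition~\ref{prop:simplified-V2}, so that the $\frac{1}{r\varepsilon}$ loss is confined to a rank-one spectral component in $\tau$ and does not propagate unmitigated into the $\partial_\sigma^2$ bound.
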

\begin{proof}
  The proof of (\ref{eq:182}a) follows from (\ref{eq:158}b),
  (\ref{eq:158}c), and \eqref{eq:154a}.  To obtain (\ref{eq:182}b) we
  use the identity
 \begin{displaymath}
   -\Re\langle w_{\sigma\sigma},(\Ug_\varepsilon -\lambda)w\rangle = \|w_{\tau\sigma}\|_2^2 +
   \varepsilon \, \|w_{\sigma\sigma}\|_2^2 + \varepsilon\, \Im\langle w_\sigma,w_{\tau\tau}\sign\sigma\rangle -\Re\lambda \|w_\sigma\|_2^2\,.
 \end{displaymath}
 Hence,
 $$
   \varepsilon \, \|w_{\sigma\sigma}\|_2^2 \leq C \left( \| w_\sigma\|^2_2 + \varepsilon^2 \|w_{\tau\tau}\|^2_2 + \varepsilon^{-1} \|g\|^2_2\right) \,,
  $$
  which together with (\ref{eq:182}a)  and (\ref{eq:158}d) leads to 
 \begin{equation*}
   \|w_{\sigma\sigma}\|_2\leq \frac{C}{\varepsilon^{1/2}}( \varepsilon ^{\mathfrak a -\frac 32} + \frac{1}{r(\varepsilon)} +\varepsilon^{-1/2})\, \|g\|_2)\,.  
 \end{equation*}
 \end{proof}
As in the previous section we also need the following estimate:
\begin{proposition}
\label{lem:large-tau-V2}  
Let $0<a$. Then, there exist positive
$C_a$ and $\varepsilon_0$ such that for all $\varepsilon\in (0,\varepsilon_0]$,   $\partial
B(\Lambda_\gamma^2,r(\varepsilon) \varepsilon)$ belongs to $\rho(\Ug_\varepsilon)$ and for all  $\lambda\in\partial
B(\Lambda_\gamma^2,r(\varepsilon) \varepsilon)$\,, 
\begin{subequations}
\label{eq:183}
  \begin{equation}\
    \|{\mathbf 1}_{\tau\geq\varepsilon^{-a}}(\Ug_\varepsilon -\lambda)^{-1}\| + \varepsilon^{a/2}\|{\mathbf 1}_{\tau\geq\varepsilon^{-a}}\partial_\tau(\Ug_\varepsilon -\lambda)^{-1}\| +
   \varepsilon^{a/2+ 1/2}\|{\mathbf 1}_{\tau\geq\varepsilon^{-a}}\partial_\sigma(\Ug_\varepsilon -\lambda)^{-1}\|
   \leq C_a\varepsilon^a \,,\,   \end{equation}
and
\begin{equation}
   \|{\mathbf 1}_{\tau\geq\varepsilon^{-a}}\partial_\tau^2(\Ug_\varepsilon -\lambda)^{-1}\| +
   \varepsilon\|{\mathbf 1}_{\tau\geq\varepsilon^{-a}}\partial_\sigma^2(\Ug_\varepsilon -\lambda)^{-1}\| \leq
   C_a \,.
\end{equation}
\end{subequations}
\end{proposition}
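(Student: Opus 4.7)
The strategy is to follow the scheme of Proposition \ref{lem:large-tau-V1} almost verbatim, with $\hat{\B}_\ego$ replaced by $\Ug_\varepsilon$ and $\ego$ by $\varepsilon$. Two structural simplifications help: the potential is simply $i\tau$ (none of the $\beta\ego\tau^2\chi$ or $i\ego\sigma^2\tau$ contributions that complicated the V1 proof), and the inclusion $\partial B(\Lambda_\gamma^2,r(\varepsilon)\varepsilon)\subset\rho(\Ug_\varepsilon)$ is already furnished by Proposition \ref{prop:simplified-V2}. What is new, however, is the weight $(1+\varepsilon|\sigma|)$ in front of $\partial_\tau^2$, which must be tracked carefully in every integration by parts.

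For part (a), I fix $\lambda\in\partial B(\Lambda_\gamma^2,r(\varepsilon)\varepsilon)$, let $u\in D(\Ug_\varepsilon)$, and set $g=(\Ug_\varepsilon-\lambda)u$. Using the cutoff $\zeta_\xi(\tau)=\zeta_+(2\tau/\xi)$ where $\zeta_+$ is given by \eqref{eq:111}, the identity $\langle g,\zeta_\xi^2 u\rangle$ produces a real part controlling the weighted kinetic energy $\|(1+\varepsilon|\sigma|)^{1/2}\partial_\tau(\zeta_\xi u)\|_2^2+\varepsilon\|\zeta_\xi\partial_\sigma u\|_2^2$, and an imaginary part carrying the decisive positive term $\|\tau^{1/2}\zeta_\xi u\|_2^2$ together with a cross term $2\,\Im\langle(1+\varepsilon|\sigma|)\zeta_\xi\zeta_\xi' u_\tau,u\rangle$. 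Since $\zeta_\xi'$ is supported in $\tau\in[\xi/2,\xi]$ with $|\zeta_\xi'|\leq C/\xi$, for $\xi\geq 4(|\Re\lambda|+|\Im\lambda|)$ one derives the recursive estimate
\begin{equation*}
\|\mathbf{1}_{\tau\geq\xi}u\|_2 \leq \frac{C}{\xi}\bigl(\|g\|_2+\|\mathbf{1}_{\tau\geq\xi/2}u\|_2\bigr),
\end{equation*}
and similarly for the weighted derivatives, exactly as in the derivations of \eqref{eq:141}--\eqref{eq:142}. Iterating $k$ times, setting $\xi=\varepsilon^{-a}$, choosing $k$ sufficiently large in terms of $a$, and absorbing the remaining $\|u\|_2$ via \eqref{eq:164} yields (a).

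For part (b), I would test successively against $\zeta_\xi^2 u_{\tau\tau}$ and $\zeta_\xi^2 u_{\sigma\sigma}$, following \eqref{eq:143} and \eqref{eq:125ba}. In each case the identity decomposes into a positive principal part (containing $\|(1+\varepsilon|\sigma|)^{1/2}\zeta_\xi u_{\tau\tau}\|_2^2$ and $\varepsilon\|(1+\varepsilon|\sigma|)^{1/2}\zeta_\xi u_{\sigma\sigma}\|_2^2$ respectively) plus error terms which are either of the form already estimated in part (a) or contain the auxiliary quantity $\varepsilon\|\sigma^{1/2}\tau\zeta_\xi u\|_2^2$. This last quantity is bounded by $C\|g\|_2^2$ via the imaginary part of $\langle(\Ug_\varepsilon-\lambda)u,\tau\zeta_\xi^2 u\rangle$, after the part (a) estimates are invoked.

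The principal obstacle I anticipate is controlling the weighted error $\|(1+\varepsilon|\sigma|)^{1/2}\zeta_\xi' u\|_2^2$ that arises at each iteration: the $\varepsilon|\sigma|$ factor must not destroy the gain of one power of $1/\xi$. The remedy is to split the $\sigma$-integration at $|\sigma|\sim\varepsilon^{-\mathfrak a}$ with $1/4<\mathfrak a<(1-q)/2$. For $|\sigma|\leq 2\varepsilon^{-\mathfrak a}$ one has $\varepsilon|\sigma|\leq 2\varepsilon^{1-\mathfrak a}\to 0$, so the weight is negligible; the tail $|\sigma|\geq 2\varepsilon^{-\mathfrak a}$ is dominated by Proposition \ref{cor:second-large-sigma}. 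Both contributions remain subdominant, and the iteration closes exactly as in the V1 proof.
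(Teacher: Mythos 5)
The paper's own proof of this proposition is extremely terse: it records the identity \eqref{eq:184}, deduces the inequality $\|\tau^{1/2}\zeta_\xi w\|_2^2 - (\Im\lambda+\Re\lambda)\|\zeta_\xi w\|_2^2 \leq 2\|\zeta_\xi'w\|_2^2 + 2\|\zeta_\xi w\|_2\|\zeta_\xi g\|_2$, and then declares that the rest "continues along the same lines" of Proposition \ref{lem:large-tau-V1}. Your proposal follows that same plan, and you are right to single out the factor $(1+\varepsilon|\sigma|)$ multiplying $\partial_\tau^2$ as the only genuinely new ingredient. In fact, a direct recomputation of \eqref{eq:184} shows that the $-\|w\zeta'\|^2$ and $2\Im\langle w\zeta',\partial_\tau(\zeta w)\rangle$ terms should each carry a factor $(1+\varepsilon|\sigma|)$ (respectively $(1+\varepsilon|\sigma|)^{1/2}$ inside the norm), so the displayed identity in the paper appears to omit the weight, and the unweighted $2\|\zeta_\xi'w\|_2^2$ on the right of the next inequality needs justification. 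You correctly locate this issue.

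Where your proposal has a real gap is in the proposed remedy. For $|\sigma|\leq 2\varepsilon^{-\mathfrak a}$ the weight is indeed $1+O(\varepsilon^{1-\mathfrak a})$ and harmless, but for the tail $|\sigma|>2\varepsilon^{-\mathfrak a}$ you invoke Proposition \ref{cor:second-large-sigma}, which only furnishes the \emph{unweighted} bound $\|\mathbf{1}_{|\sigma|\geq 2\varepsilon^{-\mathfrak a}}(\Ug_\varepsilon-\lambda)^{-1}\|\leq C\varepsilon^{\mathfrak a-1}$. The quantity you actually need to control is $\varepsilon^{1/2}\||\sigma|^{1/2}\zeta_\xi' w\|_2$ on that region, which carries the unbounded factor $(\varepsilon|\sigma|)^{1/2}$; an $L^2$ estimate of $w$ does not bound a weighted $L^2$ norm of $w$. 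Closing this requires additional input — either a weighted large-$|\sigma|$ resolvent bound (a weighted version of Proposition \ref{lem:large-sigma-2}), or an a priori domain estimate for $\Ug_\varepsilon$ in the spirit of \eqref{eq:154} adapted from $\Ug_1$ with the $\varepsilon$-dependent weight and with the absorption of the $-(\Re\lambda+\Im\lambda)$ term handled by the small/large-$\sigma$ split. As written, "both contributions remain subdominant, and the iteration closes" is an assertion, not a proof, of the one point that actually differs from the V1 case.
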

\begin{proof}
     As in the proof of Proposition \ref{lem:large-tau-V1}, let $\zeta_+$ be given by \eqref{eq:110}.
    Let further
     $\zeta_\xi(\tau)=\zeta_+(2 \tau/\xi)$. Let $w\in D(\Ug_\varepsilon)$ and
 $g\in L^2(\R^2_+)$ satisfy
 \begin{displaymath}
    (\Ug_\varepsilon -\lambda)w=g \,.
 \end{displaymath}
As, with $\zeta=\zeta_\xi$, 
\begin{equation}
\label{eq:184}
\begin{array}{l}
\Re\langle (\Ug_\varepsilon -\lambda)w, \zeta^2w\rangle + \Im
\langle (\Ug_\varepsilon -\lambda)w, \zeta^2w\rangle \\
\quad  =
\|[1+\varepsilon|\sigma|]^{1/2}\partial_\tau(\zeta w)\| ^2
\\
\qquad
 -(\Im\lambda+\Re\lambda)\|\zeta w\| ^2 +
\varepsilon\|\zeta \partial_\sigma
w\| ^2 -\|w\zeta^\prime\|^2 \\
 \qquad + 2\Im\langle
w\zeta^\prime,\partial_\tau(\zeta w)\rangle +
\|\tau^{1/2}\zeta
w\| ^2\,,
\end{array}
\end{equation}
we get first 
\begin{displaymath}
  \|\tau^{1/2}\zeta_\xi w\|_2^2- (\Im\lambda+\Re\lambda)\|\zeta_\xi w\|_2^2 \leq
  2\|\zeta_\xi^\prime w\|_2^2+2\|\zeta_\xi w\|_2\|\zeta_\xi g\|_2\,.
\end{displaymath}
The proof then continues along the same lines of the proof of
Proposition \ref{lem:large-tau-V1} and is therefore omitted.
   \end{proof}

\section{Upper bound}
\label{sec:upper}

\subsection{Goals and notation}
Let $x_0\in \Sg^m$ (resp. $x_0\in \hat{\Sg}^m$), where $\Sg^m$ (resp.
$\hat{\Sg}^m$) is defined by \eqref{eq:4a} --\eqref{eq:8}
(resp.\eqref{eq:4av2}--\eqref{eq:8v2}), for type V1 (resp.
V2) potentials.  We have proven in Proposition \ref{prop3.2} (resp.
Proposition \ref{prop4.1}) the existence of an approximate eigenvalue
$\hat \Lambda^1(h,x_0)$ (resp. $\hat \Lambda^2(h,x_0)$) with a corresponding approximate
eigenstate localized (as $h\ar 0$) near $x_0$. In this section, we
prove the existence of an eigenvalue inside the disk 
$B(\hat{\Lambda}^i(h,x_0),\hat r_i (h) h^{k_i})$ where $i=1$ (resp. $i=2$)
corresponds to potentials of type V1 (resp. V2),
$k_1=4/3$ and $k_2=10/9$, $\hat \Lambda_1(h,x_0)$ (resp. $ \hat \Lambda^2(h,x_0)$) is 
defined, for $c(x_0)>0$,  in  \eqref{eqdefhatLambda}  (resp. in  \eqref{eq:44a}) by:
\begin{equation*} 
\begin{array}{ll}
\hat \Lambda^1(h,x_0) &= i V(x_0) + (c(x_0) h)^{2/3} ( \lambda_0 + \ego(h) \lambda_1) \,, \\
\hat \Lambda^2(h,x_0) &= i V(x_0) +  (c(x_0) h)^{2/3} (\lambda_0 + \varepsilon(h) \check \lambda_1 )\,,
\end{array}
\end{equation*}
and for $c(x_0) <0$ by
\begin{equation*} 
\begin{array}{ll}
\hat \Lambda^1(h,x_0) &= i V(x_0) + (-c(x_0) h)^{2/3} \overline{( \lambda_0 + \ego(h) \lambda_1) }\,,\\
\hat \Lambda^2(h,x_0) &= i V(x_0) +  (- c(x_0) h)^{2/3} \overline{(\lambda_0 + \varepsilon(h)  \check \lambda_1 )}\,.
\end{array}
\end{equation*}

We also recall from \eqref{eq:35} and \eqref{eq:49} that
$$
  \ego(h) = (2^{-1/2}\, \alpha_m^{1/2} J_m^{-\frac 56} )\,h^{2/3}\,\mbox{ and }\,\varepsilon(h) = [2^6 \hat \alpha_m^6 J_m^{-8}]^{1/9}\, h^{4/9}\,,
 $$ 
 and keep in mind Remark \ref{rem:sign-c} and that $|c(x_0)|=J_m$.\\
 The functions $\hat r_i(h)$ ($i=1,2$) are determined from $r_1(\ego)$
 and $r_2(\varepsilon)$, respectively appearing in Sections~\ref{sec:simplified}
 and \ref{sec:simplified-V2}, via the relations
\begin{equation}\label{relhatr}
J_m^{-\frac 23} h^{k_1-2/3} \hat r_1 (h) = \ego(h)  r_1(\ego(h)) \mbox{ and }   J_m^{-\frac 23} h^{k_2-2/3} \hat r_2 (h) = \varepsilon(h) r_2(\varepsilon(h))\,.
\end{equation}
We now choose
\begin{equation}
  \hat{r}_1(h) = h^{\hat q_1}  \,\mbox{ and }  \hat{r}_2(h) = h^{\hat q_2} \,,
\end{equation}
where,
\begin{equation}
\label{defhatq} 
\hat q_1= \frac 23 q \quad \mbox{ and  } \quad \hat q_2 = \frac 49 q\,.
\end{equation}
With this choice and from \eqref{eq:35} and \eqref{eq:49}, we get 
\begin{equation} \label{defr1r2}
r_1(\ego) =  \left( 2^{\frac{1+q}{2} } J_m^{\frac{ (1+5q)}{6} }\, \alpha_m^{-\frac{1+q}{2}} \right) \,\ego^q \,,\, r_2(\varepsilon) = \left( J_m^{\frac{2+8q}{9}}\, 2^{-\frac{2(1+q)}3}\, \hat \alpha_m^{-\frac{ 2(1+q)}{3} } \right) \,  \varepsilon^q\,.
\end{equation}
Recall that $q<1/6$ (in both Sections \ref{sec:simplified} and
\ref{sec:simplified-V2}) and hence
\begin{subequations}
\label{eq:condhatq}
   \begin{equation}
\hat q_1<1/9 \mbox{ and } \hat q_2<2/27\,.  \tag{\ref{eq:condhatq}a,b}
\end{equation}
\end{subequations}

 Using the preliminary estimates
established in the previous sections, we obtain in this section a
bound on the resolvent norm of $\A_h$ given by \eqref{eq:1} on a
suitable circle centered at $\hat \Lambda^i(h)$.  The method is similar to
the one used in \cite{Hen,AGH}, i.e., we obtain localized
approximations of the resolvent $(\A_h-\lambda)^{-1}$, that facilitate the
application of the various estimates obtained in Sections
\ref{sec:simplified} and \ref{sec:simplified-V2}. The combination of
these estimates with the construction of quasi-modes leads to the
proof of existence of an eigenvalue.

\subsection{Refined partition of unity}
\label{sec:refin-part-unity}
We start from the partition of unity of size $h^\varrho$ constructed in
paragraph \ref{sss2.3.1} (which we denote in this section by $(\bar
\chi_{j,h},\bar \zeta_{k,h})$ to avoid the confusion with a future
notation), with $j\in \Jg_i$, $k\in \Jg_\pa$.  Note that $\bar \chi_{j,h}$
and $\bar \zeta_{k,h}$ are respectively supported in $B(a_j,h^\varrho)$ or
$B(b_k,h^\varrho)$. Recall also the decomposition of $\Jg_\partial$ in Section 2
into three disjoint subsets $\Jg_\pa^c$, $\Jg_\pa^D$ and $\Jg_\pa^N$,
so that all corners are included in $\Jg_\pa^c$.  To simplify our
resolvent construction, we impose in addition the condition
\begin{equation}
\label{eq:164add}
 \Se \subset   \Union_{k\in  \Jg_\pa }\{b_k\}   \,. 
\end{equation}
When the potential is of type V1, we split further $\Jg_\partial^D$ by setting
\begin{displaymath}
  \Jg_\partial^s =\{ k\in\Jg_\partial^D \, | \, b_k\in\Se \}  \mbox{ and }  \Jg_\partial^r=\Jg_\partial^D\setminus\Jg_\partial^s\,.
\end{displaymath}
When the potential is of type V2, we set
\begin{displaymath}
  \Jg_\partial^s =\{ k\in\Jg_\partial^c \, | \, b_k\in\Se  \}\mbox{ and }  \Jg_\partial^r=\Jg_\partial^D\,.
\end{displaymath}
We further set
\begin{displaymath}
  \Jg_\partial^{s,0}= \{ k\in\Jg_\partial^s \, | \, V(b_k)\neq V(x_0)\} \,.
\end{displaymath}
 As in \cite{AGH} (Subsection
7.3) we need to use two different scales (or disk sizes), i.e. as
before $h^\varrho$ for  $k \in(\Jg_\pa \setminus\Jg_\partial^s)\cup \Jg_\partial^{s,0}$ or $j\in \Jg_i$  but now
$h^{\varrho_\perp}$  for $k\in \Jg_\partial^s\setminus\Jg_\partial^{s,0}$  
where 
\begin{equation}\label{condrhorhoperp}
\frac 23 > \varrho > \frac 13 > \varrho_\perp >0\,.
\end{equation}
 We  now proceed in two steps.\\
We first construct a finite (independent of
$h$) partition of unity of size $h^{\varrho_\perp}$,  $\check  \xi_{h},
\check \zeta_{k,h}$ with $k\in \mathcal \Jg^s_\partial$ such that
\begin{subequations}\label{eq:185}
   \begin{equation}
   \check \xi_h^2 + \sum_{k\in \Jg_\partial^s\setminus\Jg_\partial^{s,0}} \check \zeta_{k,h}^2 = 1 \mbox{
     in } \Omega \,,
   \end{equation}
 with 
$$
\check \zeta_{k,h}\equiv1 \mbox{ in } B(b_k,h^{\varrho_\perp}/2)\,,\,
\check \zeta_{k,h}\equiv0 \mbox{ in } \Omega\setminus B(b_k,h^{\varrho_\perp})\,,
$$
\begin{equation}
  |\nabla\check \zeta_{k,h}| + h^{\varrho_\perp}|D^2\check \zeta_{k,h}|\leq C\,h^{-\varrho_\perp} \,, \;
  \forall k \in {\Jg_\partial^s\setminus\Jg_\partial^{s,0}}\,.
\end{equation} 
and
\begin{equation}
  |\nabla \check  \xi_{h}| + h^{\varrho_\perp}|D^2 \check  \xi_{h}|\leq C\, h^{-\varrho_\perp} \,.
  \end{equation}
  \end{subequations}

Then, we set for $k\in \Jg_\pa$ 
\begin{displaymath} 
  \tilde{\zeta}_{k,h} = \bar \zeta_{k,h}\, \check  \xi_h \quad   \tilde{\chi}_{j,h} = \bar \chi_{j,h}\, \check  \xi_h \,.
\end{displaymath}
 To satisfy the Neumann boundary condition on $\partial\Omega_N$, and for later
  reference, we introduce an additional condition
  \begin{equation}
\label{eq:186}
 \frac{\pa \check \xi_{h}}{\pa \nu} \big |_{\partial\Omega}=0 \quad ; \quad \frac{\pa  \bar \zeta_{k,h}}{\pa \nu} \big |_{\partial\Omega}=0 \,. 
  \end{equation}
As $\varrho>\varrho_\perp$, we have  for  sufficiently small $h_0>0$, $
  \tilde{\zeta}_{k,h} \equiv 0$ for $k\in \Jg^s$ and $h\in (0,h_0]$. Note
  however that  we may have  for  sufficiently small $h$, 
  $j\in\Jg_i$, and $k\in \Jg_\partial^s\setminus\Jg_\partial^{s,0}$  the inclusion  $B(a_j,h^\varrho)\subset B(b_k,h^{\varrho_\perp} /2)$
  and hence $\tilde{\chi}_{j,h}\equiv0\,$. A similar observation can be made for
$ k\in (\Jg_\pa \setminus\Jg_\partial^s)\cup \Jg_\partial^{s,0}$.  We thus define
\begin{displaymath}
\begin{array}{ll} 
  \tilde{\Jg}_i= \{j\in\Jg_i \,| \, \tilde{\chi}_{j,h}\not\equiv0 \}\,,& \,  \tilde{\Jg}_\partial^N= \{k\in\Jg_\partial^N \,| \,
  \tilde{\zeta}_{k,h}\not\equiv0 \}\,,\\
    \tilde{\Jg}_\partial^D= \{k\in\Jg_\partial^D \,|  \tilde{\zeta}_{k,h}\not\equiv0 \}\,, & \, \tilde{\Jg}_\partial^r= \{k\in\Jg_\partial^r
  \,| \, \tilde{\zeta}_{k,h}\not\equiv0 \} \,.
  \end{array}
\end{displaymath}
Clearly, we have 
\begin{displaymath} 
  \sum_{k\in\Jg_\partial^s\setminus \Jg_\pa^{s,0}} \check \zeta_{k,h}^2+    \sum_{
    k\in (\tilde \Jg_\partial^c\setminus\Jg_\partial^s)\cup \Jg_\pa^{s,0}} \tilde
  \zeta_{k,h}^2+ \sum_{k\in \tilde{\Jg}_\partial^N\cup \tilde{\Jg}_\partial^r} 
  \tilde{\zeta}_{k,h}^2 +  \sum_{j\in\tilde{\Jg}_i}
  \tilde{\chi}_{j,h}^2 =1 \mbox{ in } \Omega \,.  
\end{displaymath}
Note that for type V1 potentials, $\Jg_\partial^c\setminus\Jg_\partial^s=\Jg_\partial^c$,
whereas for type V2, $\Jg_\partial^D\setminus\Jg_\partial^s=\Jg_\partial^D$. For simplicity of notation we
drop the tilde and check accents in the sequel and use
$(\chi_{j,h},\zeta_{k,h})$ instead of $(\tilde{\chi}_{j,h},\tilde{\zeta}_{k,h})$
and $(\Jg_i,\Jg_\partial^D, \Jg_\partial^N,\Jg_\partial^r)$ instead
of $(\tilde{\Jg}_i, \tilde {\Jg}_\partial^D, \tilde{\Jg}_\partial^N,\tilde{\Jg}_\partial^r)$. \\
Note further that by the previous construction we have that
\begin{subequations}
  \label{eq:187}
  \begin{equation}
    \begin{cases}
   |\nabla\chi_{j,h}|+h^{\varrho_\perp}|D^2\chi_{j,h}|\leq C\, h^{-\varrho_\perp} & \text{ in }
B(a_j,h^{\varrho}/2) \\ |\nabla\chi_{j,h}|+h^\varrho|D^2\chi_{j,h}|\leq C\, h^{-\varrho}&  \text{ in }
B(a_j,h^{\varrho})  
    \end{cases}
 \,,\,\forall j\in\Jg_i   \,.
\end{equation}
At the boundary, we have
\begin{equation}
  \begin{cases}
    |\nabla\zeta_{k,h}|+h^{\varrho_\perp}|D^2\zeta_{k,h}|\leq Ch^{-\varrho_\perp}& \text{ in }
B(b_k,h^{\varrho}/2) \\ |\nabla\zeta_{k,h}|+h^{\varrho}|D^2\zeta_{k,h}|\leq Ch^{-\varrho} &
\text{ in }
B(b_k,h^{\varrho}) 
  \end{cases}\,.
  \end{equation}
As in Section \ref{s2}, each point of
$\Omega$ belongs to at most $N_0$ disks with $N_0$ independent of $h$,
and
\begin{equation}
\sum_j |\pa^{\hat \gamma}\chi_{j,h}(x)|^2 + 
\sum_k |\pa^{\hat \gamma}\zeta_{k ,h}(x) |^2 \leq C_{\hat \gamma}  \, h^{-2 |\hat \gamma |{\varrho}}\,,\, \forall  \hat \gamma \in \mathbb N^2 \mbox{ s.t. }|\hat \gamma|\leq 2\,.
\end{equation}
\end{subequations}
As in Section \ref{s2} once again, we introduce $\eta_{k,h} = 1_{\Omega}\, \zeta_{k,h}\,. $\\
Note that, as a result of \eqref{eq:186}, we have
\begin{equation}\label{eq:188}
  \frac{\pa \zeta_{k,h}}{\pa \nu} \big |_{\pa \Omega}=0 \,,\, \forall k \in \mathcal \Jg_\pa^N\,.
  \end{equation}
To be compatible with future constraints we impose from now on the
following restrictions for potentials of type V1
\begin{subequations}
  \label{eq:189}
\begin{equation}
 \frac{13}{63} < \varrho_{\perp}  < \frac 29\, \mbox{ and }\,
\frac{\varrho_\perp}{2}+\frac{1}{3}<  \varrho < \frac 23 -\varrho_\perp\,,
\end{equation}
and for potentials of type V2 
\begin{equation}
 \frac{7}{27} < \varrho_{\perp}  < \frac 13 \,\mbox{ and }\, 
\frac{\varrho_\perp}{2}+\frac{1}{3}<  \varrho < \frac 23 -\frac{\varrho_\perp}{2} \,.
\end{equation}
\end{subequations}
These new conditions are, clearly, more restrictive than the ones
previously given in \eqref{condrhorhoperp}. \\
As in Section \ref{s2} (see \eqref{eq:22}) the approximate resolvent has the form
\begin{equation}
\label{eq:189aa}
 \mathcal{R}(h,\lambda) = \sum_{j\in \mathcal \Jg_i(h)}\chi_{j,h}(\A_{j,h}-\lambda)^{-1}\chi_{j,h} 
+  \sum_{k\in \mathcal \Jg_\partial (h)}\eta_{k,h}  R_{k,h}(\lambda) \eta_{k,h}\,,
\end{equation}
but this time we need to estimate the localized resolvents (some of
them account now for higher order terms in the Taylor expansion of $V$
near $b_k$) $R_{k,h}(\lambda)$, and the remainder
 \begin{equation}\label{defEg}
  \Eg(h,\lambda) = (\A_h -\lambda)\mathcal{R}(h,\lambda) - I\,,
\end{equation}
 for $\lambda \in \partial B (\hat \Lambda^i(h), h^{k_i} \hat r_i (h))$. 
 
\subsection{Localized resolvent estimates}
\label{sec:local-resolv-estim}
  \subsubsection{Approximation associated  with $j\in \Jg_i$ and $k\in \Jg_\pa^N$}
In this case we can directly  apply the estimates \eqref{eq:18} and
\eqref{eq:20} given in Section~\ref{s2}, as $\partial
B({\hat{\Lambda}}^i(h),\, 
  h^{k_i}\hat{r}_i(h)))$ is included in $\{ \Re \lambda \leq \omega \, h^\frac 23\}$
  for any $\omega > J_m^{\frac 23}  \frac{|\nu_1|}{2} $. 

\subsubsection{Decomposition of $\Jg_\partial^s$}
Having in mind the definition of $\Jg_\partial^{s,0}$ we further split
$\Jg_\partial^s\setminus \Jg_\partial^{s,0}$ in the following manner: 
\begin{enumerate}
\item $\Jg_\partial^{s,1}=\{k \in \Jg_\partial^s \, | \,
 V(b_k)= V(x_0)\,;\;   |\alpha(b_k)|>\alpha_m\,\}$,
\item $\Jg_\partial^{s,2}=\{k\in \Jg_\partial^s \, | \, V(b_k)= V(x_0)\,;\;    |\alpha(b_k)|=\alpha_m\,\}$,
\end{enumerate}
where $\alpha(x)$ is given by  \eqref{eq:6} for potentials of type V1 (resp.
by \eqref{defhatalpha}   for potentials of type V2, 
  where in this case $\alpha$ is replaced by $\hat \alpha$ and $\alpha_m$ by $\hat
  \alpha_m$).  \\
 Note that by Remark  \ref{rem:sign-c}, when $V(b_k)= V(x_0)$,
 we have $c(b_k) c(x_0) >0$ as $x_0$ and $b_k$ belong to the same
 connected component of $\overline{\pa \Omega^D}$.

\subsubsection{Localization associated  with $k\in \Jg^{s,0}_{\pa }$}
For potentials of type $V1$, we have for any $k\in \Jg^{s,0}_{\pa }$,
$b_k\in \pa \Omega_D$ and we can therefore use the approximate operator
$\tilde\A_{k,h}$ introduced in \eqref{eq:183a}. Upon dilation 
we then use Lemma \ref{lemma4.12} (with $\nu = \left[
  (V(b_k)- \Im \lambda )\right](J_mh)^{-\frac 23}$ and $\mu = \Re \lambda\,
(J_m h)^{-\frac 23}$) to obtain that  $\partial B({\hat{\Lambda}}^1(h),  h^{k_1}\hat{r}_1(h)) \subset \rho (\tilde{\A}_{k,h})$ and 
\begin{equation}  
\label{eq:190}
\max_{k\in \Jg_\partial^{s,0}}\sup_{\lambda\in\partial B({\hat{\Lambda}}^1(h),     \,
  h^{k_1}\hat{r}_1(h))}\|(\tilde{\A}_{k,h}-\lambda)^{-1}\| \leq
\frac{C}{h^{2/3}} \,. 
\end{equation}

For potentials of type $V2$, we have for any $k\in \Jg^{s,0}_{\pa }$,
$b_k\in \Jg_\pa^c$ and we may therefore use the approximate operator
$\tilde\A_{k,h}$ introduced in \eqref{deftildeAkcorner}.
After dilation we may apply Lemma~\ref{lemma4.12-corner} to obtain
that $\partial B({\hat{\Lambda}}^2(h), h^{k_2}\hat{r}_2(h)) \subset \rho
(\tilde{\A}_{k,h})$ and  
\begin{equation}
\label{eq:191}
  \max_{k \in \Jg_\partial^{s,0}}\sup_{\ \lambda\in\partial B(\hat{\Lambda}^2(h) ,\,
    \hat r_2(h) h^{k_2})}\|(\tilde{\A}_{k,h} -\lambda)^{-1}\| \leq \frac{C}{h^{2/3}}
  \,. 
\end{equation}

\subsubsection{$k\in \Jg_\pa^s \setminus
  \Jg_\pa^{s,0}$} 
For $k\in\Jg_\partial^s \setminus \Jg_\pa^{s,0}$,  we need the approximation of
$\A_h$, to be more refined than \eqref{eq:183a} or
\eqref{deftildeAkcorner}. We thus  introduce, for potentials of type V1, (see \eqref{eq:33})
  \begin{equation}   
\label{eq:192}
\begin{cases}
 \tilde\A_{k,h}= -h^2\Delta_{s,\rho}+h^2\kappa(b_k)\partial_\rho-2h^2\kappa(b_k)\rho\chi(h^{-2(1-\tilde{b})/3}\rho)\partial^2_s+iV_{b_k}^{(2)}\,,\\
\Dg(\tilde \A_{k,h}) = \{ u\in H^2(\R^2_+)\cap H^1_0(\R^2_+)\, | \, \rho(1+s^2)u
\in L^2(\R^2_+)  \} \,,
\end{cases}
\end{equation}
where 
\begin{equation}\label{defVbk}
  V_{b_k}^{(2)}=V(x_0) \pm 
  J_m\rho+ \frac{1}{2}\alpha(b_k) s^2\rho+ \frac{1}{2}\hat{\beta}_k \chi(h^{-2(1-b)/3}\rho)\rho^2\,.
\end{equation}
The curvilinear coordinates $(s,\rho)$, defined in Section
\ref{sec:local-coordinates}, are centered at $b_k$ (see Remark
\ref{remext}). The curvature $\kappa$ is approximated in
(\ref{eq:192}) by its value at $b_k$ and $\hat{\beta}_k =\hat \beta (b_k) $ is given by \eqref{defalphac}.\\
The cutoff function $\chi$ is the restriction to $\R_+$ of
\eqref{eq:59}, the positive parameters $b$ and $\tilde{b}$
satisfy the limitations set in Sections \ref{sec:v1-potentials:-1d} and
\ref{sec:simplified}, i.e.,
\begin{equation}\label{eq:244}
  \frac{1}{2}<b<\frac{3}{4} \quad ; \quad 0<\tilde{b}<\frac{1}{2}-q \,.
\end{equation}
Further restrictions for  $b$ and $\tilde{b}$ will be imposed at a later stage. 

For potentials of type V2 we use the coordinates introduced in
Paragraph \ref{sss2.3.4}, centered at the corner $b_k$, and consider the
approximate operator
\begin{equation}  \label{deftildeAkhV2}
\begin{cases}
 \tilde\A_{k,h}= -h^2[(1+ \tilde{\alpha}_{b_k}s)\partial^2_\rho+\partial^2_s]+i\big(V(b_k) \pm
  J_m \rho\big)\,,\\
\Dg(\tilde \A_{k,h}) = \{ u\in H^2(Q) \, | \,  u_{\partial Q_\|}=0 \,;\,
\partial_\nu u_{\partial Q_\perp}=0 \,;\, \rho \, u \mbox{ and } s\,\partial^2_\rho u \in L^2(Q)  \} \,,
\end{cases}
\end{equation}
where the coordinates $(s,\rho)$ are given by \eqref{eq:24}, and $\hat \alpha_{b_k} $ is
the same as in \eqref{eq:27} with $\mathfrak c=b_k\,$.

For potentials of type $V1$ (with Remark \ref{remext} in mind
  once again) we apply to (\ref{eq:192}) the dilation
\begin{equation}     
\label{eq:193}
 s=\Big[\frac{J_mh^4} {8 |\alpha(b_k)|^3}\Big]^{1/12}\sigma
\quad ; \quad \rho=\Big[\frac{h^2}{J_m}\Big]^{1/3}\tau 
 \,,
\end{equation}
to  obtain the unitary equivalent operator (for $c(x_0) >0$ which
is equivalent to $c(b_k) >0$)
\begin{equation}
  \label{eq:194}
i V (x_0) +  [hJ_m]^{\frac 23}\, \tilde{\B}_{\ego_k} \,.
\end{equation}
In the above, $\tilde{\B}_\ego $ is defined in
\eqref{eq:92}-\eqref{eq:134}-\eqref{eq:148}, and (see also
\eqref{defomega})
$$
\begin{array}{ll}\displaystyle
\beta_k= \frac{\hat{\beta}_k}{[8|\alpha(b_k)|J_m]^{1/2}} \,, & 
 \omega =  \kappa (b_k) \Big[2 \frac{J_m}{ \alpha(b_k) }\Big]^{1/2} \,, \\ & \\
  \theta=2^\frac 32\, \displaystyle \frac{J_m^{1/2}\kappa(b_k)}{|\alpha(b_k)|^{1/2}} \,,&  \ego_k(h)  = \, J_m^{-5/6}2^{-\frac 12}\, |\alpha(b_k)|^{1/2} \,h^{2/3}\,.
  \end{array}
$$
When $c(x_0) <0$ (and hence $c(b_k) < 0$) we obtain
$\overline{\tilde{\B}_\ego}$ instead of $\tilde{\B}_\ego$ in \eqref{eq:194}. Note that since  $\Se$ is finite, there exists $C>1$
such that, $\forall k \in \Jg_\pa^s \setminus
  \Jg_\pa^{s,0} $, 
\begin{equation}
  \label{eq:242}
1\leq \ego_k(h)/ \ego(h) = \alpha_m^{-\frac {1}{2}} |\alpha(b_k)|^{\frac 12} 
\leq C\,.
\end{equation}
Note further, that since $\lambda \in \pa
B ({\hat{\Lambda}}^1(h), \hat r_1(h) h^{k_1})$  we obtain, in view of
\eqref{eq:194}, 
$$\check \lambda:=[hJ_m ]^{-\frac 23} (\lambda - i V(x_0)) \in\partial
B(\Lambda_{\gamma_k}^1(\ego_k(h)),r_{1} (\ego_k(h);b_k)\ego_k(h) )
$$
where, for $k\in\Jg_\partial^{s,2}$,
$$
r_{1}( \ego;b_k) =  r_1(\ego;x_0) =r_1(\ego)\,,\, \gamma_k =1\,,\, \ego_k(h)=\ego(h)\,,
$$
and for $k\in\Jg_\partial^{s,1}$,
$$
 r_{1} (\ego;b_k) =   \left( 2^{\frac 12  (1+q)} J_m^{\frac {1+5q}{6 } } |\alpha(b_k)| ^{-\frac{1+q}{2}} \right) \ego^{q} \,,\, \gamma_k=[\alpha_m/|\alpha(b_k)|]^{1/2}\,.
$$ 
Note that  by  \eqref{defr1r2}  there exists $c>0$ such that, $\forall k \in \Jg_\pa^s \setminus
  \Jg_\pa^{s,0} $,  
$$
c \leq r_{1} (\ego;b_k) /r_1(\ego) = \gamma_k^{1+q}\leq 1\,.
$$
By
(\ref{eq:149}a) we then have  that $\partial B({\hat{\Lambda}}^1(h),
  h^{k_1}\hat{r}_1(h)) \subset \rho (\tilde{\A}_{k,h})$ and 
\begin{equation}     
\label{eq:195} 
\begin{array}{ll}
\sup_{k\in \Jg_\partial^{s,1} \cup \Jg_\partial ^{s,2}}\sup_{\lambda\in\partial
  B(\hat{\Lambda}^1(h),\hat r_1 (h) h^{k_1})}\|{\mathbf 1}_{B(b_k,h^{\varrho_\perp})}(\tilde{\A}_{k,h} -\lambda)^{-1}\tilde{\eta}_{k,h}\|& \leq
\frac{C}{h^{k_1+ \hat q_1}}\,,
\end{array}
\end{equation}
where the cutoff function
$\tilde{\eta}_{k,h}$ is given by the boundary operator defined in paragraph~\ref{sec:conclusion} (see also \cite{AGH,Hen})
\begin{displaymath}
  \tilde{\eta}_{k,h} =T_{\mathcal{F}_{b_k}}(\eta_{k,h}) \,.
\end{displaymath}

To show that (\ref{eq:149}a) can be applied we first observe that, for
all $x=(s,\rho) \in B(b_k,h^{\varrho_\perp})$ we have, by \eqref{eq:193} and
\eqref{eq:35}, that
$$
\tau\leq  J_m^{\frac 13} h^{\varrho^\perp-\frac 23} =J_m^{1/3} (J_m^{5/6} 2^\frac
12 |\alpha (b_k)|^{-\frac 12})^{3\varrho_\perp/2-1}  \, \ego_k(h)^{3\varrho_\perp/2-1} \leq C  \, \ego_k(h)^{3\varrho_\perp/2-1}\,,
$$ 
where $C$ is independent of $k$. Hence, for $1-3\varrho_\perp/2 < a' < 1$, there exists $h_0>0$, such that for all
$h\in (0,h_0]$ and $k \in \Jg_{\pa}^s\setminus \Jg_{\pa}^{s,0}$ we have
$\tau\leq\ego_k(h) ^{-a^\prime}$, which is precisely what we need to apply (\ref{eq:149}a).

Similarly, for potentials of type V2 we apply (with Remark
\ref{remext} and \eqref{eq:49-1} in mind) the transformation
\begin{equation}     
\label{eq:196}
\rho=\Big[\frac{h^2}{J_m}\Big]^{1/3}\tau \quad ; \quad
s=\Big[\frac{J_mh^4}{2 |\hat \alpha(b_k)| ^3}\Big]^{1/9}\sigma \,,
\end{equation}
and set (see \eqref{eq:49}), 
$$ 
\gamma_k=[\hat \alpha_m/|\hat \alpha(b_k)|]^{2/3} \quad \mbox{  and } \quad 
\varepsilon_k(h)= \gamma_k^{-1}\,\ \varepsilon (h)\,.
$$
As in the case of potentials of type V1, we have
$$
r_2(\varepsilon;b_k) = (2^{-\frac{2 (1+q)}{3}}\, |\hat \alpha (b_k)|^{-\frac{2(1+q)}{3}}\, J_m ^{\frac{2(q+4)}{9}})\, \varepsilon^q 
= \gamma_k^{1+q}\,r_2(\varepsilon) \,.
$$
 We then obtain via \eqref{eq:164}  that there exists $h_0>0$ such
 that, for $h\in (0,h_0]$,  $\pa B (\hat{\Lambda}^2(h),\hat 
r_2(h) h^{k_2}) \subset \rho(\tilde{\A}_{k,h})$ and  
\begin{equation}     
\label{eq:197}
\sup_{k \in \Jg_\partial^s\setminus \Jg_\pa^{s,0}}\sup_{\lambda\in\partial
  B(\hat{\Lambda}^2(h),\hat r_2(h) h^{k_2})}\|(\tilde{\A}_{k,h} -\lambda)^{-1}\tilde{\eta}_{k,h}\|\leq
\frac{C}{h^{k_2+\hat q_2}}\,.
\end{equation}

\subsubsection{$k\in \Jg_\partial^c\setminus\Jg_\partial^s$.}
In this case we use as our approximate operator
\begin{equation}  
\label{eq:198}
\begin{cases}
 \tilde\A_{k,h} = -h^2\Delta_{s,\rho}+i\big(V(b_k)\pm 
  \jg_k\rho\big) \\
\Dg(\tilde \A_{k,h}) = \{ u\in H^2(Q) \, | \, u|_{\partial Q_\perp}=0\,,\; \partial_\nu u_{\partial Q_\|}=0
\,;\; \rho u\in L^2(Q)  \} \,.
\end{cases}
\end{equation}
Applying the dilation 
\begin{equation}
  \label{eq:241}
(\rho,s)=h^{2/3}(x_1,x_2)\,,
\end{equation}  
$\tilde\A_{k,h}$ is transformed into \eqref{eq:11}. By
\eqref{eq:12}, inverse dilation, and the fact that $|\jg_k| > J_m$  we have
\begin{equation}    
\label{eq:199} 
\sup_{\lambda\in\partial B(\hat{\Lambda}^i (h),\, h^{k_i}\hat{r}_i (h))}\ \|(\tilde{\A}_{k,h}-\lambda)^{-1}\|\leq
\frac{C}{h^{2/3}}\,,
\end{equation}
where $i=1$ for potentials of type V1 and $i=2$ for potentials of type
V2.

\subsubsection{$k\in \Jg_\partial^r$.}
In this case, we use  \eqref{eq:183a} as the approximate resolvent. Let
\begin{equation*}
 \delta_k(h)= \jg_k -J_m= |\nabla V (b_k(h))| -J_m\,.
\end{equation*}
Upon the dilation \eqref{eq:241} we use Lemma
\ref{lem:preliminary-lemmas-half-Dirichlet} to obtain that, for
$i=1,2$, \break $\pa B
(\hat{\Lambda}^i(h),\hat r_i(h) h^{k_i}) \subset \rho(\tilde{\A}_{k,h})$\,, and
that for some $C>0$ 
\begin{equation}    
\label{eq:200} 
\sup_{\lambda\in\partial B(\hat{\Lambda}^i,\,h^{k_i } \hat r_i(h)}\|(\tilde{\A}_{k,h} -\lambda)^{-1}\| 
\leq \frac{C}{\delta_k(h) \, h^{2/3}} \,.
\end{equation}
Note that, in contrast with \cite{AGH},  ${\rm card}\, \Jg_\partial^r(h)$ is not bounded
as $h\to0$. Consequently, $\delta_k(h)$ depends on $h$ through the distance between $b_k(h)$
and  $\Se$. Depending on the potential type there
exist positive constants $h_0, \,c$, and $C$ such that 
\begin{equation}
\label{eq:201}
  \delta_k(h)  \geq C\, d(b_k(h),\Se)^{p_i}\geq c\, h^{p_i \varrho_\perp}\,,\, \forall h\in (0,h_0], \forall k \in \Jg_\partial^r(h)\,,
\end{equation}
where $p_1=2$ for potentials of type $V1$ and $p_2=1$ for potentials of type
$V2$.

\subsubsection{Approximate resolvent norm}
We have shown that there exists $h_0 >0$ such that the approximate
resolvent \eqref{eq:189aa} is well defined when $\lambda \in \pa B (\hat
\Lambda_i(h), h^{k_i} \hat r_i(h))$ for all $h\in (0,h_0]$. To summarize, we
state the following 
\begin{lemma}\label{lemma8.1}
For $i=1,2$, there exists $C$ and $h_0$ such that, for any $h\in (0,h_0]$ and any $\lambda \in 
\pa B (\hat \Lambda_i(h), h^{k_i+ \hat q_i})$,  the approximate resolvent satisfies
\begin{equation}
\| \RR(h,\lambda)\| \leq C \, h^{-k_i -\hat q_i }\,.
\end{equation}
\end{lemma}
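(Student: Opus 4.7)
The strategy is to combine the localized resolvent estimates of Subsection~\ref{sec:local-resolv-estim} by exploiting the finite overlap of the partition of unity, so that the dominant contribution -- coming from the indices $k\in\Jg_\partial^{s,1}\cup\Jg_\partial^{s,2}$ -- governs the norm of $\RR(h,\lambda)$. I proceed in three steps.

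First, I apply $\RR(h,\lambda)$ to an arbitrary $f\in L^2(\Omega)$ and write
\begin{displaymath}
\RR(h,\lambda)f=\sum_{j\in\Jg_i(h)}u_{j,h}+\sum_{k\in\Jg_\partial(h)}v_{k,h}\,,
\end{displaymath}
where $u_{j,h}=\chi_{j,h}(\A_{j,h}-\lambda)^{-1}\chi_{j,h}f$ is supported in $B(a_j(h),h^{\varrho})$ and $v_{k,h}=\eta_{k,h}R_{k,h}(\lambda)\eta_{k,h}f$ is supported in $B(b_k(h),h^{\varrho_\perp})$ for $k\in\Jg_\partial^{s,1}\cup\Jg_\partial^{s,2}$ and in $B(b_k(h),h^{\varrho})$ otherwise. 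Because, by the construction recalled after \eqref{eq:187}, each point of $\overline{\Omega}$ lies in at most $N_0$ of these balls (with $N_0$ independent of $h$), the Cauchy--Schwarz inequality yields
\begin{displaymath}
\|\RR(h,\lambda)f\|_2^2\leq 2N_0\Big(\sum_{j\in\Jg_i(h)}\|u_{j,h}\|_2^2+\sum_{k\in\Jg_\partial(h)}\|v_{k,h}\|_2^2\Big)\,.
\end{displaymath}

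Second, I bound each summand by invoking the results of Subsection~\ref{sec:local-resolv-estim}. Specifically, \eqref{eq:18} gives $\|u_{j,h}\|_2\leq Ch^{-2/3}\|\chi_{j,h}f\|_2$, \eqref{eq:20} treats $k\in\Jg_\partial^N$, \eqref{eq:190} (or \eqref{eq:191}) handles $k\in\Jg_\partial^{s,0}$, \eqref{eq:199} handles $k\in\Jg_\partial^c\setminus\Jg_\partial^s$, and all three yield a factor $Ch^{-2/3}$. The key estimate \eqref{eq:195} (respectively \eqref{eq:197}), combined with the support property of $\eta_{k,h}$, furnishes $\|v_{k,h}\|_2\leq Ch^{-(k_i+\hat q_i)}\|\eta_{k,h}f\|_2$ for $k\in\Jg_\partial^{s,1}\cup\Jg_\partial^{s,2}$. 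Finally, for $k\in\Jg_\partial^r$, combining \eqref{eq:200} with \eqref{eq:201} yields $\|v_{k,h}\|_2\leq Ch^{-(2/3+p_i\varrho_\perp)}\|\eta_{k,h}f\|_2$.

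Third, I verify that $2/3+p_i\varrho_\perp\leq k_i+\hat q_i$, so that every single localized bound is dominated by $Ch^{-(k_i+\hat q_i)}$. For type V1, $k_1=4/3$ and $p_1=2$, so the restriction $\varrho_\perp<2/9$ from \eqref{eq:189} gives $2/3+2\varrho_\perp<10/9<4/3\leq k_1+\hat q_1$; for type V2, $k_2=10/9$ and $p_2=1$, so $\varrho_\perp<1/3$ yields $2/3+\varrho_\perp<1<k_2+\hat q_2$. Together with the partition-of-unity identity $\sum_j\chi_{j,h}^2+\sum_k\zeta_{k,h}^2=1$, which gives $\sum_j\|\chi_{j,h}f\|_2^2+\sum_k\|\eta_{k,h}f\|_2^2=\|f\|_2^2$, the two preceding steps conclude
\begin{displaymath}
\|\RR(h,\lambda)f\|_2^2\leq C\,h^{-2(k_i+\hat q_i)}\|f\|_2^2\,,
\end{displaymath}
which is the announced bound.

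The routine part is the bookkeeping of step two; the only point that requires attention is the compatibility check in step three, i.e.\ the verification that the constraints imposed on $\varrho_\perp$ in \eqref{eq:189} are compatible with absorbing the $\Jg_\partial^r$ contribution into the dominant scale $h^{-(k_i+\hat q_i)}$. This is where the precise choice of $\varrho_\perp$ made in Subsection~\ref{sec:refin-part-unity} becomes essential.
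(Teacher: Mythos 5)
Your proof is correct and takes exactly the route the paper intends: the paper's own ``proof'' is the single sentence ``The proof follows immediately from \eqref{eq:18}, \eqref{eq:20}, \eqref{eq:190}, \eqref{eq:191}, \eqref{eq:195}, \eqref{eq:197}, \eqref{eq:199}, and \eqref{eq:200},'' and your three steps (finite-overlap decomposition, term-by-term invocation of the localized bounds, and the compatibility check $2/3+p_i\varrho_\perp<k_i+\hat q_i$ via \eqref{eq:189} and \eqref{eq:201}) simply make the implicit bookkeeping explicit.
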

The proof follows immediately from \eqref{eq:18}, \eqref{eq:20},
  \eqref{eq:190}, \eqref{eq:191}, \eqref{eq:195}, \eqref{eq:197},
  \eqref{eq:199}, and \eqref{eq:200}.

\subsection{Approximate resolvent error}
In this subsection, we show that $\RR(h,\lambda)$ is a good
approximation of $(\A_h-\lambda)^{-1}$. \\
We proceed as in Paragraph \ref{sss2.3.6}, albeit with the refined
partition of unity defined in Section \ref{sec:refin-part-unity} .
From \eqref{eq:202}) we recall that
$$
 \Eg(h,\lambda)= \sum_{j\in \mathcal J_i(h)}\B_j (h,\lambda)\, \chi_{j,h}  
+  \sum_{k\in \mathcal J_\partial (h)}\B_k(h,\lambda) \, \eta_{k,h} \,,
$$
and keep the same definition for $\B_j(h,\lambda)$ and $\B_k(h,\lambda)$ as in
\eqref{eq:203}.

For the present partition of unity, we set, as in Section \ref{s2}, 
\begin{itemize}
\item
$\Supp \widehat \chi_{j,h}\subset B(a_j(h),2 h^{\varrho})$ for $j\in \mathcal J_i(h)\,$,
\item
$\Supp \widehat \eta_{k,h}\subset B(b_k(h),2h^{\varrho_\perp})$ for $k\in \Jg_\partial^\perp\,$, 
\item
$\Supp \widehat \eta_{k,h}\subset B(b_k (h),2h^{\varrho})$ for $k\in \Jg_\partial^N\,$, 
\item  
$\widehat \chi_{j,h} \chi_{j,h}  =\chi_{j,h}$ and $\widehat \eta_{k,h}
  \eta_{k,h}=\eta_{k,h}\,$,
\end{itemize}
and
\begin{displaymath}
  \widehat {\A}_{k,h}=T_{\mathcal{F}_{b_k}} \A_h T_{\mathcal{F}_{b_k}}^{-1}\,.
\end{displaymath}

In the sequel we  prove the following generalization of  \cite[Lemma
7.6]{AGH} (see also paragraph \ref{sss2.3.6}).
\begin{proposition}\label{lemma8.2}
Let $\hat q_i$ be defined by  \eqref{defhatq} with  $0< q < \frac 16$, and 
 $(\varrho_\perp\,, \,\varrho)$ satisfy \eqref{eq:189}. Let further $\tilde{b}$ and $b$
  respectively satisfy  (\ref{eq:244}) and
\begin{equation}
\label{eq:210}
  1-\frac{3\varrho_\perp}{2} <b<3/4\,.
\end{equation}
Then, under the assumptions of either
  Theorem \ref{thm:interior} (V1 potentials) or Theorem \ref{thm:nonsmooth}
  (V2 potentials),we have
 \begin{equation}
\label{eq:204} 
    \lim_{h\to0} \sup_{\lambda\in\partial B(\hat{\Lambda}^i(h) ,h^{k_i+\hat q_i})} \| \Eg(h,\lambda)\| =0\,.
\end{equation}
\end{proposition}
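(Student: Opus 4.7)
\medskip

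\noindent\textbf{Proof plan for Proposition 8.2.}

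The plan is to estimate each summand $\B_j(h,\lambda)$ and $\B_k(h,\lambda)$ separately, depending on which subfamily of indices the index belongs to, and then combine the estimates via the almost-orthogonality argument of \eqref{eq:205}. Because the resolvent $\RR(h,\lambda)$ now behaves like $h^{-k_i-\hat q_i}$ rather than $h^{-2/3}$ at the worst indices, the decisive question is whether every error term arising in $\B_j$ or $\B_k$ decays strictly faster than $h^{k_i+\hat q_i}$ so that its product with the localized resolvent still tends to~$0$.

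First I would dispatch the easy families. For $j\in\Jg_i$, the Taylor expansion of $V$ on $B(a_j,h^\varrho)$ gives $\|(\A_h-\A_{j,h})\widehat\chi_{j,h}\|\leq C h^{2\varrho}$, and the commutator $[\A_h,\chi_{j,h}]=-h^2(\Delta\chi_{j,h})-2h^2\nabla\chi_{j,h}\cdot\nabla$ is $O(h^{2-2\varrho})$ on the support of $\widehat\chi_{j,h}$; combined with \eqref{eq:18} this yields $\|\B_j\|\leq C h^{\min(2\varrho,2-2\varrho)-2/3}$. For $k\in\Jg_\partial^N$ one uses \eqref{eq:20} in the same way, and for $k\in\Jg_\partial^c\setminus\Jg_\partial^s$ (corners away from $\Se$) one uses the corner resolvent bound~\eqref{eq:199}; all of these are at the scale $h^\varrho$, so \eqref{eq:189} makes the resulting exponent positive. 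For $k\in\Jg_\partial^{s,0}$ the spectral parameter $\lambda$ lies at distance $|V(b_k)-V(x_0)|\gg h^{k_i+\hat q_i}$ from $iV(b_k)$, so Lemmas~\ref{lemma4.12} and~\ref{lemma4.12-corner}, applied after the dilation \eqref{eq:241}, yield $\|(\tilde\A_{k,h}-\lambda)^{-1}\|\leq C h^{-2/3}$, and then the Taylor error and commutator terms (still at scale $h^\varrho$) give $\|\B_k\|=O(h^{\min(\varrho,2/3-\varrho)})$ as in~\eqref{eq:243}.

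The genuinely new work concerns $k\in\Jg_\partial^{s,1}\cup\Jg_\partial^{s,2}$, where $b_k$ lies near a point of $\Se$ realizing $|\alpha|=\alpha_m$ (type~V1) or $|\hat\alpha|=\hat\alpha_m$ (type~V2), and where $\lambda$ is within $h^{k_i+\hat q_i}$ of the approximate eigenvalue $\hat\Lambda^i(h)$. In rescaled coordinates $(\sigma,\tau)$ the localized operator is conjugate to $(hJ_m)^{2/3}\tilde\B_{\ego_k(h)}$ (V1) or $(hJ_m)^{2/3}\QQ_{\varepsilon_k(h)}$ (V2), and its resolvent norm is controlled by \eqref{eq:149} respectively \eqref{eq:180}. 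The difference $\widehat\A_{k,h}-\tilde\A_{k,h}$ contains exactly three kinds of terms: (i) the part of the Laplacian generated by the metric correction $\tilde g-1$ and by the curvature first-order terms beyond those already kept in \eqref{eq:192} (resp.\ \eqref{deftildeAkhV2}); (ii) the higher-order Taylor remainder $\delta\widetilde V$ from \eqref{deftildeV}, of size $O(|s|\rho^2+\rho^3)$; (iii) the cut-off $\chi(h^{-2(1-\tilde b)/3}\rho)$ vs.\ $\chi(h^{-2(1-b)/3}\rho)$ attached to $\rho$-quadratic and $s^2\rho$ terms, which is non-zero only for $\rho\gtrsim h^{2(1-b)/3}$. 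I would re-express each such term in $(\sigma,\tau)$ coordinates, multiply by the appropriate weighted resolvent bound from Propositions~\ref{lem:modified}, \ref{prop:effect-2} and Proposition~\ref{lem:large-tau-V1} (V1) or the V2 analogues in Section~\ref{sec:simplified-V2}, and read off the exponent of $h$. Terms of type (i) contribute $h^2\cdot h^{-2(1-\varrho_\perp)/3}\|(1+|\sigma|+\tau)\partial^2_{s,\rho}\RR\|$; the indicator bound \eqref{eq:138} (V1) or \eqref{eq:181} (V2) controls the large-$|\sigma|$ part and \eqref{eq:139} (V1) or \eqref{eq:183} (V2) controls the large-$\tau$ part. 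Terms of type (ii) are handled by $\|\tau^{1/2}\RR\|$ and $\|\sigma\RR\|$-type bounds after using \eqref{eq:193}/\eqref{eq:196} to convert powers. Terms of type (iii) are non-zero only on $\rho\geq h^{2(1-b)/3}$ and therefore are absorbed by \eqref{eq:139} or \eqref{eq:183} with $a=2(1-b)/3$; here enters the role of the condition~\eqref{eq:210}, which guarantees that $\rho\geq h^{2(1-b)/3}$ already implies $\rho\gg h^{\varrho_\perp}$ so that the exponential decay of the quasimode kills the contribution. Finally the commutator $[\A_h,\eta_{k,h}]R_{k,h}\widehat\eta_{k,h}$ is at scale $h^{\varrho_\perp}$, giving $O(h^{2-2\varrho_\perp}\cdot h^{-4/3-\hat q_1})$ (V1) or $O(h^{2-2\varrho_\perp}\cdot h^{-10/9-\hat q_2})$ (V2); the windows~\eqref{eq:189} have been chosen precisely so that this exponent is positive together with the constraints coming from (i)--(iii).

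The main obstacle, and the bookkeeping that occupies most of the proof, is arranging the five free exponents $(\varrho,\varrho_\perp,b,\tilde b,q)$ and the localization radii so that, simultaneously, every one of the several error contributions listed above gives a positive power of $h$ after being multiplied by the resolvent bound $h^{-k_i-\hat q_i}$; once this is done, $\Eg(h,\lambda)$ is a sum of the form of \eqref{eq:202}, and the finite-multiplicity property of the covering together with the almost-orthogonality computation \eqref{eq:205} converts the pointwise estimates into a uniform operator-norm bound $\|\Eg(h,\lambda)\|\leq C h^{\delta}$ for some $\delta=\delta(\varrho,\varrho_\perp,b,\tilde b,q)>0$, establishing \eqref{eq:204}.
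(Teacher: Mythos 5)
Your outline follows the paper's strategy for most of the index set: the almost-orthogonality reduction via \eqref{eq:205} to individual estimates of $\B_j,\B_k$, the ``easy'' families $\Jg_i,\Jg_\partial^N,\Jg_\partial^c\setminus\Jg_\partial^s,\Jg_\partial^{s,0}$ handled with \eqref{eq:18}, \eqref{eq:20}, \eqref{eq:199}, Lemmas~\ref{lemma4.12}--\ref{lemma4.12-corner}, and the refined two-dimensional resolvent machinery from Sections~\ref{sec:simplified}--\ref{sec:simplified-V2} for the near-minimizer indices $\Jg_\partial^{s,1}\cup\Jg_\partial^{s,2}$ with the three types of error you identify. (A small imprecision there: the commutator $[\A_h,\chi_{j,h}]$ contains a first-order term $-2h^2\nabla\chi_{j,h}\cdot\nabla$, whose contribution after applying the gradient resolvent bound $\|\nabla(\A_{j,h}-\lambda)^{-1}\|\sim h^{-4/3}$ is $h^{2/3-\varrho}$, not $h^{4/3-2\varrho}$; this is harmless since $\varrho<2/3$, but the exponent should read $h^{\min(2/3-\varrho,\,2\varrho-2/3)}$.)

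The genuine gap is that you omit the index set $\Jg_\partial^r$ entirely: the Dirichlet boundary disks of radius $h^\varrho$ whose centers $b_k(h)$ are \emph{not} in $\Se$ but can be as close as $h^{\varrho_\perp}$ to it (recall $\Jg_\partial^r=\Jg_\partial^D\setminus\Jg_\partial^s$ for V1, $\Jg_\partial^r=\Jg_\partial^D$ for V2). This is not one of the ``easy'' cases you can fold into the first paragraph, because the localized resolvent estimate \eqref{eq:200} degenerates: $\|(\tilde\A_{k,h}-\lambda)^{-1}\|\lesssim \delta_k(h)^{-1}h^{-2/3}$ with $\delta_k(h)=|\nabla V(b_k)|-J_m\to0$ as $b_k\to\Se$. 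Simply multiplying a raw $O(h^{2\varrho})$ Taylor error by this blows up. The paper's Steps 6--7 close this by two ingredients you would need to supply: the improved Taylor remainder \eqref{eq:quadraticerror}, showing that the mixed derivative $\partial^2 V/\partial s\partial\rho$ vanishes along $\partial\Omega_D$ precisely at $\Se$, so the error is in fact $O(h^{2\varrho}\,d(b_k,\Se))$; and the lower bound \eqref{eq:201}, $\delta_k(h)\geq c\,d(b_k,\Se)^{p_i}\geq c\,h^{p_i\varrho_\perp}$, so that the product $h^{2\varrho}\,d(b_k,\Se)\cdot\delta_k(h)^{-1}h^{-2/3}$ becomes $h^{2\varrho-\varrho_\perp-2/3}$ (V1) which the window \eqref{eq:189} makes positive. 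This step also requires the linear-potential resolvent estimates \eqref{eq:150} for the auxiliary operator $\check\B_\delta$ to control the large-$\rho$ and $\partial_s$ pieces of the decomposition in \eqref{eq:228} and \eqref{eq:236}. Without this uniform treatment of $\Jg_\partial^r$, whose cardinality grows as $h\to0$, the supremum over $k$ in \eqref{eq:205} is not controlled and the proposal does not close.
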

Note that  by \eqref{eq:189}, we have $\varrho_\perp >\frac 16$ which implies
that the interval  $( 1-\frac{3\varrho_\perp}{2} ,3/4) $ is not empty, though
\eqref{eq:210} is certainly more restrictive than
(\ref{eq:244}).

Keeping \eqref{eq:205} in mind, \eqref{eq:204} follows, under the
assumptions of Proposition \ref{lemma8.2}, from the following lemma
\begin{lemma} 
Under the assumptions of Proposition \ref{lemma8.2}, 
there exist $C >0$, $h_0 >0$ and $d_0 >0$ such that, for  all $h\in
(0,h_0]$ and   $\lambda\in\partial
B(\hat{\Lambda}^i(h) ,h^{k_i+\hat q_i})$, we have  
\begin{equation}\label{estlemma}
\sup_{j\in \Jg_i(h)}\| \mathcal B_j(h,\lambda) \| + \sup_{k\in \Jg_\partial(h)}\| \| \mathcal B_k(h,\lambda) \| \leq C h^{d_0}\,.
\end{equation}
\end{lemma}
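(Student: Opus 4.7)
The plan is to estimate $\B_j(h,\lambda)$ and $\B_k(h,\lambda)$ separately for each class of index produced by the refined partition of Section \ref{sec:refin-part-unity}, and to verify that in each case the contribution of the Taylor residual $\A_h-\A_{j,h}$ (resp.\ $\widehat\A_{k,h}-\tilde\A_{k,h}$) and of the commutator (\ref{eq:formcomm}) is $O(h^{d_0})$ for some $d_0>0$ independent of $\lambda$. In every case the scheme is the same: bound the coefficient difference pointwise on the support of the dilated cutoff, compose with the localized resolvent norm, and treat the commutator by exploiting the identity $\Re\langle u,(\A_{j,h}-\lambda)u\rangle=h^2\|\nabla u\|^2-\Re\lambda\|u\|^2$ (and its boundary analogues) to upgrade the bare resolvent bound to a gradient bound.

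\paragraph{The easy cases.}
For $j\in\Jg_i$ the Taylor residual of $V$ around $a_j$ is bounded pointwise by $Ch^{2\varrho}$ on $\Supp\widehat\chi_{j,h}$; combined with \eqref{eq:18} this yields an $O(h^{2\varrho-2/3})$ bound on the leading term of $\B_j$, while the commutator contributes at worst $O(h^{2/3-\varrho})$. For $k\in\Jg_\partial^N$ one repeats the argument using \eqref{eq:20}. For $k\in\Jg_\partial^c\setminus\Jg_\partial^s$ (corners not in $\Se$), the gap $|\nabla V(b_k)|>J_m$ is uniform, so on the disc $B(b_k,h^\varrho)$ rescaled by \eqref{eq:241} the resolvent of the model operator stays bounded by $Ch^{-2/3}$ via \eqref{eq:199}, and the same scheme gives $O(h^{d_0})$. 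For $k\in\Jg_\partial^r$ the Taylor residual is $O(h^{2\varrho_\perp})$ and the resolvent bound \eqref{eq:200} degenerates like $h^{-2/3-p_i\varrho_\perp}$ with $p_i\in\{1,2\}$; the constraints \eqref{eq:189} have been calibrated so that $2\varrho_\perp-2/3-p_i\varrho_\perp>0$ for the relevant $p_i$. Finally, for $k\in\Jg_\partial^{s,0}$ the shift $V(b_k)-V(x_0)\neq0$ sits on a scale larger than $h^{2/3}$, hence $\Im\lambda/(J_mh)^{2/3}$ exceeds the relevant threshold of Lemma \ref{lemma4.12} (V1) or Lemma \ref{lemma4.12-corner} (V2); the resolvent is then $O(h^{-2/3})$ and the interior-type scheme applies.

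\paragraph{The hard case.}
The core difficulty is the set $k\in\Jg_\partial^s\setminus\Jg_\partial^{s,0}$, where the refined approximate operator \eqref{eq:192} (or \eqref{deftildeAkhV2} for V2) is used. After the dilation \eqref{eq:193} (resp.\ \eqref{eq:196}) the support of $\widehat\eta_{k,h}$ lies in $\{|\sigma|\leq C\ego^{(3\varrho_\perp-1)/2},\ \tau\leq C\ego^{3\varrho_\perp/2-1}\}$, and the error $\widehat\A_{k,h}-\tilde\A_{k,h}$ splits into three pieces on this set: (a) the cubic Taylor remainder of $V$, of order $h^{3\varrho_\perp}$ in original variables, i.e.\ of order $\ego^{(9\varrho_\perp-4)/2}(hJ_m)^{2/3}$ after dilation, multiplying the resolvent; (b) the metric error coming from \eqref{estg} together with the curvature expansion of \eqref{eq:33}, producing a factor $\rho$ times $h^2\partial_{(s,\rho)}^2$; and (c) the localization artefacts introduced by the cutoff $\chi(h^{-2(1-\tilde b)/3}\rho)$ in the coefficient of $\partial_s^2$ and by $\chi(h^{-2(1-b)/3}\rho)$ in $V_{b_k}^{(2)}$, which are supported in the region $\tau\geq\ego^{-\tilde b}$ (resp.\ $\tau\geq\ego^{-b}$). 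Contribution (c) is controlled by the decay estimates \eqref{eq:139} (resp.\ \eqref{eq:183}), while (a) and (b) are handled in the bulk by (\ref{eq:135}a,b,c) and Proposition \ref{lem:modified} (resp.\ by Proposition \ref{prop:simplified-V2} and \eqref{eq:182}), which give the second-order derivative bounds $\ego^{-1}$ needed to absorb the factors $h^2\partial_{(s,\rho)}^2$. For the commutator $[\A_h,\eta_{k,h}]$, one splits according to whether $|\sigma|$ is comparable with $\ego^{(3\varrho_\perp-1)/2}$ (large-$\sigma$ regime, controlled by Proposition \ref{cor:large-sigma} or \ref{cor:second-large-sigma}) or smaller (bulk regime, controlled by the $\partial_\sigma$-bounds in (\ref{eq:135}b) / (\ref{eq:149}a)).

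\paragraph{Main obstacle.}
The main obstacle is purely book-keeping: translating every term above into a power of $h$ and checking that the simultaneous constraints $1/3<\varrho<2/3$, $\varrho_\perp<\varrho$, $1-3\varrho_\perp/2<b<3/4$, $0<\tilde b<1/2-q$, $0<q<1/6$, together with the specific ranges of $\varrho_\perp$ fixed in \eqref{eq:189}, force every exponent to be strictly positive and hence yield a common $d_0>0$. The ranges in \eqref{eq:189} are precisely what is needed to render compatible the constraint $2\varrho_\perp-2/3-p_i\varrho_\perp>0$ from the $\Jg_\pa^r$ analysis with the constraints coming from case (a)--(c) of the hard case; once these inequalities are verified the estimate \eqref{estlemma} follows immediately by combining the four paragraphs above, and Proposition \ref{lemma8.2} follows from \eqref{eq:205}.
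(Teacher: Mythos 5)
Your overall plan reproduces the paper's decomposition (Steps 1--7), and the easy cases and the hard case $\Jg_\partial^s\setminus\Jg_\partial^{s,0}$ are sketched along the same lines as the paper. However, the treatment of $k\in\Jg_\partial^r$ contains a gap that would break the argument for V1 potentials.

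You assert the Taylor residual over $\supp\widehat\eta_{k,h}$ is $O(h^{2\varrho_\perp})$ and conclude with the constraint $2\varrho_\perp-2/3-p_i\varrho_\perp>0$. Both are off. The disks attached to $\Jg_\partial^r$ have radius $h^\varrho$, not $h^{\varrho_\perp}$, so the quadratic Taylor residual is $O(h^{2\varrho})$. Combining the crude sup-norm bound $O(h^{2\varrho})$ with the degenerate resolvent bound \eqref{eq:200}, $\|(\tilde\A_{k,h}-\lambda)^{-1}\|\leq C(\delta_k h^{2/3})^{-1}$, and $\delta_k\gtrsim d(b_k,\Se)^{p_i}\gtrsim h^{p_i\varrho_\perp}$, yields the exponent $2\varrho-2/3-p_i\varrho_\perp$. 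For V1 ($p_1=2$) this requires $\varrho>1/3+\varrho_\perp$, which together with the upper bound $\varrho<2/3-\varrho_\perp$ of (\ref{eq:189}a) forces $\varrho_\perp<1/6$ --- incompatible with the lower bound $\varrho_\perp>13/63$ in (\ref{eq:189}a). So the bare ``residual $\times$ degenerate resolvent'' scheme cannot close in the V1 case.

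The missing ingredient is the anisotropy of the Hessian of $V$ near the minimal set: for V1 the mixed second derivative satisfies $\|\partial^2V/\partial s\,\partial\rho\|_{L^\infty(B(b_k,h^\varrho))}\lesssim d(b_k,\Se)$, and the pure $\partial_s^2$-term is further suppressed in the thin strip $\rho\leq h^{2/3-a_1}$. In the paper (Step 6, Eqs.\ \eqref{eq:Taylor2}--\eqref{eq:229}) this cancels exactly one power of $d(b_k,\Se)$ against $\delta_k^{-1}\sim d(b_k,\Se)^{-2}$, yielding the exponent $2\varrho-2/3-\varrho_\perp$ in \eqref{eq:d51} (not $2\varrho-2/3-2\varrho_\perp$), which \emph{is} positive under (\ref{eq:189}a) because $\varrho>\varrho_\perp/2+1/3$. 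You need to make this cancellation explicit; without it the $\Jg_\partial^r$ contribution does not go to zero.

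Two minor remarks. First, the commutator for $\Jg_\partial^r$ (paper's Step 7) also relies on the gradient estimates \eqref{eq:150} of the linear-potential model operator; you should cite these explicitly rather than only the bare resolvent bound \eqref{eq:200}. Second, in the hard case you do not mention the restriction $\hat q_i<\varrho_\perp$ (coming from the factor $\ego\,r_i(\ego)\sim h^{k_i+\hat q_i}$ in the resolvent bounds of Propositions \ref{prop:effect-2} and \ref{prop:simplified-V2}); this is the constraint that ties \eqref{eq:condhatq} to (\ref{eq:189}) and must be tracked to obtain a positive $d_0$.
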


\begin{proof} \strut \\
  We split the proof into several steps, estimating the $\| \mathcal
B_j(h,\lambda) \| $ or $\| \mathcal B_k(h,\lambda) \|$ in the various cases
listed in Subsection \ref{sec:local-resolv-estim}. An explicit formula
of $d_0$ is provided at the end of the proof. Throughout the proof
of the lemma, all the constants $C$ and $h_0$ appearing at each step
can be chosen independently of $h\in (0,h_0]$, $j\in \Jg_i(h)$, $k\in
\Jg_\pa (h)$ and $\lambda \in \pa B (\hat \Lambda_i(h), h^{ k_i + \hat q_i})$.

{\bf {\em Step 1:} Estimate $\| \mathcal B_j\|$ for $j\in\Jg_i$ and
  $\|\mathcal B_k \|$ for $k\in\Jg_\partial^{s,0}\cup (\Jg_\partial^c\setminus\Jg_\partial^s)$.
  \\[1ex]} We refer the reader to \cite[(7.50) -(7.51)]{AGH} where it
is shown that there exist $C>0\,$ and $h_0 >0$, such that, for all
$h\in (0,h_0]$, $j\in \Jg_i(h)$, $k\in \Jg_\pa^N(h) \cup \Jg_\partial^{s,0} $,
and $\lambda\in\partial B(\hat{\Lambda}^i(h) ,h^{k_i+\hat q_i})$,
\begin{equation}   
\label{eq:206}
 \|\mathcal B_j(h,\lambda)\|+ \|\mathcal B_k(h,\lambda) \|   \leq C \, (h^{2/3-\varrho}+ 
h^{2(\varrho-1/3)})\,.
\end{equation}

For $k\in \Jg_\pa^c \setminus \Jg_\pa^s$ we use \eqref{eq:243} (which remains
valid under the assumptions set above on $\lambda$) to obtain
\begin{equation}
\label{eq:208}
  \|\mathcal B_k(h,\lambda)\|  \leq C \, (h^{2/3-\varrho}+
h^{\varrho}) \,.
\end{equation}
Hence \eqref{estlemma} is satisfied for $k\in \Jg_\pa^c \setminus \Jg_\pa^s$ if $d_0$ satisfies
\begin{equation}\label{eq:conditiond01}
0 < d_0 \leq d_1:= \inf (\varrho\,, 2/3 -\varrho\,, 2(\varrho -1/3))\,.
\end{equation}
Note that by \eqref{eq:189} $d_1$  is positive.
\\[3ex]

{\bf {\em Step 2:} Estimate $ \big\|\eta_{k,h}T_{\mathcal{F}_{b_k}}^{-1}
  (\widehat{\A}_{k,h}-\tilde{\A}_{k,h}) (\tilde{\A}_{k,h}-\lambda)^{-1}
  T_{\mathcal{F}_{b_k}}\hat{\eta}_{k,h}\big\|$ for
  $k\in\Jg_\partial^s\setminus\Jg_\partial^{s,0}$ and $V1$ potentials. \\[2ex] } Note that
the above term is the first one on the right-hand-side of
(\ref{eq:203}b), and we must provide an effective bound for it in
order to obtain a proper estimate of $\|\B_k(h,\lambda)\|$.  We recall that
$\widehat \A_{k,h}$ has been introduced in \eqref{defwidehatA} and
that, for type V1 potentials, $\tilde{\A}_{k,h}$ is introduced in
\eqref{eq:192}.  Let $(a_0,\tilde b)$ satisfy
\begin{equation}\label{eq:conda0}
0 < a_0<2\tilde{b}/3\,.
\end{equation}
Decomposing $(\widehat{\A}_{k,h}-\tilde{\A}_{k,h}) $,
we write (cf. also \cite[(7.52)-(7.55)]{AGH}) 
\begin{equation}    
\label{eq:209}
\begin{array}{l}
  \big\|\eta_{k,h}T_{\mathcal{F}_{b_k}}^{-1}(\widehat{\A}_{k,h}-\tilde{\A}_{k,h})  
  (\tilde{\A}_{k,h}-\lambda)^{-1}T_{\mathcal{F}_{b_k}}\hat{\eta}_{k,h}\big\| \\\qquad 
  \leq C \Big(\|\tilde{\eta}_{k,h}\delta_1{\mathbf 1}_{\rho\leq h^{2/3-a_0}}h^2\partial^2_s(\tilde\A_{k,h}-\lambda)^{-1} \breve{\eta}_{k,h}\| 
\\
\qquad \quad \qquad  +\|\tilde{\eta}_{k,h}\delta_1{\mathbf
   1}_{\rho>h^{2/3-a_0}}h^2\partial^2_s(\tilde\A_{k,h}-\lambda)^{-1}\breve{\eta}_{k,h}\|  
\\ \qquad \quad \qquad \quad 
 +  \|\tilde{\eta}_{k,h}\delta_2h^2\partial_\rho(\tilde\A_{k,h}-\lambda(h))^{-1}\breve{\eta}_{k,h} \| \\\quad  \quad \qquad\qquad \quad + \|\tilde{\eta}_{k,h}\delta_3h^2\partial_s(\tilde\A_{k,h}-\lambda(h))^{-1}\breve{\eta}_{k,h}\| 
 \\ \qquad \quad \qquad  \quad \quad  \quad  +
 \| {\mathbf 1}_{\rho\leq h^{2/3-a_0}}(V\circ \mathcal F_{b_k}
 -V_{b_k}^{(2)}\,)\tilde{\eta}_{k,h}
 (\tilde\A_{k,h}-\lambda)^{-1}\breve{\eta}_{k,h}\| \\
\quad  \qquad \quad \quad \qquad \qquad  + \| {\mathbf 1}_{\rho>h^{2/3-a_0}}(V\circ \mathcal F_{b_k}
 -V_{b_k}^{(2)}\,)\tilde{\eta}_{k,h}
 (\tilde\A_{k,h}-\lambda)^{-1}\breve{\eta}_{k,h}\| \Big) \,, 
 \end{array}
 \end{equation}
where 
\begin{displaymath}
  \delta_1=\tilde g^{-2}-1-2\kappa(b_k)\rho \chi(h^{-2(1-\tilde{b})/3}\rho)\,,\,
\delta_2=\kappa(s)\tilde g^{-1} -\kappa(b_k) \,,\, \delta_3= \rho\kappa^\prime(s) \tilde g^{-3}\,,
\end{displaymath}
\begin{displaymath} 
  \breve{\eta}_{k,h} = T_{\mathcal{F}_{b_k}}\hat{\eta}_{k,h}\,,\quad \tilde \eta_{k,h} = T_{\mathcal{F}_{b_k}} \eta_k\,,
\end{displaymath}
and where  $V_{b_k}^{(2)}$ is introduced in \eqref{defVbk}, 

For the first term on the right-hand-side of \eqref{eq:209},
observing that  $$ \delta_1 =\OO(\rho^2) + \OO(\rho s) + \OO (\rho) (1-\chi ( h^{-2(1-\tilde{b})/3}\rho ))\,,
$$
we obtain with the aid  of \eqref{eq:conda0},
\begin{displaymath}
   \|\delta_1\tilde{\eta}_{k,h}{\mathbf 1}_{\rho\leq h^{2/3-a_0}}\|_\infty\leq
   C\, h^{2/3+\varrho_\perp-a_0}\,.
\end{displaymath}
Using the dilation \eqref{eq:193}  together
with (\ref{eq:149}a) (with $3a_0/2<a'<1$) we obtain
  \begin{equation}  
\label{eq:211} 
h^2\|\tilde{\eta}_{k,h}\delta_1{\mathbf 1}_{\rho\leq
  h^{2/3-a_0}}\partial^2_s(\tilde\A_{k,h}-\lambda)^{-1} \breve{\eta}_{k,h} \| \leq
C\, h^{\varrho_\perp-a_0-\hat q_1} \,.
\end{equation}
To confirm the effectiveness of the above bound we need to restrict
$a_0$ further by imposing
$$ 0<a_0<\varrho_\perp-\hat q_1\,, $$
which is possible by
\eqref{eq:condhatq} and (\ref{eq:189}a).\\
 As 
\begin{displaymath}
    \|\delta_1\tilde{\eta}_{k,h}\|_\infty \leq h^{\varrho_\perp} \,,
\end{displaymath}
the second term can be estimated using the above dilation together
with (\ref{eq:149}b) for $ a\leq\frac 32 a_0$ and $a'\geq1 - \frac 32
\varrho_\perp$, (thus satisfying $0 < a < a'<1$ as is required in Proposition
\ref{prop:effect-2}) to obtain
  \begin{equation}  
\label{eq:212}
h^2\|\tilde{\eta}_{k,h}\delta_1{\mathbf
  1}_{\rho>h^{2/3-a_0}}\partial^2_s(\tilde\A_{k,h}-\lambda)^{-1} \breve{\eta}_{k,h} \|
\leq Ch^{\varrho_\perp} \,.
\end{equation}
Since
\begin{displaymath}
  \| \tilde{\eta}_{k,h}\delta_2\|_\infty+  \| \tilde{\eta}_{k,h}\delta_3\|_\infty \leq Ch^{\varrho_\perp}\,,
\end{displaymath}
the third term and the fourth terms on the right-hand-side of
\eqref{eq:209}, can be bounded via  the same dilation and
(\ref{eq:149}a) (with $a'\geq1 - \frac 32 \varrho_\perp$),
yielding
\begin{equation}  
\label{eq:213} 
h^2[\|\tilde{\eta}_{k,h}\delta_2\partial_\rho(\tilde\A_{k,h}-\lambda(h))^{-1}\breve{\eta}_{k,h}
\|+\|\tilde{\eta}_{k,h}\delta_3\partial_s(\tilde\A_{k,h}-\lambda(h))^{-1}
  \breve{\eta}_{k,h}\|]\leq Ch^{\varrho_\perp-\hat{q}_1}\,. 
\end{equation}
Note that $\varrho_\perp-\hat{q}_1>0$ by (\ref{eq:189}a) and \eqref{defhatq}.
To control the fifth term on the right-hand-side of \eqref{eq:209} we
use \eqref{esttildeV} (with $x_0$ replaced by $b_k$) to obtain
\begin{displaymath}
  \| {\mathbf 1}_{\rho\leq h^{2/3-a_0}}(V\circ \mathcal F_{b_k}
  -V_{b_k}^{(2)})\,\tilde{\eta}_{k,h}\|_\infty \leq Ch^{4/3+\varrho_\perp -2a_0}\,,
\end{displaymath}
and, with the aid of the same dilation and (\ref{eq:149}a),
\begin{equation}
  \label{eq:214}
\| {\mathbf 1}_{\rho\leq h^{2/3-a_0}}(V\circ \mathcal F_{b_k}
 -V_{b_k}^{(2)}\,)\tilde{\eta}_{k,h}
 (\tilde\A_{k,h}-\lambda)^{-1}\breve{\eta}_{k,h}\| \leq Ch^{\varrho_\perp-\hat{q}_1-2a_0}\,.
\end{equation}
For \eqref{eq:214} to be an effective bound we must further restrict
$a_0$ so that $$0<a_0<(\varrho_\perp-\hat{q}_1)/2\,.$$
Finally, to bound the last term on the right-hand-side of
\eqref{eq:209} we have, by (\ref{eq:149}b) and the same dilation
\begin{equation}   
\label{eq:215}
   \| {\mathbf 1}_{\rho>h^{2/3-a_0}}(V\circ \mathcal F_{b_k}
 -V_{b_k}^{(2)}\,)\tilde{\eta}_{k,h}
 (\tilde\A_{k,h}-\lambda)^{-1}\breve{\eta}_{k,h}\|\leq Ch^{ 3\varrho_\perp-2/3+a_0}\,,
\end{equation}
which leads to a further restriction on $a_0$
$$
3\varrho_\perp-2/3+a_0 >0\,.
$$
Altogether $a_0$ should satisfy:
\begin{equation}\label{eq:8.48}
-3 \Big(\varrho_\perp  -\frac 29\Big)  < a_0 < \frac{\varrho_\perp-\hat{q}_1}{2}\,.
\end{equation}
Such $a_0$ exists if and only if  
\begin{equation*}
 - 3 \Big(\varrho_\perp  -\frac 29\Big)  < \frac{\varrho_\perp-\hat{q}_1}{2}\,,
\end{equation*}
or, equivalently, when 
\begin{equation*}
\frac 43 + \hat q_1< 7 \varrho_\perp\,.
\end{equation*}
Note that the above condition is satisfied by the upper bound of $\hat
q_1$ in \eqref{eq:condhatq} and the lower bound of $\varrho_\perp$ in
(\ref{eq:189}a).  

In view of \eqref{eq:8.48} we set 
\begin{equation}\label{eq:216} 
  a_0=\frac{1}{3}-\frac{5\varrho_\perp}{4}-\frac{\hat{q}_1}{4}>\frac{1}{36}\,,
\end{equation}
(wherein the lower bound follows from (\ref{eq:189}a) and \eqref{eq:condhatq}).\\
By \eqref{eq:conda0}, $\tilde b\in(0,\frac12 -q )$ must satisfy 
$$
\frac{1}{2}-\frac{15\varrho_\perp}{8}-\frac{3\hat{q}_1}{8} < \tilde{b}<\frac 12 - \frac{3}{2}\hat q_1\,.
$$
For the above inequality to make any sense we must have
$
\hat q_1 < \frac 53 \varrho_\perp\,,
$
which clearly holds by \eqref{eq:condhatq} and  (\ref{eq:189}a). 

By   \eqref{eq:211}-\eqref{eq:216}, there exists $C>0$ such that for
sfficiently small $h$
\begin{displaymath}
  \big\|\eta_{k,h}T_{\mathcal{F}_{b_k}}^{-1}
  (\widehat{\A}_{k,h}-\tilde{\A}_{k,h}) (\tilde{\A}_{k,h}-\lambda)^{-1}
  T_{\mathcal{F}_{b_k}}\hat{\eta}_{k,h}\big\|\leq Ch^{d_0}\,,
\end{displaymath} 
for any 
\begin{equation}
\label{eq:conditiond02}
0 < d_0 \leq  d_2:= \inf ( \varrho_\perp-\hat{q}_1-2a_0,   3\varrho_\perp-2/3+a_0 )= 7\varrho_\perp/4-1/3-\hat{q}_1/4\,.
\end{equation}
Note that the positivity of $d_2$ has been verified above.\\

{\bf {\em Step 3:} Estimate $  \big\|\eta_{k,h}T_{\mathcal{F}_{b_k}}^{-1}
  (\widehat{\A}_{k,h}-\tilde{\A}_{k,h}) (\tilde{\A}_{k,h}-\lambda)^{-1}
  T_{\mathcal{F}_{b_k}}\hat{\eta}_{k,h}\big\|$  for $k\in\Jg_\partial^s\setminus\Jg_\partial^{s,0}$ and $V2$
  potentials. \\[2ex] }
For potentials of type V2, we recall that $\tilde{\A}_{k,h}$ is
  introduced in \eqref{deftildeAkhV2} and write 
\begin{equation}    
\label{eq:218}
\begin{array}{l}
  \big\|\eta_{k,h}T_{\mathcal{F}_{b_k}}^{-1}
  (\widehat{\A}_{k,h}-\tilde{\A}_{k,h}) (\tilde{\A}_{k,h}-\lambda)^{-1}
  T_{\mathcal{F}_{b_k}}\hat{\eta}_{k,h}\big\|\\\qquad 
  \leq C h^2\big( \| {\mathbf 1}_{\rho>h^{2/3-a_0}}\tilde{\eta}_{k,h}(\tilde g_c-1)
  \partial^2_s(\tilde\A_{k,h}-\lambda)^{-1}\breve{\eta}_{k,h}\| \\\qquad \quad  \quad \qquad +  \|{\mathbf
    1}_{\rho>h^{2/3-a_0}}\tilde{\eta}_{k,h}(\tilde g_c-1-\alpha s)
  \partial^2_\rho(\tilde\A_{k,h}-\lambda)^{-1}\breve{\eta}_{k,h}\|\\
 \qquad \quad \qquad \qquad   +\| {\mathbf
    1}_{\rho\leq h^{2/3-a_0}}\tilde{\eta}_{k,h}(\tilde g_c-1)
  \partial^2_s(\tilde\A_{k,h}-\lambda)^{-1}\breve{\eta}_{k,h}\| \\ \qquad \qquad \quad \qquad  \quad +  \|{\mathbf
    1}_{\rho\leq h^{2/3-a_0}}\tilde{\eta}_{k,h}(\tilde g_c-1-\alpha s)
  \partial^2_\rho(\tilde\A_{k,h}-\lambda)^{-1}\breve{\eta}_{k,h}\| \big)\,.
  \end{array}
\end{equation}
For the first two terms we use the dilation \eqref{eq:196} together with (\ref{eq:183}b) (for $a= \frac 94 a_0$)  to
obtain 
\begin{multline}
  \label{eq:219}
 h^2\big(\| {\mathbf 1}_{\rho>h^{2/3-a_0}}\tilde{\eta}_{k,h}(\tilde g_c-1)
  \partial^2_s (\tilde\A_{k,h}-\lambda)^{-1}\breve{\eta}_{k,h}\| \\ + \|{\mathbf
    1}_{\rho>h^{2/3-a_0}}\tilde{\eta}_{k,h}(\tilde g_c-1-\alpha s)
  \partial^2_\rho(\tilde\A_{k,h}-\lambda)^{-1}\breve{\eta}_{k,h}\|\big) \leq Ch^{\varrho_\perp}\,.
\end{multline}
For the third term we use the same dilation and (\ref{eq:182}b) to
obtain for  $\mathfrak a$ satisfying  $1/4<\mathfrak a=5/12<(1-q)/2$ that
\begin{equation}
  \label{eq:220} 
  h^2\| {\mathbf 1}_{\rho\leq h^{2/3-a_0}}\tilde{\eta}_{k,h}(\tilde g_c-1)
  \partial^2_s (\tilde\A_{k,h}-\lambda)^{-1}\breve{\eta}_{k,h}\| \leq Ch^{\varrho_\perp-7/27}\,.
\end{equation}
Recall that $\varrho_\perp > \frac{7}{27}$ by (\ref{eq:189}b).

Finally, for the last term on the right-hand-side of \eqref{eq:218} we
use the fact that
\begin{displaymath}
  \|{\mathbf  1}_{\rho\leq h^{2/3-a_0}}\tilde{\eta}_{k,h}(\tilde g_c-1-\alpha
  s)\|_\infty\leq C(h^{2\varrho_\perp}+h^{2/3-a_0})\,,
\end{displaymath}
to obtain by the same dilation and (\ref{eq:182}a) 
\begin{equation}
  \label{eq:221}
h^2 \|{\mathbf
    1}_{\rho\leq h^{2/3-a_0}}\tilde{\eta}_{k,h}(\tilde g_c-1-\alpha s)
  \partial^2_\rho(\tilde\A_{k,h}-\lambda)^{-1}\breve{\eta}_{k,h}\| \leq
  C\, (h^{2/9-a_0-\hat{q}_2}+h^{2\varrho_\perp-4/9-\hat{q}_2})\,.
\end{equation}
As $2\varrho_\perp-4/9-\hat q_2 >0$ by \eqref{eq:condhatq} and (\ref{eq:189}b)
we need to select  $a_0 < \frac 29 - \hat q_2$ (yielding $a_0  < \frac {4}{27}$).
 Setting 
\begin{equation}
\label{eq:222} 
  a_0=2(1/3-\varrho_\perp)\,,
\end{equation}
yields by \eqref{eq:218}-\eqref{eq:221}
that \eqref{estlemma} is satisfied in the present context if
\begin{equation}\label{eq:conditiond03}
0 < d_0 \leq d_3:= 2 \varrho_\perp-4 /9-\hat{q}_2 \,,
\end{equation}
where the positivity of $d_3$ follows from \eqref{eq:condhatq} and (\ref{eq:189}b).
\\[3ex]

{\bf {\em Step 4:} Estimate $\| [\A_h,\eta_{k,h}]R_{k,h}\check \eta_{k,h}
  \|$  for $k\in   \Jg_\partial^s\setminus\Jg_\partial^{s,0} $ and V1 potentials.} \\[2ex]
Note that the above term is the second term on the right-hand side of
(\ref{eq:203}b), and an effective bound for it in is necessary order
to obtain a proper estimate of $\|\B_k(h,\lambda)\|$.  . Having in mind
\eqref{eq:formcomm}, we decompose $[\A_h,\eta_{k,h}]$ in the form
  \begin{equation}
\label{eq:223}
    [\A_h,\eta_{k,h}] = - h^2 (\Delta \eta_{k,h}) - 2 h^2  1_{\rho>h^{2/3-a_0}}
    \nabla \eta_{k,h} \cdot \nabla - 2 h^2   1_{\rho < h^{2/3-a_0}} \nabla \eta_{k,h} \cdot \nabla
  \,,
  \end{equation}
where $a_0$ is given by \eqref{eq:216}.  Note by (\ref{eq:185}b) that
  $(\Delta \eta_{k,h})$ and $\nabla \eta_{k,h}$ are supported in
  $B(b_k,h^{\varrho_\perp})\setminus B(b_k,h^{\varrho_\perp}/2)$.  Hence, whenever
  $\rho<h^{2/3-a_0}$ we have, for sufficiently small $h$,
\begin{displaymath}
|s|\geq \frac 12 h^{\varrho_\perp} - h^{2/3-a_0} > \frac13 h^{\rho_\perp}\,,
\end{displaymath}
since by \eqref{eq:216} we have 
$
\varrho_\perp < \frac 23 - a_0\,.
$
\\
Consequently, we may represent \eqref{eq:223} in the form
 $$
    [\A_h,\eta_{k,h}] = - h^2 (\Delta \eta_{k,h}) - 2 h^2  1_{\rho>h^{2/3-a_0}}
    \nabla \eta_{k,h} \cdot \nabla - 2 h^2   1_{\rho < h^{2/3-a_0}}
    1_{|s|>\frac{h^{\rho_\perp}}{3}} \nabla \eta_{k,h} \cdot \nabla 
  \,.
 $$
Recall from (\ref{eq:185}b) that
  for all $k\in   \Jg_\partial^s\setminus\Jg_\partial^{s,0} \,,$  
$$
|\nabla \eta_{k,h}| + h^{\varrho_\perp}|D^2\eta_{k,h}|\leq C\, h^{-\varrho_\perp}\,.
$$
We finally note that by \eqref{eq:186} and \eqref{eq:188}, we have,
for $\rho<h^{2/3-a_0}$, that
$$
 |\partial \tilde \eta_{k,h}/\partial\rho|\leq C \, h^{2/3-a_0-2\varrho_\perp}\,.
$$ 
 With these remarks in mind  we obtain the existence of $C$ and $h_0$
 such that for $h\in (0,h_0]$ and  $\lambda\in \rho ( \tilde A_{k,h})$
\begin{equation}
\label{eq:217}
\begin{array}{l}
  \big\|[\A_h,\eta_{k,h}]R_{k,h} \check\eta_{k,h}  \big\| \\[1.2ex]
  \quad \quad  \leq C \, h^{2(1-\varrho_\perp)} \|\breve \eta_{k,h} (\tilde\A_{k,h}-\lambda)^{-1} \breve \eta_{k,h}\|
    \\  \quad \quad \quad  + C \, h^{2-\varrho_\perp} \|\breve \eta_{k,h} {\mathbf
     1_{\rho>h^{2/3-a_0}}}\pa_\rho (\tilde\A_{k,h}-\lambda)^{-1}\breve \eta_{k,h} \|
     \\   \quad \quad \quad  + C \, h^{2-\varrho_\perp} \|\breve \eta_{k,h} {\mathbf
     1_{\rho>h^{2/3-a_0}}}\pa_s (\tilde\A_{k,h}-\lambda)^{-1}\breve \eta_{k,h} \|\\
       \quad \quad \quad + C \,  h^{2-\varrho_\perp} \,\|\breve \eta_{k,h} {\mathbf
     1_{s>\frac{h^{\rho_\perp}}{3}}}\partial_s(\tilde\A_{k,h}-\lambda)^{-1}\breve \eta_{k,h} \| \\ \qquad \quad  + C \,  h^{8/3-a_0-2\varrho_\perp}\|\breve \eta_{k,h} \partial_\rho(\tilde\A_{k,h}-\lambda)^{-1}\breve \eta_{k,h} \| \,,
     \end{array}
     \end{equation}
where $\tilde\A_{k,h}$ is given by \eqref{eq:192}.\\ 
We now estimate term by term the 
right hand side of \eqref{eq:217} using  the dilation
\eqref{eq:193}. To this end we use \eqref{eq:149} below with $a= \frac 32 a_0$ and $a' \geq
 1-\frac32 \varrho_\perp$.  \\
For the first term, we get from   (\ref{eq:149}a)
$$
 h^{2(1-\varrho_\perp)} \|\breve \eta_{k,h} (\tilde\A_{k,h}-\lambda)^{-1} \breve \eta_{k,h}  \| \leq  C h^{\frac23 -2\varrho_\perp - \hat q_1 }\,.
 $$
 We note that by  \eqref{eq:condhatq} and (\ref{eq:189}a)  it follows
 that $\frac23 -2\varrho_\perp - \hat q_1 >0$.\\ 
 For the second term, we use  (\ref{eq:149}b) to obtain
 $$
  h^{2- \varrho_\perp} \| \breve \eta_{k,h}   {\mathbf
     1_{\rho>h^{2/3-a_0}}\pa_\rho}(\tilde\A_{k,h}-\lambda)^{-1}\| \leq C
   h^{\frac23 -  \varrho_\perp + \frac{a_0}{2}} \,.
 $$
 For the third term, using  (\ref{eq:149}b) yields
 $$
   h^{2- \varrho_\perp}  \|\breve \eta_{k,h}   {\mathbf
     1_{\rho>h^{2/3-a_0}}\pa_s}(\tilde\A_{k,h}-\lambda)^{-1}\| \leq C  h^{\frac
     23  -  \varrho_\perp } \,.
 $$
 For the forth term, we use (\ref{eq:149}c) whose assumptions hold  when $a' \geq 1-\frac32
 \varrho_\perp$ and $\mathfrak a$ satisfies
     \begin{equation}
\label{eq:226}
   \frac 16 < \mathfrak a < \frac14 \mbox{ and }   \frac 32 \varrho_\perp - \frac12 + \mathfrak a < 0\,.       
     \end{equation}
     Such a choice for $\mathfrak a$  is possible since $\varrho_\perp < \frac 29 $ is satisfied by (\ref{eq:189}a).
We obtain
$$
 \|\breve \eta_{k,h} {\mathbf
     1_{s>\frac{h^{\rho_\perp}}{3}}}\partial_s(\tilde\A_{k,h}-\lambda)^{-1}\breve \eta_{k,h} \| \leq C  h^{\frac{4}{3}{\mathfrak a}-\varrho_\perp}\,.
$$
 Note by (\ref{eq:189}a) that  $$4{\mathfrak a}/3 -\varrho_\perp >2/9 -\varrho_\perp > 0\,.$$
Finally, for the fifth term, we get from (\ref{eq:149}a)
     $$
      h^{8/3-a_0-2\varrho_\perp}\|\breve \eta_{k,h} \partial_\rho(\tilde\A_{k,h}-\lambda)^{-1}\breve \eta_{k,h} \| \leq C\,  h^{\frac 23 - a_0 - 2 \varrho_\perp - \hat q_1}\,.
     $$
     Using \eqref{eq:216} together with \eqref{eq:condhatq} and
     (\ref{eq:189}a) then yields
     $$
     \frac 23 - a_0 - 2 \varrho_\perp - \hat q_1= \frac 13 - \frac 34 (\hat q_1 + \varrho_\perp) >0 \,.
     $$
In conclusion, we have established, for potentials of type V1 that
$ [\A_h,\eta_{k,h}]R_{k,h}\check \eta_{k,h} $ satisfies \eqref{estlemma} when $d_0$ satisfies 
\begin{equation}\label{eq:conditiond041}
  0<d_0 \leq d_4^1:= \min \Big\{ \frac23 -2\varrho_\perp - \hat q_1\,
  ,\frac{4}{3}{\mathfrak a} -\varrho_\perp\, , \frac 13 - \frac 34 (\hat q_1 +
  \varrho_\perp)\Big\}\,,
\end{equation}
for  ${\mathfrak a}$ satisfying \eqref{eq:226}.\\
The positivity of $d_4^1$ is established by the foregoing
discussion. \\

{\bf {\em Step 5:} Estimate $\| [\A_h,\eta_{k,h}]R_{k,h}\check \eta_{k,h}
  \|$  for $k\in   \Jg_\partial^s\setminus\Jg_\partial^{s,0} $ for V2 potentials. \\[2ex]  }
We begin the estimate, as in the V1 case, starting from \eqref{eq:217} but with
$\tilde A_{k,h}$ introduced in \eqref{deftildeAkhV2}, $a_0$
given by \eqref{eq:222}. We use here the dilation
\eqref{eq:196}.  For the first term on the right-hand-side of
\eqref{eq:217}, we use \eqref{eq:180} to obtain
\begin{displaymath}
  h^{2(1-\varrho_\perp)} \|\breve \eta_{k,h} (\tilde\A_{k,h}-\lambda)^{-1} \breve
  \eta_{k,h}\|\leq Ch^{8/9-2\varrho_\perp-\hat{q}_2}\,.
\end{displaymath}
Note that by (\ref{eq:189}b) and \eqref{eq:condhatq} we have
$8/9-2\varrho_\perp-\hat{q}_2>4/27$.

Upon dilation we use (\ref{eq:183}a) with $a=9a_0/4\,$,
for the second
and third terms  on the right-hand-side of \eqref{eq:217} to obtain 
\begin{displaymath}
  h^{2-\varrho_\perp} \big(\|\breve \eta_{k,h} {\mathbf
     1_{\rho>h^{2/3-a_0}}}\pa_\rho (\tilde\A_{k,h}-\lambda)^{-1}\breve \eta_{k,h} \|
     + \|\breve \eta_{k,h} {\mathbf
     1_{\rho>h^{2/3-a_0}}}\pa_s (\tilde\A_{k,h}-\lambda)^{-1}\breve \eta_{k,h}
   \|\big)\leq h^{2/3-\varrho_\perp+a_0/2}\,.
\end{displaymath}
The fourth term on the right-hand-side of \eqref{eq:217} can be
estimated, upon dilation, with the aid of \eqref{eq:181}, where
$1/4 < {\mathfrak a}< (1-q)/2$. We get
\begin{displaymath}
  h^{2-\varrho_\perp} \,\|\breve \eta_{k,h} {\mathbf
     1_{s>\frac{h^{\rho_\perp}}{3}}}\partial_s(\tilde\A_{k,h}-\lambda)^{-1}\breve
   \eta_{k,h} \| \leq h^{2/9-\varrho_\perp+{\frac49 {\mathfrak a}}}\,.
\end{displaymath}
To have $s>h^{\rho_\perp}/3\Rightarrow\sigma>\varepsilon^{-{\mathfrak a}}$ we further require
\begin{equation}\label{eq:condaV2}
\frac 94 \varrho_\perp-1 + \mathfrak a <0\,,
\end{equation}
which can be  satisfied  since $\varrho_\perp < \frac 13$ by (\ref{eq:189}b)
that implies, in addition, 
\begin{displaymath}
  2/9-\varrho_\perp+\frac 49 {\mathfrak a} >0\,.
\end{displaymath}
Finally, by (\ref{eq:182}a), applied upon \eqref{eq:196}, we have for
the last term on the right-hand-side of \eqref{eq:217},
\begin{displaymath}
  h^{8/3-a_0-2\varrho_\perp}\|\breve \eta_{k,h}
  \partial_\rho(\tilde\A_{k,h}-\lambda)^{-1}\breve \eta_{k,h} \|\leq
  C\, h^{8/9-2\varrho_\perp-a_0- \hat q_2}\,,
\end{displaymath}
By (\ref{eq:189}b) and  \eqref{eq:222} we have
$$
\frac89-2\varrho_\perp-a_0 - \hat q_2 >\frac29 - \frac {2}{27} >0\,.
$$

In conclusion, we have obtained for potentials of type V2 that 
$[\A_h,\eta_{k,h}]R_{k,h}\check \eta_{k,h} $ satisfies \eqref{estlemma} for 
\begin{equation}\label{eq:conditiond042} 
0 < d_0 \leq d_4^2:=\inf \{ \frac 89-2\varrho_\perp- a_0 -\hat{q}_2,  2/3-\varrho_\perp+a_0/2, \frac 89 -\varrho_\perp+\frac 49 {\mathfrak a}  \}\,,
\end{equation}
with $\mathfrak a \in (1/6, (1-q)/2)$  satisfying \eqref{eq:condaV2}.\\
The positivity of $d_4^2$ has already been established in the
foregoing
discussion.\\

{\bf {\em Step 6:} Estimate of  $\|\eta_{k,h} T_{\mathcal{F}_{b_k}}^{-1}
  (\widehat {\A}_{k,h}-\tilde{\A}_{k,h}) (\tilde{\A}_{k,h} -\lambda)^{-1}
  \breve{\eta}_{k,h}\big\| $ for $k\in\Jg_\partial^r$. \\[2ex]  }
We recall that $\tilde \A_{k,h}$ is
introduced in \eqref{eq:183a} and we first observe that
\begin{equation} \label{eq:Taylor2}
\begin{array}{ll}
  \|(V\circ \mathcal F_{b_k} -V_{b_k}^{(1)})
  \breve{\eta}_{k,h}\|_\infty & \leq C\,h^{2\varrho}\Big( \Big\|\frac{\partial^2V}{\partial
    s^2}\Big\|_{L^\infty(B(b_k,h^\varrho))}\\ &\qquad  \qquad \quad + \Big\|\frac{\partial^2V}{\partial
    s\partial\rho}\Big\|_{L^\infty(B(b_k,h^\varrho))}\\ & 
    \qquad \qquad \qquad +\Big\|\frac{\partial^2V}{
    \partial\rho^2}\Big\|_{L^\infty(B(b_k,h^\varrho))}\Big)\\
   &  \leq \tilde C \, h^{2\varrho}\,,
   \end{array}
\end{equation}
where $$V_{b_k}^{(1)} (\rho,s) := V(b_k)\pm
  \jg_k\rho $$
   as in  (\ref{eq:183a}).\\
\paragraph{V1 Potentials} For potentials of type V1 we now observe that
\begin{displaymath}
  \Big\|\frac{\partial^2V}{\partial s\partial\rho}\Big\|_{L^\infty(B(b_k,h^\varrho))}\leq C\,\left(  d(b_k,\Se) + h^\varrho\right)  \,,
\end{displaymath}
and that
$b_k\not\in\Union_{n\in\Jg_\partial^s\setminus\Jg_\partial^{s,0}}B(b_n,h^{\varrho_\perp})$ for all
$k\in\Jg_\partial^r$ we have
\begin{equation}\label{eq:minoraa}
 d(b_k,\Se) \geq h^{\varrho_\perp}\,.
 \end{equation}
Clearly,
 \begin{displaymath}
  \Big\|\frac{\partial^2V}{\partial s\partial\rho}\Big\|_{L^\infty(B(b_k,h^\varrho))}\leq C\, d(b_k,\Se)  \,.
\end{displaymath}
Furthermore, since $V_{ss}(0,s)=0$,  we have 
\begin{displaymath}
  \Big\|\frac{\partial^2V}{\partial s^2}\Big\|_{L^\infty(B(b_k,h^\varrho))}\leq Ch^\rho\,,
\end{displaymath}
and, for any $a_1>0$,
\begin{displaymath}
  \Big\|{\mathbf 1}_{\rho\leq h^{2/3-a_1}}\frac{\partial^2V}{\partial
    s^2}\Big\|_{L^\infty(B(b_k,h^\varrho))}\leq Ch^{2/3-a_1} \,.
\end{displaymath}
Consequently, using a Taylor expansion of order 2, we obtain from the preceding inequalities
\begin{equation*}
    \|{\mathbf 1}_{\rho\leq h^{2/3-a_1}}(V\circ \mathcal F_{b_k} -V_{b_k}^{(1)})
  \breve{\eta}_{k,h}\|_\infty \leq C\left( h^{2\varrho}d(b_k, \Sg) + h^{\frac 23 -a_1+ 2\varrho} + h^{\frac43 -2 a_1}\right) \,. 
\end{equation*} 
Set $ a_1=1/3-\varrho/2$. As $\varrho >
\frac 29$ and $\varrho+\varrho_\perp<2/3$  by (\ref{eq:189}a) The above inequality
together with \eqref{eq:minoraa},  lead to
\begin{equation}\label{eq:quadraticerror}
    \|{\mathbf 1}_{\rho\leq h^{2/3-a_1}}(V\circ \mathcal F_{b_k} -V_{b_k}^{(1)})
  \breve{\eta}_{k,h}\|_\infty \leq C\, h^{2\varrho}d(b_k, \Sg)  \,. 
\end{equation} 
We now write 
\begin{equation}  
\label{eq:228}
\begin{array}{l}
  \big\|\eta_{k,h} T_{\mathcal{F}_{b_k}}^{-1}
  (\widehat {\A}_{k,h}-\tilde{\A}_{k,h} ) (\tilde{\A}_{k,h} -\lambda)^{-1}
  \breve{\eta}_{k,h}\big\|\\
\qquad \qquad  \leq C\, h^\varrho\|h^2\Delta_{(s,\rho)}(\tilde\A_{k,h}-\lambda)^{-1} \breve{\eta}_{k,h} \| 
\\ \quad \qquad \qquad  + C 
  \|h^2{\nabla_{(s,\rho)}}(\tilde\A_{k,h}-\lambda(h))^{-1}\breve{\eta}_{k,h} \| \\
  \quad \qquad \qquad  + C 
h^{2\varrho}\|{\mathbf
  1}_{\rho\geq h^{2/3-a_1}}\breve{\eta}_{k,h}(\tilde\A_{k,h}-\lambda)^{-1}\breve{\eta}_{k,h}\| \\
   \quad \qquad \qquad  + C  h^{2\varrho} d(b_k,\Sg) \|{\mathbf
  1}_{\rho\leq h^{2/3-a_1}}\breve{\eta}_{k,h}(\tilde\A_{k,h}-\lambda)^{-1}\breve{\eta}_{k,h}\| \,.  
\end{array}
\end{equation}
By \eqref{eq:200} and Lemma
\ref{lem:preliminary-lemmas-half-Dirichlet} we have 
\begin{equation}  \label{eq:230} 
h^\varrho\|h^2 {\Delta_{(s,\rho)}}(\tilde\A_{k,h} -\lambda)^{-1}  \breve{\eta}_{k,h}
\| \leq C\, h^{\varrho-2\varrho_\perp}\,.
\end{equation}
Note that by (\ref{eq:189}a)
$$ \varrho - 2 \varrho_\perp > \frac 13 - \frac 32 \varrho_\perp >0\,.
$$
Similarly, 
\begin{equation}
\label{eq:225}
  \|h^2{\nabla_{(s,\rho)}}(\tilde\A_{k,h} -\lambda)^{-1}\breve \eta_{k,h} \| \leq C\,
  h^{2/3-2\varrho_\perp}\,.
\end{equation}
 Upon the dilation
\eqref{eq:241} we may use \eqref{eq:150} with $a=\frac{a_1}{3 \varrho_\perp}$ to obtain for
the third term 
\begin{equation}
\label{eq:227}
  h^{2\varrho}\|{\mathbf
  1}_{\rho\geq h^{2/3-a_1}}\breve{\eta}_{k,h}(\tilde\A_{k,h}-\lambda)^{-1}\breve{\eta}_{k,h}\| \leq Ch^{2(\varrho-1/3)}\,.
\end{equation}
Note that the cutoff $\rho > h^{2/3-a_1}$ leads after dilation to $\tau >
h^{ -a_1}$ which should be compared with $\tau \geq \delta^{-a}$ for the
application of \eqref{eq:150}.  Hence, we must have
$h^{-a_1} >> \delta^{-a}$ which leads, in view of \eqref{eq:minoraa},  to
$ 2 a \varrho_\perp  -  a_1 <0$.  For the application of \eqref{eq:150} we
must have in addition $d(b_k,\Sg) <
\delta_1$ for some $\delta_1 >0$. If
$d(b_k,\Sg) \geq \delta_1$ \eqref{eq:227} 
follows immediately from Lemma
\ref{lem:preliminary-lemmas-half-Dirichlet}.   

For the last term, we use \eqref{eq:200} to obtain
\begin{equation}  \label{eq:229}
h^{2\varrho} d(b_k,\Sg) \|{\mathbf
  1}_{\rho\leq h^{2/3-a_1}}\breve{\eta}_{k,h}(\tilde\A_{k,h}-\lambda)^{-1}\breve{\eta}_{k,h}\|\leq C d(b_k,\Sg)^{-1} h^{2\varrho-\frac23}
  \leq 
\tilde C\, h^{2\varrho-\varrho_\perp-2/3}\,.
\end{equation}
Note that by (\ref{eq:189}a) $2\varrho-\varrho_\perp-2/3>0$.\\
Substituting the above
into \eqref{eq:228} yields, that $
 \eta_{k,h} T_{\mathcal{F}_{b_k}}^{-1}
  (\widehat {\A}_{k,h}-\tilde{\A}_{k,h}) (\tilde{\A}_{k,h} -\lambda)^{-1}
  \breve{\eta}_{k,h}$ satisfies \eqref{estlemma} with
  \begin{equation}\label{eq:d51}
 0 <  d_0  \leq d_5^1:= \inf\{2\varrho-\varrho_\perp -2/3,  \varrho- 2 \varrho_\perp, 2(\varrho-1/3), 2/3 - 2 \varrho_\perp\}\,, 
\end{equation}
where positivity of $d_5^1$ has been established above.\\ 

\paragraph{ Type V2 potentials} 
For potentials of type $V_2$, we  write, using \eqref{eq:Taylor2} directly, 
\begin{equation}  
\label{eq:232}
\begin{array}{l}
  \big\|\eta_{k,h} T_{\mathcal{F}_{b_k}}^{-1}
  (\widehat {\A}_{k,h}-\tilde{\A}_{k,h} ) (\tilde{\A}_{k,h} -\lambda)^{-1}
  \breve{\eta}_{k,h}\big\|\\
  \qquad 
  \leq C\, \Big(h^\varrho\|h^2\Delta_{(s,\rho)}(\tilde\A_{k,h}-\lambda)^{-1} \breve{\eta}_{k,h} \| \\
  \quad \qquad \qquad \qquad  +
 \|h^2{\nabla_{(s,\rho)}}(\tilde\A_{k,h}-\lambda(h))^{-1}\breve{\eta}_{k,h} \| \\
  \quad \qquad  \qquad \qquad \qquad + 
h^{2\varrho}\|\tilde{\eta}_{k,h}(\tilde\A_{k,h}-\lambda)^{-1}\breve{\eta}_{k,h}\|\Big) \,.
\end{array}
\end{equation}
Proceeding as in the V1 case, we obtain, similarly to the proof of
\eqref{eq:230} and \eqref{eq:225} but with \eqref{eq:201} in mind,
$$
h^\varrho\|h^2\Delta_{(s,\rho)}(\tilde\A_{k,h}-\lambda)^{-1} \breve{\eta}_{k,h} \| \leq C h^{\varrho  -\varrho_\perp}
$$
and
$$
\|h^2\nabla_{(s,\rho)}(\tilde\A_{k,h}-\lambda)^{-1} \breve{\eta}_{k,h} \| \leq C h^{\frac 23 -\varrho_\perp}\,.
$$
For the third term, we may now use \eqref{eq:200} and \eqref{eq:201}, with
$p_2=1$, to obtain that 
\begin{equation*}  
h^{2\varrho}\|\tilde{\eta}_{k,h}(\tilde\A_{k,h}-\lambda)^{-1}\breve{\eta}_{k,h}\|\leq
C\, h^{2\varrho-\varrho_\perp-2/3}\,.
\end{equation*}
Hence, we obtain that $\eta_{k,h} T_{\mathcal{F}_{b_k}}^{-1}
  (\widehat {\A}_{k,h}-\tilde{\A}_{k,h}) (\tilde{\A}_{k,h} -\lambda)^{-1}
  \breve{\eta}_{k,h}$ satisfies \eqref{estlemma} with
  \begin{equation} \label{eq:235}
0<d_0  \leq d_5^2:= \inf\{ 2\varrho-\varrho_\perp-2/3, \varrho  -\varrho_\perp\}\,, 
\end{equation}
where positivity of $d_6^2$ follows from (\ref{eq:189}b).

{\bf {\em Step 7:} Estimate of $[\A_h,\eta_{k,h}]R_{k,h}\check \eta_{k,h}$  for $k\in\Jg_\partial^r$. \\[2ex]  }
We now estimate the rest of the contribution of $\Jg_\partial^r$,   as in
Steps 4 and 5, by writing (dropping the cut-off in the $s$ variable)
\begin{equation}
\label{eq:236}
\begin{array}{l}
 \Big\|[\A_h,\eta_{k,h}]R_{k,h}\check \eta_{k,h}  \Big\| \\
  \quad  \leq C  \,\Big(h^{2(1-\varrho)} \|\breve \eta_{k,h} (\tilde\A_{k,h}-\lambda)^{-1}\breve \eta_{k,h}  \|
    \\   \quad \quad \qquad \quad   +h^{2-\varrho} \|\breve \eta_{k,h} {\mathbf
     1_{\rho>h^{2/3-a_1}}} \partial_\rho(\tilde\A_{k,h}-\lambda)^{-1}\breve \eta_{k,h}  \| \\
      \\  \quad \quad \qquad \quad \qquad \qquad 
     \qquad
      + h^{2-\varrho}  \|\breve \eta_{k,h} \partial_s(\tilde\A_{k,h}-\lambda)^{-1}\breve
      \eta_{k,h}  \| \\ \qquad \qquad 
     \qquad \qquad \quad \qquad \qquad \quad  +
     h^{8/3-a_1-2\varrho}\|\breve \eta_{k,h}
     \partial_\rho(\tilde\A_{k,h}-\lambda)^{-1}\breve \eta_{k,h} \| 
   \Big) \,.  
\end{array}
\end{equation}

\paragraph{V1 potentials}
For potentials of type V1,  we use Lemma
\ref{lem:preliminary-lemmas-half-Dirichlet}  upon the dilation
\eqref{eq:241} to 
obtain for the first term
\begin{equation}
h^{2(1-\varrho)} \|\breve \eta_{k,h} (\tilde\A_{k,h}-\lambda)^{-1}\breve \eta_{k,h}  \| \leq C h^{\frac 43 -2 \varrho -2  \varrho_\perp }\,,
\end{equation}
and observe that $\frac 43 -2 \varrho -2  \varrho_\perp $ is positive by
(\ref{eq:189}a). \\
 Upon dilation we then use (\ref{eq:150}a), with
$a=\frac{a_1}{3 \varrho_\perp}$,  to obtain
for the second term
\begin{equation}
\label{eq:237}
  h^{2-\varrho}\|\breve \eta_{k,h} {\mathbf
     1_{\rho>h^{2/3-a_1}}}\pa_\rho (\tilde\A_{k,h}-\lambda)^{-1}\breve \eta_{k,h}\| \leq Ch^{2/3-\varrho}\,.
\end{equation}
Then we write, using (\ref{eq:150}b) upon dilation, with $\delta$
satisfying \eqref{eq:201},
\begin{equation}
\label{eq:237b}
  h^{2-\varrho}\|\breve \eta_{k,h}  \partial_s(\tilde\A_{k,h}-\lambda)^{-1}\breve
   \eta_{k,h}  \| \leq Ch^{2/3-\varrho-\varrho_\perp}\,.
\end{equation}

Finally, we obtain with the aid of \eqref{eq:200} and Lemma
\ref{lem:preliminary-lemmas-half-Dirichlet}, setting $a_1=2/3-\varrho
-\varrho_\perp$,
\begin{displaymath}
  h^{8/3-a_1-2\varrho}\| \breve \eta_{k,h}  \partial_\rho(\tilde\A_{k,h}-\lambda)^{-1}\breve \eta_{k,h}  \|\leq Ch^{2/3-\varrho-\varrho_\perp}\,.
\end{displaymath}
Hence \eqref{estlemma} holds for $\breve \eta_{k,h} [\A_h,\eta_{k,h}]R_{k,h}\check \eta_{k,h}$   if  $d_0$ satisfies
\begin{equation}\label{eq:conditiond06V1}
0 < d_0 \leq d_6^1:=\frac{2}{3}-\varrho -\varrho_\perp\,,
\end{equation}
where the positivity of $d_6^1\,$ follows from (\ref{eq:189}a).\\

\paragraph{V2 potentials} 
In a similar manner to the V1 case,  we begin by employing
Lemma~\ref{lem:preliminary-lemmas-half-Dirichlet} upon dilation to
obtain 
\begin{equation*}
h^{2(1-\varrho)} \|\breve \eta_{k,h} (\tilde\A_{k,h}-\lambda)^{-1}\breve \eta_{k,h}  \| \leq C \, h^{\frac 43 -2 \varrho - \varrho_\perp }\,.
\end{equation*}
Then by (\ref{eq:150}a), with $a:= \frac{a_1}{2 \varrho_\perp}$, we have
 \begin{displaymath}
 h^{2-\varrho} \, \|\breve \eta_{k,h}  {\mathbf
     1_{\rho>h^{2/3-a_1}}}\pa_\rho (\tilde\A_{k,h}-\lambda)^{-1} \breve \eta_{k,h}\| \leq C\,h^{2/3-\varrho} \,,
 \end{displaymath}
 and  by  (\ref{eq:150}b), 
 \begin{displaymath}
   h^{2-\varrho}\, \|\breve \eta_{k,h}\partial_s(\tilde\A_{k,h}-\lambda)^{-1}\breve \eta_{k,h}\| \leq C\,h^{2/3-\varrho-\varrho_\perp/2}\,.
 \end{displaymath}
 Finally, with the aid of \eqref{eq:200} and Lemma
\ref{lem:preliminary-lemmas-half-Dirichlet}, we obtain by setting
first $a_1=2/3-\varrho -\frac 12 \varrho_\perp$, 
\begin{displaymath} 
  h^{8/3-a_1-2\varrho}\| \breve \eta_{k,h} \partial_\rho(\tilde\A_{k,h}-\lambda)^{-1} \breve
  \eta_{k,h} \|\leq Ch^{2/3-\varrho -  \varrho_\perp/2}\,. 
\end{displaymath}
Hence \eqref{estlemma} holds in the V2  case for the term  $[\A_h,\eta_{k,h}]R_{k,h}\check \eta_{k,h}$    if  $d_0$ satisfies
\begin{equation}\label{eq:conditiond06V2}
0 < d_0 \leq d_6^2 := \frac{2}{3}-\varrho-\frac{\varrho_\perp}{2}\,,
\end{equation}
where the positivity of $d_6^2 $ results of (\ref{eq:189}b).\\

{\bf {\em Conclusion:} Computation of $d_0$. \\[2ex]}
Combining \eqref{eq:conditiond01}, \eqref{eq:conditiond02},
\eqref{eq:conditiond03}, \eqref{eq:conditiond041}, \eqref{eq:conditiond042}, \eqref{eq:d51}, \eqref{eq:235},
\eqref{eq:conditiond06V1},   and \eqref{eq:conditiond06V2},  we have
established \eqref{estlemma} with 
$$
d_0= \inf\{d_1, d_2, d_3, d_4^1, d_4^2,  d_5^1, d_5^2, d_6^1,d_6^2\}\,.
$$
This completes the proof of the lemma.
\end{proof}

\subsection{Eigenvalue existence}
From \eqref{eq:204} it follows that $(I+\Eg(h,\lambda))^{-1}$ is uniformly
bounded as $h\to0$.  Since by Lemma \ref{lemma8.1} we have  $\|
\RR(h,\lambda)\| \leq C\, h^{-(k_i+\hat{q}_i)}$, we get the existence of  positive $h_0$ and $C$ , such that for $h\in
(0,h_0]$ the circle $\partial B( \hat \Lambda^i(h), h^{k_i+\hat q_i})$
is in $\rho(\mathcal A_h)$ and the resolvent, which is given by 
$$
 (\A_h -\lambda)^{-1} = \RR(h,\lambda) (I+\Eg(h,\lambda))^{-1}\,,
$$
 consequently  satisfies there
\begin{displaymath}
  \|(\A_h -\lambda)^{-1}\|\leq C\, h^{-(k_i +\hat{q}_i)}\,.
\end{displaymath}
Hence, we obtain the
following result:
\begin{proposition} Let $q\in (0,\frac 16)$ and for $i=1,2\,$,  $\hat q_i = (\frac 23)^i  q\,$, $k_i =\frac 23 + ( \frac 23) ^{i}$. 
Under the assumptions of Theorem \ref{thm:interior}, for potentials of
type V1 (where $i=1$) and the assumptions of Theorem~\ref{thm:nonsmooth}, for potentials of
type V2 (where $i=2$) 
there exist positive constants $C$ and $h_0$ such that, for all $h\in
(0,h_0]$, 
\begin{equation}
    \label{eq:240}
  \sup_{\lambda\in\partial B(\hat \Lambda^i(h) ,h^{k_i+\hat{q}_i})}\|(\A_h -\lambda)^{-1}\|\leq C\, h^{-(k_i+\hat{q}_i)}\,.
\end{equation}
\end{proposition}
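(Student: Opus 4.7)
The strategy is dictated by the framework already built up in the preceding sections: combine the approximate resolvent $\mathcal R(h,\lambda)$ defined in \eqref{eq:189aa} with the global error estimate \eqref{eq:204} in Proposition~\ref{lemma8.2}, then invert the perturbed identity. The plan is as follows.

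First, I would fix $\lambda$ on the circle $\partial B(\hat\Lambda^i(h), h^{k_i+\hat q_i})$ with $h\in(0,h_0]$ and recall from \eqref{defEg} that
\begin{equation*}
(\A_h-\lambda)\mathcal R(h,\lambda)=I+\mathcal E(h,\lambda).
\end{equation*}
By Lemma~\ref{lemma8.1} the approximate resolvent satisfies $\|\mathcal R(h,\lambda)\|\le C\,h^{-(k_i+\hat q_i)}$, uniformly in $\lambda$ on this circle, since each of the local resolvent bounds \eqref{eq:18}, \eqref{eq:20}, \eqref{eq:190}, \eqref{eq:191}, \eqref{eq:195}, \eqref{eq:197}, \eqref{eq:199}, \eqref{eq:200} is at most of this size and the partition of unity has bounded overlap.

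Next, I would invoke Proposition~\ref{lemma8.2}: under the assumptions on $(\varrho,\varrho_\perp,b,\tilde b)$ fixed in the course of the proof of the lemma within that proposition, the remainder satisfies
\begin{equation*}
\sup_{\lambda\in\partial B(\hat\Lambda^i(h),h^{k_i+\hat q_i})}\|\mathcal E(h,\lambda)\|\xrightarrow[h\to 0]{}0.
\end{equation*}
Consequently there exists $h_0>0$ such that for every $h\in(0,h_0]$ and every $\lambda$ on the circle, the operator $I+\mathcal E(h,\lambda)$ is invertible on $L^2(\Omega)$ with $\|(I+\mathcal E(h,\lambda))^{-1}\|\le 2$. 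In particular $\mathcal R(h,\lambda)(I+\mathcal E(h,\lambda))^{-1}$ is a well-defined bounded right-inverse of $\A_h-\lambda$. Since $\A_h$ has compact resolvent (by \cite{AH} or by the arguments of Section~\ref{s2}), the existence of a bounded right-inverse on this circle forces $\lambda\in\rho(\A_h)$, and the right-inverse coincides with $(\A_h-\lambda)^{-1}$.

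Combining the two bounds then gives, uniformly on the circle,
\begin{equation*}
\|(\A_h-\lambda)^{-1}\|\le\|\mathcal R(h,\lambda)\|\,\|(I+\mathcal E(h,\lambda))^{-1}\|\le 2C\,h^{-(k_i+\hat q_i)},
\end{equation*}
which is exactly \eqref{eq:240}. The main work has already been carried out: the delicate part is not this assembly step but the error control in Proposition~\ref{lemma8.2}, where the precise range of exponents $(\varrho,\varrho_\perp,b,\tilde b,\mathfrak a,a_0,a_1)$ must be tuned so that the exponent $d_0$ obtained there is strictly positive for both potential types—this is exactly what fixes the admissible range $q\in(0,1/6)$ and hence the values of $\hat q_i$.
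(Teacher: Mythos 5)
Your proof is correct and follows precisely the route the paper takes in the "Eigenvalue existence" paragraph: combine Lemma \ref{lemma8.1} (the bound $\|\mathcal R(h,\lambda)\|\leq C h^{-(k_i+\hat q_i)}$) with Proposition \ref{lemma8.2} (the decay $\|\mathcal E(h,\lambda)\|\to 0$), invert $I+\mathcal E(h,\lambda)$ via a Neumann series, and identify $\mathcal R(h,\lambda)(I+\mathcal E(h,\lambda))^{-1}$ with $(\A_h-\lambda)^{-1}$. The remark on using compactness of the resolvent to pass from a bounded right-inverse to the actual resolvent matches the justification the paper gives in Section \ref{s2}.
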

 We can now prove the upper bound for the spectrum. 
\begin{proposition}
Let $i\in \{1,2\}$ and suppose that $V$ is of type $Vi$.
There exist $h_0 >0$ and, for $h\in (0,h_0]$  an eigenvalue $\lambda
\in \sigma(\A_h)$ satisfying 
\begin{equation}  
\label{eq:5bis} 
\lambda- \hat  \Lambda^i (h)   = o(h^{k_i}) \quad \text{as }h\to0\,.
\end{equation}
\end{proposition}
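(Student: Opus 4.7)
The plan is to derive a contradiction from the assumption that no eigenvalue of $\A_h$ lies inside $B(\hat\Lambda^i(h), h^{k_i+\hat q_i})$, combining the quasimode construction of Section~\ref{sec:2} (resp.\ Section~\ref{sec:quazimode-V2}) with the resolvent estimate \eqref{eq:240}. Since $\A_h$ is a bounded (imaginary) perturbation of the selfadjoint operator $-h^2\Delta$ on the bounded domain $\Omega$, its resolvent is compact and its spectrum is discrete. Hence, if no eigenvalue lies in the open disk $B(\hat\Lambda^i(h), h^{k_i+\hat q_i})$, then, together with the fact that $\partial B(\hat\Lambda^i(h), h^{k_i+\hat q_i})\subset\rho(\A_h)$ already given by the preceding proposition, the entire closed disk belongs to $\rho(\A_h)$.

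On this closed disk the operator-valued function $\lambda\mapsto (\A_h-\lambda)^{-1}$ is holomorphic. Applying the maximum modulus principle to the scalar holomorphic functions $\lambda\mapsto\langle (\A_h-\lambda)^{-1}f,g\rangle$ for arbitrary $f,g\in L^2(\Omega)$ and invoking \eqref{eq:240}, we would obtain
\begin{equation*}
\|(\A_h-\hat\Lambda^i(h))^{-1}\|\leq C\,h^{-(k_i+\hat q_i)}\,.
\end{equation*}
On the other hand, the quasimode $U_h^i$ provided by Proposition~\ref{prop3.2} (for $i=1$) or Proposition~\ref{prop4.1} (for $i=2$) satisfies
\begin{equation*}
\|(\A_h-\hat\Lambda^i(h))U_h^i\|_2\leq C\,h^{m_i}\,\|U_h^i\|_2\,,
\end{equation*}
with $m_1=5/3$ (cf.\ \eqref{eq:47}) and $m_2=4/3$ (cf.\ \eqref{eq:57}). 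Writing
\begin{equation*}
\|U_h^i\|_2 = \|(\A_h-\hat\Lambda^i(h))^{-1}(\A_h-\hat\Lambda^i(h))U_h^i\|_2 \leq C\,h^{m_i-k_i-\hat q_i}\,\|U_h^i\|_2\,,
\end{equation*}
and using
\begin{equation*}
m_1-k_1-\hat q_1 = \tfrac13-\tfrac{2q}{3}>0\,,\qquad m_2-k_2-\hat q_2 = \tfrac29-\tfrac{4q}{9}>0\,,
\end{equation*}
valid since $q\in(0,1/6)$, we reach the desired contradiction for all sufficiently small $h$.

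Consequently, for $h\in(0,h_0]$ there exists $\lambda\in\sigma(\A_h)\cap B(\hat\Lambda^i(h),h^{k_i+\hat q_i})$, and since $\hat q_i>0$ we have $h^{k_i+\hat q_i}=o(h^{k_i})$, proving \eqref{eq:5bis}. The argument contains no serious obstacle beyond verifying that the exponents $m_i-k_i-\hat q_i$ are indeed positive under the standing constraint $q<1/6$; all of the heavy lifting has already been done in the resolvent estimate \eqref{eq:240} and the quasimode Propositions~\ref{prop3.2} and~\ref{prop4.1}.
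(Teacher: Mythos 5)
Your proof is correct, and both the logic and the arithmetic check out: the exponents $m_i-k_i-\hat q_i$ are indeed positive under $q<1/6$, and the contradiction argument is sound. Your route differs from the paper's only in the complex-analysis step: the paper derives from the quasimode identity the formula
\begin{equation*}
\langle U^i,(\A_h-\lambda)^{-1}U^i\rangle
=-\frac{1}{\lambda-\hat\Lambda^i(h)}\bigl[\|U^i\|^2-\langle U^i,(\A_h-\lambda)^{-1}f_i\rangle\bigr]\,,
\end{equation*}
integrates it over $\partial B(\hat\Lambda^i(h),h^{k_i+\hat q_i})$, and concludes that $\frac{1}{2\pi i}\oint\langle U^i,(\A_h-\lambda)^{-1}U^i\rangle\,d\lambda$ is, up to an $O(h^{m_i-\hat q_i})$ correction, equal to $-\|U^i\|^2\neq 0$; since this contour integral of a holomorphic function would vanish, holomorphy fails and an eigenvalue must sit inside. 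You instead assume holomorphy on the closed disk, propagate the resolvent bound from the boundary circle to the center via the maximum modulus principle applied to $\lambda\mapsto\langle(\A_h-\lambda)^{-1}f,g\rangle$, and then let the quasimode inequality $\|U_h^i\|\leq Ch^{m_i-k_i-\hat q_i}\|U_h^i\|$ force a contradiction. The two arguments are essentially equivalent (both encode the failure of holomorphy via the resolvent bound on the circle), but yours is a touch more streamlined, while the paper's Cauchy-integral formulation is the standard preparation when one additionally wants to show that the Riesz projector $\frac{1}{2\pi i}\oint(\lambda-\A_h)^{-1}\,d\lambda$ has rank one and is close to the rank-one projector onto the quasimode — a refinement your version does not directly yield.
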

\begin{proof}
Let $U^1$ be given by \eqref{eq:45} and $U^2$ by \eqref{eq:55}
and let $f_i=(\A_h-\hat{\Lambda}^i(h))U^i$. Clearly,
\begin{displaymath}
  (\A_h-\lambda)U^i= f_i + (\hat{\Lambda}^i(h) -\lambda)U^i\,.
\end{displaymath}
Hence, for $\lambda \in \pa B (\hat \Lambda^i (h), h^{k_i +\hat q_i})$,  we can write
\begin{displaymath} 
  \langle U^i, (\A_h -\lambda)^{-1}U^i\rangle =-\frac{1}{\lambda-\hat{\Lambda}^i(h)}[\langle U^i,U^i\rangle -\langle U^i, (\A_h-\lambda)^{-1}f_i\rangle] \,.
\end{displaymath}
By   \eqref{eq:240} and either \eqref{eq:47} for $i=1$  or \eqref{eq:57} for $i=2$, we then
obtain 
\begin{displaymath}
  \|(\A_h-\lambda)^{-1}f_i\|_2\leq \frac{C}{h^{k_i+\hat{q}_i}}\, \|f_i\|_2\leq C\, h^{m_i-\hat{q}_i}\| U^i\|\,,
\end{displaymath}
where $m_1=1/3$ and $m_2=2/9\,$. \\
Consequently, observing that $\hat{q}_i<m_i$ ($i=1,2$),
\begin{displaymath}
  \Big|\frac{1}{2\pi
    i}\oint_{\partial B(\Lambda^i(h) ,h^{k_i+\hat{q}_i}) }\langle U^i,
  (\A_h -\lambda)^{-1}f_i\rangle \,  d\lambda+\| U^i\|^2 \Big|\leq C\, h^{m_i-\hat{q}_i} \, \| U^i\|^2 \,.
\end{displaymath}
Hence there exists $h_0 >0$ such that, for $h\in (0,h_0]$, $(\A_h
-\lambda)^{-1}$ is not holomorphic in $B( \Lambda^i(h)  ,h^{k_i+\hat q_i})$ and the
proposition follows.
\end{proof}

The existence of an eigenvalue satisfying \eqref{eq:5bis} provides an
effective upper bound for $\inf \Re\sigma(\A_h)$. Together with the lower
bound \eqref{eq:10}, this completes the proof  of Theorems
\ref{thm:interior} and \ref{thm:nonsmooth}. 
\appendix

\section{Examples of potentials satisfying \eqref{eq:2}}
\label{app:example-potent}
We now derive some simple examples of potentials satisfying
\eqref{eq:2}, demonstrating that both types of potentials may exist.
The basic idea is again that the problem introduced in \eqref{eq:2}
exhibits some invariance to conformal mapping (see \cite{PS}).  Hence
starting form a problem defined on the square $\Box\,$, where the
solution of \eqref{eq:2} is a linear function, we can get from family
of conformal maps a corresponding family of potentials satisfying
\eqref{eq:2} in various domains, together with  \eqref{ass1},
\eqref{AssR1}, and \eqref{AssR2}.

 Let $\Box=(0,1)\times(0,1)\subset\C$ and $\Omega=f(\Box)$ where, for $w=u+iv$, $f$ is the conformal map
\begin{displaymath}
  f(w)= w +\delta\Big(\frac{1}{2}w^2+ \frac{\gamma}{3}w^3\Big)\,,
\end{displaymath}
in which $\delta>0$ and $\gamma\in\R$. 
Let further $f(w)=z=x+iy\in\Omega$, and set
$g=f^{-1}:\Omega\to \Box \,$, which clearly exists for sufficiently small
$\delta$. We may now set
\begin{displaymath}
  \partial\Omega_D^1 = \{ f(u)\,,\, u\in (0,1)\} \quad ; \quad  \partial\Omega_D^2=\{ f(u+i)\,,\, u\in(0,1)\}\,,
\end{displaymath}
and 
\begin{displaymath}
   \partial\Omega_N^1 = \{ f(iv) \,,\, v\in (0,1)\} \quad ; \quad  \partial\Omega_N^2=\{ f(1+iv)\,,\,v\in(0,1)\}\,.
\end{displaymath}
Let $g=U+iV$ (or $U=\Re g$ and $V=\Im g$).  Clearly, $V\equiv0$ on $\partial\Omega_D^1$
and $V\equiv1$ on $\partial\Omega_D^2$.  Furthermore, $V$ is harmonic and since $f$
is conformal, we must have $\partial V/\partial\vartheta=0$ on $\partial\Omega_N$. It follows that
$V$ is a solution of \eqref{eq:2} with $C_1=0$ and $C_1=1$. Since $V$
is constant on each connected component of $\partial\Omega_D$ we have there,
using the Cauchy-Riemann equations satisfied by $g$,
 \begin{equation}
 |\partial V/\partial\vartheta |=|\nabla V|=|g^\prime| \,.
 \end{equation}
The same argument shows that for fixed $\gamma \in\R$, and for $\delta$ small
enough, Assumption~\ref{ass1} is also satisfied in $\overline{\Omega}$.
Consequently, we need to identify the location of
\begin{displaymath}
  \inf_{z\in\partial\Omega_D}|g^\prime(z)|=\inf_{0<u<1} \min \Big(\frac{
    1}{|f^\prime(u)|}\,,\, \frac{
    1}{|f^\prime(u+i)|}\Big)\,.
\end{displaymath}
It can be easily verified that whenever $-2<\gamma<0$ we have
\begin{displaymath}
  \sup_{0<u<1}|f^\prime(u)|=\Big|f^\prime\Big(-\frac{1}{2\gamma}\Big)\Big|= 1+
  \frac{\delta}{4\gamma}\,,
\end{displaymath}
 and for sufficiently small $\delta$ we have 
\begin{displaymath}
  \sup_{0<u<1}|f^\prime(u+i)|=\Big|f^\prime\Big(-\frac{1}{2\gamma}+i\Big)\Big|+\OO(\delta^2)= 1-
  \frac{\delta}{4\gamma}-\delta\gamma+\OO(\delta^2)\,.
\end{displaymath}
It follows that whenever $\gamma<-1/2$ the minimum of $|\partial V/\partial\vartheta|$ over $\partial\Omega_D$  is
obtained, for sufficiently small $\delta$ at an interior point (close to
$i-\frac{1}{2\gamma}$),  whereas for $0 >\gamma>-1/2$ the minimum is attained,
for sufficiently small $\delta$ at one of the corners.\\
 Note that
\begin{displaymath}
  f^{(3)}(z)= \delta \gamma\,,
\end{displaymath}
and hence, for $\gamma<-1/2$, the maximum of $|f^\prime|$ (or the minimum of
$|g^\prime|$) is
non-degenerate. \\
Hence, depending on the value of $\gamma$, we can either find $\delta$
  and a pair $(V,\Omega)$ for which (V1) is satisfied or find $\delta$ and a
  pair $(V,\Omega)$ for which (V2) is satisfied.\\

{\bf Acknowledgments:}\\
Y. Almog was partially supported by NSF Grant DMS-1613471.


\begin{thebibliography}{1}
\bibitem{AS}   {\sc  M. Abramowitz and I. A. Stegun.}
{\em Handbook of Mathematical Functions.}
Dover Publisher, New York, 1965.

\bibitem{ag82}
{\sc S.~Agmon}, {\em {Lectures on exponential decay of solutions of
  second-order elliptic equations: bounds on eigenfunctions of {$N$}-body
  {S}chr\"odinger operators}}, vol.~29 of {Mathematical Notes}, Princeton
  University Press, Princeton, NJ, 1982.

\bibitem{al08}
{\sc Y.~Almog.} {\em {The stability of the normal state of superconductors in
  the presence of electric currents}}. SIAM Journal on Mathematical Analysis,
  40 (2008), pp.~824--850.

\bibitem{aletal13}
{\sc Y.~Almog, B.~Helffer, and X.-B. Pan.} {\em {Superconductivity near the
  normal state in a half-plane under the action of a perpendicular electric
  current and an induced magnetic field}}. Trans. Amer. Math. Soc., 365 (2013),
  pp.~1183--1217.
  
\bibitem{aletal17}
{\sc Y.~Almog, B.~Helffer, and X.-B. Pan.} {\em Mixed
  normal-superconducting states in the presence of strong electric currents}.
  Arch. Ration. Mech. Anal., 223 (2017), pp.~419--462.

      \bibitem{AGH} {\sc Y.~Almog, D.~Grebenkov, and B.~Helffer.}  {\em
          On a Schr\"odinger operator with a purely imaginary potential
          in the semiclassical limit}. 
        ArXiv:1703.07733, (2017).
  
 
 \bibitem{AH} 		{\sc Y. Almog and B. Helffer.}
	\newblock {\em On the spectrum of non-selfadjoint Schr\"odinger
          operators with compact resolvent.} 
	\newblock Comm. in PDE  40 (8) (2015), pp. 1441--1466.  

\bibitem{alhe14}
{\sc  Y.~Almog and B.~Helffer.} {\em {Global stability of the normal state
  of superconductors in the presence of a strong electric current}}.
  Comm. Math. Phys., 330 (2014), pp.~1021--1094.
  
  \bibitem{Ahen}		{\sc Y. Almog and R. Henry.}
	\newblock {\em Spectral analysis of a complex Schr\"odinger operator in the semiclassical limit.}
	\newblock  SIAM J. Math. Anal. 44 (2016), pp. 2962--2993.

\bibitem{AsDa} {\sc A. Aslanyan and E.B. Davies.}
	\newblock Spectral instability for some Schr\"odinger operators.
	\newblock  Numer. Math. 85 (2000), no. 4, 525--552.
	

\bibitem{BHHR} 	{\sc 	K. Beauchard, B. Helffer, R. Henry, and L. Robbiano.}
	\newblock {\em Degenerate parabolic operators of Kolmogorov
          type with a geometric control condition.} 
	\newblock ESAIM: COCV 21 (2015), pp. 487--512.

\bibitem{da07}		{\sc E. B. Davies.  }
	\newblock {\em {Linear Operators and their Spectra}}, vol.~106 of
	{Cambridge Studies in Advanced Mathematics}, 
	\newblock Cambridge University Press, Cambridge, 2007.

\bibitem{Grebenkov07}{\sc 	D. S. Grebenkov. }
	\newblock {\em NMR Survey of Reflected Brownian Motion}. 
	\newblock Rev. Mod. Phys. 79 (2007), pp. 1077--1137.

\bibitem{GH} 	{\sc 	D. S. Grebenkov, B. Helffer. }
	\newblock {\em On spectral properties of the Bloch-Torrey operator in two dimensions.}
	\newblock  http://arxiv.org/abs/1608.01925.   SIAM J. Math. Anal 50, 622--676 (2018). 

\bibitem{Hel2} 		{\sc B. Helffer.}
	\newblock {\em On pseudo-spectral problems related to a time
          dependent model in superconductivity with electric current.}
	\newblock Confluentes Math. 3 (2) (2011), pp.~237-251.
	
	\bibitem{Helbook} {\sc B. Helffer.}
	\newblock Spectral theory and its applications.
	\newblock Cambridge University Press (2013).
	
	\bibitem{HeSj} {\sc B. Helffer and J. Sj\"ostrand.}
	\newblock {\em From resolvent bounds to semigroup bounds.}
	\newblock ArXiv:1001.4171v1 (23 Jan 2010).

\bibitem{Hen} 	{\sc 	R. Henry.}
	\newblock {\em On the semi-classical analysis of Schr\"odinger operators with purely imaginary electric potentials in a bounded domain.}
	\newblock ArXiv:1405.6183 (2014).  

\bibitem{ivko84}
{\sc B.~I. Ivlev and N.~B. Kopnin.} {\em Electric currents and resistive states
  in thin superconductors}, Advances in Physics, 33 (1984), pp.~47--114.

\bibitem{PS} {\sc G. Polya and G. Szeg\"o.}
\newblock Isoperimetric Inequalities in Mathematical Physics.
\newblock Princeton University Press, Princeton, New Jersey (1951).

\bibitem{sh03}		{\sc A. A. Shkalikov.}
	\newblock {\em Spectral portraits of the Orr-Sommerfeld operator at large Reynolds numbers}, 
	\newblock Sovrem. Mat. Fundam. Napravl. 3 (2003), pp.~89--112.

\bibitem{Sj} 		{\sc J. Sj\"ostrand.}
	\newblock {\em Resolvent estimates for non-selfadjoint operators via semigroups.}
	\newblock {\em Around the research of Vladimir Maz'ya. III.}
	\newblock Int. Math. Ser.  13, Springer, New York (2010), pp. 359-384.

\bibitem{Stoller91}	{\sc S. D. Stoller, W. Happer, and F. J. Dyson.}
	\newblock {\em Transverse spin relaxation in inhomogeneous magnetic fields.}
	\newblock Phys. Rev. A 44 (1991), pp. 7459-7477.

\bibitem{deSwiet94}	{\sc T. M. de Swiet and P. N. Sen.}
	\newblock {\em Decay of nuclear magnetization by bounded
          diffusion in a constant field gradient.} 
	\newblock J. Chem. Phys. 100 (1994), pp. 5597-5604.

 	
	
\end{thebibliography}
\def\cprime{$'$}

\end{document}